\def\QED{\quad\vbox{\hrule \hbox{\vrule\kern3pt
      \vbox{\kern3pt{}\kern3pt}\kern3pt\vrule}\hrule}}
\newcommand{\RR}{{\mathbb{R}}}
\newcommand{\ZZ}{{\mathbb{Z}}}
\newtheorem{Tm}{Theorem}[section]
\newtheorem{Lm}[Tm]{Lemma}
\newtheorem{Cr}[Tm]{Corollary}
\newtheorem{pr}[Tm]{Proposition}
\numberwithin{equation}{section}
\begin{document}

\bibliographystyle{plain}

\title [Fractional Laplacian and Sparse Stochastic Processes]
{Left-Inverses of Fractional Laplacian and Sparse Stochastic Processes}


 \author{Qiyu Sun}
\address [Qiyu Sun]{Department of
 Mathematics,  University of Central Florida,
Orlando, FL 32816, USA}
\email{qsun@mail.ucf.edu}

\author{Michael Unser}
\address [Michael Unser]{Biomedical Imaging Group, \'Ecole
Polytechnique  F\'ed\'erale de Lausanne, Lausanne 1015, Switzerland}
\email{michael.unser@epfl.ch}

\maketitle


\date{\today}




\begin{abstract}
The fractional Laplacian $(-\triangle)^{\gamma/2}$ commutes with
the primary coordination transformations in the Euclidean space
$\RR^d$:
  dilation, translation and rotation,  and has tight link to splines, fractals and stable Levy processes.
   For $0<\gamma<d$, its inverse is the classical  Riesz potential $I_\gamma$
which is
 dilation-invariant and translation-invariant.
 In this work, we investigate the functional properties (continuity, decay and invertibility) of an extended class of differential operators that share those invariance properties.
In particular, we extend the definition of  the classical  Riesz
potential $I_\gamma$ to any non-integer number $\gamma$
 larger than $d$ and show
that it is the unique left-inverse of the fractional Laplacian
$(-\triangle)^{\gamma/2}$ which is
 dilation-invariant and translation-invariant.
  We observe that,  for any $1\le p\le \infty$ and $\gamma\ge d(1-1/p)$,
there exists a Schwartz function $f$ such that $I_\gamma f$ is not
$p$-integrable. We then introduce the  new unique left-inverse
$I_{\gamma, p}$ of  the fractional Laplacian
$(-\triangle)^{\gamma/2}$ with the property that $I_{\gamma, p}$
is dilation-invariant (but not translation-invariant)  and that
$I_{\gamma, p}f$ is $p$-integrable for any Schwartz function $f$.
We  finally apply that  linear operator $I_{\gamma, p}$ with $p=1$
to solve the stochastic  partial differential equation
$(-\triangle)^{\gamma/2} \Phi=w$ with  white Poisson noise as its
driving term $w$.

\end{abstract}
\maketitle


\section{Introduction}
\setcounter{equation}{0}

  Define the Fourier transform   ${\mathcal F}\! f$ (or $\hat f$ for brevity) of an integrable function $f$ on  the $d$-dimensional Euclidean space $\RR^d$
  by
\begin{equation}\label{fouriertransform.def}
{\mathcal F}\! f(\xi):=\int_{\RR^d} e^{-i \langle {\bf x}, \xi\rangle} f({\bf x}) d{\bf x},\end{equation}
and extend the above definition to  all tempered distributions  as usual.
Here we denote by
$\langle \cdot, \cdot\rangle$ and   $\|\cdot\|$ the standard inner product and norm  on  $\RR^d$ respectively.

Let ${\mathcal S}:={\mathcal S}(\RR^d)$ be the space of all Schwartz functions on $\RR^d$ and ${\mathcal S}':={\mathcal S}'(\RR^d)$
the space of all tempered distributions on $\RR^d$.
For $\gamma>0$, define the {\em fractional Laplacian} $(-\triangle)^{\gamma/2}$
by
\begin{equation}\label{fractionallaplacian.def}
{\mathcal F}((-\triangle)^{\gamma/2}f)(\xi):=\|\xi\|^\gamma \ {\mathcal F} f(\xi),\quad  f\in {\mathcal S}.
\end{equation}
The fractional Laplacian has the remarkable property of being dilation-invariant. It plays a crucial role in the definition of thin plate splines \cite{duchon1977},
is intimately tied to fractal stochastic processes (e.g., fractional Brownian fields) \cite{mandelbrot1968,tafti2009} and
stable Levy processes \cite{chen10}, and  has been used in the study of singular obstacle problems
\cite{caffarelli08, silvestre06}.

In this paper, we present a detailed mathematical investigation of the functional properties of dilation-invariant differential operators together with a characterization of their inverses.
Our primary motivation is to provide a rigorous operator framework for solving the
stochastic  partial differential equation
\begin{equation}\label{randompde.def}
(-\triangle)^{\gamma/2} \Phi=w\end{equation}
with white noise $w$ as its driving term. We will show that this is feasible via the specification of a novel family of dilation-invariant left-inverses
of the fractional Laplacian $(-\triangle)^{\gamma/2}$ which have appropriate $L^p$-boundedness properties.

 We say that a continuous linear  operator $I$ from ${\mathcal S}$ to
 ${\mathcal S}'$ is {\em dilation-invariant}
if there exists a  real number  $\gamma$ such that
\begin{equation}
I (\delta_t f)= t^\gamma \delta_t (If)\quad{\rm for\ all} \ f\in {\mathcal S}\ {\rm  and} \  t>0,
\end{equation}
and {\em translation-invariant} if
\begin{equation}I (\tau_{{\bf x}_0} f)= \tau_{{\bf x}_0} (If)\quad {\rm  for\ all} \ f\in {\mathcal S}\ {\rm  and} \ {\bf x}_0\in \RR^d, \end{equation}
where  the {\em dilation operator} $\delta_t, t>0$  and the {\em translation operator} $\tau_{{\bf x}_0}, {\bf x}_0\in {\mathbb R}^d$ are defined by
$(\delta_t f) ({\bf x})=  f(t {\bf x})$  and
$\tau_{{\bf x}_0} f({\bf x})= f({\bf x}-{\bf x}_0), f\in {\mathcal S}$, respectively.
One may verify that the fractional Laplacian $(-\triangle)^{\gamma/2}, \gamma>0$, is dilation-invariant
and translation-invariant, a central property used in the definition of thin plate splines \cite{duchon1977}.

Next, we define the {\em Riesz potential} $I_\gamma$ (\cite{riesz49}) by
\begin{equation} \label{rieszpotential.eq1}
I_\gamma f({\bf x})=\pi^{-d/2} 2^{-\gamma} \frac{\Gamma((d-\gamma)/2)}{\Gamma(\gamma/2)}\int_{{\mathbb R}^d} \|{\bf x}-{\bf y}\|^{\gamma-d} f({\bf y}) d{\bf y}, \quad  f\in {\mathcal S},
\end{equation}
where $0<\gamma<d$.
 Here the Gamma function $\Gamma$ is given by
 $\Gamma(z)=\int_0^\infty t^{z-1} e^{-t} dt$ when the real part  ${\rm Re}\ z$ is
positive, and is extended analytically to a meromorphic function
on the complex plane. For any Schwartz function $f$, $I_\gamma f$ is continuous and   satisfies
 \begin{equation} \label{rieszpotential.eq2}
 |I_\gamma f({\bf x})|\le C_\epsilon \Big(\sup_{{\bf z}\in {\mathbb R}^d} |f({\bf z})| (1+\|{\bf z}\|)^{d+\epsilon}\Big)  (1+\|{\bf x}\|)^{\gamma-d}\quad {\rm for  \ all}\  {\bf x}\in {\mathbb R}^d,
 \end{equation}
 where $\epsilon>0$ and  $C_\epsilon$ is a positive constant, see also Theorem \ref{generalizedrieszomega1.tm}. Then the Riesz potential $I_\gamma$ is a continuous linear operator from ${\mathcal S}$ to ${\mathcal S}'$.
 Moreover one may verify that $I_\gamma$ is dilation-invariant and translation-invariant,  and also
  that $I_\gamma, 0<\gamma<d$, is  the inverse of the fractional Laplacian $(-\triangle)^{\gamma/2}$; i.e.,
 \begin{equation}\label{rieszpotential.eq3}
 I_\gamma (-\triangle)^{\gamma/2} f=(-\triangle)^{\gamma/2} I_\gamma f=f\quad {\rm  for\ all} \ f\in {\mathcal S}
 \end{equation}
  because
 \begin{equation}\label{rieszpotential.eq4}
 {\mathcal F}(I_\gamma f)(\xi)= \|\xi\|^{-\gamma} {\mathcal F}f (\xi), \ f\in {\mathcal S}.
 \end{equation}

A natural question then is as follows:

{\bf Question 1}: {\em For any $\gamma>0$, is there  a continuous linear operator $I$ from ${\mathcal S}$ to ${\mathcal S}'$
that  is translation-invariant and dilation-invariant,  and that  is an inverse of the  fractional Laplacian $(-\triangle)^{\gamma/2}$?}

 In the first  result of this paper (Theorem \ref{generalizedriesz.tm}), we give an affirmative answer to
the above existence question for
all  positive non-integer numbers $\gamma$ with the invertibility replaced by the left-invertibility, and further prove the uniqueness
of such a  continuous linear  operator.

To state that result, we recall some notation and definitions.
Denote the dual pair  between a Schwartz function  and a tempered
distribution using angle bracket $\langle \cdot, \cdot\rangle$, which  is given by
$\langle f, g\rangle=\int_{{\mathbb R}^d} f({\bf x}) {g({\bf x})} d{\bf x}$ when $f, g\in {\mathcal S}$
(we remark that the dual pair between two complex-valued square-integrable functions
is different from their standard inner product).
 A tempered distribution $f$ is said to be {\em homogeneous of degree $\gamma$} if
$\langle f, \delta_t g\rangle= t^{-\gamma-d} \langle f, g\rangle$ for all Schwartz functions  $g$
 and  all positive numbers $t$. We notice that the multiplier $\|\xi\|^{-\gamma}$ in the Riesz potential  $I_\gamma$, see \eqref{rieszpotential.eq4},
   is a homogenous function of degree $-\gamma\in (-d, 0)$.
 This observation inspires us to follow the definition of homogeneous tempered distribution in \cite{hormanderbook} and then to extend the definition of the Riesz potential  $I_\gamma$ to any non-integer number $\gamma>d$ as follows:
\begin{eqnarray}\label{generalizedriesz.def}
I_\gamma f({\bf x})& := & \frac{(2\pi)^{-d}\Gamma(d-\gamma)}{ \Gamma(d+k_0-\gamma)} \int_{S^{d-1}}  \int_0^\infty r^{k_0-\gamma+d-1}\nonumber  \\
& &
\times   \Big(-\frac{d}{dr}\Big)^{k_0} \Big(e^{ir\langle {\bf x}, \xi'\rangle} \hat f(r\xi')\Big) dr d\sigma(\xi'),
\quad f\in {\mathcal S},
\end{eqnarray}
where  $S^{n-1}=\{\xi'\in \RR^d: \ \|\xi'\|=1\}$ is the unit
sphere in ${\RR}^d$, $d\sigma$ is the area element on $S^{n-1}$, and
 $k_0$ is a nonnegative integer larger than $\gamma-d$.
  Integration by parts shows that the above definition \eqref{generalizedriesz.def}
 of  $I_\gamma f$ is independent on the nonnegative integer $k_0$ as long as it is larger than $\gamma-d$, and
also  that it coincides with the classical Riesz potential when $0<\gamma<d$ by letting $k_0=0$
  and  recalling that
  the inverse Fourier transform ${\mathcal F}^{-1} f$ of an integrable function $f$ is given by
\begin{equation}\label{inversefouriertransform.def}
{\mathcal F}^{-1} f({\bf x}):=(2\pi)^{-d} \int_{\RR^d} e^{i \langle {\bf x}, \xi\rangle} f(\xi) d\xi.\end{equation}
  Because of  the above consistency of definition, we call
 the continuous linear operator $I_\gamma, \gamma\in (0, \infty)\backslash (\ZZ_++d)$ in \eqref{generalizedriesz.def}  the {\em generalized Riesz potential}, where $\ZZ_+$ is the set of all nonnegative integers.

\begin{Tm} \label{generalizedriesz.tm}
Let $\gamma$ be a positive number with $\gamma-d\not\in \ZZ_+$, and let $I_\gamma$ be
the linear operator  defined by \eqref{generalizedriesz.def}.
Then $I_\gamma$ is the {\bf unique} continuous linear  operator from ${\mathcal S}$ to ${\mathcal S}'$  that
 is  dilation-invariant and translation-invariant, and that  is a left inverse of the fractional Laplacian $(-\triangle)^{\gamma/2}$.
\end{Tm}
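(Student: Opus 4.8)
The plan is to split the statement into three parts: (i) $I_\gamma$ is a well-defined continuous linear operator from $\mathcal S$ to $\mathcal S'$; (ii) $I_\gamma$ is dilation-invariant, translation-invariant, and a left-inverse of $(-\triangle)^{\gamma/2}$; and (iii) uniqueness. Parts (i) and (ii) I would establish by direct computation on the Fourier side. The key observation is that, for $f\in\mathcal S$, the definition \eqref{generalizedriesz.def} should be recognized as the pairing of $\hat f$ against the homogeneous distribution of degree $-\gamma$ obtained by analytic continuation (in the sense of \cite{hormanderbook}) of $\|\xi\|^{-\gamma}$, followed by an inverse Fourier transform; concretely, writing $\xi=r\xi'$ in polar coordinates, the factor $r^{k_0-\gamma+d-1}(-d/dr)^{k_0}$ is exactly the regularization of $r^{-\gamma+d-1}$ that makes the radial integral converge near $r=0$, and integration by parts $k_0$ times against a smooth compactly supported cutoff shows independence of $k_0$ and consistency with the classical formula when $0<\gamma<d$ (take $k_0=0$). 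From this representation, continuity $\mathcal S\to\mathcal S'$ follows from the Schwartz-seminorm estimates on $\hat f$ and its radial derivatives, while the dilation relation $I_\gamma(\delta_t f)=t^{\gamma-d}\,t^{?}\cdots$ — more precisely $I_\gamma(\delta_t f)=t^\gamma\,\delta_t(I_\gamma f)$ — comes from the change of variables $r\mapsto r/t$ together with the homogeneity of degree $-\gamma$ of the multiplier; translation-invariance is immediate because conjugating by $\tau_{\mathbf x_0}$ multiplies $\hat f(r\xi')$ by $e^{-ir\langle\mathbf x_0,\xi'\rangle}$, which combines with $e^{ir\langle\mathbf x,\xi'\rangle}$ to give $e^{ir\langle\mathbf x-\mathbf x_0,\xi'\rangle}$.

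For the left-inverse property, I would compute $I_\gamma\big((-\triangle)^{\gamma/2}f\big)$ for $f\in\mathcal S$: since $\widehat{(-\triangle)^{\gamma/2}f}(\xi)=\|\xi\|^\gamma\hat f(\xi)$, the homogeneous-distribution multiplier $\|\xi\|^{-\gamma}$ (after regularization) multiplied by $\|\xi\|^\gamma$ collapses to the constant $1$ away from the origin, and since $\|\xi\|^\gamma\hat f(\xi)$ vanishes to sufficiently high order at $\xi=0$ for the regularization not to pick up any delta-like correction at the origin, one recovers $\widehat{I_\gamma((-\triangle)^{\gamma/2}f)}=\hat f$, i.e. $I_\gamma((-\triangle)^{\gamma/2}f)=f$. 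The mild subtlety here is checking that no distribution supported at $\{0\}$ is produced: this is where the hypothesis $\gamma-d\notin\mathbb Z_+$ matters, because it is precisely the non-integer condition that prevents the analytic continuation of $\|\xi\|^{-\gamma}$ from having a pole (whose residue would be a derivative of $\delta$) at the relevant value.

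The uniqueness argument is the heart of the theorem and the step I expect to be the main obstacle. Suppose $J:\mathcal S\to\mathcal S'$ is another continuous linear operator that is dilation-invariant (with the same degree $\gamma$, forced by $J(-\triangle)^{\gamma/2}=\mathrm{Id}$), translation-invariant, and left-inverts $(-\triangle)^{\gamma/2}$. Then $D:=I_\gamma-J$ is continuous, dilation-invariant of degree $\gamma$, translation-invariant, and satisfies $D\big((-\triangle)^{\gamma/2}f\big)=0$ for all $f\in\mathcal S$. Translation-invariance forces $D$ to be a Fourier multiplier operator, i.e. $\widehat{Df}=m\,\hat f$ for some tempered distribution $m$ (by the Schwartz kernel theorem plus commutation with all translations), and dilation-invariance of degree $\gamma$ forces $m$ to be homogeneous of degree $-\gamma$. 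The annihilation condition $D((-\triangle)^{\gamma/2}f)=0$ says $m(\xi)\|\xi\|^\gamma\hat f(\xi)=0$ in $\mathcal S'$ for every $f\in\mathcal S$; since $\{\|\xi\|^\gamma\hat f:f\in\mathcal S\}$ is dense enough in the space of Schwartz functions vanishing at the origin, this forces $\mathrm{supp}\,m\subseteq\{0\}$, so $m$ is a finite linear combination of derivatives of $\delta$. But a nonzero linear combination of derivatives of $\delta$ is homogeneous only if it is a single monomial $\partial^\alpha\delta$, which is homogeneous of degree $-d-|\alpha|\in -d-\mathbb Z_+$; the requirement that this equal $-\gamma$ contradicts $\gamma-d\notin\mathbb Z_+$. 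Hence $m=0$, so $D=0$ and $J=I_\gamma$. The technical care needed is in the two density/support claims and in justifying that a translation-invariant continuous map $\mathcal S\to\mathcal S'$ is indeed convolution with a tempered distribution — this is classical but must be invoked cleanly — and in correctly bookkeeping the homogeneity degree through the Fourier transform.
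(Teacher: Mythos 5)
Your proposal is essentially sound, and the three structural blocks (well-definedness/continuity, invariances/left-inverse, uniqueness) match the paper's. The sufficiency argument is the same in spirit: the paper derives continuity from the decay estimate in Theorem~\ref{generalizedrieszomega1.tm}, the invariances from Theorem~\ref{maintheorem.iomega1}, and the left-inverse property from the composition identity $J_{\Omega_1}J_{\Omega_2}=J_{\Omega_1\Omega_2}$ (Theorem~\ref{composition.tm}, Corollary~\ref{composition.cr}); your informal observation that $\|\xi\|^\gamma\hat f$ vanishes to order $\gamma$ at the origin and so the regularization contributes no boundary term is exactly what the paper's integration-by-parts proof of Theorem~\ref{composition.tm} makes precise.

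The interesting divergence is in the uniqueness step. The paper never invokes the ``translation-invariant $\Rightarrow$ Fourier multiplier'' theorem. Instead it first shows (Lemma~\ref{homogeneous1.lm}) that any continuous extension of $i_\Omega$ must differ from $J_\Omega$ by a map of the form $Kf=\sum_{|\mathbf i|\le N}\frac{\partial^{\mathbf i}\hat f(\mathbf 0)}{\mathbf i!}H_{\mathbf i}$ -- a statement that uses only continuity and the fact that the difference annihilates $\mathcal S_\infty$, via the duality observation that $\mathcal F^{-1}(I-J_\Omega)^*g$ is supported at the origin for each $g$. Then (Lemma~\ref{homogeneous2.lm}) translation-invariance forces the $H_{\mathbf i}$ to be derivatives of a single polynomial $P$, while dilation-invariance forces each $H_{\mathbf i}$ to be homogeneous of non-integer degree $\gamma-d-|\mathbf i|$; the only polynomial homogeneous of non-integer degree is zero. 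Your version pushes the translation-invariance earlier to get the multiplier form $\widehat{Df}=m\hat f$, then reads off support and homogeneity on the symbol side. Both routes hinge on the same dichotomy (distributions supported at a point have homogeneity degree in $-d-\ZZ_+$, which is excluded by $\gamma-d\notin\ZZ_+$), but the paper's version is more self-contained and avoids citing the multiplier theorem, whereas yours is arguably shorter once that theorem is granted.

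Two small points of care you should adjust. First, writing ``$D\bigl((-\triangle)^{\gamma/2}f\bigr)=0$ for all $f\in\mathcal S$'' is problematic as stated, because $(-\triangle)^{\gamma/2}f$ is generally not Schwartz (its Fourier transform is $\|\xi\|^\gamma\hat f$, which is only $C^{\lfloor\gamma\rfloor}$ at the origin), so $D$ is not a priori defined on it. The clean way, and what the paper does implicitly through Theorem~\ref{iomega2.tm}, is to note that the left-inverse hypothesis forces $J|_{\mathcal S_\infty}=i_\Omega=I_\gamma|_{\mathcal S_\infty}$: for $f\in\mathcal S_\infty$ one has $i_\Omega f\in\mathcal S_\infty$ and $(-\triangle)^{\gamma/2}i_\Omega f=f$, so $J f=J\bigl((-\triangle)^{\gamma/2}i_\Omega f\bigr)=i_\Omega f$. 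Hence $D$ annihilates $\mathcal S_\infty$, which is all you need for the support-at-origin claim (test functions vanishing near $\mathbf 0$ are Fourier transforms of elements of $\mathcal S_\infty$). Second, ``a nonzero linear combination of derivatives of $\delta$ is homogeneous only if it is a single monomial $\partial^\alpha\delta$'' is not quite right -- any combination $\sum_{|\alpha|=k}c_\alpha\partial^\alpha\delta$ of fixed order $k$ is homogeneous of degree $-d-k$; but the conclusion that the degree lies in $-d-\ZZ_+$ is unaffected, so this does not damage the argument. Also, with the paper's sign convention, $D(\delta_t f)=t^{-\gamma}\delta_t(Df)$ (parameter $-\gamma$, not $\gamma$), which is what yields homogeneity degree $-\gamma$ for $m$; this is a bookkeeping slip that cancels out downstream.
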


Let $L^p:=L^p({\mathbb R}^d), 1\le p\le \infty$, be the space of all $p$-integrable functions on ${\mathbb R}^d$ with
the standard norm $\|\cdot\|_p$. The Hardy-Littlewood-Sobolev fractional integration theorem (\cite{steinbook})
says that the Riesz potential $I_\gamma$ is a bounded linear operator from $L^q$ to $L^p$ when $1<p\le \infty, 0<\gamma<d(1-1/p)$
and $q=pd/(d+\gamma p)$. Hence
 $I_\gamma f\in L^p$ for any Schwartz function $f$
  when $0<\gamma<d(1-1/p)$.
We observe that  for any non-integer number $\gamma$ larger than or equal to  $d(1-1/p)$,
  there exists a  Schwartz function $f$ such that $I_\gamma f\not\in L^p$, see  Corollary \ref{nonintegrable.cr}.
An implication of this negative result, which will become clearer in the sequel (cf. Section 4), is that we cannot generally use the translation-invariant inverse $I_\gamma$ to solve
 the stochastic partial differential
 equation \eqref{randompde.def}.
What is required instead is a special left-inverse of the fractional Laplacian
that is dilation-invariant and $p$-integrable. Square-integrability  in particular ($p=2$) is a strict requirement when the driving noise is Gaussian and has been considered in prior work \cite{tafti2009}; it leads to a fractional Brownian field solution, which is the multi-dimensional extension of Mandelbrot's celebrated fractional Brownian motion \cite{blu, mandelbrot1968}.
Our desire to extend this method of solution for non-Gaussian brands of noise leads to the second question. 

{\bf Question 2}: {\em  Let $1\le p\le \infty$ and $\gamma>0$.
 Is there a continuous linear  operator  $I$ from ${\mathcal S}$ to ${\mathcal S}'$
 that  is  dilation-invariant and  a left-inverse of the fractional Laplacian $(-\triangle)^{\gamma/2}$ such that $If\in L^p$ for all Schwartz functions $f$?}

 In  the second   result of this paper (Theorem \ref{integrablefractionalderivative.tm}), we give an affirmative answer to the above question when both $\gamma$ and $\gamma-d(1-1/p)$
 are not integers, and show the uniqueness of such a continuous linear operator.

To
state  that  result, we introduce some additional multi-integer notation.
For ${\bf x}=(x_1, \ldots, x_d)\in {\mathbb R}^d$ and
 ${\bf j}=(j_1, \ldots, j_d)\in {\mathbb Z}_+^d$ (the $d$-copies of the set ${\mathbb Z}_+$),
 we set
$|{\bf j}|:=|j_1|+\cdots+|j_d|$, ${\bf j}!:=j_1!\cdots j_d!$ with $0!:=1$,
 ${\bf x}^{\bf j}:=x_1^{j_1}\cdots x_d^{j_d}$ and $\partial^{\bf j} f({\bf x}):=\partial^{j_1}_{x_1}\cdots\partial^{j_d}_{x_d} f({\bf x})$.
For $1\le p\le \infty$ and $\gamma>0$, we define the linear operator $I_{\gamma, p}$
from ${\mathcal S}$ to ${\mathcal S}'$ with the help of the Fourier transform:
\begin{equation}\label{fractionalderivative.veryolddef}
{\mathcal F}(I_{\gamma, p} f)(\xi)=\Big({\mathcal F} f(\xi)-\sum_{|{\bf j}|\le \gamma-d(1-1/p)}
\frac{\partial^{\bf j} ({\mathcal F} f)({\bf 0})}{{\bf j}!} \xi^{\bf j}\Big) \|\xi\|^{-\gamma}, \quad f\in {\mathcal S},
\end{equation}
which is the natural $L^p$ extension of the fractional integral operator that was introduced in \cite{blu, tafti2009,taftu2010} for $p=2$  and $\gamma\not\in \ZZ/2$.

We call  $I_{\gamma, p}$ the {\em $p$-integrable Riesz potential of degree $\gamma$}, or the
{\em integrable Riesz potential} for brevity. Indeed,
 when both $\gamma$ and $\gamma-d(1-1/p)$ are non-integers, the linear operator
$I_{\gamma, p}$ is the unique
 left-inverse of the fractional Laplacian $(-\triangle)^{\gamma/2}$ that enjoys the
 following dilation-invariance and stability properties.

\begin{Tm}\label{integrablefractionalderivative.tm}
Let  $1\le p\le \infty$, and $\gamma$ is a positive  number such that
both $\gamma$ and $\gamma-d+d/p$ are not nonnegative integers.
 Then $I_{\gamma, p}$ in \eqref{fractionalderivative.veryolddef} is the {\bf unique}  dilation-invariant left-inverse of the fractional Laplacian $(-\triangle)^{\gamma/2}$ such that
its image of the Schwartz space ${\mathcal S}$ is contained in $L^p$.
\end{Tm}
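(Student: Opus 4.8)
\medskip

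\noindent\textit{Proof proposal.} Set $N:=\lfloor \gamma-d+d/p\rfloor$; this is unambiguous since $\gamma-d+d/p\notin\ZZ$, and the sum in \eqref{fractionalderivative.veryolddef} runs over $|{\bf j}|\le N$. Note that $N<\gamma$ and $N+1>\gamma-d+d/p\ge\gamma-d$ always. That $I_{\gamma,p}$ is a well-defined continuous operator ${\mathcal S}\to{\mathcal S}'$, is dilation-invariant with exponent $-\gamma$, and is a left inverse of $(-\triangle)^{\gamma/2}$ will follow from routine manipulations. Indeed, the bracket in \eqref{fractionalderivative.veryolddef} is the order-$N$ Taylor remainder of ${\mathcal F}f$ at the origin, hence $O(\|\xi\|^{N+1})$ there; since $N+1-\gamma>-d$, its product with $\|\xi\|^{-\gamma}$ is locally integrable near the origin and of polynomial growth away from it, so it is a tempered distribution depending continuously on $f$. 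Dilation-invariance is a one-line Fourier computation: from ${\mathcal F}(\delta_t f)(\xi)=t^{-d}{\mathcal F}f(\xi/t)$ one gets $\partial^{\bf j}({\mathcal F}(\delta_t f))({\bf 0})=t^{-d-|{\bf j}|}\partial^{\bf j}({\mathcal F}f)({\bf 0})$, so the bracket scales by $t^{-d}$ and the substitution $\xi\mapsto\xi/t$ turns $\|\xi\|^{-\gamma}$ into $t^{\gamma}\|\xi\|^{-\gamma}$. For left-invertibility one extends $I_{\gamma,p}$ in the obvious way (via the same multiplier) to the range of $(-\triangle)^{\gamma/2}$; the Fourier transform of $(-\triangle)^{\gamma/2}f$ is $\|\xi\|^{\gamma}{\mathcal F}f(\xi)$, which near the origin is of class $C^{N}$ with vanishing $N$-jet because $\|\xi\|^{\gamma}$ and its derivatives of order $\le N<\gamma$ vanish there; hence the Taylor subtraction is vacuous and multiplication by $\|\xi\|^{-\gamma}$ recovers ${\mathcal F}f$, i.e. $I_{\gamma,p}(-\triangle)^{\gamma/2}f=f$.

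\smallskip

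\noindent The substantial point is that $I_{\gamma,p}f\in L^p$ for every $f\in{\mathcal S}$. Comparing Fourier multipliers, and using that ${\mathcal F}^{-1}(\xi^{\bf j}\|\xi\|^{-\gamma})$ is a constant multiple of $\partial^{\bf j}h_\gamma$, where $h_\gamma({\bf x})=c_\gamma\|{\bf x}\|^{\gamma-d}$ is the (generalized) Riesz kernel of \eqref{rieszpotential.eq1} and Theorem~\ref{generalizedriesz.tm} (a locally integrable function, with $\partial^{\bf j}h_\gamma$ still locally integrable since $|{\bf j}|\le N<\gamma$), one obtains the physical-space representation
\[
I_{\gamma,p}f({\bf x})=\int_{\RR^d}\Big(h_\gamma({\bf x}-{\bf y})-\sum_{|{\bf j}|\le N}\frac{(-{\bf y})^{\bf j}}{{\bf j}!}\,\partial^{\bf j}h_\gamma({\bf x})\Big)f({\bf y})\,d{\bf y},
\]
whose integrand is $h_\gamma({\bf x}-{\bf y})$ minus its order-$N$ Taylor polynomial in ${\bf y}$ about ${\bf y}={\bf 0}$. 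When $\|{\bf y}\|\le\|{\bf x}\|/2$ the segment from ${\bf x}-{\bf y}$ to ${\bf x}$ stays away from the origin, where $h_\gamma$ is $C^\infty$, so Taylor's theorem bounds the integrand by $C\|{\bf y}\|^{N+1}\|{\bf x}\|^{\gamma-d-N-1}$; when $\|{\bf y}\|>\|{\bf x}\|/2$ one uses the homogeneity estimates $|h_\gamma({\bf z})|\le C\|{\bf z}\|^{\gamma-d}$ and $|\partial^{\bf j}h_\gamma({\bf z})|\le C\|{\bf z}\|^{\gamma-d-|{\bf j}|}$ together with the rapid decay of $f$. Integrating against $f\in{\mathcal S}$ yields $|I_{\gamma,p}f({\bf x})|\le C_f\,\|{\bf x}\|^{\gamma-d-N}$ for $0<\|{\bf x}\|\le 1$ (up to an additive constant if $\gamma-d>N$) and $|I_{\gamma,p}f({\bf x})|\le C_f\,\|{\bf x}\|^{\gamma-d-N-1}$ for $\|{\bf x}\|\ge 1$, with $C_f$ a continuous seminorm of $f$; this also extends \eqref{rieszpotential.eq2} and should be read alongside Theorem~\ref{generalizedrieszomega1.tm}. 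Since $\gamma-d-N>-d/p$ and $\gamma-d-N-1<-d/p$ — two inequalities both equivalent to $N=\lfloor\gamma-d+d/p\rfloor$, and here both non-integrality hypotheses enter — the local singularity at the origin and the tail at infinity are each $p$-integrable, so $I_{\gamma,p}f\in L^p$; in fact $I_{\gamma,p}$ is continuous from ${\mathcal S}$ into $L^p$. (Alternatively one argues on the Fourier side, splitting ${\mathcal F}(I_{\gamma,p}f)$ by a cut-off into a compactly supported piece whose leading singularity at the origin is homogeneous of degree $N+1-\gamma$, hence whose inverse transform decays like $\|{\bf x}\|^{\gamma-d-N-1}$, plus a rapidly decaying Schwartz-type piece.)

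\smallskip

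\noindent For uniqueness, let $J$ be any dilation-invariant left inverse of $(-\triangle)^{\gamma/2}$ with $J({\mathcal S})\subset L^p$ (a continuous linear operator ${\mathcal S}\to{\mathcal S}'$, as throughout), and put $K:=J-I_{\gamma,p}$. A left inverse is forced to be dilation-invariant with exponent $-\gamma$: if $J(\delta_t g)=t^{\gamma'}\delta_t(Jg)$, applying $J$ to $(-\triangle)^{\gamma/2}(\delta_t f)=t^{\gamma}\delta_t((-\triangle)^{\gamma/2}f)$ and using $J(-\triangle)^{\gamma/2}={\rm Id}$ gives $t^{\gamma+\gamma'}\delta_t f=\delta_t f$, so $\gamma'=-\gamma$. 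Hence $K$ is dilation-invariant with exponent $-\gamma$, and by the closed graph theorem it is continuous from ${\mathcal S}$ into $L^p$. Moreover $K\phi=0$ whenever ${\mathcal F}\phi$ vanishes to infinite order at the origin, for then $f:={\mathcal F}^{-1}(\|\cdot\|^{-\gamma}{\mathcal F}\phi)\in{\mathcal S}$ and $(-\triangle)^{\gamma/2}f=\phi$, so $K\phi=J((-\triangle)^{\gamma/2}f)-I_{\gamma,p}((-\triangle)^{\gamma/2}f)=f-f=0$. Thus $K$ factors through the quotient of ${\mathcal S}$ by the closed subspace $\{\phi\in{\mathcal S}:\partial^{\bf j}({\mathcal F}\phi)({\bf 0})=0 \text{ for all } {\bf j}\}$; the map $\phi\mapsto(\partial^{\bf j}({\mathcal F}\phi)({\bf 0}))_{{\bf j}\in\ZZ_+^d}$ is a continuous linear surjection of ${\mathcal S}$ onto $\prod_{{\bf j}\in\ZZ_+^d}\mathbb{C}$ (surjectivity by Borel's lemma), so by the open mapping theorem it identifies that quotient topologically with $\prod_{{\bf j}}\mathbb{C}$ carrying the product topology. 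Now any continuous linear map of $\prod_{{\bf j}}\mathbb{C}$ into a normed space vanishes on a coordinate subspace of finite codimension — the inverse image of the open unit ball of $L^p$ contains such a subspace, which being a bounded linear subspace of $L^p$ must be $\{0\}$ — so there are an integer $M$ and functions $u_{\bf j}\in L^p$ with $K\phi=\sum_{|{\bf j}|\le M}\partial^{\bf j}({\mathcal F}\phi)({\bf 0})\,u_{\bf j}$. Choosing $\phi$ with $\partial^{\bf k}({\mathcal F}\phi)({\bf 0})$ equal to $1$ at ${\bf k}={\bf j}$ and $0$ otherwise, and comparing $K(\delta_t\phi)$ with $t^{-\gamma}\delta_t(K\phi)$, gives $u_{\bf j}(t{\bf x})=t^{\gamma-d-|{\bf j}|}u_{\bf j}({\bf x})$, so each $u_{\bf j}$ is homogeneous of degree $\gamma-d-|{\bf j}|$, which is nonzero because $\gamma\notin\ZZ$. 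But a homogeneous function lying in $L^p(\RR^d)$ is identically $0$: for $p<\infty$ since $\int_0^\infty r^{(\gamma-d-|{\bf j}|)p+d-1}\,dr=\infty$, and for $p=\infty$ since a nonzero homogeneous function of nonzero degree is unbounded. Hence all $u_{\bf j}=0$, $K=0$, and $J=I_{\gamma,p}$.

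\smallskip

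\noindent The step I expect to be the main obstacle is the $L^p$-membership: one has to extract from the Taylor-corrected Riesz kernel the sharp two-sided decay — $\|{\bf x}\|^{\gamma-d-N}$ near the origin and $\|{\bf x}\|^{\gamma-d-N-1}$ at infinity — and see that $N=\lfloor\gamma-d+d/p\rfloor$ is precisely the exponent that places both of these on the correct side of the critical value $-d/p$, the same balance that underlies the negative statement of Corollary~\ref{nonintegrable.cr}. The uniqueness argument is lighter on computation but rests on the perhaps less routine reduction, through the product-space structure of the jets of ${\mathcal F}\phi$ at the origin, to a finite-dimensional problem, on which dilation-invariance then forces a homogeneous — and therefore null — remainder in $L^p$.
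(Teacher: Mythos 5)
Your overall strategy matches the paper's: establish dilation-invariance, the left-inverse identity, and $L^p$-membership for $I_{\gamma,p}$, then show any competitor differs from $I_{\gamma,p}$ by a finite sum $\sum_{|{\bf j}|\le M}\partial^{\bf j}\hat\phi({\bf 0})\,u_{\bf j}$ and force $u_{\bf j}=0$ from homogeneity plus $L^p$-membership. Where you genuinely diverge is the reduction to that finite-dimensional form. The paper (Lemma~\ref{homogeneous1.lm}, invoked in Theorem~\ref{time2.tm}) passes to the adjoint $(I-J_\Omega)^*$ and uses the structure theorem for distributions supported at a point, which leaves the uniformity of the order $N$ somewhat implicit; you instead identify ${\mathcal S}/{\mathcal S}_\infty$ with $\prod_{\bf j}\mathbb{C}$ via Borel's lemma and the open mapping theorem, and then exploit that a continuous linear map of a product of lines into a normed space must kill a finite-codimension coordinate subspace because a bounded linear subspace of a normed space is trivial. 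This is cleaner and more self-contained, and it makes the finite $M$ appear for structural reasons rather than by fiat; the price is that it uses $L^p$-normability in an essential way, so it would not apply verbatim to the paper's more general Lemma~\ref{homogeneous1.lm} (which only asserts membership in ${\mathcal S}'$). You also make explicit two facts the paper leaves between the lines and actually needs: a left-inverse is forced to have dilation exponent $-\gamma$, and a left-inverse must agree with $i_\Omega$ on ${\mathcal S}_\infty$; both are correct and worth spelling out.

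One small inaccuracy: you assert at the outset that $\gamma-d+d/p\notin\ZZ$, and you rely on this when you claim $N<\gamma-d+d/p<N+1$ are both strict. The theorem only assumes $\gamma-d+d/p\notin\ZZ_+$, and for $d\ge 2$, $1<p<\infty$ one can arrange $\gamma\in(0,d(1-1/p))$ with $\gamma-d+d/p$ a negative integer (e.g.\ $d=5$, $p=2$, $\gamma=1/2$ gives $\gamma-d+d/p=-2$). In that case your near-origin bound $\|{\bf x}\|^{\gamma-d-N}=\|{\bf x}\|^{-d/p}$ is no longer $p$-integrable, so the estimate alone does not close the argument. The fix is immediate and is what the paper does: when $\gamma<d(1-1/p)$ the Taylor correction is empty, $I_{\gamma,p}=J_\Omega$, and Theorem~\ref{generalizedrieszomega1.tm} gives the sharper bound $|J_\Omega f({\bf x})|\le C(1+\|{\bf x}\|)^{\gamma-d}$, which is bounded near the origin and $p$-integrable at infinity because $\gamma-d<-d/p$. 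Adding this one-line case split makes the $L^p$-membership portion complete under the theorem's stated hypothesis.
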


One of the primary application of the $p$-integrable Riesz potentials
is the construction of generalized random processes by suitable functional integration of white noise \cite{tafti2009, taftu2010, Unser2009}.
These processes are defined by the
stochastic partial differential equation \eqref{randompde.def}, the motivation being that the solution should essentially display the same invariance properties as the defining operator (fractional Laplacian).
In particular, these processes will exhibit some level of self-similarity (fractality) because $I_{\gamma, p}$ is dilation-invariant. However, they will in general not be stationary because the requirement for a stable inverse excludes translation invariance. It is this last aspect that deviates from the classical theory of stochastic processes and requires the type of mathematical safeguards that are provided in this paper. While the case of a white Gaussian noise excitation is fairly well understood  \cite{tafti2009}, it is not yet so when the driving term is impulse Poisson noise which leads to the specification of sparse stochastic processes with a finite rate of innovation. The current status has been to use the operator $I_{\gamma,2}$ to specify sparse processes with the restriction that the impulse amplitude distribution must be symmetric \cite[Theorem 2]{Unser2009}. Our present contribution is to show that one can lift this restriction by considering the operator $I_{\gamma,1}$, which is  the proper inverse to handle general impulsive Poisson noise.

To state our third result,
 we recall some concepts about generalized random processes and  Poisson  noises.
Let ${\mathcal D}$ be the space of all compactly supported $C^\infty$ functions with standard topology.
A  {\em generalized random process}
 is a random functional $\Phi$ defined on ${\mathcal D}$
(i.e., a random variable $\Phi(f)$ associated with every $f\in {\mathcal D}$)
which is linear, continuous and compatible \cite{gelfandbook}. 

The white Poisson noise
\begin{equation}\label{whitepoisson.def}
 w({\bf x}):=\sum_{k\in {\mathbb Z}} a_k \delta({\bf x}-{\bf x}_k)\end{equation}
is a generalized random process such that the  random variable associated with a function $f\in {\mathcal D}$ is given by
 \begin{equation}w(f):=\sum_{k\in \ZZ} a_k f({\bf x}_{k}),\end{equation}
 where the $a_k$'s are i.i.d. random variables with probability distribution $P(a)$, and where  the ${\bf x}_k$'s
are random point locations in ${\mathbb R}^n$ which are mutually independent and follow a spatial Poisson distribution
with Poisson parameter $\lambda>0$.
The
 random point locations ${\bf x}_k$ in ${\mathbb R}^n$  follow a {\em spatial Poisson distribution}
with Poisson parameter $\lambda>0$ meaning that for any measurable set $E$ with finite Lebesgue measure $|E|$,
the probability of observing $n$ events in $E$
 (i.e., the cardinality of the set $\{k| \ {\bf x}_k\in E\}$ is equal to $n$)
 is $\exp(-\lambda |E|) (\lambda |E|)^n/ n!$.  Thus, the Poisson parameter $\lambda$
  represents the average number of random impulses per unit.

As the white Poisson noise $w$ is a generalized random process, the stochastic partial differential equation \eqref{randompde.def}
can be interpreted as the following:
\begin{equation}\label{randompde.def2}
\langle \Phi, (-\triangle)^{\gamma/2} f\rangle=\langle w, f\rangle\quad {\rm for\ all} \ f\in {\mathcal D}.
\end{equation}
So if $I$ is a left-inverse of the fractional Laplacian operator $(-\triangle)^{\gamma/2}$, then
\begin{equation}\Phi=I^* w\end{equation} is
{\em literally} the solution of the stochastic partial differential equation \eqref{randompde.def} as
\begin{equation}
\langle I^*w, (-\triangle)^{\gamma/2} f\rangle=\langle w, I(-\triangle)^{\gamma/2} f\rangle=\langle w,  f\rangle \quad {\rm for \ all} \ f\in {\mathcal D},
\end{equation}
where $I^*$ is the conjugate operator of the continuous linear operator $I$ from ${\mathcal S}$ to ${\mathcal S}'$
 defined by $$\langle I^*f, g\rangle:=\langle f, Ig\rangle\quad {\rm for\ all} \  f, g\in {\mathcal S}.$$
The above observation is usable only if we can specify a left-inverse (or equivalently we can impose appropriate boundary condition)
so that $I^*w$ defines a bona fide generalized random process in the sense of Gelfand and Vilenkin; mathematically, the latter is equivalent to providing its characteristic functional by  the Minlos-Bochner Theorem (cf. Section 4).
The following result establishes
that $P_\gamma w:=I_{\gamma, 1}^*w$  is a proper
 solution of the stochastic partial differential equation
\eqref{randompde.def}, where $w$ is the 
Poisson noise defined by (\ref{whitepoisson.def}).

\begin{Tm}\label{generalizedpoisson.tm}
Let $\gamma$ be a positive non-integer   number,  $\lambda$ be a positive number,
$P(a)$ be a probability distribution  with $\int_{\RR} |a| dP(a)<\infty$, and $I_{\gamma, 1}$ be defined as in \eqref{fractionalderivative.veryolddef}.
For any $f\in {\mathcal D}$, define the random variable $P_\gamma w$ associated with $f$ by
\begin{equation}\label{generalizedpoisson.tm.eq1}
P_\gamma w (f):=\sum_{k} a_k I_{\gamma, 1}(f)({\bf x}_k) 
\end{equation}
where  
the $a_k$'s are i.i.d. random variables with probability distribution $P(a)$, and  the ${\bf x}_k$'s
are random point locations in ${\mathbb R}^n$ which are mutually independent and follow a
spatial Poisson distribution with Poisson parameter $\lambda$.
  Then $P_{\gamma} w$ is  
  the generalized random process associated with the characteristic
  functional
\begin{equation}\label{generalizedpoisson.tm.eq2}
  {\mathcal Z}_{P_\gamma w}(f)=   \exp\Big(\lambda \int_{\RR^d}\int_{\RR} \big(e^{-ia (I_{\gamma, 1}f)({\bf x})}-1\big)  dP(a)d{\bf x}\Big),
  \quad  f\in {\mathcal D}.
  \end{equation}
 \end{Tm}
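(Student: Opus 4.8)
The plan is to give rigorous meaning to the formal identity $P_\gamma w(f)=\langle I_{\gamma,1}^{*}w,f\rangle=\langle w,I_{\gamma,1}f\rangle$ --- which is not literally covered by \eqref{whitepoisson.def}, since $I_{\gamma,1}f$ is not compactly supported --- by realizing $P_\gamma w$ as an integral against the marked Poisson random measure $N:=\sum_k\delta_{({\bf x}_k,a_k)}$ on $\RR^d\times\RR$, whose intensity measure is $\nu:=\lambda\,d{\bf x}\otimes dP(a)$, and then reading off the characteristic functional from the exponential (Campbell) formula for such measures. The analytic facts I would record first concern $I_{\gamma,1}$ on Schwartz space: combining \thmref{integrablefractionalderivative.tm} (applicable because $\gamma$ and $\gamma-d+d/1=\gamma$ are both non-integers) with the pointwise estimates of the earlier sections (in the spirit of \eqref{rieszpotential.eq2} and \thmref{generalizedrieszomega1.tm}), for every $f\in{\mathcal S}$ the distribution $I_{\gamma,1}f$ is represented by a function that is continuous on $\RR^d\setminus\{{\bf 0}\}$ --- with at worst a locally integrable singularity at the origin --- lies in $L^1(\RR^d)$, and obeys a bound $\|I_{\gamma,1}f\|_1\le C\,q_N(f)$ for some continuous seminorm $q_N$ on ${\mathcal S}$ with $C$ independent of $f$; in particular $f\mapsto I_{\gamma,1}f$ is continuous from ${\mathcal D}$ into $L^1(\RR^d)$. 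All test functions in the statement lie in ${\mathcal D}\subset{\mathcal S}$, so these facts apply.

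First I would check that the series \eqref{generalizedpoisson.tm.eq1} makes sense almost surely. Since the locations ${\bf x}_k$ are almost surely pairwise distinct, none equal to ${\bf 0}$, and only finitely many lie in any bounded set, the evaluations $(I_{\gamma,1}f)({\bf x}_k)$ are almost surely unambiguous; and Campbell's theorem applied to $({\bf x},a)\mapsto|a|\,|(I_{\gamma,1}f)({\bf x})|$ gives $\mathbb{E}\Big[\sum_k|a_k|\,|(I_{\gamma,1}f)({\bf x}_k)|\Big]=\lambda\Big(\int_{\RR}|a|\,dP(a)\Big)\|I_{\gamma,1}f\|_1<\infty$, using the hypothesis $\int_{\RR}|a|\,dP(a)<\infty$. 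Hence \eqref{generalizedpoisson.tm.eq1} converges absolutely almost surely, $P_\gamma w(f)=\int_{\RR^d\times\RR}a\,(I_{\gamma,1}f)({\bf x})\,dN({\bf x},a)$ is a genuine real random variable, and (by linearity of $I_{\gamma,1}$ and of the integral) $f\mapsto P_\gamma w(f)$ is linear on ${\mathcal D}$. Next I would compute the characteristic functional: setting $h_f({\bf x},a):=a\,(I_{\gamma,1}f)({\bf x})$ and using $|e^{-ih_f}-1|\le|h_f|$, the same estimate yields $e^{-ih_f}-1\in L^1(\nu)$, so the exponential formula for $N$ applies and gives, with the sign convention for the characteristic functional fixed in the last section, $\mathcal{Z}_{P_\gamma w}(f)=\mathbb{E}\Big[e^{-iP_\gamma w(f)}\Big]=\exp\Big(\int_{\RR^d\times\RR}(e^{-ih_f}-1)\,d\nu\Big)=\exp\Big(\lambda\int_{\RR^d}\int_{\RR}(e^{-ia(I_{\gamma,1}f)({\bf x})}-1)\,dP(a)\,d{\bf x}\Big)$, which is exactly \eqref{generalizedpoisson.tm.eq2}.

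It then remains to see that $P_\gamma w$ is a bona fide generalized random process. Write ${\mathcal Z}(f)$ for the right-hand side of \eqref{generalizedpoisson.tm.eq2}. It satisfies ${\mathcal Z}({\bf 0})=1$, and it is positive-definite because, by the previous paragraph, it is the characteristic functional of the explicitly constructed linear random functional $\{P_\gamma w(f)\}_{f\in{\mathcal D}}$. For continuity on ${\mathcal D}$ I would use the elementary Lipschitz bound $\Big|\int_{\RR^d}\int_{\RR}(e^{-iag_1({\bf x})}-1)\,dP(a)\,d{\bf x}-\int_{\RR^d}\int_{\RR}(e^{-iag_2({\bf x})}-1)\,dP(a)\,d{\bf x}\Big|\le\Big(\int_{\RR}|a|\,dP(a)\Big)\,\|g_1-g_2\|_1$ for $g_1,g_2\in L^1(\RR^d)$, which follows from $|e^{-iag_1}-e^{-iag_2}|\le|a|\,|g_1-g_2|$; together with $\|I_{\gamma,1}(f_1-f_2)\|_1\le C\,q_N(f_1-f_2)$ and continuity of $t\mapsto e^{t}$, this shows ${\mathcal Z}(f_n)\to{\mathcal Z}(f)$ whenever $f_n\to f$ in ${\mathcal D}$. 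By the Minlos-Bochner theorem ${\mathcal Z}$ is then the characteristic functional of a generalized random process, unique in law; since this process shares the characteristic functional of the explicitly constructed $P_\gamma w$, the two coincide in law, which proves the claim and, a posteriori, the continuity and compatibility of $f\mapsto P_\gamma w(f)$.

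The step I expect to be the main obstacle is the construction itself, specifically the interplay between the decay of $I_{\gamma,1}f$ and the minimal moment hypothesis on $P(a)$. The expression ``$w(I_{\gamma,1}f)$'' is only formal, because $w$ in \eqref{whitepoisson.def} was defined on ${\mathcal D}$ whereas $I_{\gamma,1}f$ is neither compactly supported nor bounded; what rescues it is that $I_{\gamma,1}f$ --- by the very design of \eqref{fractionalderivative.veryolddef} and \thmref{integrablefractionalderivative.tm} --- decays just fast enough at infinity to be integrable, while carrying only an integrable interior singularity that the Poisson points almost surely miss. Moreover, because only the first absolute moment $\int_{\RR}|a|\,dP(a)<\infty$ is assumed, every step --- the almost sure convergence of \eqref{generalizedpoisson.tm.eq1}, the applicability of the exponential formula (which needs only $e^{-ih_f}-1\in L^1(\nu)$), and the continuity of ${\mathcal Z}$ --- has to be carried out at the level of $L^1(\nu)$ rather than $L^2$; this is precisely what makes $I_{\gamma,1}$, and not $I_{\gamma,2}$, the correct left-inverse for handling general (possibly non-symmetric) impulse amplitude distributions.
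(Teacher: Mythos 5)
Your proof is correct, but it takes a genuinely different route from the paper's. You realize $P_\gamma w(f)$ directly as $\int a\,(I_{\gamma,1}f)({\bf x})\,dN({\bf x},a)$ against the marked Poisson random measure $N=\sum_k\delta_{({\bf x}_k,a_k)}$ of intensity $\lambda\,d{\bf x}\otimes dP(a)$, obtain almost-sure absolute convergence and the characteristic functional \eqref{generalizedpoisson.tm.eq2} in one stroke from Campbell's theorem and the exponential formula, and then invoke Minlos--Bochner. The paper instead avoids the Poisson-random-measure machinery: it first truncates, introducing $\Phi_{\gamma,N}(f)=\sum_k a_k\varphi({\bf x}_k/N)\,I_{\gamma,1}f({\bf x}_k)$ with a compactly supported cutoff $\varphi$, computes the characteristic function $E_{\gamma,N,f}(t)$ of each truncation (citing the technique of \cite{tafti2009}), verifies linearity, continuity and positive-definiteness of each $\Phi_{\gamma,N}$ so that each is a bona fide generalized random process, and then passes to the limit $N\to\infty$ via the L\'evy continuity theorem (\lemref{levy.lm}) and the dominated convergence theorem, with positive-definiteness of the limiting functional obtained as a pointwise limit of positive-definite forms. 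Both routes pivot on the same analytic input --- \proref{iomegap.tm1} and Corollary~\ref{iomegap.cor1} give that $f\mapsto I_{\gamma,1}f$ is continuous from $\mathcal D$ into $L^1(\RR^d)$ with $I_{\gamma,1}f$ continuous away from the origin --- and both use only the first absolute moment of $P(a)$, exactly as you emphasize. What your approach buys is brevity and transparency: Campbell's formula absorbs the entire truncation-and-limit argument, and positive-definiteness is automatic since $\mathcal Z$ is, by construction, the characteristic functional of an honest linear random functional. What the paper's approach buys is self-containedness: it never needs the general theory of Poisson random measures, building everything from the definitions of spatial Poisson distribution and characteristic function plus the two cited lemmas (L\'evy continuity and Minlos--Bochner). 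One minor point worth keeping explicit in your write-up is the one the paper also addresses: $I_{\gamma,1}f$ may blow up like $\|{\bf x}\|^{\gamma-k_1-d}$ at the origin, so the a.s.\ finiteness of the evaluations $(I_{\gamma,1}f)({\bf x}_k)$ relies on the Poisson points a.s.\ avoiding $\{{\bf 0}\}$; you do note this, which is the right observation.
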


\bigskip

The organization of the paper is as follows.  In  Section \ref{grp.section}, we first introduce
a  linear operator $J_\Omega$  for any
homogeneous  function $\Omega\in C^\infty(\RR^d\backslash \{\bf 0\})$ of degree $-\gamma$, where $\gamma-d\not\in \ZZ_+$.
 The linear operator  $J_\Omega$
becomes the generalized Riesz potential $I_\gamma$ in \eqref{generalizedriesz.def} when $\Omega(\xi)=\|\xi\|^{-\gamma}$;
 conversely, any  derivative of the generalized Riesz potential $I_\gamma$
  is a linear operator $J_\Omega$ associated with some homogeneous function $\Omega$:
$$\partial^{\bf j} I_\gamma f= J_{\Omega_{\bf j}} f\quad {\rm  for\ all} \ f\in {\mathcal S}\ {\rm and} \ {\bf j}\in \ZZ_+^d,$$ where $\Omega_{\bf j}(\xi)=(i\xi)^{\bf j} \|\xi\|^{-\gamma}$.
 We then  study various properties of the above linear operator $J_\Omega$, such as polynomial decay property, dilation-invariance,
translation-invariance,  left-invertibility, and non-integrability in the spatial domain and in the Fourier domain.
The proof of   Theorem  \ref{generalizedriesz.tm} is given at the end of Section \ref{grp.section}.

In Section \ref{irp.section}, we  introduce a linear operator $U_{\Omega,p}$ for any homogeneous function
$\Omega\in C^\infty(\RR^d\backslash \{\bf 0\})$ of degree $-\gamma$, where $1\le p\le \infty$. The above linear operator $U_{\Omega, p}$ becomes
the operator $I_{\gamma,p}$ in \eqref{fractionalderivative.veryolddef} when $\Omega(\xi)=\|\xi\|^{-\gamma}$, and
the operator $J_\Omega$ in \eqref{fractionalderivative.def} when   $0<\gamma<d(1-1/p)$.
We show
that the linear operator $U_{\Omega,p}$ is dilation-invariant, translation-variant and $p$-integrable, and
is a left-inverse of the fractional Laplacian $(-\triangle)^{\gamma/2}$ when $\Omega(\xi)=\|\xi\|^{-\gamma}$.
  The proof of Theorem \ref{integrablefractionalderivative.tm} is given at the end of Section \ref{irp.section}.

In Section \ref{poisson.section}, we give the proof of Theorem \ref{generalizedpoisson.tm} and show that the generalized
random process $P_\gamma w$ can be evaluated pointwise in the sense that
 we can replace
the function $f$ in \eqref{generalizedpoisson.tm.eq1} by the delta functional $\delta$.

In this paper, the capital letter $C$ denotes an absolute positive constant which may vary depending on the occurrence.

\section{Generalized Riesz Potentials}\label{grp.section}

Let  $\gamma$ be a real number such that $\gamma-d\not\in \ZZ_+$, and let
 $\Omega\in C^\infty(\RR^d\backslash \{\bf 0\})$ be a homogeneous function of degree $-\gamma$.
 Following the definition of homogenous tempered distributions in \cite{hormanderbook},
  we define the linear operator $J_\Omega$ from ${\mathcal S}$ to ${\mathcal S}'$
   by
\begin{eqnarray}\label{fractionalderivative.def}
J_\Omega f({\bf x})\!\! & := & \!\! \frac{(2\pi)^{-d}\Gamma(d-\gamma)}{ \Gamma(d+k_0-\gamma)} \int_{S^{d-1}}  \int_0^\infty \Omega(\xi')r^{k_0-\gamma+d-1}\nonumber  \\
& &
\times   \Big(-\frac{d}{dr}\Big)^{k_0} \Big(e^{ir\langle {\bf x}, \xi'\rangle} \hat f(r\xi')\Big) dr d\sigma(\xi'), \quad f\in {\mathcal S},
\end{eqnarray}
where $S^{n-1}=\{\xi'\in \RR^d: \ \|\xi'\|=1\}$ is the unit
sphere in ${\RR}^d$, $d\sigma$ is the area element on $S^{n-1}$, and
 $k_0$ is a nonnegative integer larger than $\gamma-d$.

Note that the linear operator $J_\Omega$ in  \eqref{fractionalderivative.def}
becomes  the generalized Riesz potential  $I_\gamma$ in \eqref{generalizedriesz.def}
when $\Omega(\xi)=\|\xi\|^{-\gamma}$ and $\gamma>0$.
Therefore we call the  linear operator $J_\Omega$ in \eqref{fractionalderivative.def} {\em the generalized Riesz potential  associated with
the homogeneous function $\Omega$ of degree $-\gamma$}, or {\em the generalized Riesz potential} for brevity.

The above definition
 of the generalized Riesz potential $J_\Omega$ is independent on the nonnegative integer $k_0$ as long as it satisfies $k_0>\gamma-d$,
 that can be shown by integration by parts.
Then,  for  $\gamma \in (-\infty, d)$, we may take $k_0=0$ and  reformulate  \eqref{fractionalderivative.def} as
 follows:
   \begin{equation}\label{fractionalderivative.neweq2}
  J_\Omega f({\bf x})=(2\pi)^{-d} \int_{\RR^d} e^{i\langle {\bf x}, \xi\rangle} \Omega(\xi) \hat f(\xi) d\xi
  \quad {\rm for \ all} \ f\in {\mathcal S},
  \end{equation}
  or equivalently
\begin{equation}
\widehat{J_\Omega f}(\xi)= \Omega(\xi) \hat f(\xi)\quad {\rm   for  \ all} \ f\in {\mathcal S},\end{equation}
so that the role of the homogeneous function $\Omega(\xi)$ in \eqref{fractionalderivative.def}
is essentially that of the Fourier symbol for a conventional translation-invariant operator.

Let ${\mathcal S}_\infty$ be the space of all Schwartz functions $f$ such that
$ \partial^{\bf i}\hat f({\bf 0})=0$  for all ${\bf i}\in \ZZ_+^d$, or equivalently that
$\int_{{\mathbb R}^d} {\bf x}^{\bf j} f({\bf x}) d{\bf x}=0$ for all ${\bf j}\in \ZZ_+^d$.
Given a  homogenous function $\Omega\in C^\infty({\mathbb R}^d\backslash \{{\bf 0}\})$, define
the linear  operator $i_\Omega$ on ${\mathcal S}_\infty$  
  by
\begin{equation}\label{fractionalderivative.def1}
\widehat{i_\Omega f}(\xi)= \Omega(\xi) \hat f(\xi),\quad f \in {\mathcal S}_\infty.
\end{equation}
Clearly $i_\Omega$ is a continuous linear operator on  the closed linear  subspace ${\mathcal S}_\infty$
of ${\mathcal S}$.
For any function $f\in {\mathcal S}_\infty$, applying the   integration-by-parts technique $k_0$ times and noticing that
$\lim_{\epsilon\to 0}
\epsilon^{-\gamma} |\partial^{\bf i} \hat f(\epsilon \xi')|=0$ for all $\xi'\in S^{d-1}$ and ${\bf i}\in \ZZ_+^d$,
we obtain that
\begin{eqnarray}\label{extension.eq}
J_\Omega f({\bf x}) & = & \frac{(2\pi)^{-d}\Gamma(d-\gamma)}{ \Gamma(d+k_0-\gamma)}
\lim_{\epsilon\to 0}\int_{S^{d-1}}  \int_\epsilon^\infty r^{k_0+d-\gamma-1}\Omega(\xi')\nonumber  \\
& &
\quad \times   \Big(-\frac{d}{dr}\Big)^{k_0} \Big(e^{ir\langle {\bf x}, \xi'\rangle} \hat f(r\xi')\Big) dr d\sigma(\xi')\nonumber\\
& = & (2\pi)^{-d} \lim_{\epsilon\to 0}
\int_{S^{d-1}}  \int_\epsilon^\infty \Omega(\xi')r^{d-\gamma-1}  e^{ir\langle {\bf x}, \xi'\rangle} \hat f(r\xi') dr d\sigma(\xi')\nonumber\\
& = & (2\pi)^{-d}\int_{\RR^d} e^{i\langle {\bf x}, \xi\rangle} \Omega(\xi) \hat f(\xi) d\xi= i_\Omega f({\bf x}).
\end{eqnarray}
Hence the generalized Riesz potential $J_\Omega$ is the extension of the linear operator $i_\Omega$ from the closed
 subspace ${\mathcal S}_\infty$ to the whole space ${\mathcal S}$.

In the  sequel, we  will study  further properties of the generalized Riesz
potential  $J_\Omega$, such as the polynomial decay property
 (Theorem \ref{generalizedrieszomega1.tm}),
the continuity as a linear operator  from ${\mathcal S}$ to ${\mathcal S}'$ (Corollary \ref{generalizedrieszomega1.cr}),
the translation-invariance and  dilation-invariance (Theorem \ref{maintheorem.iomega1}),
the composition and left-inverse property (Theorem \ref{composition.tm} and  Corollary \ref{composition.cr}),
the uniqueness of various extensions
of the linear operator $i_\Omega$ from the closed
 subspace ${\mathcal S}_\infty$ to the whole space ${\mathcal S}$ (Theorems \ref{iomega2.tm} 
 and \ref{iomega4.tm}),
 the non-integrability in  the spatial domain (Theorem \ref{time1.tm}), and the non-integrability in  the Fourier domain (Theorem \ref{frequency.tm1}).
Some of  those properties  will be used to prove Theorem \ref{generalizedriesz.tm}, which  is included at
the end of this section.

\subsection{Polynomial decay  property  and continuity}

   \begin{Tm} \label{generalizedrieszomega1.tm} Let  $\gamma$ be a positive number with $\gamma-d\not\in \ZZ_+$,
  $k_0$ be the smallest nonnegative integer larger than $\gamma-d$,
 and let $\Omega\in C^\infty (\RR^d\backslash \{\bf 0\})$ be a  homogeneous function of degree $-\gamma$.
 If there exist positive constants $\epsilon$ and $C_\epsilon$ such that
\begin{equation}\label{generalizedrieszomega1.tm.eq1}
|f({\bf x})|\le C_\epsilon (1+\|{\bf x}\|)^{-k_0-d-\epsilon} \ {\rm for \ all} \ {\bf x}\in {\mathbb R}^d,
 \end{equation}
then there exists a positive constant $C$  such that
 \begin{eqnarray}\label{generalizedrieszomega1.tm.eq2}
|J_\Omega f({\bf x})| \le
  C  \Big(\sup_{{\bf z}\in {\mathbb R}^d} |f({\bf z})| (1+\|{\bf z}\|)^{k_0+d+\epsilon}\Big) (1+\|{\bf x}\|)^{\gamma-d},
 \ \  {\bf x}\in {\mathbb R}^d.
 \end{eqnarray}
 \end{Tm}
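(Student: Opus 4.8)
The plan is to pass from the frequency-side formula \eqref{fractionalderivative.def} for $J_\Omega$ to a spatial convolution representation and then to estimate that convolution against the weight in \eqref{generalizedrieszomega1.tm.eq1}. The key point is that $J_\Omega f=K\ast f$, where $K$ is a function on $\RR^d\backslash\{{\bf 0}\}$ homogeneous of degree $\gamma-d$ and bounded on the unit sphere, so that $|K({\bf z})|\le C\|{\bf z}\|^{\gamma-d}$ and (since $\gamma-d>-d$) $K$ is locally integrable; the estimate \eqref{generalizedrieszomega1.tm.eq2} then reduces to a routine weighted-convolution bound.

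\emph{Step 1: the convolution representation.} Since $\gamma-d\notin\ZZ_+$, the function $\Omega\in C^\infty(\RR^d\backslash\{{\bf 0}\})$ extends uniquely to a homogeneous tempered distribution of degree $-\gamma$ (two extensions would differ by a distribution supported at ${\bf 0}$ and homogeneous of degree $-\gamma$, i.e.\ a combination of $\partial^{\bf j}\delta$ with $|{\bf j}|=\gamma-d\notin\ZZ_+$, hence $0$), and, following \cite{hormanderbook}, its inverse Fourier transform $K:={\mathcal F}^{-1}\Omega$ is again a homogeneous function of degree $\gamma-d$, smooth away from the origin. When $0<\gamma<d$ one takes $k_0=0$, and \eqref{fractionalderivative.neweq2} reads $\widehat{J_\Omega f}=\Omega\widehat f=\widehat{K\ast f}$, so $J_\Omega f({\bf x})=\int_{\RR^d}K({\bf x}-{\bf y})f({\bf y})\,d{\bf y}$, the integral converging absolutely because $K$ is locally integrable and bounded at infinity and $f\in{\mathcal S}$. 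When $\gamma>d$ one has $k_0\ge1$ and $s:=k_0-\gamma+d\in(0,1)$; here I would substitute $\widehat f(r\xi')=\int_{\RR^d}f({\bf y})e^{-ir\langle{\bf y},\xi'\rangle}\,d{\bf y}$ into \eqref{fractionalderivative.def}, differentiate under the integral so that $(-d/dr)^{k_0}\big(e^{ir\langle{\bf x},\xi'\rangle}\widehat f(r\xi')\big)=\int f({\bf y})(-i\langle{\bf x}-{\bf y},\xi'\rangle)^{k_0}e^{ir\langle{\bf x}-{\bf y},\xi'\rangle}\,d{\bf y}$, and evaluate the $r$-integral by the identity $\int_0^\infty r^{s-1}e^{irt}\,dr=\Gamma(s)(-it)^{-s}$, read as $\lim_{\delta\downarrow0}\int_0^\infty r^{s-1}e^{(it-\delta)r}\,dr=\lim_{\delta\downarrow0}\Gamma(s)(\delta-it)^{-s}$. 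Fubini applies for each $\delta>0$, and dominated convergence as $\delta\downarrow0$ is legitimate since $s<1$ and $\gamma-d>0$ make $|\langle{\bf x}-{\bf y},\xi'\rangle|^{k_0-s}=|\langle{\bf x}-{\bf y},\xi'\rangle|^{\gamma-d}$ locally bounded and integrable against $|f({\bf y})|$; this collapses \eqref{fractionalderivative.def} into $J_\Omega f({\bf x})=\int_{\RR^d}K({\bf x}-{\bf y})f({\bf y})\,d{\bf y}$ with $K({\bf z})=c\int_{S^{d-1}}\Omega(\xi')(-i\langle{\bf z},\xi'\rangle)^{\gamma-d}\,d\sigma(\xi')$, which is homogeneous of degree $\gamma-d$ and, because $\gamma-d>0$ and $\Omega$ is bounded on $S^{d-1}$, satisfies $|K({\bf z})|\le C\|{\bf z}\|^{\gamma-d}$ at once (and it agrees with ${\mathcal F}^{-1}\Omega$ of the previous case, as checking against \eqref{extension.eq} on ${\mathcal S}_\infty$ shows).

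\emph{Step 2: the weighted-convolution estimate.} With $A:=\sup_{{\bf z}\in\RR^d}|f({\bf z})|(1+\|{\bf z}\|)^{k_0+d+\epsilon}$, hypothesis \eqref{generalizedrieszomega1.tm.eq1} gives $|f({\bf y})|\le A(1+\|{\bf y}\|)^{-k_0-d-\epsilon}$, so by Step 1
\[
|J_\Omega f({\bf x})|\le CA\int_{\RR^d}\|{\bf x}-{\bf y}\|^{\gamma-d}(1+\|{\bf y}\|)^{-k_0-d-\epsilon}\,d{\bf y},
\]
and it suffices to bound the integral by $C(1+\|{\bf x}\|)^{\gamma-d}$. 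For $\|{\bf x}\|\le1$ this is immediate from local integrability of $\|{\bf x}-{\bf y}\|^{\gamma-d}$ (using $\gamma-d>-d$). For $\|{\bf x}\|\ge1$ I would split $\RR^d$ into $\{\|{\bf y}\|\le\|{\bf x}\|/2\}$, $\{\|{\bf x}-{\bf y}\|\le\|{\bf x}\|/2\}$ and the complement of both: on the first region $\|{\bf x}-{\bf y}\|$ is comparable to $\|{\bf x}\|$ and the remaining integral of $(1+\|{\bf y}\|)^{-k_0-d-\epsilon}$ is finite; on the second, $\|{\bf y}\|$ is comparable to $\|{\bf x}\|$ and $\int_{\|{\bf x}-{\bf y}\|\le\|{\bf x}\|/2}\|{\bf x}-{\bf y}\|^{\gamma-d}\,d{\bf y}\le C\|{\bf x}\|^{\gamma}$, so this piece is at most $C\|{\bf x}\|^{\gamma-k_0-d-\epsilon}\le C\|{\bf x}\|^{\gamma-d}$; on the complement both $\|{\bf y}\|$ and $\|{\bf x}-{\bf y}\|$ exceed $\|{\bf x}\|/2$, and bounding $\|{\bf x}-{\bf y}\|^{\gamma-d}$ by $C\|{\bf y}\|^{\gamma-d}$ if $\gamma\ge d$ and by $C\|{\bf x}\|^{\gamma-d}$ if $\gamma<d$, the integral over $\|{\bf y}\|\ge\|{\bf x}\|/2$ converges and is $\le C\|{\bf x}\|^{\gamma-d}$ --- the convergence and the final exponent are exactly what the condition $k_0+\epsilon>\gamma-d$ provides, which holds because $k_0$ is the smallest nonnegative integer larger than $\gamma-d$. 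Assembling the three contributions, and using $\|{\bf x}\|^{\gamma-d}\le C(1+\|{\bf x}\|)^{\gamma-d}$ for $\|{\bf x}\|\ge1$, yields \eqref{generalizedrieszomega1.tm.eq2}.

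\emph{Main obstacle.} The convolution estimate of Step 2 is routine; the substantive step is obtaining the representation $J_\Omega f=K\ast f$ for \emph{every} $f\in{\mathcal S}$, not just on the proper (hence non-dense) subspace ${\mathcal S}_\infty$, where it follows from \eqref{extension.eq} but where the extension to all of ${\mathcal S}$ is a priori not unique. This is why the explicit regularized computation above is needed, and it is also where the hypothesis $\gamma-d\notin\ZZ_+$ is indispensable: it rules out $\partial^{\bf j}\delta$ corrections and logarithmic terms and thereby keeps $K$ an honest homogeneous function of degree $\gamma-d$.
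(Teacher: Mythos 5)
Your proof is correct, but it takes a genuinely different route from the paper's.  The paper never tries to write $J_\Omega f$ as a single convolution with a kernel of degree $\gamma-d$; instead it applies the Leibniz rule to $\big(-\tfrac{d}{dr}\big)^{k_0}\!\big(e^{ir\langle{\bf x},\xi'\rangle}\hat f(r\xi')\big)$, obtaining the identity \eqref{generalizedrieszomega1.tm.pf.eq1}, which expresses $J_\Omega f({\bf x})$ as a finite sum of monomials ${\bf x}^{\bf i}$ (degree $\le k_0$) times convolutions $J_{\Omega_{{\bf i}+{\bf j}}}(f_{\bf j})$ whose kernels $K_{{\bf i}+{\bf j}}={\mathcal F}^{-1}\Omega_{{\bf i}+{\bf j}}$ are homogeneous of degree $\gamma-k_0-d\in(-d,0)$, hence locally integrable; the one-line convolution bound \eqref{generalizedrieszomega1.tm.pf.eq3}, valid for negative-degree kernels, then finishes everything, with the lost power $(1+\|{\bf x}\|)^{k_0}$ absorbed by the identity $k_0+(\gamma-k_0-d)=\gamma-d$.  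You instead establish the single-kernel representation $J_\Omega f=K\ast f$ with $K$ homogeneous of degree $\gamma-d$ (possibly positive) by substituting $\hat f(r\xi')=\int f({\bf y})e^{-ir\langle{\bf y},\xi'\rangle}d{\bf y}$, regularizing with $e^{-\delta r}$, invoking $\int_0^\infty r^{s-1}e^{irt}dr=\Gamma(s)(-it)^{-s}$ and dominated convergence as $\delta\downarrow 0$, and then proving the convolution estimate directly by splitting $\RR^d$ into near, far, and intermediate regions.  Both routes are sound: yours buys a cleaner single-kernel formula at the price of a delicate interchange of a conditionally convergent oscillatory integral (and the observation that the branch in $(-it)^{k_0-s}$ is consistent), whereas the paper's Leibniz decomposition sidesteps any regularization by working exclusively with kernels in the safe range $(-d,0)$ and with the single elementary inequality \eqref{generalizedrieszomega1.tm.pf.eq3}.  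One small slip to flag: the dominated-convergence step hinges on $\int_{S^{d-1}}\int_{\RR^d}|\Omega(\xi')||f({\bf y})|\,\|{\bf x}-{\bf y}\|^{\gamma-d}\,d{\bf y}\,d\sigma(\xi')<\infty$, which your hypothesis \eqref{generalizedrieszomega1.tm.eq1} does guarantee (because $k_0+d+\epsilon>\gamma$), but it is worth stating this explicitly rather than saying only that $|\langle{\bf x}-{\bf y},\xi'\rangle|^{\gamma-d}$ is ``locally bounded''.
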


\begin{proof} 
Noting that
$  \big(\frac{d}{dr}\big)^{s} e^{ir\langle {\bf x}, \xi'\rangle}  
  =
 s! \Big(\sum_{|{\bf i}|=s} \frac{(i{\bf x})^{\bf i} \xi'^{\bf i}}{ {\bf i}!} \Big)e^{ir\langle {\bf x}, \xi'\rangle}$
 and
$\big(\frac{d}{dr}\big)^{k_0-s} \hat f(r\xi')= (k_0-s)! \sum_{|{\bf j}|=k_0-s}  \frac{ (\xi')^{\bf j} \partial^{\bf j} \hat f(r\xi')}{{\bf j}!}
$ for all $0\le s\le k_0$, we obtain from the Leibniz rule that
 \begin{eqnarray*}
\Big(\frac{d}{dr}\Big)^{k_0} \Big(e^{ir\langle {\bf x}, \xi'\rangle} \hat f(r\xi')\Big)
 & = &   \sum_{s=0}^{k_0}\binom{k_0}{s}
 \Big\{\Big(\frac{d}{dr}\Big)^{k_0-s} e^{ir\langle {\bf x}, \xi'\rangle}\Big\}\cdot
\Big\{\Big(\frac{d}{dr}\Big)^{k_0} \hat f(r\xi')\Big\}\nonumber\\
 & = &
 \Big(\sum_{|{\bf i}|+|{\bf j}|=k_0}\frac{k_0!}{{\bf i}!{\bf j}!}
{ (i{\bf x})}^{\bf i} (\xi')^{{\bf i}+{\bf j}}
\partial^{\bf j} \hat f(r\xi')\Big)e^{ir\langle {\bf x}, \xi'\rangle}.
 \end{eqnarray*}
 Substituting the above expression into \eqref{fractionalderivative.def}
 we get 
 \begin{eqnarray}\label{generalizedrieszomega1.tm.pf.eq1}
 J_\Omega f({\bf x})
& = & (-1)^{k_0}\sum_{|{\bf i}|+|{\bf j}|=k_0}\frac{k_0!}{{\bf i}!{\bf j}!} (i{\bf x})^{\bf i}
\Big\{\frac{(2\pi)^{-d}\Gamma(d-\gamma)}{ \Gamma(d+k_0-\gamma)}\nonumber\\
& & \times
\int_{\RR^d} e^{i\langle {\bf x}, \xi\rangle} \big(\xi^{{\bf i}+{\bf j}}\Omega(\xi)\big)
\partial^{\bf j}\hat  f(\xi)d\xi\Big\}\nonumber\\
& = & \frac{\Gamma(d-\gamma)}{\Gamma(d+k_0-\gamma)} \sum_{|{\bf i}|+|{\bf j}|=k_0}\frac{k_0! }{{\bf i}!{\bf j}!}  (-{\bf x})^{\bf i} J_{\Omega_{{\bf i}+{\bf j}}} (f_{\bf j})({\bf x}),
 \end{eqnarray}
 where $\Omega_{{\bf i}+{\bf j}}(\xi)= (i\xi)^{{\bf i}+{\bf j}}\Omega(\xi)$
  and $f_{\bf j}({\bf x})= {\bf x}^{\bf j} f({\bf x})$. Denote the inverse Fourier transform
  of $\Omega_{{\bf k}}, |{\bf k}|=k_0$, by $K_{{\bf k}}$. Then $K_{{\bf k}}\in C^\infty({\mathbb R}^d\backslash \{{\bf 0}\})$ is a homogeneous function of degree $\gamma-k_0-d$ (\cite[Theorems 7.1.16 and 7.1.18]{hormanderbook}), and hence
  there exists a positive constant $C$ such that
  \begin{equation}\label{generalizedrieszomega1.tm.pf.eq2}
  |K_{\bf k}({\bf x})|\le C \|{\bf x}\|^{\gamma-k_0-d} \quad {\rm for \ all} \  {\bf x}\in {\mathbb R}^d\backslash \{{\bf 0}\}.
  \end{equation}
 For any $\epsilon>0$ and $\beta\in (0, d)$, we have
  \begin{eqnarray}\label{generalizedrieszomega1.tm.pf.eq3}
  & &  \int_{{\mathbb R}^d} \|{\bf x}-{\bf y}\|^{-\beta} (1+\|{\bf y}\|)^{-d-\epsilon} d{\bf y}\nonumber\\
  & \le &
  \Big( \int_{\|{\bf y}\|\le (\|{\bf x}\|+1)/2}+\int_{(\|{\bf x}\|+1)/2\le \|{\bf y}\|\le 2(\|{\bf x}\|+1)}+
  \int_{\|{\bf y}\|\ge 2(\|{\bf x}\|+1)} \Big)  \nonumber\\
  & & \qquad  \|{\bf x}-{\bf y}\|^{-\beta} (1+\|{\bf y}\|)^{-d-\epsilon} d{\bf y}\nonumber\\
& \le &  C (1+\|{\bf x}\|)^{-\beta}.
  \end{eqnarray}
  Combining \eqref{generalizedrieszomega1.tm.pf.eq1},
  \eqref{generalizedrieszomega1.tm.pf.eq2} and \eqref{generalizedrieszomega1.tm.pf.eq3} yields
  \begin{eqnarray*}
  |J_\Omega f({\bf x})| & \le &  C \sum_{|{\bf i}|+|{\bf j}|=k_0}
  |{\bf x}|^{|{\bf i}|} \Big|\int_{{\mathbb R}^d} K_{{\bf i}+{\bf j}}({\bf x}-{\bf y}) {\bf y}^{\bf j} f({\bf y}) \Big| d{\bf y}\nonumber\\
  & \le & C (1+\|{\bf x}\|)^{k_0} \int_{{\mathbb R}^d}
   \|{\bf x}-{\bf y}\|^{\gamma-k_0-d}
  (1+\|{\bf y}\|)^{k_0} |f({\bf y})| d{\bf y}\nonumber\\
& \le & C  \Big(\sup_{{\bf z}\in {\mathbb R}^d} |f({\bf z})| (1+\|{\bf z}\|)^{k_0+d+\epsilon}\Big) (1+\|{\bf x}\|)^{\gamma-d}.\end{eqnarray*}
This proves the desired polynomial decay estimate \eqref{generalizedrieszomega1.tm.eq2}.
\end{proof}

For any  $f\in {\mathcal S}$ and ${\bf j}\in \ZZ_+^d$ with $|{\bf j}|=1$, it follows from \eqref{fractionalderivative.def} that
  \begin{eqnarray*}
\partial^{\bf j}(J_\Omega f) ({\bf x}) &= &  J_\Omega(\partial^{\bf j} f) ({\bf x})\nonumber\\
  & = &
  \frac{(2\pi)^{-d}\Gamma(d-\gamma)}{ \Gamma(d+k_0-\gamma)} \int_{S^{d-1}}  \int_0^\infty \Omega(\xi') (i\xi')^{\bf j} r^{k_0+d-\gamma-1}\nonumber  \\
& &
\times   \Big(-\frac{d}{dr}\Big)^{k_0} \Big(e^{ir\langle {\bf x}, \xi'\rangle} \hat f(r\xi') r \Big) dr d\sigma(\xi')\nonumber\\
& = &
  \frac{(2\pi)^{-d}\Gamma(d-\gamma)}{ \Gamma(d+k_0-\gamma)} \int_{S^{d-1}}  \int_0^\infty \Omega(\xi') (i\xi')^{\bf j} r^{k_0+d-\gamma-1}\nonumber  \\
& &
\times   \Big\{ r \Big(-\frac{d}{dr}\Big)^{k_0} \Big(e^{ir\langle {\bf x}, \xi'\rangle} \hat f(r\xi')  \Big)
 \nonumber\\
 & & \quad  -{k_0}\Big(-\frac{d}{dr}\Big)^{k_0-1} \Big(e^{ir\langle {\bf x}, \xi'\rangle} \hat f(r\xi')  \Big)\Big\}  dr d\sigma(\xi')\nonumber\\
 & = & \Big(\frac{d+k_0-\gamma}{d-\gamma}-k_0\frac{1}{d-\gamma}\Big) J_{\Omega_{\bf j}} f({\bf x})=
 J_{\Omega_{\bf j}} f({\bf x}),
  \end{eqnarray*}
  where $\Omega_{\bf j}(\xi)=(i\xi)^{\bf j} \Omega(\xi) $.
 Applying the  argument inductively leads to
   \begin{equation} \label{fractionalderivative.eq00}
\partial^{\bf j} (J_\Omega f)=  J_\Omega(\partial^{\bf j} f) = J_{\Omega_{\bf j}} f \quad {\rm for \ all} \ f\in {\mathcal S} \ {\rm and} \ {\bf j}\in \ZZ_+^d,
  \end{equation}
  where $\Omega_{\bf j}(\xi)= (i\xi)^{\bf j}\Omega(\xi)$. This together with Theorem \ref{generalizedrieszomega1.tm} shows that
$J_\Omega f$ is a smooth function on ${\mathbb R}^d$ for any Schwartz function $f$.

   \begin{Cr} \label{generalizedrieszomega1.cr0} Let  $\gamma, k_0$ and $\Omega$ be as
   in Theorem \ref{generalizedrieszomega1.tm}.
 If  $f$ satisfies \eqref{generalizedrieszomega1.tm.eq1} for some  positive constants $\epsilon$ and $C_\epsilon$,
then  for any ${\bf j}\in \ZZ_+^d$ with $|{\bf j}|<\gamma$ there exists a positive constant $C_{\bf j}$  such that
 \begin{equation}\label{generalizedrieszomega1.cr0.eq2}
|\partial^{\bf j} (J_\Omega f)({\bf x})| \le
  C_{\bf j}  \Big(\sup_{{\bf z}\in {\mathbb R}^d} |f({\bf z})| (1+\|{\bf z}\|)^{k_0+d+\epsilon}\Big) (1+\|{\bf x}\|)^{\gamma-|{\bf j}|-d}, \ {\bf x}\in \RR^d.
  \end{equation}
 \end{Cr}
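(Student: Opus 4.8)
The plan is to deduce \eqref{generalizedrieszomega1.cr0.eq2} from the polynomial decay estimate of Theorem \ref{generalizedrieszomega1.tm} itself, applied not to $\Omega$ but to the ``differentiated'' symbol $\Omega_{\bf j}(\xi)=(i\xi)^{\bf j}\Omega(\xi)$. The entry point is the commutation identity \eqref{fractionalderivative.eq00}, which gives $\partial^{\bf j}(J_\Omega f)=J_{\Omega_{\bf j}}f$ for every $f\in{\mathcal S}$ and every ${\bf j}\in\ZZ_+^d$. Since $\Omega\in C^\infty(\RR^d\backslash\{{\bf 0}\})$ is homogeneous of degree $-\gamma$ and $(i\xi)^{\bf j}$ is a polynomial, the function $\Omega_{\bf j}$ again lies in $C^\infty(\RR^d\backslash\{{\bf 0}\})$ and is homogeneous of degree $-\gamma'$ with $\gamma':=\gamma-|{\bf j}|$. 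Thus it suffices to run Theorem \ref{generalizedrieszomega1.tm} with $\gamma$ replaced by $\gamma'$ and $\Omega$ replaced by $\Omega_{\bf j}$, and then translate the conclusion back through \eqref{fractionalderivative.eq00}.

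Before invoking the theorem I would verify that its standing hypotheses survive the shift. Positivity: $\gamma'=\gamma-|{\bf j}|>0$ because of the assumption $|{\bf j}|<\gamma$. Non-integrality: if $\gamma'-d=(\gamma-d)-|{\bf j}|$ were a nonnegative integer, then $\gamma-d=(\gamma'-d)+|{\bf j}|$ would be a nonnegative integer as well, contradicting $\gamma-d\notin\ZZ_+$; hence $\gamma'-d\notin\ZZ_+$. Now let $k_0'$ be the smallest nonnegative integer larger than $\gamma'-d$. Since $\gamma'-d<\gamma-d$ and the map ``smallest nonnegative integer exceeding $t$'' is nondecreasing in $t$, we have $k_0'\le k_0$. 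Consequently the decay hypothesis \eqref{generalizedrieszomega1.tm.eq1}, which bounds $f$ by $C_\epsilon(1+\|{\bf x}\|)^{-k_0-d-\epsilon}$, a fortiori bounds $f$ by $C_\epsilon(1+\|{\bf x}\|)^{-k_0'-d-\epsilon}$ (the base being $\ge 1$), so $f$ meets the requirement of Theorem \ref{generalizedrieszomega1.tm} relative to the parameters $\gamma'$ and $k_0'$.

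Applying Theorem \ref{generalizedrieszomega1.tm} to $J_{\Omega_{\bf j}}f$ then produces a constant $C_{\bf j}$ with
\[
|J_{\Omega_{\bf j}}f({\bf x})|\le C_{\bf j}\Big(\sup_{{\bf z}\in\RR^d}|f({\bf z})|(1+\|{\bf z}\|)^{k_0'+d+\epsilon}\Big)(1+\|{\bf x}\|)^{\gamma'-d},\qquad {\bf x}\in\RR^d,
\]
and, using $k_0'\le k_0$ once more, the factor $(1+\|{\bf z}\|)^{k_0'+d+\epsilon}$ on the right may be enlarged to $(1+\|{\bf z}\|)^{k_0+d+\epsilon}$. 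Since $\gamma'-d=\gamma-|{\bf j}|-d$ and $\partial^{\bf j}(J_\Omega f)=J_{\Omega_{\bf j}}f$ by \eqref{fractionalderivative.eq00}, this is precisely \eqref{generalizedrieszomega1.cr0.eq2}.

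There is no real analytic obstacle here; the statement is a genuine corollary and the only step needing care is the bookkeeping in the second paragraph — namely checking that $(\gamma',k_0',\Omega_{\bf j})$ still satisfies the hypotheses of Theorem \ref{generalizedrieszomega1.tm}. It is exactly here that the assumption $\gamma-d\notin\ZZ_+$ is used (to keep $\gamma'-d$ out of $\ZZ_+$) together with the convention that $k_0$ is the \emph{smallest} admissible integer (to guarantee $k_0'\le k_0$, so that the assumed decay of $f$ is strong enough for the lower-order symbol).
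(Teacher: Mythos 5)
Your proof is correct and follows the same route the paper implicitly indicates: combine the commutation identity $\partial^{\bf j}(J_\Omega f)=J_{\Omega_{\bf j}}f$ of \eqref{fractionalderivative.eq00} with Theorem \ref{generalizedrieszomega1.tm} applied to the lower-degree symbol $\Omega_{\bf j}$. The verification that $(\gamma',k_0',\Omega_{\bf j})$ still satisfies the hypotheses of that theorem, and that $k_0'\le k_0$ so the assumed decay of $f$ suffices, is exactly the bookkeeping the paper leaves to the reader.
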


 An easy consequence of the above smoothness result about  $J_\Omega f$   is the
  continuity of  the generalized Riesz potential $J_\Omega$ from ${\mathcal S}$ to ${\mathcal S}'$.

\begin{Cr}\label{generalizedrieszomega1.cr}
 Let  $\gamma$ be a positive number with $\gamma-d\not\in \ZZ_+$,
   and let $\Omega\in C^{\infty} (\RR^d\backslash \{\bf 0\})$ be a  homogeneous function of degree $-\gamma$.
 Then the generalized Riesz potential  $J_\Omega $ associated with the homogeneous function $\Omega$ is
   a  continuous linear operator from ${\mathcal S}$ to ${\mathcal S}'$.
\end{Cr}

Now consider the generalized Riesz potential $J_\Omega$ when $\Omega$ is a homogeneous function  of positive degree $\alpha$.
In this case,
$$ J_\Omega f({\bf x})  =   (2\pi)^{-d} \int_{\RR^d} e^{i\langle {\bf x}, \xi\rangle}
\Omega(\xi)\hat f(\xi) d\xi\quad {\rm for \ all} \ f\in {\mathcal S}$$
by \eqref{fractionalderivative.neweq2}.
Applying the integration-by-parts technique then gives
$$
 J_\Omega f({\bf x})
=  (2\pi)^{-d} (-i {\bf x}^{\bf i})^{-1} \sum_{{\bf j}+{\bf k}={\bf i}}
\frac{{\bf i}!}{{\bf j}!{\bf k}!}
\int_{\RR^d}
e^{i\langle {\bf x}, \xi\rangle}
\partial^{\bf j} \Omega(\xi) \partial^{\bf k}\hat f(\xi) d\xi
$$
for any ${\bf i}\in \ZZ_+^d$.
This, together with the identity
$$1=\sum_{|{\bf l}|=\lceil \alpha\rceil-|{\bf j}|} \frac{(\lceil \alpha \rceil-|{\bf j}|)!}{{\bf l}!}
\Big(\frac{i\xi}{\|{\bf \xi}\|^2}\Big)^{\bf l}
(-i\xi)^{\bf l},\quad |{\bf j}|\le \lceil \alpha\rceil,$$
leads to the following estimate of $J_\Omega f({\bf x})$:
\begin{eqnarray*}
|J_\Omega f({\bf x})|\!\! & \le & \!\! C (1+\|{\bf x}\|)^{-\lceil \alpha\rceil}
\sum_{|{\bf j}|+|{\bf k}|\le \lceil \alpha\rceil, |{\bf l}|=\lceil \alpha\rceil-|{\bf j}|}
\Big|\int_{{\RR}^d} e^{i\langle {\bf x}, \xi\rangle}
 \Omega_{{\bf j}, {\bf l}}(\xi) \xi^{\bf l} \partial^{\bf k} \hat f(\xi) d\xi\Big| \nonumber\\
\!\! & \le & \!\!
C (1+\|{\bf x}\|)^{-\lceil \alpha\rceil}
\sum_{|{\bf j}|+|{\bf k}|\le \lceil \alpha\rceil, |{\bf l}|+|{\bf j}|=\lceil \alpha\rceil} |I_{\Omega_{{\bf j}, {\bf l}}} f_{{\bf l}, {\bf k}}({\bf x})|,
\end{eqnarray*}
where $\lceil \alpha\rceil$ is the smallest integer larger than $\alpha$,  $\Omega_{{\bf j}, {\bf l}}(\xi)=\partial^{\bf j}\Omega(\xi) (i\xi/\|\xi\|^2)^{\bf l}$, and
 $\widehat{ f_{{\bf l}, {\bf k}}} (\xi)=(-i\xi)^{\bf l} \partial^{{\bf k}} \hat f(\xi)$.
Note that $\Omega_{{\bf j}, {\bf l}}\in C^\infty(\RR^d\backslash \{{\bf 0}\})$ is a homogeneous function of degree $\alpha-\lceil \alpha\rceil<0$
when $|{\bf j}|+|{\bf l}|=\lceil \alpha\rceil$, and also that functions
$f_{{\bf l}, {\bf k}}({\bf x}), |{\bf k}|, |{\bf l}|\le \lceil \alpha\rceil$  are linear combinations of
${\bf x}^{\bf i} \partial^{\bf j} f({\bf x}), |{\bf i}|, |{\bf j}|\le  \lceil \alpha\rceil$.
We then apply Theorem \ref{generalizedrieszomega1.tm}  to obtain the following polynomial  decay estimate of $J_\Omega f$ when $\Omega$ is
a  homogeneous function of positive degree:

\begin{pr}\label{positivegeneralizedrieszomega1.cr}
Let $\alpha$ be a positive non-integer number, and
 $\Omega\in C^\infty (\RR^d\backslash \{\bf 0\})$ be a  homogeneous function of degree $\alpha$.
 If there exist positive constants $\epsilon$ and $C_\epsilon$ such that
\begin{equation*}\label{positivegeneralizedrieszomega1.cr.eq1}
\sum_{|{\bf i}|\le \lceil \alpha\rceil}
|\partial^{\bf i} f({\bf x})|\le C_\epsilon (1+\|{\bf x}\|)^{-\lceil \alpha\rceil-d-\epsilon} \ {\rm for \ all} \ {\bf x}\in {\mathbb R}^d,
 \end{equation*}
then there exists a positive constant $C$  such that
 \begin{equation}
 \label{positivegeneralizedrieszomega1.cr.eq2}
|J_\Omega f({\bf x})| \le
  C  \Big(\sum_{|{\bf i}|\le \lceil \alpha\rceil} \sup_{{\bf z}\in {\mathbb R}^d} |\partial^{\bf i} f({\bf z})| (1+\|{\bf z}\|)^{\lceil \alpha\rceil+d+\epsilon}\Big) (1+\|{\bf x}\|)^{-\alpha-d}
 \end{equation}
 for all ${\bf x}\in {\mathbb R}^d$.
\end{pr}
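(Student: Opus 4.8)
The plan is to obtain Proposition~\ref{positivegeneralizedrieszomega1.cr} from the decay estimate of Theorem~\ref{generalizedrieszomega1.tm} by the chain of reductions indicated in the paragraph preceding the statement. Since $\Omega$ is \emph{positively} homogeneous of degree $\alpha$, it is bounded near the origin and locally integrable, so in \eqref{fractionalderivative.def} one may take $k_0=0$ and represent $J_\Omega f$ by the absolutely convergent integral $J_\Omega f({\bf x})=(2\pi)^{-d}\int_{\RR^d}e^{i\langle{\bf x},\xi\rangle}\Omega(\xi)\hat f(\xi)\,d\xi$ as in \eqref{fractionalderivative.neweq2}. The strategy is to split according to the size of ${\bf x}$: for large ${\bf x}$ one first extracts a factor carrying $\lceil\alpha\rceil$ powers of decay in ${\bf x}$ by integration by parts in $\xi$, and then accounts for the remaining decay $(1+\|{\bf x}\|)^{\lceil\alpha\rceil-\alpha-d}$ through Theorem~\ref{generalizedrieszomega1.tm}; for bounded ${\bf x}$ the target $(1+\|{\bf x}\|)^{-\alpha-d}$ is comparable to a constant and a cruder argument suffices.

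For $\|{\bf x}\|\ge1$ I would choose a coordinate index $p$ with $|x_p|\ge\|{\bf x}\|/\sqrt d>0$ and integrate by parts $\lceil\alpha\rceil$ times in $\xi_p$, using $e^{i\langle{\bf x},\xi\rangle}=(ix_p)^{-1}\partial_{\xi_p}e^{i\langle{\bf x},\xi\rangle}$ and the Leibniz rule to distribute the derivatives among $\Omega$ and $\hat f$. This yields the prefactor $|x_p|^{-\lceil\alpha\rceil}\le C(1+\|{\bf x}\|)^{-\lceil\alpha\rceil}$ together with a finite sum of integrals $\int_{\RR^d}e^{i\langle{\bf x},\xi\rangle}\partial^{\bf j}\Omega(\xi)\,\partial^{\bf k}\hat f(\xi)\,d\xi$ with $|{\bf j}|+|{\bf k}|=\lceil\alpha\rceil$. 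Into each such integral I would insert the multinomial identity $1=\sum_{|{\bf l}|=\lceil\alpha\rceil-|{\bf j}|}\frac{(\lceil\alpha\rceil-|{\bf j}|)!}{{\bf l}!}(i\xi/\|\xi\|^2)^{\bf l}(-i\xi)^{\bf l}$: the factors $(-i\xi)^{\bf l}$ merge with $\partial^{\bf k}\hat f$ into $\widehat{f_{{\bf l},{\bf k}}}$, where by the Leibniz rule $f_{{\bf l},{\bf k}}$ is a linear combination of the functions ${\bf x}^{\bf a}\partial^{\bf b}f$ with $|{\bf a}|,|{\bf b}|\le\lceil\alpha\rceil$, while the factors $(i\xi/\|\xi\|^2)^{\bf l}\partial^{\bf j}\Omega(\xi)$ merge into $\Omega_{{\bf j},{\bf l}}\in C^\infty(\RR^d\backslash\{{\bf 0}\})$, homogeneous of degree $\alpha-\lceil\alpha\rceil$. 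Because $\alpha$ is not an integer, $\gamma':=\lceil\alpha\rceil-\alpha\in(0,1)\subseteq(0,d)$, so $\Omega_{{\bf j},{\bf l}}$ is homogeneous of the negative degree $-\gamma'$ with $\gamma'-d<0$, whence each integral equals $(2\pi)^d\,J_{\Omega_{{\bf j},{\bf l}}}f_{{\bf l},{\bf k}}({\bf x})$ with $J_{\Omega_{{\bf j},{\bf l}}}$ covered by Theorem~\ref{generalizedrieszomega1.tm} at parameter $\gamma'$ and $k_0=0$.

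Next I would check the hypothesis of Theorem~\ref{generalizedrieszomega1.tm} for the $f_{{\bf l},{\bf k}}$: the assumed decay of $\sum_{|{\bf i}|\le\lceil\alpha\rceil}|\partial^{\bf i}f|$ gives $|f_{{\bf l},{\bf k}}({\bf x})|\le C(1+\|{\bf x}\|)^{\lceil\alpha\rceil}\sum_{|{\bf b}|\le\lceil\alpha\rceil}|\partial^{\bf b}f({\bf x})|\le C_\epsilon(1+\|{\bf x}\|)^{-d-\epsilon}$, which is precisely \eqref{generalizedrieszomega1.tm.eq1} with $k_0=0$; moreover the weighted supremum entering the conclusion of that theorem satisfies $\sup_{\bf z}|f_{{\bf l},{\bf k}}({\bf z})|(1+\|{\bf z}\|)^{d+\epsilon}\le C\sum_{|{\bf b}|\le\lceil\alpha\rceil}\sup_{\bf z}|\partial^{\bf b}f({\bf z})|(1+\|{\bf z}\|)^{\lceil\alpha\rceil+d+\epsilon}$. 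Thus Theorem~\ref{generalizedrieszomega1.tm} bounds $|J_{\Omega_{{\bf j},{\bf l}}}f_{{\bf l},{\bf k}}({\bf x})|$ by a constant times $\big(\sum_{|{\bf b}|\le\lceil\alpha\rceil}\sup_{\bf z}|\partial^{\bf b}f({\bf z})|(1+\|{\bf z}\|)^{\lceil\alpha\rceil+d+\epsilon}\big)(1+\|{\bf x}\|)^{\gamma'-d}$. Multiplying by the prefactor $(1+\|{\bf x}\|)^{-\lceil\alpha\rceil}$ and summing the finitely many terms gives $C(\cdots)(1+\|{\bf x}\|)^{-\lceil\alpha\rceil+\gamma'-d}=C(\cdots)(1+\|{\bf x}\|)^{-\alpha-d}$, which is \eqref{positivegeneralizedrieszomega1.cr.eq2} for $\|{\bf x}\|\ge1$. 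For $\|{\bf x}\|\le1$ I would skip the integration by parts and apply the same multinomial identity with $|{\bf l}|=\lceil\alpha\rceil$ directly to the Fourier representation of $J_\Omega f$, reducing to finitely many terms $J_{\widetilde\Omega_{\bf l}}(\partial^{\bf l}f)$ with $\widetilde\Omega_{\bf l}=(i\xi/\|\xi\|^2)^{\bf l}\Omega$ homogeneous of degree $-\gamma'\in(-1,0)$; Theorem~\ref{generalizedrieszomega1.tm} bounds each by the required constant times $(1+\|{\bf x}\|)^{\gamma'-d}$, which on $\|{\bf x}\|\le1$ is at most an absolute constant, hence at most $C(1+\|{\bf x}\|)^{-\alpha-d}$ there.

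The one point requiring genuine care---the main obstacle---is justifying the integration by parts in $\xi$ despite the singularity of $\partial^{\bf j}\Omega$ at $\xi={\bf 0}$. One carries it out over $\{\|\xi\|\ge\varepsilon\}$ and lets $\varepsilon\downarrow0$, and must verify that the spherical boundary terms vanish in the limit. A boundary term produced after at most $\lceil\alpha\rceil-1$ derivatives have landed on $\Omega$ is of size $O(\varepsilon^{d-1}\cdot\varepsilon^{\alpha-|{\bf a}|})$ with $|{\bf a}|\le\lceil\alpha\rceil-1$, and $d-1+\alpha-|{\bf a}|\ge d-1+\alpha-\lceil\alpha\rceil+1=d-1+(\alpha-\lceil\alpha\rceil)>d-1\ge0$, since $\alpha-\lceil\alpha\rceil>-1$; hence every boundary term tends to $0$ and the integrations by parts are legitimate for every $d\ge1$. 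Everything else---absolute convergence of the integrals, the Leibniz bookkeeping, and the passage from the seminorms of $f_{{\bf l},{\bf k}}$ to those of $f$---is routine, the substantive analytic content being already carried by Theorem~\ref{generalizedrieszomega1.tm}. Note also that the non-integrality of $\alpha$ is used exactly where it is needed, namely to guarantee $\gamma'=\lceil\alpha\rceil-\alpha>0$ so that Theorem~\ref{generalizedrieszomega1.tm} applies.
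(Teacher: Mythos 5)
Your proof is correct and follows essentially the same route as the paper: represent $J_\Omega f$ by the convergent Fourier integral \eqref{fractionalderivative.neweq2}, integrate by parts $\lceil\alpha\rceil$ times to extract $(1+\|{\bf x}\|)^{-\lceil\alpha\rceil}$, insert the multinomial identity $1=\sum_{\bf l}\frac{(\lceil\alpha\rceil-|{\bf j}|)!}{{\bf l}!}(i\xi/\|\xi\|^2)^{\bf l}(-i\xi)^{\bf l}$ to produce $J_{\Omega_{{\bf j},{\bf l}}}f_{{\bf l},{\bf k}}$ with $\Omega_{{\bf j},{\bf l}}$ homogeneous of degree $\alpha-\lceil\alpha\rceil\in(-1,0)$, and then invoke Theorem~\ref{generalizedrieszomega1.tm}. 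The only slip is a harmless arithmetic typo in the boundary-term estimate (you wrote $d-1+\alpha-\lceil\alpha\rceil+1=d-1+(\alpha-\lceil\alpha\rceil)$ where it should be $d+(\alpha-\lceil\alpha\rceil)$), but the stated conclusion $>d-1\ge 0$ is correct for the corrected quantity, and you supply the $\|{\bf x}\|\le 1$ case and the vanishing-boundary-term verification, which the paper leaves implicit.
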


The estimates in   \eqref{generalizedrieszomega1.tm.eq2} and
\eqref{positivegeneralizedrieszomega1.cr.eq2} indicate that  the generalized Riesz potential $J_\Omega f$ has faster polynomial
 decay at infinity when the degree of the homogeneous function $\Omega$ becomes larger.
Next, we show that the generalized Riesz potential $J_\Omega f$ has  faster polynomial decay at infinity
when $f$ has  vanishing moments up to some order; i.e.,
\begin{equation}\label{momentcondition}
\int_{\RR^d} {\bf x}^{\bf i} f({\bf x}) d{\bf x}=0,\ |{\bf i}|\le m_0
\end{equation}
where $m_0\ge 0$.
In this case, $\partial^{\bf i}\hat f({\bf 0})=0$ for all $|{\bf i}|\le m_0$, and hence
\begin{equation}
\hat f(\xi)= \sum_{|{\bf k}|=m_0+1}\frac{m_0+1}{{\bf k}!} \int_0^1 \xi^{\bf k} \partial^{\bf k} \hat f(t\xi) (1-t)^{m_0} dt
\end{equation}
by the Taylor expansion to $\hat f$ at the origin.
Now we assume that $\Omega\in C^\infty(\RR^d\backslash\{{\bf 0}\})$ is a homogeneous function of degree
$\alpha\in (-m_0-1, \infty)\backslash \ZZ $.
Then
\begin{eqnarray}\label{momentgeneralizedrieszomega1.pr.pf1}
|J_\Omega f({\bf x})|\!\! \!& \le &\!\!  
C \sum_{|{\bf k}|=m_0+1}\int_0^1\int_{\|\xi\|\le 1} |\xi|^{\alpha+m_0+1} |\partial^{\bf k} \hat f(t\xi)| d\xi dt+
C \int_{|\xi|\ge 1} |\xi|^{\alpha} |\hat f(\xi)|d\xi\nonumber\\
& \le &  C \sum_{|{\bf i}|\le m_0+1} \sup_{\xi \in \RR^d}
\big((1+\|\xi\|)^{\lceil \alpha\rceil +d} |\partial^{\bf i}\hat f(\xi)|\big) 
 \end{eqnarray}
for all ${\bf x}\in \RR^d$ with $\|{\bf x}\|\le 1$, and
 \begin{eqnarray}\label{momentgeneralizedrieszomega1.pr.pf2}
|J_\Omega f({\bf x})|\!\! &\le &\!\! C \sum_{|{\bf k}|=m_0+1}
\int_0^1
\Big|\int_{\RR^d} e^{-i\langle {\bf x}, \xi\rangle} \phi(\|{\bf x}\| \xi) \xi^{\bf k} \Omega(\xi)  \partial^{\bf k} \hat f(t\xi)  d\xi\Big| dt\nonumber\\
\!\! & & \!\!
+C\sum_{|{\bf k}|=m_0+1}
\int_0^1  \Big|\int_{\RR^d} e^{-i\langle {\bf x}, \xi\rangle} \big(\phi(\xi)-\phi(\|{\bf x}\| \xi)\big)\xi^{\bf k} \Omega(\xi)
\partial^{\bf k} \hat f(t\xi)  d\xi\Big|dt\nonumber
\\
\!\!& &\!\! + C
\Big|\int_{\RR^d} e^{-i\langle {\bf x}, \xi\rangle} \big(1-\phi(\xi)\big) \Omega(\xi) \hat f(\xi)  d\xi\Big|\nonumber\\
\!\!&\le & \!\!
C (1+\|{\bf x}\|)^{-\lceil \alpha\rceil-m_0-d}
\Big\{\sum_{|{\bf k}|=m_0+1, |{\bf j}|\le \lceil \alpha\rceil+m_0+d}\nonumber\\
\!\!& &\!\!
\quad \int_0^1\int_{\RR^d} \Big|\partial^{\bf j} \big(\phi(\|{\bf x}\|\xi) \xi^{\bf k} \Omega(\xi) \partial^{\bf k}\hat f(t\xi)\big)\Big| d\xi dt\Big\}
\nonumber\\
\!\!& &\!\!
+
C (1+\|{\bf x}\|)^{-\lceil \alpha\rceil-m_0-d-1} \Big\{\sum_{|{\bf k}|=m_0+1, |{\bf j}|\le \lceil \alpha\rceil+m_0+d+1}\nonumber\\
\!\!& &\!\! \quad \int_0^1\int_{\RR^d} \Big|\partial^{\bf j}\big( (\phi(\xi)-\phi(\|{\bf x}\|\xi)) \xi^{\bf k} \Omega(\xi) \partial^{\bf k}\hat f(t\xi)\big)\Big| d\xi dt\Big\}\nonumber\\
\!\!& &\!\! +
C (1+\|{\bf x}\|)^{-\lceil \alpha\rceil-m_0-d-1} \nonumber\\
& & \quad \times \Big\{\sum_{|{\bf j}|\le \lceil \alpha\rceil+m_0+d+1}
\int_{\RR^d} \Big|\partial^{\bf j}\big( (1-\phi(\xi))  \Omega(\xi) \hat f(\xi)\big)\Big| d\xi\Big\} \nonumber\\
\!\!& \le & \!\!
C \Big(\sum_{|{\bf i}|\le \lceil \alpha\rceil+2m_0+d+2} \sup_{\xi\in \RR^d} (1+\|\xi\|)^{\lceil \alpha\rceil+d} |\partial^{\bf i} \hat f(\xi)|\Big) (1+\|{\bf x}\|)^{-\alpha-m_0-d-1} 
\end{eqnarray}
for all ${\bf x}\in \RR^d$ with $\|{\bf x}\|\ge 1$, where  $\phi$ is a  $C^\infty$ function such that
   $\phi(\xi)=1$ for all $\xi$ in the unit ball $B({\bf 0}, 1)$ centered at the origin, and
   $\phi(\xi)=0$ for all $\xi$ not in the ball $B({\bf 0}, 2)$ with radius 2 and center at the origin.
   This proves the following result about the generalized Riesz potential $J_\Omega f$ when $f$ has vanishing moments upto some order.

   \begin{pr}\label{momentgeneralizedrieszomega1.pr}
   Let $m_0\ge 0, \alpha\in (-m_0-1, \infty)\backslash \ZZ$, and $\Omega\in C^\infty(\RR^d\backslash\{{\bf 0}\})$ be a homogeneous function of degree $\alpha$. Then the following statements hold.

   \begin{itemize}

   \item [{(i)}] If $f$ satisfies \eqref{momentcondition} and
   \begin{equation}
   \sum_{|{\bf i}|\le\lceil \alpha\rceil+2m_0+d+2} \sup_{\xi\in \RR^d} (1+\|\xi\|)^{\lceil \alpha\rceil+d} |\partial^{\bf i} \hat f(\xi)|<\infty,
   \end{equation}
   then there exists a positive constant $C$ such that
   \begin{eqnarray}
   |J_\Omega f({\bf x})| & \le &  C  \Big(\sum_{|{\bf i}|\le\lceil \alpha\rceil+2m_0+d+2} \sup_{\xi\in \RR^d} (1+\|\xi\|)^{\lceil \alpha\rceil+d} |\partial^{\bf i} \hat f(\xi)|\Big)
\nonumber\\
& & \quad \times (1+\|{\bf x}\|)^{-\alpha-m_0-d-1} \quad {\rm for \ all} \ {\bf x}\in \RR^d.
   \end{eqnarray}

   \item[{(ii)}] If $f$ satisfies \eqref{momentcondition} and
   \begin{equation}
  \sum_{|{\bf i}|\le \max(\lceil \alpha\rceil +d, 0)}
\sup_{{\bf z}\in \RR^d} \big( (1+\|{\bf z}\|)^{\lceil \alpha\rceil+2m_0+2d+2+\epsilon} |\partial^{\bf i} f({\bf z})|\big)<\infty
   \end{equation}
 for some $\epsilon>0$,  then
      \begin{eqnarray}
   |J_\Omega f({\bf x})| & \le &  C  \Big(    \sum_{|{\bf i}|\le \max(\lceil \alpha\rceil +d, 0)}
\sup_{{\bf z}\in \RR^d} \big( (1+\|{\bf z}\|)^{\lceil \alpha\rceil+2m_0+2d+2+\epsilon} |\partial^{\bf i} f({\bf z})|
\Big)
\nonumber\\
& & \quad \times (1+\|{\bf x}\|)^{-\alpha-m_0-d-1} \quad {\rm for \ all} \ {\bf x}\in \RR^d.
   \end{eqnarray}
\end{itemize}
      \end{pr}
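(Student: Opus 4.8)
The plan is to read both estimates off the frequency-side representation
$J_\Omega f({\bf x}) = (2\pi)^{-d}\int_{\RR^d} e^{i\langle {\bf x},\xi\rangle}\,\Omega(\xi)\,\hat f(\xi)\,d\xi$. This representation is legitimate under the hypotheses: since $f$ has vanishing moments up to order $m_0$ we have $\partial^{\bf i}\hat f({\bf 0})=0$ for $|{\bf i}|\le m_0$, so near the origin $|\hat f(\xi)|\le C\|\xi\|^{m_0+1}$ by Taylor's formula, while $|\Omega(\xi)|\le C\|\xi\|^{\alpha}$; because $\alpha>-m_0-1$ forces $\alpha+m_0+1+d>d>0$, the product $\Omega\hat f$ is absolutely integrable near the origin, and the $k_0$-fold $r$-integration by parts in \eqref{fractionalderivative.def} collapses exactly as in \eqref{extension.eq}. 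The moment condition \eqref{momentcondition} then enters through the integral form of Taylor's theorem, $\hat f(\xi)=\sum_{|{\bf k}|=m_0+1}\frac{m_0+1}{{\bf k}!}\int_0^1\xi^{\bf k}\partial^{\bf k}\hat f(t\xi)(1-t)^{m_0}\,dt$, which I substitute into the integral: the factor $\xi^{\bf k}$ hands the integrand $m_0+1$ extra powers of $\xi$ near the origin, and this is the sole source of the improved decay rate $-\alpha-m_0-d-1$ over the rate $-\alpha-d$ obtained in Proposition \ref{positivegeneralizedrieszomega1.cr}.

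For $\|{\bf x}\|\le 1$ I would estimate the Taylor-expanded integral directly, splitting it over $\|\xi\|\le1$ and $\|\xi\|\ge1$: on the first region $|\xi|^{\alpha+m_0+1}$ is integrable, on the second $\hat f$ and its derivatives have Schwartz decay; this gives a bound by the asserted norm of $\hat f$, and that is the claimed estimate on the unit ball since $(1+\|{\bf x}\|)^{-\alpha-m_0-d-1}$ is there bounded above and below, see \eqref{momentgeneralizedrieszomega1.pr.pf1}. For $\|{\bf x}\|\ge1$ I would fix a cutoff $\phi\in C^\infty$ with $\phi\equiv1$ on $B({\bf 0},1)$ and $\operatorname{supp}\phi\subset B({\bf 0},2)$, and split the Taylor-expanded integral at the scale $1/\|{\bf x}\|$ via the partition $1=\phi(\|{\bf x}\|\xi)+\big(\phi(\xi)-\phi(\|{\bf x}\|\xi)\big)+\big(1-\phi(\xi)\big)$ into a low-frequency piece, a transition piece, and a high-frequency piece, as in \eqref{momentgeneralizedrieszomega1.pr.pf2}.

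The low-frequency piece is estimated by brute force (or after a few integrations by parts): its integrand has size $\lesssim|\xi|^{\alpha+m_0+1}$ on the support $\|\xi\|\lesssim1/\|{\bf x}\|$, producing exactly $\|{\bf x}\|^{-(\alpha+m_0+1+d)}$. The transition and high-frequency pieces are estimated by integrating by parts in $\xi$ about $\lceil\alpha\rceil+m_0+d+1$ times; each integration by parts converts the oscillation $e^{i\langle{\bf x},\xi\rangle}$ into a gain of $\|{\bf x}\|^{-1}$, and the $\xi$-integral of the differentiated integrand — which is a homogeneous function of degree roughly $\alpha+m_0+1-|{\bf j}|$ on the relevant shells, with additional powers of $\|{\bf x}\|$ whenever a derivative falls on $\phi(\|{\bf x}\|\xi)$ — converges to the required negative power of $\|{\bf x}\|$. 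Summing the three contributions and using $\lceil\alpha\rceil\ge\alpha$ yields the decay $(1+\|{\bf x}\|)^{-\alpha-m_0-d-1}$ with constant controlled by $\sum_{|{\bf i}|\le\lceil\alpha\rceil+2m_0+d+2}\sup_{\xi}(1+\|\xi\|)^{\lceil\alpha\rceil+d}|\partial^{\bf i}\hat f(\xi)|$, which is statement (i). Statement (ii) I would deduce from (i) by converting the spatial hypotheses into the Fourier hypotheses of (i) through the standard Fourier dictionary $\widehat{\partial^{\bf i}f}=(i\xi)^{\bf i}\hat f$ and $\widehat{{\bf z}^{\bf j}f}=\,$(a constant)$\,\partial^{\bf j}\hat f$: a bound on $(1+\|{\bf z}\|)^{\lceil\alpha\rceil+2m_0+2d+2+\epsilon}|\partial^{\bf i}f({\bf z})|$ for $|{\bf i}|\le\max(\lceil\alpha\rceil+d,0)$ controls $\sup_{\xi}(1+\|\xi\|)^{\lceil\alpha\rceil+d}|\partial^{\bf i}\hat f(\xi)|$ for $|{\bf i}|\le\lceil\alpha\rceil+2m_0+d+2$ — the spare weight $\epsilon$ and the extra $d$ powers being precisely what is needed for the defining integrals of these sup-norms to converge — and the moment condition is the same in both parts, so (i) applies verbatim.

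I expect the transition piece of the $\|{\bf x}\|\ge1$ estimate to be the main obstacle. One must track, after each integration by parts, the terms in which one or more derivatives land on the rescaled cutoff $\phi(\|{\bf x}\|\xi)$: these carry a factor $\|{\bf x}\|^{|{\bf l}|}$ but are supported on the thin shell $\|\xi\|\sim1/\|{\bf x}\|$, and the compensating integral of the remaining homogeneous factor — of radial degree $\alpha-\lceil\alpha\rceil-1<-1$ on that shell — is dominated by its inner endpoint. Verifying that all these competing powers of $\|{\bf x}\|$ recombine to give exactly the exponent $-\alpha-m_0-d-1$, and not something weaker, is the delicate bookkeeping; it is there that the non-integrality of $\alpha$ and the precise choice of the number of integrations by parts are used.
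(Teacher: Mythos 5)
Your proposal follows essentially the same route the paper takes: writing $J_\Omega f$ in the Fourier form, substituting the integral form of Taylor's theorem for $\hat f$ using the vanishing moments \eqref{momentcondition}, estimating directly for $\|{\bf x}\|\le 1$ as in \eqref{momentgeneralizedrieszomega1.pr.pf1}, and for $\|{\bf x}\|\ge 1$ using the three-way cutoff $1=\phi(\|{\bf x}\|\xi)+(\phi(\xi)-\phi(\|{\bf x}\|\xi))+(1-\phi(\xi))$ with repeated $\xi$-integration by parts as in \eqref{momentgeneralizedrieszomega1.pr.pf2}, then deriving (ii) from (i) via the standard translation of spatial decay of $\partial^{\bf i}f$ into weighted sup-norm bounds on $\partial^{\bf i}\hat f$. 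The key points you flag — the extra $\|{\bf x}\|^{-(\alpha+m_0+1+d)}$ gained on the shell $\|\xi\|\lesssim 1/\|{\bf x}\|$, the careful tracking of derivatives falling on the rescaled cutoff, and the matching of exponents to get $-\alpha-m_0-d-1$ — are exactly the features of the paper's computation.
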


The conclusions in Proposition \ref{momentgeneralizedrieszomega1.pr} do not apply to the generalized Riesz potential $J_\Omega f$
where $\Omega\in C^\infty(\RR^d\backslash \{{\bf 0}\})$  is a homogeneous function of degree zero.
In this case, applying the argument used to establish
\eqref{momentgeneralizedrieszomega1.pr.pf1} and \eqref{momentgeneralizedrieszomega1.pr.pf2}, we have
that
\begin{eqnarray}\label{momentgeneralizedrieszomega2.pr.pf1}
|J_\Omega f({\bf x})|
& \le &  C \sum_{|{\bf i}|\le m_0+1} \sup_{\xi \in \RR^d}
\big((1+\|\xi\|)^{d+\epsilon} |\partial^{\bf i}\hat f(\xi)|\big) 
 \end{eqnarray}
for all ${\bf x}\in \RR^d$ with $\|{\bf x}\|\le 1$, and
\begin{eqnarray}\label{momentgeneralizedrieszomega2.pr.pf2}
|J_\Omega f({\bf x})|
\!\!&\le & \!\!
C (1+\|{\bf x}\|)^{-m_0-d} \nonumber\\
& & \times \Big\{
\sum_{|{\bf k}|=m_0+1, |{\bf j}|\le m_0+d} \int_0^1\int_{\RR^d} \big|\partial^{\bf j} \big(\phi(\|{\bf x}\|\xi) \xi^{\bf k} \Omega(\xi) \partial^{\bf k}\hat f(t\xi)\big)\big| d\xi dt\Big\}
\nonumber\\
\!\!& &\!\!
+
C (1+\|{\bf x}\|)^{-m_0-d-1} \sum_{|{\bf k}|=m_0+1,  |{\bf j}|+|{\bf l}|\le m_0+d+1, |{\bf j}|\le m_0+d}\nonumber\\
\!\!& &\!\! \quad \int_0^1\int_{\RR^d} \big|\partial^{\bf j}\big( (\phi(\xi)-\phi(\|{\bf x}\|\xi)) \xi^{\bf k} \Omega(\xi)\big)\big|\times
 \big| \partial^{{\bf k}+{\bf l}}\hat f(t\xi)\big)\big| d\xi dt\Big\}\nonumber\\
\!\!& &\!\!
+
C (1+\|{\bf x}\|)^{-m_0-d-2} \sum_{|{\bf k}|=m_0+1, |{\bf j}|+|{\bf l}|\le m_0+d+2, |{\bf l}|\le 1}\nonumber\\
\!\!& &\!\! \quad \int_0^1\int_{\RR^d} \big|\partial^{\bf j}\big( (\phi(\xi)-\phi(\|{\bf x}\|\xi)) \xi^{\bf k} \Omega(\xi)\big)\big|\times
 \big| \partial^{{\bf k}+{\bf l}}\hat f(t\xi)\big)\big| d\xi dt\Big\}\nonumber\\
 \!\!& &\!\! +
C (1+\|{\bf x}\|)^{-m_0-d-1} 
\sum_{|{\bf j}|\le m_0+d+1}
\int_{\RR^d} \big|\partial^{\bf j}\big( (1-\phi(\xi))  \Omega(\xi) \hat f(\xi)\big)\big| d\xi \nonumber\\
\!\!& \le & \!\!
C \Big(\sum_{|{\bf i}|\le 2m_0+d+2} \sup_{\xi\in \RR^d} (1+\|\xi\|)^{d+\epsilon} |\partial^{\bf i} \hat f(\xi)|\Big) (1+\|{\bf x}\|)^{-m_0-d-1}
\end{eqnarray}
for all ${\bf x}\in \RR^d$ with $\|{\bf x}\|\ge 1$, where $\epsilon\in (0,1)$.
Therefore

  \begin{pr}\label{momentgeneralizedrieszomega2.pr}
   Let  $\Omega\in C^\infty(\RR^d\backslash\{{\bf 0}\})$ be a homogeneous function of degree zero. Then the following statements hold.

   \begin{itemize}

   \item [{(i)}] If $f$ satisfies \eqref{momentcondition} for some $m_0\ge 0$ and
 $$   \sum_{|{\bf i}|\le 2m_0+d+2} \sup_{\xi\in \RR^d} (1+\|\xi\|)^{d+\epsilon} |\partial^{\bf i} \hat f(\xi)|<\infty$$ for some $\epsilon>0,$
   then there exists a positive constant $C$ such that
 $$
   |J_\Omega f({\bf x})|  \le    C  \Big(\sum_{|{\bf i}|\le 2m_0+d+2} \sup_{\xi\in \RR^d} (1+\|\xi\|)^{d+\epsilon} |\partial^{\bf i} \hat f(\xi)|\Big)
 (1+\|{\bf x}\|)^{-m_0-d-1}
 \quad {\rm for \ all} \ {\bf x}\in \RR^d.
 $$

   \item[{(ii)}] If $f$ satisfies \eqref{momentcondition} for some $m_0\ge 0$ and
  $$
  \sum_{|{\bf i}|\le d+1}
\sup_{{\bf z}\in \RR^d} \big( (1+\|{\bf z}\|)^{2m_0+2d+2+\epsilon} |\partial^{\bf i} f({\bf z})|\big)<\infty
$$
 for some $\epsilon>0$,  then
      \begin{eqnarray*}
   |J_\Omega f({\bf x})| & \le &  C  \Big(    \sum_{|{\bf i}|\le d+1}
\sup_{{\bf z}\in \RR^d} \big( (1+\|{\bf z}\|)^{2m_0+2d+2+\epsilon} |\partial^{\bf i} f({\bf z})|
\Big)
\nonumber\\
& & \quad \times (1+\|{\bf x}\|)^{-m_0-d-1} \quad {\rm for \ all} \ {\bf x}\in \RR^d.
   \end{eqnarray*}
\end{itemize}
      \end{pr}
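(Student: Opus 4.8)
The plan is to obtain part (i) directly by combining the two pointwise bounds \eqref{momentgeneralizedrieszomega2.pr.pf1} and \eqref{momentgeneralizedrieszomega2.pr.pf2} established in the paragraph just preceding the statement, and then to deduce part (ii) from part (i) by a routine Fourier-transform estimate that turns the spatial hypotheses into the Fourier-side weighted bounds demanded by (i). For part (i): when $\|{\bf x}\|\le 1$ the weight $(1+\|{\bf x}\|)^{-m_0-d-1}$ is bounded below by $2^{-m_0-d-1}$, so \eqref{momentgeneralizedrieszomega2.pr.pf1} --- whose summation range $|{\bf i}|\le m_0+1$ is contained in $|{\bf i}|\le 2m_0+d+2$ --- already gives the claimed inequality on the unit ball; when $\|{\bf x}\|\ge 1$, the bound \eqref{momentgeneralizedrieszomega2.pr.pf2} is exactly of the asserted form once we fix $\epsilon\in(0,1)$ (any larger $\epsilon$ in the hypothesis may be decreased into $(0,1)$ without shrinking the controlling sum). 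Combining the two ranges yields (i).

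For part (ii), I claim its hypothesis forces that of (i). Fix a multi-index ${\bf i}$ with $|{\bf i}|\le 2m_0+d+2$ and note $\partial^{\bf i}\hat f=(-i)^{|{\bf i}|}\widehat{{\bf x}^{\bf i}f}$. For $|{\bf j}|\le d+1$, Leibniz's rule gives $|\partial^{\bf j}({\bf x}^{\bf i}f)({\bf x})|\le C(1+\|{\bf x}\|)^{|{\bf i}|}\sum_{|{\bf n}|\le d+1}|\partial^{\bf n}f({\bf x})|$, and since $|{\bf i}|\le 2m_0+d+2$ the hypothesis of (ii) bounds this by $C(1+\|{\bf x}\|)^{-(d+\epsilon)}\sum_{|{\bf n}|\le d+1}\sup_{{\bf z}\in\RR^d}\big((1+\|{\bf z}\|)^{2m_0+2d+2+\epsilon}|\partial^{\bf n}f({\bf z})|\big)$, which is integrable over $\RR^d$ because $d+\epsilon>d$; hence ${\bf x}^{\bf i}f\in C^{d+1}(\RR^d)$ with all its derivatives of order $\le d+1$ in $L^1$. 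The elementary inequality $(1+\|\xi\|)^{d+1}|\widehat g(\xi)|\le C\sum_{|{\bf j}|\le d+1}\|\partial^{\bf j}g\|_1$ (dominate $(1+\|\xi\|)^{d+1}$ by a finite sum of monomials $|\xi^{\bf j}|$ and use $|\xi^{\bf j}\widehat g(\xi)|=|\widehat{\partial^{\bf j}g}(\xi)|\le\|\partial^{\bf j}g\|_1$), applied to $g={\bf x}^{\bf i}f$ and followed by $(1+\|\xi\|)^{d+\epsilon}\le(1+\|\xi\|)^{d+1}$ after shrinking $\epsilon$ into $(0,1]$, then gives $\sup_{\xi\in\RR^d}(1+\|\xi\|)^{d+\epsilon}|\partial^{\bf i}\hat f(\xi)|\le C\sum_{|{\bf n}|\le d+1}\sup_{{\bf z}\in\RR^d}\big((1+\|{\bf z}\|)^{2m_0+2d+2+\epsilon}|\partial^{\bf n}f({\bf z})|\big)$. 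Summing over $|{\bf i}|\le 2m_0+d+2$ verifies the hypothesis of (i), and (i) delivers (ii) with a constant comparable to the spatial quantity in the statement.

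The real content, and the main obstacle, lies in the two pointwise estimates \eqref{momentgeneralizedrieszomega2.pr.pf1} and \eqref{momentgeneralizedrieszomega2.pr.pf2} themselves, i.e.\ in the degree-zero case: since $\Omega$ is homogeneous of degree $0$ its inverse Fourier transform fails to be locally integrable and $J_\Omega f$ has no convolution-kernel representation as in \thmref{generalizedrieszomega1.tm}; one must instead work from the Fourier formula \eqref{fractionalderivative.neweq2}, use \eqref{momentcondition} to Taylor-expand $\hat f$ at the origin as $\hat f(\xi)=\sum_{|{\bf k}|=m_0+1}\frac{m_0+1}{{\bf k}!}\int_0^1\xi^{\bf k}\,\partial^{\bf k}\hat f(t\xi)\,(1-t)^{m_0}\,dt$, insert the smooth cutoffs $\phi(\xi)$ and $\phi(\|{\bf x}\|\xi)$ to split the frequency integral into low-, intermediate- and high-frequency pieces (roughly $\|\xi\|$ below $1/\|{\bf x}\|$, between $1/\|{\bf x}\|$ and $1$, and above $1$), and integrate by parts repeatedly against $e^{-i\langle{\bf x},\xi\rangle}$, each step producing a factor $\|{\bf x}\|^{-1}$ together with a derivative landing on one of $\phi$, $\xi^{\bf k}\Omega(\xi)$ or $\partial^{\bf k}\hat f(t\xi)$. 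The delicacy is that differentiating $\Omega$ \emph{worsens} its behaviour at the origin (producing functions homogeneous of degree $-|{\bf j}|$), so one must spend the $m_0+1$ powers of $\xi$ from the Taylor remainder to absorb that singularity, while the dilated cutoff $\phi(\|{\bf x}\|\xi)$ supplies the extra power upgrading $(1+\|{\bf x}\|)^{-m_0-d}$ to $(1+\|{\bf x}\|)^{-m_0-d-1}$; keeping precise track of how many derivatives of $\hat f$ and how much of its decay are consumed is exactly what fixes the index $2m_0+d+2$ with weight $d+\epsilon$ on the Fourier side and $d+1$ with weight $2m_0+2d+2+\epsilon$ on the spatial side. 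Everything else --- the dichotomy $\|{\bf x}\|\le1$ versus $\|{\bf x}\|\ge1$ and the Fourier conversion for (ii) --- is routine.
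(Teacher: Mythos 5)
Your proposal is correct and follows the paper's route: part~(i) is obtained by combining the two displayed pointwise bounds \eqref{momentgeneralizedrieszomega2.pr.pf1} (for $\|{\bf x}\|\le 1$) and \eqref{momentgeneralizedrieszomega2.pr.pf2} (for $\|{\bf x}\|\ge 1$), which is exactly what the paper's ``Therefore'' amounts to, and your Leibniz-plus-$\|\partial^{\bf j}({\bf x}^{\bf i}f)\|_1$ conversion verifying the Fourier-side hypothesis of (i) from the spatial hypothesis of (ii) is the natural argument the paper leaves implicit. You also correctly identify the two pointwise estimates as the technical crux and accurately describe the Taylor-expansion / dilated-cutoff / integration-by-parts mechanism the paper uses to derive them, so the approach matches throughout.
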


%

\subsection{Translation-invariance and dilation-invariance}
In this subsection, we show that
the generalized Riesz potential  $J_\Omega$ from ${\mathcal S}$ to ${\mathcal S}'$
is dilation-invariant and translation-invariant,   and that its restriction on the closed subspace ${\mathcal S}_\infty$
of ${\mathcal S}$ is the same as the linear
operator $i_\Omega$ on  ${\mathcal S}_\infty$.

\begin{Tm}\label{maintheorem.iomega1}
Let $\gamma\in \RR$  with $\gamma-d\not \in \ZZ_+$,
 $\Omega\in C^\infty(\RR^d\backslash \{\bf 0\})$ be a homogeneous function of degree $-\gamma$,
 and let  $J_\Omega$  be defined by \eqref{fractionalderivative.def}. Then
  \begin{itemize}
   \item[{(i)}] $J_\Omega$ is dilation-invariant;
 \item[{(ii)}] $J_\Omega$ is translation-invariant; and


\item[{(iii)}] $\widehat{J_\Omega f}(\xi)= \Omega(\xi) \hat f(\xi)$ for any function $f\in {\mathcal S}_\infty$.
 \end{itemize}
\end{Tm}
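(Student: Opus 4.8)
The plan is to establish all three items by substituting directly into the defining integral \eqref{fractionalderivative.def}, exploiting the fact (Corollary \ref{generalizedrieszomega1.cr}, via Theorem \ref{generalizedrieszomega1.tm}) that for every $g\in{\mathcal S}$ the distribution $J_\Omega g$ is in fact a continuous function of polynomial growth; this lets me check (i) and (ii) as pointwise identities between genuine functions, which then automatically hold in ${\mathcal S}'$. Item (iii) is essentially already in hand: for $f\in{\mathcal S}_\infty$ the integration-by-parts calculation recorded in \eqref{extension.eq} shows $J_\Omega f=i_\Omega f$, and the definition \eqref{fractionalderivative.def1} of $i_\Omega$ then gives $\widehat{J_\Omega f}(\xi)=\Omega(\xi)\hat f(\xi)$ directly. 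The only thing to remember here is that $\Omega\hat f\in{\mathcal S}$ for $f\in{\mathcal S}_\infty$, because $\hat f$ vanishes to infinite order at the origin and hence dominates the at-most-polynomial growth of $\Omega$ and its derivatives near ${\bf 0}$.

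For item (ii), translation-invariance, the key observation is that in \eqref{fractionalderivative.def} the spatial variable ${\bf x}$ appears only inside the exponential $e^{ir\langle{\bf x},\xi'\rangle}$. Since $\widehat{\tau_{{\bf x}_0}f}(\xi)=e^{-i\langle{\bf x}_0,\xi\rangle}\hat f(\xi)$, replacing $f$ by $\tau_{{\bf x}_0}f$ turns $e^{ir\langle{\bf x},\xi'\rangle}\hat f(r\xi')$ into $e^{ir\langle{\bf x}-{\bf x}_0,\xi'\rangle}\hat f(r\xi')$, which is exactly the integrand of $J_\Omega f$ at the shifted point ${\bf x}-{\bf x}_0$; so $J_\Omega(\tau_{{\bf x}_0}f)=\tau_{{\bf x}_0}(J_\Omega f)$ with no further computation. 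For item (i), dilation-invariance, I would insert $\widehat{\delta_t f}(\xi)=t^{-d}\hat f(\xi/t)$ into \eqref{fractionalderivative.def} and carry out the radial substitution $r\mapsto tr$, leaving the angular integration over $S^{d-1}$ (and the factor $\Omega(\xi')$) untouched. Tracking the powers of $t$ — $t^{k_0-\gamma+d-1}$ from $r^{k_0-\gamma+d-1}$, $t$ from $dr$, $t^{-k_0}$ from $(-d/dr)^{k_0}$, and $t^{-d}$ from $\widehat{\delta_t f}$ — produces a net factor $t^{-\gamma}$, and the surviving integrand is precisely that of $J_\Omega f$ evaluated at $t{\bf x}$. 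This yields $J_\Omega(\delta_t f)=t^{-\gamma}\delta_t(J_\Omega f)$, i.e. dilation-invariance with exponent $-\gamma$, the degree of homogeneity of $\Omega$; note that the homogeneity of $\Omega$ enters only implicitly, through the appearance of $-\gamma$ in the radial exponent.

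The argument involves no real difficulty; the only point requiring a line of justification is the legitimacy of the change of variables and of interchanging the radial and angular integrations. This rests on the observation that, for each fixed $\xi'\in S^{d-1}$, the function $r\mapsto(-d/dr)^{k_0}\bigl(e^{ir\langle{\bf x},\xi'\rangle}\hat f(r\xi')\bigr)$ extends smoothly to $r=0$ and decays faster than any polynomial as $r\to\infty$ (because $\hat f$ and all its derivatives are Schwartz), so that against the weight $r^{k_0-\gamma+d-1}$ — integrable near $0$ since $k_0>\gamma-d$ and polynomially bounded near $\infty$ — every integral in the computation converges absolutely over $(0,\infty)\times S^{d-1}$. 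I expect the power-of-$t$ bookkeeping in (i) to be the most error-prone step, though it is entirely elementary.
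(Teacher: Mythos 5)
Your proof is correct and follows essentially the same route as the paper's: (i) and (ii) by direct substitution of $\widehat{\delta_t f}$ and $\widehat{\tau_{{\bf x}_0}f}$ into the defining integral (with the radial change of variable $r\mapsto tr$ for dilations), and (iii) by invoking the integration-by-parts identity \eqref{extension.eq} that identifies $J_\Omega$ with $i_\Omega$ on ${\mathcal S}_\infty$. Your power-of-$t$ bookkeeping in (i) checks out, and the extra remarks on absolute convergence and on $\Omega\hat f\in{\mathcal S}$ for $f\in{\mathcal S}_\infty$, while not spelled out in the paper, are accurate.
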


\begin{proof} {\em (i)}\quad
 For any  $f\in {\mathcal S}$ and any $t>0$,  
 \begin{eqnarray*}
 J_\Omega (\delta_t f) ({\bf x}) & = & \frac{(2\pi t)^{-d}\Gamma(d-\gamma)}{ \Gamma(d+k_0-\gamma)}
 \int_{S^{d-1}}  \int_0^\infty \Omega(\xi')r^{k_0-\gamma+d-1}\nonumber  \\
& &
\times   \Big(-\frac{d}{dr}\Big)^{k_0} \Big(e^{ir\langle {\bf x}, \xi'\rangle} \hat f(r\xi'/t)\Big) dr d\sigma(\xi')
= t^{-\gamma} \delta_t(J_\Omega f)({\bf x}),
 \end{eqnarray*}
 where the first equality follows from $\widehat {\delta_t f}(\xi)=t^{-d} \hat f(\xi/t)$
   and the second equality is obtained by change of variables.
  This leads to  the dilation-invariance of the generalized Riesz potential $J_\Omega$.

{\em (ii)}\quad
For any  $f\in {\mathcal S}$ and a vector ${\bf x}_0\in \RR^d$,
 we obtain  from \eqref{fractionalderivative.def} that
 \begin{eqnarray*}
 J_\Omega (\tau_{{\bf x}_0}f)({\bf x})
     & = &   \frac{(2\pi)^{-d}\Gamma(d-\gamma)}{ \Gamma(d+k_0-\gamma)} \int_{S^{d-1}}  \int_0^\infty r^{k_0-\gamma+d-1}   \Omega(\xi') \nonumber  \\
& & \times
\big(-\frac{d}{dr}\big)^{k_0} \Big(e^{ir\langle {\bf x}-{\bf x}_0, \xi'\rangle} \hat f(r\xi')\Big) dr d\sigma(\xi')=   J_\Omega f( {\bf x}-{\bf x}_0),
 \end{eqnarray*}
where $k_0$ is a nonnegative integer larger than $\gamma-d$.  This shows that
 the generalized Riesz potential $J_\Omega$
 is  translation-invariant.


{\em (iii)}\quad
The third conclusion follows by taking Fourier transform of the equation \eqref{extension.eq} on both sides.
\end{proof}


\subsection{Composition  and left-inverse}
In this subsection, we consider the composition  and left-inverse properties of generalized Riesz potentials.

\begin{Tm}\label{composition.tm}
Let $\gamma_1$ and $\gamma_2\in \RR$ satisfy
 $\gamma_2<d, \gamma_1+\gamma_2<d$ and $\gamma_1-d\not\in \ZZ_+$,
and let $\Omega_1, \Omega_2\in C^\infty(\RR^d\backslash\{{\bf 0}\})$
be homogeneous  functions of degree $-\gamma_1$ and $-\gamma_2$ respectively.
Then
\begin{equation}\label{composition.tm.eq1}
J_{\Omega_1}(J_{\Omega_2} f)=J_{\Omega_1\Omega_2} f \quad {\rm for\ all} \ f\in {\mathcal S}.
\end{equation}
\end{Tm}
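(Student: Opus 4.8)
The plan is to evaluate $J_{\Omega_1}(J_{\Omega_2}f)$ straight from the defining integral \eqref{fractionalderivative.def} and to collapse it, by a radial integration by parts in the variable $r$, onto the Fourier-multiplier form \eqref{fractionalderivative.neweq2} of $J_{\Omega_1\Omega_2}$. First observe that the hypotheses are mutually consistent: $\gamma_2<d$ forces $\gamma_2-d\notin\ZZ_+$ and $\gamma_1+\gamma_2<d$ forces $(\gamma_1+\gamma_2)-d\notin\ZZ_+$, so $J_{\Omega_2}$ and $J_{\Omega_1\Omega_2}$ are both well defined, and since $\gamma_2<d$ and $\gamma_1+\gamma_2<d$ both admit the representation \eqref{fractionalderivative.neweq2} with $k_0=0$; in particular $\widehat{J_{\Omega_2}f}=\Omega_2\hat f$, which is an $L^1$ function. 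If $\gamma_1<d$ as well, the identity is immediate: by \eqref{fractionalderivative.neweq2}, $\widehat{J_{\Omega_1}(J_{\Omega_2}f)}=\Omega_1(\Omega_2\hat f)=(\Omega_1\Omega_2)\hat f=\widehat{J_{\Omega_1\Omega_2}f}$, all the products being integrable because $\gamma_1+\gamma_2<d$. So I may assume $\gamma_1\ge d$ and let $k_0\ge 1$ be the smallest integer exceeding $\gamma_1-d$; then $\gamma_1-d\notin\ZZ_+$ gives $\gamma_1-d\in(k_0-1,k_0)$.

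Put $g:=J_{\Omega_2}f$. The point needing the most care is that the integral \eqref{fractionalderivative.def} defining $J_{\Omega_1}$ still makes literal sense when evaluated on $g$, even though $g\notin{\mathcal S}$ and $\widehat g=\Omega_2\hat f$ is not smooth at the origin. Since $\widehat g(r\xi')=r^{-\gamma_2}\Omega_2(\xi')\hat f(r\xi')$, a direct estimate shows that $\big(-\frac{d}{dr}\big)^{k_0}\big(e^{ir\langle{\bf x},\xi'\rangle}\widehat g(r\xi')\big)$ is $O(r^{-\gamma_2-k_0})$ as $r\to 0^+$ and decays rapidly as $r\to\infty$ (because $\hat f\in{\mathcal S}$), so the radial integrand is $O(r^{d-\gamma_1-\gamma_2-1})$ near $0$, integrable there thanks to $\gamma_1+\gamma_2<d$. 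Hence $J_{\Omega_1}g$ is given by an absolutely convergent integral, and it will turn out to equal $J_{\Omega_1\Omega_2}f$ (a genuine tempered distribution, being a function of polynomial growth).

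Next I would factor $\Omega_2(\xi')$ out of the radial integral and integrate by parts $k_0$ times in $r$, using
$$\int_0^\infty r^{a+k_0-1}\Big(-\frac{d}{dr}\Big)^{k_0}G(r)\,dr=\frac{\Gamma(a+k_0)}{\Gamma(a)}\int_0^\infty r^{a-1}G(r)\,dr$$
with $a=d-\gamma_1$ and $G(r)=r^{-\gamma_2}e^{ir\langle{\bf x},\xi'\rangle}\hat f(r\xi')$. The constants $\Gamma(d-\gamma_1)$ and $\Gamma(d-\gamma_1+k_0)/\Gamma(d-\gamma_1)=(d-\gamma_1)(d-\gamma_1+1)\cdots(d-\gamma_1+k_0-1)$ are finite and nonzero precisely because $\gamma_1-d\notin\ZZ_+$; the boundary terms at $r=\infty$ vanish by the rapid decay of $\hat f$, and a short case analysis shows each boundary term at $r=0$ is a multiple of $r^{d-\gamma_1-\gamma_2}$, hence vanishes because $\gamma_1+\gamma_2<d$ (the degenerate situations in which $G$ fails to have a clean $r^{-\gamma_2-(\text{order})}$ leading behaviour near $0$ are excluded, since they would force $\gamma_1+\gamma_2\ge d$). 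The factor $\Gamma(d-\gamma_1+k_0)/\Gamma(d-\gamma_1)$ then cancels the prefactor $\Gamma(d-\gamma_1)/\Gamma(d+k_0-\gamma_1)$ in \eqref{fractionalderivative.def}, leaving
$$J_{\Omega_1}g({\bf x})=(2\pi)^{-d}\int_{S^{d-1}}\int_0^\infty (\Omega_1\Omega_2)(\xi')\,r^{d-(\gamma_1+\gamma_2)-1}\,e^{ir\langle{\bf x},\xi'\rangle}\hat f(r\xi')\,dr\,d\sigma(\xi').$$
Passing to Cartesian coordinates $\eta=r\xi'$, so that $(\Omega_1\Omega_2)(\xi')\,r^{d-(\gamma_1+\gamma_2)-1}=(\Omega_1\Omega_2)(\eta)\,r^{d-1}$ by homogeneity, turns the right-hand side into $(2\pi)^{-d}\int_{\RR^d}e^{i\langle{\bf x},\eta\rangle}(\Omega_1\Omega_2)(\eta)\hat f(\eta)\,d\eta$, which is $J_{\Omega_1\Omega_2}f({\bf x})$ by \eqref{fractionalderivative.neweq2}. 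This gives \eqref{composition.tm.eq1}. (As an alternative to the integration by parts one can expand $\big(-\frac{d}{dr}\big)^{k_0}\big(r^{-\gamma_2}e^{ir\langle{\bf x},\xi'\rangle}\hat f(r\xi')\big)$ by the Leibniz rule and recognise each term as a constant multiple of the radial integral of $J_{\Omega_1\Omega_2}f$ computed with $k_0'=k_0-j$; summing the constants via the rising-factorial Vandermonde identity $\sum_j\binom{k_0}{j}(\gamma_2)_j(d-\gamma_1-\gamma_2)_{k_0-j}=(d-\gamma_1)_{k_0}$ again produces the coefficient $1$.)

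Finally, I would note why a soft argument does not work: on the subspace ${\mathcal S}_\infty$ the identity is transparent, since there $J_\Omega$ coincides with the plain Fourier multiplier $i_\Omega$ (cf.\ \eqref{extension.eq}) and $i_{\Omega_2}$ maps ${\mathcal S}_\infty$ into ${\mathcal S}_\infty$, so $J_{\Omega_1}(J_{\Omega_2}f)=i_{\Omega_1}i_{\Omega_2}f=i_{\Omega_1\Omega_2}f=J_{\Omega_1\Omega_2}f$; but ${\mathcal S}_\infty$ is not dense in ${\mathcal S}$, so this does not extend by continuity, and the real content is the extension to all of ${\mathcal S}$. Accordingly the main obstacle is exactly the bookkeeping in the middle two paragraphs — verifying that \eqref{fractionalderivative.def} applies to the non-Schwartz function $g$ with non-smooth symbol $\Omega_2\hat f$, and controlling the boundary terms in the integration by parts — which the constraints $\gamma_2<d$ and $\gamma_1+\gamma_2<d$ are just strong enough to make go through.
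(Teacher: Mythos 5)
Your proof is correct and takes essentially the same route as the paper's: compute $J_{\Omega_1}(J_{\Omega_2}f)$ from the defining integral \eqref{fractionalderivative.def}, pull out the homogeneous factor $\Omega_2(\xi')r^{-\gamma_2}$ from $\widehat{J_{\Omega_2}f}(r\xi')$, integrate by parts radially $k_0$ times, and kill the boundary terms at $r=0$ using $\gamma_1+\gamma_2<d$. The only differences are cosmetic — the paper performs the $k_0$ integrations by parts one at a time (first rewriting $r^{-\gamma_2}=r\cdot r^{-\gamma_2-1}$ and using a Leibniz identity, picking up the factor $(d-\gamma_1+j)$ at step $j$), whereas you invoke the one-shot identity $\int_0^\infty r^{a+k_0-1}(-\tfrac{d}{dr})^{k_0}G\,dr = \tfrac{\Gamma(a+k_0)}{\Gamma(a)}\int_0^\infty r^{a-1}G\,dr$ with $a=d-\gamma_1$; and you are somewhat more explicit than the paper about why the formula \eqref{fractionalderivative.def} still makes literal sense with $g=J_{\Omega_2}f\notin\mathcal S$ substituted for $f$, as well as about why the soft argument on $\mathcal S_\infty$ does not extend by density. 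Both of these are useful clarifications, not a different method.
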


As a consequence of Theorem \ref{composition.tm}, we have the following result about left-invertibility
of the generalized Riesz potential $J_\Omega$.

 \begin{Cr}\label{composition.cr}
Let $\gamma\in (-d, \infty)$ with $\gamma-d\not\in \ZZ_+$ and
  $\Omega\in C^\infty(\RR^d\backslash\{{\bf 0}\})$
be homogeneous of degree $-\gamma$ with $\Omega(\xi)\ne 0$ for all $\xi\in S^{d-1}$.
Then $J_{\Omega}J_{\Omega^{-1}}$ is an identity operator on ${\mathcal S}$.
If we further assume that $\gamma\in (-d, d)$, then both
$J_{\Omega^{-1}}J_\Omega$ and $J_{\Omega}J_{\Omega^{-1}}$ are identity operators on ${\mathcal S}$.
\end{Cr}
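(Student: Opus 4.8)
The plan is to deduce this entirely from Theorem \ref{composition.tm}, applied twice, together with the elementary observation that $J_{\mathbf 1}$ — the generalized Riesz potential associated with the constant function $\Omega\equiv 1$, homogeneous of degree $0<d$ — is the identity on $\mathcal S$: taking $k_0=0$ in \eqref{fractionalderivative.def} and using \eqref{fractionalderivative.neweq2}, one has $J_{\mathbf 1}f(\mathbf x)=(2\pi)^{-d}\int_{\RR^d}e^{i\langle\mathbf x,\xi\rangle}\hat f(\xi)\,d\xi=f(\mathbf x)$ by the Fourier inversion formula \eqref{inversefouriertransform.def}.

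First I would record that $\Omega^{-1}$ is a legitimate input to the construction. Since $\Omega$ is homogeneous of degree $-\gamma$ and non-vanishing on $S^{d-1}$, for every $\xi\neq\mathbf 0$ we have $\Omega(\xi)=\|\xi\|^{-\gamma}\Omega(\xi/\|\xi\|)\neq 0$; hence $\Omega^{-1}:=1/\Omega$ lies in $C^\infty(\RR^d\setminus\{\mathbf 0\})$ and is homogeneous of degree $\gamma$. Moreover $J_{\Omega^{-1}}$ is well defined via \eqref{fractionalderivative.def}: its associated degree $\gamma$ is to be written as $-(-\gamma)$, and the required exclusion is $(-\gamma)-d\notin\ZZ_+$, which holds because $\gamma>-d$ forces $-\gamma-d<0$.

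For the first assertion I would apply Theorem \ref{composition.tm} with $\Omega_1=\Omega$ (so $\gamma_1=\gamma$) and $\Omega_2=\Omega^{-1}$ (so $\gamma_2=-\gamma$). Its three hypotheses read $\gamma_2=-\gamma<d$, $\gamma_1+\gamma_2=0<d$, and $\gamma_1-d=\gamma-d\notin\ZZ_+$, all of which follow from the standing assumptions $\gamma\in(-d,\infty)$ and $\gamma-d\notin\ZZ_+$. Hence $J_\Omega(J_{\Omega^{-1}}f)=J_{\Omega\,\Omega^{-1}}f=J_{\mathbf 1}f=f$ for all $f\in\mathcal S$, i.e.\ $J_\Omega J_{\Omega^{-1}}$ is the identity on $\mathcal S$. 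For the second assertion, assuming in addition $\gamma<d$, I would apply Theorem \ref{composition.tm} once more with the roles reversed, $\Omega_1=\Omega^{-1}$ (so $\gamma_1=-\gamma$) and $\Omega_2=\Omega$ (so $\gamma_2=\gamma$); the hypotheses now become $\gamma_2=\gamma<d$, $\gamma_1+\gamma_2=0<d$, and $\gamma_1-d=-\gamma-d\notin\ZZ_+$ (true since $-\gamma-d<0$), giving $J_{\Omega^{-1}}(J_\Omega f)=J_{\mathbf 1}f=f$ as well.

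There is no genuine obstacle here: the corollary is essentially a bookkeeping exercise in matching the degree constraints of Theorem \ref{composition.tm}. The only point that needs a moment's care is the asymmetry of the two conclusions — the composition $J_\Omega J_{\Omega^{-1}}$ is available for the full range $\gamma>-d$ because the outer symbol $\Omega$ is allowed any degree $-\gamma<d$ (barring the excluded integers), whereas $J_{\Omega^{-1}}J_\Omega$ additionally requires the inner symbol $\Omega$ to obey $\gamma<d$ so that the hypothesis $\gamma_2<d$ of Theorem \ref{composition.tm} is met; I would be careful to invoke each hypothesis of the statement exactly where it is needed.
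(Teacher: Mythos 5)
Your proposal is correct and is exactly the derivation the paper intends: the corollary is stated as an immediate consequence of Theorem \ref{composition.tm} with no explicit proof given, and the two applications of that theorem you describe — with $(\Omega_1,\Omega_2)=(\Omega,\Omega^{-1})$ for the first assertion and $(\Omega_1,\Omega_2)=(\Omega^{-1},\Omega)$ under the extra hypothesis $\gamma<d$ for the second, together with $J_{\mathbf 1}=\mathrm{id}$ via Fourier inversion — verify the hypotheses of Theorem \ref{composition.tm} precisely where they are needed. The bookkeeping of which degree constraint forces each of the two restrictions $\gamma>-d$ and $\gamma<d$ is accurately tracked.
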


Taking $\Omega(\xi)=\|\xi\|^{-\gamma}$ in the above corollary yields that the linear operator  $I_\gamma$ in \eqref{generalizedriesz.def} is a left-inverse of the fractional Laplacian $(-\triangle)^{\gamma/2}$.

\begin{Cr}\label{invariantleftinverse.cr}
Let $\gamma$ be a positive number with $\gamma-d\not\in \ZZ_+$. Then $I_\gamma$ is a left-inverse
of the fractional Laplacian $(-\triangle)^{\gamma/2}$.
\end{Cr}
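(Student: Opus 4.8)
The plan is to derive Corollary~\ref{invariantleftinverse.cr} as a direct specialization of Corollary~\ref{composition.cr}, which in turn rests on the composition identity of Theorem~\ref{composition.tm}. First I would choose $\Omega(\xi)=\|\xi\|^{-\gamma}$. Since $\gamma>0$ and $\gamma-d\notin\ZZ_+$ by hypothesis, this $\Omega$ is a homogeneous function of degree $-\gamma$ in $C^\infty(\RR^d\setminus\{\mathbf 0\})$ that never vanishes on $S^{d-1}$, so the hypotheses of Corollary~\ref{composition.cr} are met with this $\gamma$. By definition \eqref{generalizedriesz.def}, $J_\Omega=I_\gamma$ for this choice; I would note this explicitly. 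The remaining task is to identify $J_{\Omega^{-1}}$ with the fractional Laplacian.

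Next I would observe that $\Omega^{-1}(\xi)=\|\xi\|^{\gamma}$ is a homogeneous function of degree $+\gamma$. When $\gamma>0$ this falls under the ``positive degree'' regime, where by \eqref{fractionalderivative.neweq2} (valid whenever the degree is less than $d$, and in particular for any $k_0=0$ branch; more carefully, one reduces to \eqref{fractionalderivative.neweq2} via the integration-by-parts argument already used in the excerpt) one has $\widehat{J_{\Omega^{-1}}f}(\xi)=\|\xi\|^{\gamma}\hat f(\xi)$ for all $f\in\mathcal S$. Comparing with the defining relation \eqref{fractionallaplacian.def}, this is precisely $J_{\|\cdot\|^{\gamma}}=(-\triangle)^{\gamma/2}$ on $\mathcal S$. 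I would state this identification as the one genuine point requiring a line of justification, since $J_{\Omega^{-1}}$ is defined through the oscillatory integral \eqref{fractionalderivative.def} rather than directly as a Fourier multiplier; the key fact is that for a homogeneous symbol of positive non-integer degree $\gamma$, the distribution $J_{\Omega^{-1}}f$ agrees with $\mathcal F^{-1}(\|\cdot\|^\gamma\hat f)$, which is legitimate because $\|\xi\|^\gamma\hat f(\xi)$ is a well-defined tempered distribution (indeed a nice function, as $\hat f\in\mathcal S$ and $\|\xi\|^\gamma$ is continuous and polynomially bounded).

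With these two identifications in hand, Corollary~\ref{composition.cr} applies: since $\Omega(\xi)=\|\xi\|^{-\gamma}\ne 0$ on $S^{d-1}$ and $\gamma\in(-d,\infty)$ with $\gamma-d\notin\ZZ_+$, the operator $J_{\Omega}J_{\Omega^{-1}}$ is the identity on $\mathcal S$. Translating back, this reads
\[
I_\gamma\,(-\triangle)^{\gamma/2} f = f\qquad\text{for all }f\in\mathcal S,
\]
which is exactly the assertion that $I_\gamma$ is a left-inverse of $(-\triangle)^{\gamma/2}$. I would close by remarking that when additionally $\gamma<d$ one gets the two-sided inverse (recovering \eqref{rieszpotential.eq3}), but for the stated corollary only the left-inverse direction $J_\Omega J_{\Omega^{-1}}=\mathrm{Id}$ is needed, and Corollary~\ref{composition.cr} supplies it under the single hypothesis $\gamma-d\notin\ZZ_+$.

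The main obstacle, such as it is, is the bookkeeping in the previous sentence: verifying that the operator $J_{\Omega^{-1}}$ built from the positive-degree homogeneous symbol $\|\xi\|^\gamma$ really is $(-\triangle)^{\gamma/2}$, i.e.\ that the $k_0$-fold integration-by-parts definition collapses to the Fourier-multiplier form on all of $\mathcal S$ (not merely on $\mathcal S_\infty$). This is handled by the same integration-by-parts manipulation already carried out in \eqref{extension.eq} and in the computation following Corollary~\ref{generalizedrieszomega1.cr}, together with the observation that no low-frequency corrections are needed because $\|\xi\|^\gamma$ is locally integrable near the origin when $\gamma>-d$. Everything else is a substitution.
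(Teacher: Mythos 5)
Your proof is correct and follows exactly the paper's route: the paper's own proof of this corollary is the single sentence preceding it, namely to take $\Omega(\xi)=\|\xi\|^{-\gamma}$ in Corollary~\ref{composition.cr}, identify $J_\Omega=I_\gamma$ and $J_{\Omega^{-1}}=(-\triangle)^{\gamma/2}$, and read off $J_\Omega J_{\Omega^{-1}}=\mathrm{Id}$. Your added care in verifying that $J_{\Omega^{-1}}$ really does collapse to the Fourier multiplier $(-\triangle)^{\gamma/2}$ via \eqref{fractionalderivative.neweq2} is exactly the point the paper leaves implicit and is worth spelling out. One small slip in wording: \eqref{fractionalderivative.neweq2} is valid when the paper's parameter satisfies $\gamma<d$, which, since $\Omega$ there is homogeneous of degree $-\gamma$, is the condition ``degree $>-d$,'' not ``degree $<d$''; for $\Omega^{-1}$ of degree $+\gamma>0>-d$ this is automatic, so your conclusion stands.
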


\begin{proof}[Proof of Theorem \ref{composition.tm}]
   Let $k_0$ be the smallest nonnegative integer such that $k_0-\gamma_1+d>0$, and set $\Omega(\xi)=\Omega_1(\xi)\Omega_2(\xi)$.
  If $k_0=0$, then
the conclusion \eqref{composition.tm.eq1} follows from
\eqref{fractionalderivative.neweq2}. Now we assume that $k_0\ge 1$.
Then
\begin{eqnarray*}\label{composition.tm.pf.eq2}
 J_{\Omega_1}(J_{\Omega_2} f)({\bf x})
\!\! 
& = &
  \frac{(2\pi)^d\Gamma(d-\gamma_1)}{ \Gamma(d+k_0-\gamma_1)} \lim_{\epsilon\to 0} \int_{S^{d-1}}  \int_\epsilon^\infty \Omega(\xi')
   r^{k_0+d-\gamma_1-1}  \nonumber  \\
& &\times
\Big\{ r \Big(-\frac{d}{dr}\Big)^{k_0} \Big(e^{ir\langle {\bf x}, \xi'\rangle} \hat f(r\xi') r^{-\gamma_2-1} \Big)\nonumber\\
& & -k_0 \Big(-\frac{d}{dr}\Big)^{k_0-1} \Big(e^{ir\langle {\bf x}, \xi'\rangle} \hat f(r\xi') r^{-\gamma_2-1} \Big)\Big\}
  dr d\sigma(\xi')\nonumber\\
& = &  \frac{(2\pi)^d\Gamma(d+1-\gamma_1)}{ \Gamma(d+k_0-\gamma_1)} \lim_{\epsilon\to 0} \int_{S^{d-1}}  \int_\epsilon^\infty \Omega(\xi')
   r^{k_0+d-\gamma_1-1}  \nonumber  \\
& &\times
 \Big(-\frac{d}{dr}\Big)^{k_0-1} \Big(e^{ir\langle {\bf x}, \xi'\rangle} \hat f(r\xi') r^{-\gamma_2-1} \Big) dr d\sigma(\xi')\nonumber\\
 &  = & \cdots\nonumber\\
 & = &
   \frac{ (2\pi)^{-d}\Gamma(d+k_0-\gamma_1)}{ \Gamma(d+k_0-\gamma_1)} \lim_{\epsilon\to 0} \int_{S^{d-1}}  \int_\epsilon^\infty \Omega(\xi')
   r^{k_0+d-\gamma_1-1}  \nonumber  \\
& &\times
  \Big(e^{ir\langle {\bf x}, \xi'\rangle} \hat f(r\xi') r^{-\gamma_2-k_0} \Big) dr d\sigma(\xi')\nonumber\\
  & = & J_{\Omega_1\Omega_2} f({\bf x})\quad {\rm for\ all} \ {\bf x}\in \RR^d,
\end{eqnarray*}
where the second equality is obtained by  applying the integration-by-parts technique and using the fact that
$ \epsilon^{k_0+d-\gamma_1} \big(\frac{d}{dr}\big)^{k_0-1} \big(e^{ir\langle {\bf x}, \xi'\rangle} \hat f(r\xi') r^{-\gamma_2-1} \big)\big|_{r=\epsilon}$  converges to  zero uniformly on $\ \xi\in S^{d-1}$ under the assumption that $\gamma_1+\gamma_2<d$.
The conclusion
\eqref{composition.tm.eq1} then follows.
\end{proof}


\subsection{Translation-invariant  and dilation-invariant extensions of the linear operator $i_\Omega$}
In this subsection, we show that   the generalized Riesz potential $J_\Omega$ in \eqref{fractionalderivative.def} is the {\bf only}
  continuous linear operator from ${\mathcal S}$ to ${\mathcal S}'$ that is translation-invariant and dilation-invariant, and
 that is an extension of  the linear operator $i_\Omega$ in \eqref{fractionalderivative.def1} from the closed subspace ${\mathcal S}_\infty$ to the whole space ${\mathcal S}$.

  \begin{Tm}\label{iomega2.tm}
Let $\gamma$ be a positive  number with $\gamma-d\not\in \ZZ_+$,
 $\Omega\in C^\infty(\RR^d\backslash \{\bf 0\})$ be a nonzero homogeneous function of degree $-\gamma$,
 and let $J_\Omega$ be defined by \eqref{fractionalderivative.def}. Then
 $I$ is a continuous linear  operator from ${\mathcal S}$ to ${\mathcal S}'$  such that
$I$ is dilation-invariant and translation-invariant,  and   that the restriction of $I$
on ${\mathcal S}_\infty$ is
the  same as the linear operator $i_\Omega$ in
\eqref{fractionalderivative.def1} if and only if $I=J_\Omega$.
\end{Tm}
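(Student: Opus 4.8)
The equivalence splits into a routine direction and a uniqueness direction. The \emph{if} direction is immediate from results already in hand: if $I=J_\Omega$, then Theorem~\ref{maintheorem.iomega1} shows $I$ is dilation-invariant, translation-invariant, and satisfies $\widehat{If}(\xi)=\Omega(\xi)\hat f(\xi)$ for all $f\in{\mathcal S}_\infty$, i.e. $I|_{{\mathcal S}_\infty}=i_\Omega$, while Corollary~\ref{generalizedrieszomega1.cr} gives continuity of $I$ from ${\mathcal S}$ to ${\mathcal S}'$. So the substance is the \emph{only if} direction, and the plan is to prove that the difference operator $D:=I-J_\Omega$ is identically zero.

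First I would list the properties of $D$. From the hypotheses on $I$ and the already-proved properties of $J_\Omega$, the operator $D$ is continuous and linear from ${\mathcal S}$ to ${\mathcal S}'$, it is translation-invariant, and it vanishes on ${\mathcal S}_\infty$ (both $I$ and $J_\Omega$ restrict to $i_\Omega$ there, by Theorem~\ref{maintheorem.iomega1}(iii)). What still needs an argument is that $D$ is dilation-invariant of the specific degree $-\gamma$ that $J_\Omega$ carries. Since $\Omega\not\equiv0$ is homogeneous and continuous on $\RR^d\backslash\{{\bf 0}\}$, pick $\xi_0\neq{\bf 0}$ with $\Omega(\xi_0)\neq0$ and choose $f_0\in{\mathcal S}_\infty$ whose Fourier transform is a smooth bump supported near $\xi_0$ and away from the origin, so that $If_0=i_\Omega f_0\neq0$. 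Since ${\mathcal S}_\infty$ is invariant under every dilation $\delta_t$, applying $I|_{{\mathcal S}_\infty}=i_\Omega$ to $\delta_t f_0$ and comparing the dilation relation $I(\delta_t f_0)=t^{\gamma'}\delta_t(If_0)$ (for the dilation degree $\gamma'$ of $I$) with the identity $i_\Omega(\delta_t f_0)=t^{-\gamma}\delta_t(i_\Omega f_0)$ (valid since $i_\Omega$ is the restriction of the degree-$(-\gamma)$ operator $J_\Omega$) forces $\gamma'=-\gamma$. Hence $D$ is dilation-invariant of degree $-\gamma$.

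Next I would invoke the classical structure theorem: a continuous translation-invariant linear map ${\mathcal S}\to{\mathcal S}'$ is convolution with a unique tempered distribution, so there is $u\in{\mathcal S}'$ with $\widehat{Df}=\hat u\,\hat f$ for all $f\in{\mathcal S}$. Because $D$ annihilates ${\mathcal S}_\infty$ and ${\mathcal F}({\mathcal S}_\infty)$ is exactly the space of Schwartz functions vanishing to infinite order at the origin — which in particular contains every $\varphi\in C_c^\infty(\RR^d\backslash\{{\bf 0}\})$ — we get $\hat u\,\varphi=0$ for all such $\varphi$, hence $\supp\hat u\subseteq\{{\bf 0}\}$. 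Therefore $\hat u=\sum_{|{\bf i}|\le N}c_{\bf i}\,\partial^{\bf i}\delta$ is a finite linear combination of derivatives of the Dirac mass at the origin.

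Finally I would use dilation-invariance once more. That $D$ is dilation-invariant of degree $-\gamma$ is equivalent, on the Fourier side, to $\hat u$ being a homogeneous distribution of degree $-\gamma$; but $\partial^{\bf i}\delta$ is homogeneous of degree $-d-|{\bf i}|$, and (separating $\hat u$ into its pieces of distinct homogeneity degrees) a finite sum $\sum_{|{\bf i}|\le N}c_{\bf i}\,\partial^{\bf i}\delta$ can be homogeneous of degree $-\gamma$ only if $c_{\bf i}=0$ whenever $-d-|{\bf i}|\neq-\gamma$, i.e. unless $\gamma-d=|{\bf i}|\in\ZZ_+$. Since $\gamma-d\notin\ZZ_+$ by hypothesis, every $c_{\bf i}$ vanishes, so $\hat u=0$, hence $u=0$ and $D=0$, i.e. $I=J_\Omega$. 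The step I expect to be the main obstacle is this translation-invariance/convolution step together with a careful distributional justification of the ``dilation degree $-\gamma$ $\Longleftrightarrow$ $\hat u$ homogeneous of degree $-\gamma$'' equivalence; the remaining steps are bookkeeping. A more self-contained variant avoids the convolution structure theorem: using that a continuous linear functional on ${\mathcal S}$ vanishing on ${\mathcal S}_\infty$ is integration against a polynomial, one writes $Df=\sum_{|{\bf k}|\le N}\partial^{\bf k}\hat f({\bf 0})\,T_{\bf k}$ with $T_{\bf k}\in{\mathcal S}'$, then shows from translation-invariance that each $T_{\bf k}$ is a polynomial and from dilation-invariance that $T_{\bf k}$ is homogeneous of degree $\gamma-d-|{\bf k}|$, which is impossible for a nonzero polynomial when $\gamma-d\notin\ZZ_+$.
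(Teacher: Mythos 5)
Your proof is correct, and your primary variant is a genuinely different route from the paper's. The paper first uses only the vanishing of $I-J_\Omega$ on $\mathcal{S}_\infty$ (via the observation that $\mathcal{F}^{-1}((I-J_\Omega)^*g)$ is supported at the origin, Lemma~\ref{homogeneous1.lm}) to obtain the decomposition $If = J_\Omega f + \sum_{|{\bf i}|\le N}\frac{\partial^{\bf i}\hat f({\bf 0})}{{\bf i}!}H_{\bf i}$, and then processes translation-invariance and dilation-invariance separately on this correction term: translation-invariance forces each $H_{\bf i}=(-i\partial)^{\bf i}P$ for a single polynomial $P$ (Lemma~\ref{homogeneous2.lm}(ii), the most computational step), while dilation-invariance forces $H_{\bf i}$ to be homogeneous of degree $\gamma-d-|{\bf i}|\notin\ZZ$, and a nonzero homogeneous polynomial has nonnegative integer degree. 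Your route reverses the order: you first invoke the Schwartz structure theorem for continuous translation-invariant maps $\mathcal S\to\mathcal S'$ to write $\widehat{Df}=\hat u\,\hat f$, then use $D|_{\mathcal S_\infty}=0$ to localize $\supp\hat u\subseteq\{\bf 0\}$, and finally use dilation-invariance (equivalently $\hat u$ homogeneous of degree $-\gamma$) against the fact that $\partial^{\bf i}\delta$ has degree $-d-|{\bf i}|$. This trades the bookkeeping of Lemma~\ref{homogeneous2.lm}(ii) for one appeal to a standard but heavier structure theorem, and is arguably cleaner. You also supply a step the paper leaves implicit: since the paper's notion of dilation-invariance permits an arbitrary degree $\gamma'$, you pin down $\gamma'=-\gamma$ by testing $I$ against a bump $f_0\in\mathcal{S}_\infty$ with $i_\Omega f_0\neq0$ and using that $\mathcal{S}_\infty$ is dilation-stable; without this the claim that $K=I-J_\Omega$ satisfies \eqref{homogeneous2.lm.eq0} is not fully justified. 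Your ``more self-contained variant'' at the end coincides with the paper's own argument.
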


To prove Theorem \ref{iomega2.tm}, we need two technical lemmas about  extensions of the linear operator $i_\Omega$ on ${\mathcal S}_\infty$.

\begin{Lm}\label{homogeneous1.lm}
Let $\gamma$ be a positive number with $\gamma-d\not\in  \ZZ_+$,
 $\Omega\in C^\infty(\RR^d\backslash \{\bf 0\})$ be a homogeneous function of degree $-\gamma$,
 and let $J_\Omega$ be defined by \eqref{fractionalderivative.def}.
 Then
  a continuous linear  operator $I$ from ${\mathcal S}$ to ${\mathcal S}'$  is an extension of the linear operator $i_\Omega$  on ${\mathcal S}_\infty$ if and only if
\begin{equation}\label{homogeneous1.lm.eq1}
If= J_\Omega f+\sum_{|{\bf i}|\le N} \frac{\partial^{\bf i} \hat f({\bf 0})}{{\bf i}!} H_{\bf i}
\end{equation}
 for some integer $N$ and tempered distributions $H_{\bf i}, {\bf i}\in \ZZ_+^d$ with $|{\bf i}|\le N$.
 \end{Lm}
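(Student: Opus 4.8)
The plan is to reduce the claim to an understanding of the continuous linear operators $L\colon{\mathcal S}\to{\mathcal S}'$ that vanish on the closed subspace ${\mathcal S}_\infty$, and then to apply this to $L:=I-J_\Omega$. The \emph{if} direction is routine: each map $f\mapsto\frac{\partial^{\bf i}\hat f({\bf 0})}{{\bf i}!}H_{\bf i}$ factors as the continuous linear functional $f\mapsto\partial^{\bf i}\hat f({\bf 0})$ on ${\mathcal S}$ followed by multiplication by the fixed tempered distribution $H_{\bf i}$, hence is continuous from ${\mathcal S}$ to ${\mathcal S}'$; together with the continuity of $J_\Omega$ (Corollary~\ref{generalizedrieszomega1.cr}) this shows that the operator $I$ in \eqref{homogeneous1.lm.eq1} is automatically continuous, and for $f\in{\mathcal S}_\infty$ every number $\partial^{\bf i}\hat f({\bf 0})$ vanishes, so $If=J_\Omega f=i_\Omega f$ by \eqref{extension.eq}; thus $I$ extends $i_\Omega$.

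For the converse, set $L:=I-J_\Omega$; by Corollary~\ref{generalizedrieszomega1.cr} and \eqref{extension.eq} it is a continuous linear operator from ${\mathcal S}$ to ${\mathcal S}'$ whose restriction to ${\mathcal S}_\infty$ is zero, and it suffices to produce an integer $N$ and tempered distributions $H_{\bf i}$, $|{\bf i}|\le N$, with $Lf=\sum_{|{\bf i}|\le N}\frac{\partial^{\bf i}\hat f({\bf 0})}{{\bf i}!}H_{\bf i}$. I would invoke the Schwartz kernel theorem to represent $L$ by a kernel $K\in{\mathcal S}'({\RR}^d\times{\RR}^d)$, $\langle Lf,g\rangle=\langle K,g\otimes f\rangle$, and then pass to the partial Fourier transform $\widetilde K\in{\mathcal S}'({\RR}^d\times{\RR}^d)$ of $K$ in the second variable, so that $\langle Lf,g\rangle=\langle\widetilde K,g\otimes\check f\rangle$ with $\check f:={\mathcal F}^{-1}f$. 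Since ${\mathcal S}_\infty=\{f:\partial^{\bf i}\check f({\bf 0})=0\text{ for all }{\bf i}\}$ and since any $\psi\in{\mathcal S}({\RR}^d)$ vanishing near ${\bf 0}$ is flat at ${\bf 0}$, the hypothesis $L|_{{\mathcal S}_\infty}=0$ forces $\langle\widetilde K,g\otimes\psi\rangle=0$ for all $g\in{\mathcal S}$ and all $\psi$ supported away from ${\bf 0}$; hence $\supp\widetilde K\subseteq{\RR}^d\times\{{\bf 0}\}$. By the structure theorem for tempered distributions supported on a linear subspace (the standard extension of the structure theorem for distributions supported at a point, see \cite{hormanderbook}), $\widetilde K=\sum_{|{\bf i}|\le N}T_{\bf i}\otimes\partial^{\bf i}\delta$ for some integer $N$ and tempered distributions $T_{\bf i}\in{\mathcal S}'({\RR}^d)$. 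Pairing against $g\otimes\check f$ and using $\partial^{\bf i}\check f({\bf 0})=c_{\bf i}\,\partial^{\bf i}\hat f({\bf 0})$ with nonzero constants $c_{\bf i}$ yields $Lf=\sum_{|{\bf i}|\le N}\partial^{\bf i}\hat f({\bf 0})\,\widetilde T_{\bf i}$ for suitable tempered distributions $\widetilde T_{\bf i}$; since $If=Lf+J_\Omega f$, setting $H_{\bf i}:={\bf i}!\,\widetilde T_{\bf i}$ gives \eqref{homogeneous1.lm.eq1}.

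An elementary route avoiding the kernel theorem is also available. The bilinear form $(f,g)\mapsto\langle Lf,g\rangle$ is separately continuous on the Fréchet space ${\mathcal S}\times{\mathcal S}$, hence jointly continuous, so $|\langle Lf,g\rangle|\le C\,p_N(f)\,p_N(g)$ for some Schwartz seminorm $p_N$ and index $N$. If $\partial^{\bf i}\hat f({\bf 0})=0$ for all $|{\bf i}|\le N$, a Taylor estimate shows that ${\mathcal F}^{-1}[(1-\chi(\cdot/\epsilon))\hat f]\in{\mathcal S}_\infty$ converges to $f$ in $p_N$ as $\epsilon\to0$ (where $\chi\in C^\infty_c$ equals $1$ near ${\bf 0}$), whence $Lf=0$; thus $Lf$ depends only on the finite jet $(\partial^{\bf i}\hat f({\bf 0}))_{|{\bf i}|\le N}$. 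Choosing $f_{\bf i}\in{\mathcal S}$ with $\partial^{\bf j}\hat f_{\bf i}({\bf 0})$ equal to $1$ if ${\bf j}={\bf i}$ and $0$ otherwise for $|{\bf j}|\le N$ — for instance $\hat f_{\bf i}(\xi)=(\xi^{\bf i}/{\bf i}!)\chi(\xi)$ — one then has $Lf=\sum_{|{\bf i}|\le N}\partial^{\bf i}\hat f({\bf 0})\,Lf_{\bf i}$, and sets $H_{\bf i}:={\bf i}!\,Lf_{\bf i}$. In either approach the crux is the \emph{finiteness} of the sum — that only derivatives of $\hat f$ at the origin up to a fixed order $N$ can enter — and that is precisely where the continuity of $L$ is used; everything else is bookkeeping.
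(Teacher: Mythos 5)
Your proof is correct, and the ``if'' direction is handled exactly as in the paper (via continuity of $J_\Omega$ and the fact that all the jet coefficients $\partial^{\bf i}\hat f({\bf 0})$ vanish on ${\mathcal S}_\infty$). The ``only if'' direction is where you diverge from the paper's route, in a way worth pointing out. The paper applies the adjoint: for each fixed $g\in{\mathcal S}$ it observes that ${\mathcal F}^{-1}\bigl((I-J_\Omega)^*g\bigr)$ is a tempered distribution supported at the origin, invokes the structure theorem for point-supported distributions (\cite[Theorem 2.3.4]{hormanderbook}), and then writes the coefficients as $\langle g,H_{\bf i}\rangle/{\bf i}!$. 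That argument is terse on one point: the structure theorem a priori gives an order $N(g)$ that depends on $g$, and the paper simply asserts a single $N$ works. Your two versions both close this gap explicitly. The kernel-theorem version packages the whole family $\{L^*g\}_g$ into a single tempered distribution $\widetilde K$ on $\RR^d\times\RR^d$ supported on $\RR^d\times\{{\bf 0}\}$, and then the finite order of that one distribution forces the uniform $N$ at the same time as it produces the $H_{\bf i}$. Your elementary version extracts the joint continuity estimate $|\langle Lf,g\rangle|\le C\,p_N(f)\,p_N(g)$ directly (separate continuity on a product of Fr\'echet spaces implies joint continuity), and uses a cutoff/Taylor approximation to show $Lf$ depends only on the finite jet $(\partial^{\bf i}\hat f({\bf 0}))_{|{\bf i}|\le N}$, which pins down $N$ without any structure theorem. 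Both of your routes are legitimate and in fact a bit more complete than the paper's; the elementary one is perhaps the cleanest if one wishes to avoid the kernel theorem, and it makes transparent that the finiteness of the sum is precisely the content of continuity of $L$. One small bookkeeping caveat in the elementary route: the vanishing order needed for the Taylor/cutoff step to converge in $p_N$ may exceed the $N$ in the continuity estimate (it depends on which Schwartz seminorm dominates under Fourier transform), so one should take the final $N$ to be the larger of the two; this does not affect the conclusion.
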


\begin{proof} The sufficiency  follows from Theorem \ref{maintheorem.iomega1} and the assumption that $H_{\bf i}, |{\bf i}|\le N$, in \eqref{homogeneous1.lm.eq1}
are  tempered distributions.
Now the necessity. By  Corollary \ref{generalizedrieszomega1.cr} and Theorem \ref{maintheorem.iomega1}, $I-J_\Omega$
 is a continuous  linear operator from ${\mathcal S}$ to ${\mathcal S}'$ that satisfies that
$(I-J_\Omega)f=0$   for all $f\in {\mathcal S}_\infty$.
This implies that the inverse Fourier transform of the tempered distribution $(I-J_\Omega)^* g$
is supported on the origin for any Schwartz function $g$.
Hence there exist an integer $N$ and tempered distribution $H_{\bf i}, |{\bf i}|\le N$,
such that
${\mathcal F}^{-1}((I-J_\Omega)^* g)=\sum_{|{\bf i}|\le N} \langle g, H_{\bf i}\rangle \delta^{({\bf i})}/{{\bf i}!}$,
where the tempered distributions $\delta^{({\bf i})}, {\bf i}\in \ZZ_+^d$, are
defined by $\langle \delta^{({\bf i})}, f\rangle=\partial^{\bf i} f({\bf 0})$ \cite[Theorem 2.3.4]{hormanderbook}.
Then
$\langle (I-J_\Omega)f,  g\rangle=
\langle \hat f, {\mathcal F}^{-1}(I-J_\Omega)^*g\rangle= \sum_{|{\bf i}|\le N} \langle H_{\bf i}, g\rangle {\partial^{\bf i} \hat f({\bf 0})}/{{\bf i}!}
$ for all Schwartz functions $f$ and $g$, and hence \eqref{homogeneous1.lm.eq1} is established.
\end{proof}

\begin{Lm}\label{homogeneous2.lm}
Let  $\gamma$ be a positive number with $\gamma-d\not\in  \ZZ_+$,
 and consider the continuous linear operator $K$ from ${\mathcal S}$ to ${\mathcal S}'$:
 \begin{equation}\label{homogeneous2.lm.eq-1}
 Kf=\sum_{|{\bf i}|\le N} \frac{\partial^{\bf i} \hat f({\bf 0})}{{\bf i}!} H_{\bf i}, \quad f\in {\mathcal S}\end{equation}
  where $N\in {\mathbb Z}_+$ and   $H_{\bf i}, |{\bf i}|\le N$, are tempered distributions,
Then the following statements hold.
\begin{itemize}
\item[{(i)}] The  equation
\begin{equation} \label{homogeneous2.lm.eq0} K (\delta_t f)= t^{-\gamma} \delta_t (Kf)\end{equation}
 holds  for  any $f\in {\mathcal S}$ and $t>0$
 if and only if for every ${\bf i}\in \ZZ_+^d$ with $|{\bf i}|\le N$,
$H_{\bf i}$ is homogeneous of degree $\gamma-d-|{\bf i}|$.

\item[{(ii)}]
The linear operator  $K$  is translation-invariant if and only if
there exists a polynomial $P$ of degree at most $N$ such that
$H_{\bf i}=(-i\partial)^{\bf i} P$ for all ${\bf i}\in \ZZ_+^d$ with $|{\bf i}|\le N$.


\item[{(iii)}]  The linear operator $K$
  is translation-invariant and satisfies \eqref{homogeneous2.lm.eq0}
if and only if $H_{\bf i}=0$ for all
  ${\bf i}\in \ZZ_+^d$ with $|{\bf i}|\le N$.
\end{itemize}
\end{Lm}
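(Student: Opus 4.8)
The plan is to reduce all three parts to statements about a single distribution $H_{\bf i}$ by a \emph{decoupling} step. For every $N\in\ZZ_+$ the linear map $f\mapsto\big(\partial^{\bf i}\hat f({\bf 0})\big)_{|{\bf i}|\le N}$ from $\mathcal S$ onto the corresponding finite-dimensional space of tuples is surjective (given prescribed values, take $\hat f$ to be a fixed smooth cutoff near the origin times the interpolating polynomial and invert the Fourier transform). Hence any identity $\sum_{|{\bf i}|\le N}\partial^{\bf i}\hat f({\bf 0})\,G_{\bf i}=0$ in $\mathcal S'$ holding for all $f\in\mathcal S$, with fixed $G_{\bf i}\in\mathcal S'$, forces $G_{\bf i}=0$ for each ${\bf i}$.

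For (i), I would use $\widehat{\delta_t f}(\xi)=t^{-d}\hat f(\xi/t)$, hence $\partial^{\bf i}(\widehat{\delta_t f})({\bf 0})=t^{-d-|{\bf i}|}\partial^{\bf i}\hat f({\bf 0})$, to obtain $K(\delta_t f)=\sum_{|{\bf i}|\le N}t^{-d-|{\bf i}|}\frac{\partial^{\bf i}\hat f({\bf 0})}{{\bf i}!}H_{\bf i}$ and $t^{-\gamma}\delta_t(Kf)=\sum_{|{\bf i}|\le N}t^{-\gamma}\frac{\partial^{\bf i}\hat f({\bf 0})}{{\bf i}!}\,\delta_t H_{\bf i}$. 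Decoupling then shows \eqref{homogeneous2.lm.eq0} holds for all $f$ and $t>0$ if and only if $\delta_t H_{\bf i}=t^{\gamma-d-|{\bf i}|}H_{\bf i}$ for every ${\bf i}$ and $t>0$; unwinding the duality action $\langle\delta_t H,g\rangle=t^{-d}\langle H,\delta_{1/t}g\rangle$ identifies this relation with the homogeneity of $H_{\bf i}$ of degree $\gamma-d-|{\bf i}|$ in the sense recalled in the Introduction.

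For (ii), I would start from $\widehat{\tau_{{\bf x}_0}f}(\xi)=e^{-i\langle{\bf x}_0,\xi\rangle}\hat f(\xi)$ and the Leibniz rule, so that $\partial^{\bf i}(\widehat{\tau_{{\bf x}_0}f})({\bf 0})=\sum_{{\bf j}\le{\bf i}}\binom{\bf i}{\bf j}(-i{\bf x}_0)^{{\bf i}-{\bf j}}\partial^{\bf j}\hat f({\bf 0})$; substituting and reindexing by ${\bf k}={\bf i}-{\bf j}$ gives $K(\tau_{{\bf x}_0}f)=\sum_{|{\bf j}|\le N}\frac{\partial^{\bf j}\hat f({\bf 0})}{{\bf j}!}\sum_{{\bf k}:\,|{\bf j}+{\bf k}|\le N}\frac{(-i{\bf x}_0)^{\bf k}}{{\bf k}!}H_{{\bf j}+{\bf k}}$, whereas $\tau_{{\bf x}_0}(Kf)=\sum_{|{\bf j}|\le N}\frac{\partial^{\bf j}\hat f({\bf 0})}{{\bf j}!}\tau_{{\bf x}_0}H_{\bf j}$. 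Decoupling shows $K$ is translation-invariant iff $\tau_{{\bf x}_0}H_{\bf j}=\sum_{{\bf k}:\,|{\bf j}+{\bf k}|\le N}\frac{(-i{\bf x}_0)^{\bf k}}{{\bf k}!}H_{{\bf j}+{\bf k}}$ for all ${\bf j}$ and ${\bf x}_0$. I would then set $P:=H_{\bf 0}$, pair the ${\bf j}={\bf 0}$ relation with a test function and differentiate in ${\bf x}_0$ at the origin to obtain $H_{\bf i}=(-i\partial)^{\bf i}P$ for all $|{\bf i}|\le N$; the relations with $|{\bf j}|=N$ collapse to $\tau_{{\bf x}_0}H_{\bf j}=H_{\bf j}$, so those $H_{\bf j}$ are translation-invariant, hence constant, whence $\partial^{\bf m}P=0$ for $|{\bf m}|=N+1$ and $P$ is a polynomial of degree at most $N$ (a tempered distribution killed by all derivatives of a fixed order is such a polynomial). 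The converse runs the same computation backwards, using the exact Taylor identity $\sum_{\bf k}\frac{{\bf a}^{\bf k}}{{\bf k}!}\partial^{\bf k}P=P(\cdot+{\bf a})$ valid for polynomials, which yields $\sum_{\bf k}\frac{(-i{\bf x}_0)^{\bf k}}{{\bf k}!}H_{{\bf j}+{\bf k}}=(-i\partial)^{\bf j}(\tau_{{\bf x}_0}P)=\tau_{{\bf x}_0}H_{\bf j}$.

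For (iii), I would combine the two: if $K$ is translation-invariant and satisfies \eqref{homogeneous2.lm.eq0}, then by (ii) $H_{\bf 0}=P$ is a polynomial of degree $\le N$, and by (i) it is homogeneous of degree $\gamma-d$; but a nonzero polynomial is homogeneous (pointwise, equivalently distributionally) only of nonnegative integer degree and $\gamma-d\notin\ZZ_+$, so $P=0$, hence $H_{\bf i}=(-i\partial)^{\bf i}P=0$ for every ${\bf i}$; the reverse implication is immediate since then $K=0$. The argument is essentially bookkeeping, the only non-formal ingredients being three standard structural facts about tempered distributions (translation-invariant $\Rightarrow$ constant; annihilated by all derivatives of order $N+1$ $\Rightarrow$ polynomial of degree $\le N$; polynomial and homogeneous of non-integer degree $\Rightarrow$ zero). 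The step requiring the most care — and the likeliest source of slips — is the Leibniz expansion and reindexing in (ii), together with justifying the ${\bf x}_0$-differentiation in the weak-$*$ topology of $\mathcal S'$.
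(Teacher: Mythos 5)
Your proposal is correct and follows essentially the same route as the paper: decouple the $H_{\bf i}$'s by evaluating $K$ on test functions of the form $\widehat{\psi_{\bf i}}(\xi)=\xi^{\bf i}\phi(\xi)/{\bf i}!$ (whose derivatives at the origin form a Kronecker delta array, exactly the surjectivity you invoke), then read off the homogeneity of each $H_{\bf i}$ for (i), use the shift identity $\partial^{\bf i}(\widehat{\tau_{{\bf x}_0}f})({\bf 0})=\sum_{{\bf j}\le{\bf i}}\binom{\bf i}{\bf j}(-i{\bf x}_0)^{{\bf i}-{\bf j}}\partial^{\bf j}\hat f({\bf 0})$ for (ii), and combine via "nonzero homogeneous polynomial has nonnegative-integer degree" for (iii). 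The only cosmetic difference is in (ii): the paper first proves $H_{\bf 0}$ is a polynomial by applying $\partial^{\bf k}_{{\bf x}_0}$ of order $N+1$ to the ${\bf j}={\bf 0}$ relation and then recovers $H_{\bf i}=(-i\partial)^{\bf i}H_{\bf 0}$ from the Taylor expansion, whereas you first extract $H_{\bf i}=(-i\partial)^{\bf i}H_{\bf 0}$ by differentiation at the origin and then deduce polynomiality via the $|{\bf j}|=N$ relations; both orderings are valid and rely on the same ingredients.
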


\begin{proof}
{\em   (i)}\quad
The sufficiency follows from the homogeneous assumption on $H_{\bf i}, |{\bf i}|\le N$, and
 the observation that
\begin{equation}\label{homogeneous2.lm.pf.eq1}
\partial^{\bf i} \widehat {\delta_t f}({\bf 0})=t^{-d-|{\bf i}|} \partial^{\bf i} \hat f({\bf 0})\quad
{\rm for\ all}\  f\in {\mathcal S} \ {\rm and} \ {\bf i}\in \ZZ_+^d.\end{equation}
%
Now the necessity.
Let $\phi$ be a  $C^\infty$ function such that
   $\phi(\xi)=1$ for all $\xi\in B({\bf 0}, 1)$ and
   $\phi(\xi)=0$ for all $\xi\not\in B({\bf 0}, 2)$, where $B({\bf x}, r)$ is the  ball  with center ${\bf x}\in \RR^d$  and radius $r>0$.
   Define  $\psi_{\bf i}\in {\mathcal S}, {\bf i}\in \ZZ_+^d,$  with the help of the Fourier transform by
 \begin{equation}\label{homogeneous2.lm.pf.eq7}
 \widehat {\psi_{\bf i}}(\xi)=\frac{\xi^{\bf i}}{{\bf i} !} \phi(\xi). \end{equation}
One may  verify that
  \begin{equation}\label{homogeneous2.lm.pf.eq8}
  \partial^{{\bf i}'} \widehat{\psi_{\bf i}}({\bf 0})=\left\{\begin{array}{ll} 1 & {\rm if} \ {\bf i}'={\bf i},\\
  0 & {\rm if} \  {\bf i}'\ne {\bf i}.\end{array}\right.
  \end{equation}
  For any ${\bf i}\in \ZZ_+^d$ with $|{\bf i}|\le N$,
   the homogeneous property of the tempered distribution $H_{\bf i}$ follows by replacing $f$ in \eqref{homogeneous2.lm.eq0} by $\psi_{\bf i}$ and using \eqref{homogeneous2.lm.pf.eq8}.


 {\em (ii)}\quad ($\Longleftarrow$) Given  $f\in {\mathcal S}$ and ${\bf x}_0\in \RR^d$,
\begin{eqnarray} \label{homogeneous2.lm.pf.eq10}
 K(\tau_{{\bf x}_0}f) ({\bf x})  & = & \sum_{|{\bf i}|\le N}
 \sum_{{\bf j}+{\bf k}= {\bf i}}
 \frac{ (-i{\bf x}_0)^{{\bf k}}}{{\bf k}!} \frac{\partial^{\bf j} \hat f({\bf 0})}{{\bf j}!}
 (-i\partial)^{\bf i}P({\bf x})\nonumber\\
 &= & \sum_{|{\bf j}|\le N}
 \frac{(-i)^{\bf j}\partial^{\bf j} \hat f({\bf 0})}{{\bf j}!}
 \Big(\sum_{|{\bf k}|\le N-|{\bf j}|}\frac{\partial^{{\bf j}+{\bf k}} P({\bf x})}{{\bf k}!} (-{\bf x}_0)^{\bf k}\Big)
 \nonumber\\
 & = &  \sum_{|{\bf j}|\le N}
 \frac{(-i)^{\bf j}\partial^{\bf j} \hat f({\bf 0})}{{\bf j}!}\partial^{{\bf j}} P({\bf x}-{\bf x}_0)=Kf({\bf x}-{\bf x}_0),
  \end{eqnarray}
where the first equality follows from
 \begin{equation} \label{homogeneous2.lm.pf.eq11}
 \partial^{\bf i} \widehat {\tau_{{\bf x}_0}f}({\bf 0})=\sum_{{\bf j}\le {\bf i}}
 \binom{{\bf i}}{{\bf j}} (-i{\bf x}_0)^{{\bf i}-{\bf j}} \partial^{\bf j} \hat f({\bf 0}),
 \end{equation}
 and the third equality is deducted from the Taylor expression of the polynomial $\partial^{\bf j} P$
 of degree at most $N-|{\bf j}|$.

 ($\Longrightarrow$)
 By \eqref{homogeneous2.lm.pf.eq11} and the translation-invariance of the linear operator $K$,
  \begin{equation} \label{homogeneous2.lm.pf.eq12}
  \sum_{|{\bf i}|\le N}
 \sum_{{\bf j}+{\bf k}= {\bf i}}
 \frac{ (-i{\bf x}_0)^{{\bf k}}}{{\bf k}!} \frac{\partial^{\bf j} \hat f({\bf 0})}{{\bf j}!}
 H_{\bf i}= \sum_{|{\bf i}|\le N}
 \frac{\partial^{\bf j} \hat f({\bf 0})}{{\bf j}!} \tau_{ {\bf x}_0}H_{\bf j}
  \end{equation}
holds for any Schwartz function $f$ and ${\bf x}_0\in \RR^d$.
  Replacing $f$ in the above equation by the function $\psi_{\bf 0}$ in \eqref{homogeneous2.lm.pf.eq7}
  and then using \eqref{homogeneous2.lm.pf.eq8}, we get
  \begin{equation} \label{homogeneous2.lm.pf.eq13}
 \tau_{{\bf x}_0} H_{\bf 0} =
  \sum_{|{\bf i}|\le N}
 \frac{ (-i{\bf x}_0)^{{\bf i}}}{{\bf i}!}
 H_{\bf i}.
  \end{equation}
This implies that
 $ 
\langle H_{{\bf 0}}, g(\cdot+{\bf x}_0)\rangle=
\sum_{ |{\bf i}|\le N} \frac{ (-i{\bf x}_0)^{{\bf i}}}{{\bf i}!}
\langle H_{{\bf i}}, g\rangle
$ 
for any Schwartz function $g$. By taking partial derivatives $\partial^{\bf k}, |{\bf k}|=N+1$,  with respect to ${\bf x}_0$
of both sides of the above equation,
using the fact that $\partial^{{\bf k}} {\bf x}^{{\bf i}}=0$ for all ${\bf k}\in \ZZ_+$ with $|{\bf k}|=N+1$,
and then letting ${\bf x}_0={\bf 0}$, we obtain that
$\langle H_{{\bf 0}}, \partial^{\bf k} g\rangle =0
$ holds
for any $g\in {\mathcal S}$ and ${\bf k}\in \ZZ_+$ with $|{\bf k}|=N+1$.
Hence $H_{\bf 0}=P$ for some polynomial $P$ of degree at most $N$.
The desired conclusion about $H_{\bf i}, |{\bf i}|\le N$,
then follows from \eqref{homogeneous2.lm.pf.eq13} and
$\tau_{{\bf x}_0}H_{\bf 0}({\bf x})=\sum_{|{\bf i}|\le N} \frac{ (-{\bf x}_0)^{\bf i}}{{\bf i}!} \partial^{\bf i} P({\bf x})
$ by the  Taylor expansion of the polynomial $P$.

%

{\em (iii)} \quad Clearly if $H_{\bf i}=0$ for all $|{\bf i}|\le N$, then  $Kf=0$
for all $f\in {\mathcal S}$ and hence $K$ is  translation-invariant and satisfies \eqref{homogeneous2.lm.eq0}.
Conversely, if $K$ is translation-invariant and  satisfies \eqref{homogeneous2.lm.eq0}, it follow from
the conclusions (i) and (ii) that for every ${\bf i}\in \ZZ_+^d$ with $|{\bf i}|\le N$,
$H_{\bf i}$ is homogeneous of degree $\gamma-d-|{\bf i}|\not\in \ZZ$ and also a polynomial of degree at most $N-|{\bf i}|$.
Then $H_{\bf i}=0$ for all $ {\bf i}\in \ZZ_+^d$ with $|{\bf i}|\le N$ because
 the homogeneous degree of any  nonzero polynomial is a nonnegative integer if it is
 homogeneous.
\end{proof}

 We now have all of ingredients to prove Theorem \ref{iomega2.tm}.

\begin{proof}[Proof of Theorem \ref{iomega2.tm}] The sufficiency follows from  Corollary \ref{generalizedrieszomega1.cr} and Theorem \ref{maintheorem.iomega1}. Now the necessity.
By Lemma \ref{homogeneous1.lm}, there exist
an integer $N$ and tempered distributions $H_{\bf i}, |{\bf i}|\le N$,
such that   \eqref{homogeneous1.lm.eq1} holds.
Define
 $Kf=\sum_{|{\bf i}|\le N} \frac{\partial^{\bf i} \hat f({\bf 0})}{{\bf i}!} H_{\bf i}$ for any $f\in {\mathcal S}$. Then
 $Kf $ is a  continuous linear  operator from ${\mathcal S}$ to ${\mathcal S}'$ and
 \begin{equation}\label{iomega2.tm.pf.eq1}
 If=J_\Omega f+Kf,  \quad \ f\in {\mathcal S}.\end{equation}
  Moreover the linear operator $K$ satisfies \eqref{homogeneous2.lm.eq0} and  is   translation-invariant  by  \eqref{iomega2.tm.pf.eq1},
  Theorem \ref{maintheorem.iomega1}  and the assumption on $I$.
  Then $Kf=0$ for all $f\in {\mathcal S}$ by Lemma \ref{homogeneous2.lm}. This together with
  \eqref{iomega2.tm.pf.eq1} proves the desired conclusion that $I=J_\Omega$.
  %
%
%
%
%
%
\end{proof}

\subsection{Translation-invariant extensions of the linear operator $i_\Omega$ with additional localization in  the Fourier domain}
 Given a  nonzero homogeneous function $\Omega\in C^\infty(\RR^d\backslash \{\bf 0\})$ of degree
 $-\gamma$, we recall from \eqref{fractionalderivative.neweq2} and
 Theorem \ref{maintheorem.iomega1} that $J_\Omega$ is translation-invariant and
 the Fourier transform of $J_\Omega f$ belongs to $K_1$ when $\gamma\in (0, d)$,
 where
 \begin{equation}
K_1=\Big\{h:\ \int_{\RR^d} |h(\xi)| (1+\|\xi\|)^{-N} d\xi<\infty \ \ {\rm for \ some} \ N\ge 1\Big\}.\end{equation}
In fact, the generalized Riesz potential $J_\Omega$ is the {\bf only} extension
  of  the linear operator $i_\Omega$ on ${\mathcal S}_\infty$ to the whole space ${\mathcal S}$ with the above two properties.

  \begin{Tm}\label{iomega4.tm}
  Let $\gamma>0$ with $\gamma-d\not\in \ZZ_+$,  $\Omega\in C^\infty(\RR^d\backslash \{\bf 0\})$ be
 a nonzero homogeneous function of degree $-\gamma$,
 and the continuous linear operator $I$ from ${\mathcal S}$ to ${\mathcal S}'$
  be an extension of the  linear operator $i_\Omega$ on  ${\mathcal S}_\infty$
   such that the Fourier transform
  of $If$ belongs to $K_1$ for all $f\in {\mathcal S}$.
  Then  $I$ is translation-invariant if and only if  $I=J_\Omega$ and $\gamma\in (0, d)$.
  \end{Tm}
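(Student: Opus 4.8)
The plan is to handle the two implications of the biconditional separately: the ``if'' part is essentially a matter of collecting facts already established about $J_\Omega$, while the ``only if'' part rests on Lemmas \ref{homogeneous1.lm} and \ref{homogeneous2.lm} together with a sharp local-integrability analysis of $\widehat{J_\Omega f}$ near the origin. For the ``if'' part, assume $I=J_\Omega$ and $\gamma\in(0,d)$. Then $I$ is a continuous linear operator from ${\mathcal S}$ to ${\mathcal S}'$ (Corollary \ref{generalizedrieszomega1.cr}), it extends $i_\Omega$ and is translation-invariant (Theorem \ref{maintheorem.iomega1}), and, since $\gamma<d$, formula \eqref{fractionalderivative.neweq2} yields $\widehat{J_\Omega f}=\Omega\hat f$; because $\Omega$ is homogeneous of degree $-\gamma>-d$ and continuous on $S^{d-1}$, it is locally integrable, so $\Omega\hat f\in L^1\subset K_1$. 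All the stated properties therefore hold, and in particular $I$ is translation-invariant.

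For the ``only if'' part, suppose $I$ is translation-invariant in addition to the standing hypotheses (continuous from ${\mathcal S}$ to ${\mathcal S}'$, an extension of $i_\Omega$, with $\widehat{If}\in K_1$ for every $f$). By Lemma \ref{homogeneous1.lm} there exist an integer $N$ and tempered distributions $H_{\bf i}$, $|{\bf i}|\le N$, with $If=J_\Omega f+Kf$ and $Kf=\sum_{|{\bf i}|\le N}\frac{\partial^{\bf i}\hat f({\bf 0})}{{\bf i}!}H_{\bf i}$. Since $J_\Omega$ is translation-invariant by Theorem \ref{maintheorem.iomega1}(ii), so is $K=I-J_\Omega$, and Lemma \ref{homogeneous2.lm}(ii) provides a polynomial $P$ of degree at most $N$ with $H_{\bf i}=(-i\partial)^{\bf i}P$ for all $|{\bf i}|\le N$. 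Taking Fourier transforms, $\widehat{H_{\bf i}}=\xi^{\bf i}\widehat{P}$ is a finite linear combination of derivatives of the Dirac measure at the origin, so $\widehat{Kf}$ is supported at $\{{\bf 0}\}$ for every $f\in{\mathcal S}$; consequently $\widehat{If}$ and $\widehat{J_\Omega f}$ agree on $\RR^d\setminus\{{\bf 0}\}$.

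The next step, which I expect to be the main obstacle, is to identify $\widehat{J_\Omega f}$ on $\RR^d\setminus\{{\bf 0}\}$ for an arbitrary $f\in{\mathcal S}$, since Theorem \ref{maintheorem.iomega1}(iii) gives $\widehat{J_\Omega f}=\Omega\hat f$ only for $f\in{\mathcal S}_\infty$. I would fix $\epsilon>0$ and a cutoff $\phi\in C^\infty_c(B({\bf 0},\epsilon))$ equal to $1$ near the origin, and write $f=g+u$ with $g:={\mathcal F}^{-1}((1-\phi)\hat f)\in{\mathcal S}_\infty$ and $\hat u=\phi\hat f$ supported in $B({\bf 0},\epsilon)$. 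Then $\widehat{J_\Omega g}=\Omega(1-\phi)\hat f$ by Theorem \ref{maintheorem.iomega1}(iii), whereas for $|{\bf j}|=k_0$ (with $k_0$ the smallest integer exceeding $\gamma-d$) the identity \eqref{fractionalderivative.eq00} combined with \eqref{fractionalderivative.neweq2} — legitimate because $\Omega_{\bf j}(\xi)=(i\xi)^{\bf j}\Omega(\xi)$ is homogeneous of degree $k_0-\gamma>-d$, so $\Omega_{\bf j}\hat u\in L^1$ — gives $(i\xi)^{\bf j}\widehat{J_\Omega u}=(i\xi)^{\bf j}\Omega(\xi)\hat u(\xi)$, which is supported in $B({\bf 0},\epsilon)$. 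Hence $\widehat{J_\Omega u}=0$ on $\{\|\xi\|>\epsilon\}$, so $\widehat{J_\Omega f}=\Omega\hat f$ on $\{\|\xi\|>\epsilon\}$, and letting $\epsilon\downarrow0$ gives $\widehat{J_\Omega f}=\Omega\hat f$ on $\RR^d\setminus\{{\bf 0}\}$. (Alternatively, one may invoke the Fourier-domain non-integrability statement of Theorem \ref{frequency.tm1} to shorten this step.)

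Combining with the previous paragraph, $\widehat{If}=\Omega\hat f$ almost everywhere; since $\widehat{If}\in K_1\subset L^1_{\rm loc}$, the integral $\int_{B({\bf 0},1)}|\Omega(\xi)\hat f(\xi)|\,d\xi$ is finite for every $f$. Choosing $f$ with $\hat f$ a positive Gaussian and passing to polar coordinates, this finiteness reduces to the finiteness of $\big(\int_{S^{d-1}}|\Omega(\xi')|\,d\sigma(\xi')\big)\int_0^1 r^{d-1-\gamma}\,dr$, and since $\int_{S^{d-1}}|\Omega|\,d\sigma>0$ (because $\Omega\not\equiv0$ is continuous on $S^{d-1}$) this forces $\gamma<d$. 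With $\gamma\in(0,d)$ now established, \eqref{fractionalderivative.neweq2} gives $\widehat{J_\Omega f}=\Omega\hat f\in L^1$ for all $f$, so $\widehat{Kf}=\widehat{If}-\widehat{J_\Omega f}$ is simultaneously a distribution supported at $\{{\bf 0}\}$ and a locally integrable function, hence $\widehat{Kf}=0$; therefore $Kf=0$ for all $f\in{\mathcal S}$ and $I=J_\Omega$, which completes the argument. Apart from the identification of $\widehat{J_\Omega f}$ away from the origin and the extraction of the threshold $\gamma<d$, the proof is routine bookkeeping with Lemmas \ref{homogeneous1.lm} and \ref{homogeneous2.lm}.
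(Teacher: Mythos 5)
Your argument is correct, but it routes through the two structural lemmas in a slightly different way than the paper does, and this is worth noting. The paper begins by invoking the $K_1$ hypothesis immediately to produce a $K_1$-valued decomposition
\[
\widehat{If}(\xi) = \Bigl(\hat f(\xi)-\sum_{|{\bf i}|\le \gamma-d} \tfrac{\partial^{\bf i}\hat f({\bf 0})}{{\bf i}!}\,\xi^{\bf i}\Bigr)\Omega(\xi)+\sum_{|{\bf i}|\le N}\tfrac{\partial^{\bf i}\hat f({\bf 0})}{{\bf i}!}\,g_{\bf i}(\xi),\qquad g_{\bf i}\in K_1,
\]
and then applies translation-invariance of $I$ directly to this formula to split the $\mathbf{x}_0$-dependence into a polynomial part (coming from $\partial^{\bf i}\widehat{\tau_{{\bf x}_0}f}({\bf 0})$) and a trigonometric part (the factor $e^{-i\langle{\bf x}_0,\xi\rangle}$), forcing the correction terms to vanish and giving $\widehat{If}=\Omega\hat f$ at once; then $\gamma<d$ follows from $\hat f\Omega\in K_1$. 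You instead work at the distributional level of Lemma \ref{homogeneous1.lm} ($If=J_\Omega f+Kf$, $H_{\bf i}\in{\mathcal S}'$), pass the translation-invariance of $K=I-J_\Omega$ through Lemma \ref{homogeneous2.lm}(ii) to conclude $H_{\bf i}=(-i\partial)^{\bf i}P$ so that $\widehat{Kf}$ is a distribution supported at $\{{\bf 0}\}$, and only then use the $K_1$ hypothesis. This postpones the role of integrability and requires the extra step of identifying $\widehat{J_\Omega f}$ on $\RR^d\setminus\{{\bf 0}\}$ — which you do via the cutoff decomposition $f=g+u$, $g\in{\mathcal S}_\infty$, combined with \eqref{fractionalderivative.eq00} to kill $\widehat{J_\Omega u}$ outside $B({\bf 0},\epsilon)$; the paper's decomposition sidesteps this because its first term is already $\Omega\hat f$ away from the origin. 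Your route has the advantage of being more self-contained (the paper's claim that the $g_{\bf i}$ can be taken in $K_1$ is stated, not justified, and rests on the local integrability of the bracketed term near ${\bf 0}$, which your argument never needs), and the identification $\widehat{J_\Omega f}=\Omega\hat f$ on $\RR^d\setminus\{{\bf 0}\}$ is a lemma of independent interest. The threshold extraction ($\gamma<d$ from $\int_{B({\bf 0},1)}|\Omega\hat f|<\infty$ with a Gaussian $\hat f$) and the final step ($\widehat{Kf}$ both supported at a point and locally integrable, hence zero) are standard closing moves and match the paper's conclusion.
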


\begin{proof} The sufficiency  follows from \eqref{fractionalderivative.neweq2} and
 Theorem \ref{maintheorem.iomega1}. Now we prove the necessity.
 By the assumption on the linear operator $I$, applying an argument  similar to the proof of
 Lemma \ref{homogeneous1.lm}, we can find  a family of  functions $g_{\bf i}\in K_1, |{\bf i}|\le N$,
  such that
  \begin{eqnarray}\label{iomega4.tm.pf.eq0}
\widehat{If}(\xi) & =  &  \Big(\hat f(\xi)-
 \sum_{|{\bf i}|\le \gamma-d} \frac{\partial^{{\bf i}} \hat f({\bf 0})}{ {\bf i}!} \xi^{\bf i}\Big)
\Omega(\xi)  +\sum_{|{\bf i}|\le N} \frac{\partial^{{\bf i}} \hat f({\bf 0})}{ {\bf i}!} g_{\bf i}(\xi)
\end{eqnarray}
for any Schwartz function $f$.
This together with \eqref{homogeneous2.lm.pf.eq11} and the translation-invariance of the linear operator $I$
implies that
\begin{eqnarray*}\label{iomega4.tm.pf.eq1}
 & & -
\sum_{ |{\bf i}|\le \gamma-d} \sum_{{\bf j}+{\bf k}={\bf i}}
\frac{\partial^{\bf j}\hat f({\bf 0})}{{\bf k}! {\bf j}!} (-i{\bf x}_0)^{{\bf k}} \xi^{\bf i}\Omega(\xi)  +
\sum_{ |{\bf i}|\le  N} \sum_{{\bf j}+{\bf k}= {\bf i}}
\frac{\partial^{\bf j}\hat f({\bf 0})}{{\bf k}! {\bf j}!} (-i{\bf x}_0)^{{\bf k}} g_{\bf i}(\xi)\nonumber\\
& = & \
e^{i{\bf x}_0\xi}
\Big(-\sum_{|{\bf i}|\le \gamma-d} \frac{\partial^{\bf i}\hat f({\bf 0})}{{\bf i}!} \xi^{\bf i}\Omega(\xi)
 +\sum_{|{\bf i}|\le  N}
\frac{\partial^{\bf i}\hat f({\bf 0})}{{\bf i}!}  g_{\bf i}(\xi)\Big).
\end{eqnarray*}
As ${\bf x}_0\in \RR^d$ in \eqref{iomega4.tm.pf.eq1} is chosen arbitrarily, we conclude that
\begin{equation*}\label{iomega4.tm.pf.eq2}
-\sum_{|{\bf i}|\le \gamma-d} \frac{\partial^{\bf i}\hat f({\bf 0})}{{\bf i}!} \xi^{\bf i}\Omega(\xi)
 +\sum_{|{\bf i}|\le  N}
\frac{\partial^{\bf i}\hat f({\bf 0})}{{\bf i}!}  g_{\bf i}(\xi)=0\quad {\rm for \ all} \ f\in {\mathcal S}.\end{equation*}
Substituting the above equation into \eqref{iomega4.tm.pf.eq0}, we then obtain
$\widehat{If}(\xi)= \hat f(\xi) \Omega(\xi)
$
for all $f \in {\mathcal S}$. This, together with
 the observation that
 $\hat f\Omega\in K_1$ for all $f\in {\mathcal S}$ if and only if $\gamma<d$, leads to the desired conclusion that
 $I=J_\Omega$ and $\gamma\in (0, d)$.
\end{proof}

\subsection{Non-integrability  in the spatial domain}
Let $\gamma>0$ with $\gamma-d\not\in \ZZ_+$ and
 $\Omega\in C^\infty(\RR^d\backslash \{\bf 0\})$ be a nonzero
 homogeneous function of degree $-\gamma$. For any Schwartz function $f$, there exists a positive constant $C$ by
  Theorem \ref{generalizedrieszomega1.tm} such that  $|J_\Omega f({\bf x})|\le C (1+\|{\bf x}\|)^{\gamma-d}$
  for all ${\bf x}\in \RR^d$. Hence  $J_\Omega f\in L^p, 1\le p\le \infty$, when $\gamma<d(1-1/p)$.
In this subsection, we show that the above $p$-integrability property for  the generalized Riesz potential $J_\Omega$
 is no longer true when $\gamma\ge d(1-1/p)$.

\begin{Tm}\label{time1.tm}
Let $1\le p\le \infty, 0<\gamma\in [d(1-1/p), \infty)\backslash \ZZ$ and
 $\Omega\in C^\infty (\RR^d\backslash \{\bf 0\})$ be a nonzero homogeneous function of degree $-\gamma$.
 Then there exists a Schwartz function $f$ such that
$J_\Omega f\not\in L^p$.
\end{Tm}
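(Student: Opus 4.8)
The plan is to exhibit a Schwartz $f$ for which $J_\Omega f$ has, near infinity, a leading term equal to a nonzero multiple of a function homogeneous of degree $\gamma-d$, and then to note that such a function can never be $p$-integrable near infinity once $\gamma\ge d(1-1/p)$. The first ingredient is a convolution representation of $J_\Omega$. Since $\gamma-d\notin\ZZ_+$, the homogeneous function $\Omega$ has a unique extension to a homogeneous tempered distribution $\tilde\Omega$, and $K:=\mathcal F^{-1}\tilde\Omega$ agrees on $\RR^d\setminus\{\bf 0\}$ with a $C^\infty$ function homogeneous of degree $\gamma-d$ (\cite[Theorems 7.1.16 and 7.1.18]{hormanderbook}); it is not identically zero (because $\tilde\Omega\ne 0$ and $\mathcal F$ is injective), and, as $\gamma>0$, it is locally integrable of polynomial growth, hence itself a tempered distribution. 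The operator $f\mapsto K*f$ is continuous from $\mathcal S$ to $\mathcal S'$, translation-invariant, and dilation-invariant — from the homogeneity of $K$ one checks $K*(\delta_t f)=t^{-\gamma}\delta_t(K*f)$ — and it agrees with $i_\Omega$ on $\mathcal S_\infty$ (for $f\in\mathcal S_\infty$ the function $\hat f$ vanishes to infinite order at the origin, so $\widehat{K*f}=\tilde\Omega\hat f=\Omega\hat f$ as ordinary functions). By the uniqueness assertion in Theorem~\ref{iomega2.tm} this forces $J_\Omega f=K*f$ for every $f\in\mathcal S$. (The same identity can alternatively be read off from formula \eqref{generalizedrieszomega1.tm.pf.eq1} together with $K_{\bf k}=\partial^{\bf k}K$ and Euler's identity for homogeneous functions.)

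Next I would fix $f\in\mathcal S$ with $\hat f({\bf 0})=\int_{\RR^d}f\ne 0$ (a Gaussian will do) and expand $K*f$ at infinity. For $\|{\bf x}\|\ge 1$, Taylor-expanding ${\bf y}\mapsto K({\bf x}-{\bf y})$ around ${\bf 0}$ to order $L$, splitting $\int K({\bf x}-{\bf y})f({\bf y})\,d{\bf y}$ over $\{\|{\bf y}\|\le\|{\bf x}\|/2\}$ and its complement, and using on the first set that $\|{\bf x}-t{\bf y}\|\ge\|{\bf x}\|/2$ together with $|\partial^{\bf i}K({\bf z})|\le C\|{\bf z}\|^{\gamma-d-|{\bf i}|}$, and on the second the rapid decay of $f$ and the local integrability of $K$ near the diagonal (valid since $\gamma>0$), gives
\[
J_\Omega f({\bf x})=\sum_{|{\bf i}|\le L}\frac{(-i)^{|{\bf i}|}\partial^{\bf i}\hat f({\bf 0})}{{\bf i}!}\,\partial^{\bf i}K({\bf x})\;+\;\tilde R_L({\bf x}),\qquad \|{\bf x}\|\ge 1,
\]
with $|\tilde R_L({\bf x})|\le C_L(1+\|{\bf x}\|)^{\gamma-d-L-1}+C_{L,N}(1+\|{\bf x}\|)^{\lceil\gamma\rceil-N}$ for all $N$. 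Choosing $L$ and $N$ large makes every exponent occurring in $\tilde R_L$ strictly less than $-d/p$ (strictly negative when $p=\infty$), so $\tilde R_L\in L^p(\{\|{\bf x}\|\ge 1\})$; since $J_\Omega f$ is also bounded on $\{\|{\bf x}\|<1\}$ by Theorem~\ref{generalizedrieszomega1.tm}, it remains only to show that the sum $S$ on the right-hand side is not in $L^p(\{\|{\bf x}\|\ge 1\})$.

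Grouping by $|{\bf i}|=k$, write $S=\sum_{k=0}^{L}S_k$ with $S_k$ homogeneous of degree $\gamma-d-k$; the top piece $S_0=\hat f({\bf 0})K$ is not identically zero, and since $K\in C^\infty(\RR^d\setminus\{\bf 0\})$ its zero set meets $S^{d-1}$ in a proper closed subset, so there are an open cap $V\subset S^{d-1}$ and $\delta_0>0$ with $|S_0(\omega)|\ge\delta_0$ on $V$. Writing ${\bf x}=\rho\omega$ one has $S(\rho\omega)=\rho^{\gamma-d}\bigl(S_0(\omega)+O(\rho^{-1})\bigr)$, hence $|S(\rho\omega)|\ge\tfrac{\delta_0}{2}\rho^{\gamma-d}$ for $\omega\in V$ and $\rho\ge\rho_0$ large, so that
\[
\int_{\|{\bf x}\|\ge 1}|S({\bf x})|^p\,d{\bf x}\ \ge\ \Bigl(\tfrac{\delta_0}{2}\Bigr)^{p}\sigma(V)\int_{\rho_0}^{\infty}\rho^{\,p(\gamma-d)+d-1}\,d\rho\ =\ \infty,
\]
because $p(\gamma-d)+d-1\ge -1$ thanks to $\gamma\ge d(1-1/p)$ (and for $p=\infty$ one has $\gamma>d$, whence $|S(\rho\omega)|\to\infty$ on $V$). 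Thus $S\notin L^p(\{\|{\bf x}\|\ge 1\})$, and therefore $J_\Omega f=S+\tilde R_L\notin L^p(\RR^d)$.

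The step I expect to be the real obstacle is ruling out cancellation: the expansion of $J_\Omega f$ carries several homogeneous pieces, of degrees $\gamma-d,\gamma-d-1,\dots$, and those with degree $\ge -d/p$ may individually fail to be $p$-integrable, so they cannot simply be relegated to the error term. The way around it is to keep the whole polynomial-order part $S$ together and to exploit that, on a cone where the top-degree coefficient $\hat f({\bf 0})\,K|_{S^{d-1}}$ stays bounded away from zero, every lower piece is genuinely $O(\rho^{-1})$ smaller, so $S$ still grows like $\rho^{\gamma-d}$ there — already incompatible with $L^p$ over that cone. The remaining ingredients, namely the identity $J_\Omega f=K*f$ and the quantitative estimate for the convolution remainder $\tilde R_L$, are routine consequences of the homogeneity of $K$ and the Schwartz decay of $f$.
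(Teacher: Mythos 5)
Your proof is correct, and it takes a genuinely different route from the paper's. The paper argues by a three-way case analysis: for $d(1-1/p)\le\gamma<\min(d,d(1-1/p)+1)$ it uses the raw convolution formula (valid since $k_0=0$) and the first-order estimate $|J_\Omega\psi_{\bf 0}({\bf x})-K({\bf x})|\lesssim(1+\|{\bf x}\|)^{\gamma-d-1}$; for $d<\gamma<d(1-1/p)+1$ it uses the $k_0=1$ decomposition \eqref{generalizedrieszomega1.tm.pf.eq1} and verifies, by an explicit integral computation, that the homogeneous leading term $\sum_{|{\bf i}|=1}(-{\bf x})^{\bf i}K_{\bf i}({\bf x})$ is a nonzero multiple of $K$; and for $\gamma\ge d(1-1/p)+1$ it shifts $\Omega\mapsto\Omega_{\bf j}$ to reduce to the first two cases via $J_\Omega\psi_{\bf j}=J_{\Omega_{\bf j}}\psi_{\bf 0}/{\bf j}!$. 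You instead first establish, once and for all, the convolution identity $J_\Omega f=K*f$ by routing through the uniqueness statement in Theorem~\ref{iomega2.tm} (continuity, dilation/translation invariance, and agreement with $i_\Omega$ on ${\mathcal S}_\infty$ are all easy for $K*$, so the uniqueness forces $K*=J_\Omega$ without any need to match normalization constants) and then run a single asymptotic-expansion argument for all admissible $\gamma$. The part of your argument that does real work — keeping the whole polynomial-order sum $S=\sum_{|{\bf i}|\le L}c_{\bf i}\partial^{\bf i}K$ together and showing it still behaves like $\hat f({\bf 0})\,\rho^{\gamma-d}K(\omega)$ on a cone where $K|_{S^{d-1}}$ stays away from zero — is essentially a more quantitative, all-orders version of what the paper does in its Cases~I and~II with one term at a time, and it eliminates the reduction step of Case~III. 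The trade-offs: your approach is more unified and avoids reverifying the nonvanishing of the leading term in each case, at the cost of a somewhat more elaborate remainder estimate; the paper's approach keeps each step elementary but repeats a similar argument with different bookkeeping. Both are correct; a careful writeup of your version would need to make the remainder bound $\tilde R_L\in L^p(\{\|{\bf x}\|\ge1\})$ precise (including the near-diagonal local-integrability of $K$ when $\gamma<d$), but there is no gap in the strategy.
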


Letting $\Omega(\xi)=\|{\xi}\|^{-\gamma}$  in Theorem \ref{time1.tm} leads to the  conclusion mentioned in the abstract:

\begin{Cr}\label{nonintegrable.cr}
Let $1\le p\le \infty$ and $d(1-1/p)\le \gamma\not\in \ZZ_+$.
 Then
$I_\gamma f$ is {\bf not} $p$-integrable for some function $f\in {\mathcal S}$.
\end{Cr}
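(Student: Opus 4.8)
Corollary \ref{nonintegrable.cr} is the special case $\Omega(\xi)=\|\xi\|^{-\gamma}$ of Theorem \ref{time1.tm} (for that choice of $\Omega$ one has $J_\Omega=I_\gamma$, and $\Omega\ne 0$), so I describe how I would prove Theorem \ref{time1.tm}. The plan is to exhibit one explicit Schwartz function $f$ for which $J_\Omega f$ decays, along a full solid angle near infinity, no faster than $\|{\bf x}\|^{\gamma-d}$ --- which is exactly the borderline rate for membership in $L^p$ once $\gamma\ge d(1-1/p)$.

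The first step is to realize $J_\Omega$ as convolution against a homogeneous kernel. Since $\gamma\notin\ZZ$, the degree $-\gamma$ is not among $-d,-d-1,\dots$, so (in the sense of homogeneous tempered distributions used throughout Section \ref{grp.section}) $\Omega$ extends to a homogeneous tempered distribution $\widetilde\Omega$ on $\RR^d$ of degree $-\gamma$; put $K:={\mathcal F}^{-1}\widetilde\Omega$, which by \cite[Theorems 7.1.16 and 7.1.18]{hormanderbook} is a tempered distribution, homogeneous of degree $\gamma-d$, coinciding on $\RR^d\backslash\{{\bf 0}\}$ with a $C^\infty$ function (still written $K$). The operator $f\mapsto {\mathcal F}^{-1}(\widetilde\Omega\,\widehat f)=K*f$ is continuous from ${\mathcal S}$ to ${\mathcal S}'$, translation-invariant, dilation-invariant, and restricts on ${\mathcal S}_\infty$ to the operator with Fourier symbol $\Omega$, that is, to $i_\Omega$; hence by the uniqueness assertion in Theorem \ref{iomega2.tm}, $J_\Omega f=K*f$ for all $f\in{\mathcal S}$. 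Moreover $K$ is not supported at the origin --- otherwise $\widetilde\Omega={\mathcal F}K$ would be a polynomial, and a polynomial homogeneous of the negative degree $-\gamma$ must vanish, contradicting $\Omega\ne 0$. So the $C^\infty$ function $K$ is not identically zero on $S^{d-1}$, and there are an open set $U\subset S^{d-1}$ and $c>0$ with $|K(\xi')|\ge c$ for $\xi'\in U$; by homogeneity $|K({\bf x})|\ge c\,\|{\bf x}\|^{\gamma-d}$ on the cone $C_U:=\{r\xi':\ r>0,\ \xi'\in U\}$.

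Now choose a compactly supported $\phi\in C^\infty(\RR^d)$ with $\phi\equiv 1$ on $B({\bf 0},1)$ and set $f:={\mathcal F}^{-1}\phi\in{\mathcal S}$. Then ${\mathcal F}(J_\Omega f)=\widetilde\Omega\,\phi=\widetilde\Omega-\widetilde\Omega(1-\phi)$, so $J_\Omega f=K-g$ as tempered distributions, where $g:={\mathcal F}^{-1}\bigl(\widetilde\Omega(1-\phi)\bigr)$. The function $h:=\widetilde\Omega(1-\phi)$ is genuinely $C^\infty$ on $\RR^d$ --- it vanishes on $B({\bf 0},1)$ and equals $\Omega$ for $\|\xi\|\ge 2$ --- and using $|\partial^{\bf j}\Omega(\xi)|\le C\|\xi\|^{-\gamma-|{\bf j}|}$ for $\|\xi\|\ge 2$ one gets $\partial^{\bf j}h\in L^1(\RR^d)$ once $|{\bf j}|>d-\gamma$; consequently ${\bf x}^{\bf j}g({\bf x})$ is bounded for every ${\bf j}$ with $|{\bf j}|$ large, so $g$, which is continuous on $\RR^d\backslash\{{\bf 0}\}$ (being $K-J_\Omega f$ there), decays at infinity faster than every power of $\|{\bf x}\|$. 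Since $J_\Omega f\in C^\infty(\RR^d)$ (as noted after Corollary \ref{generalizedrieszomega1.cr0}) and $K\in C^\infty(\RR^d\backslash\{{\bf 0}\})$, the identity $J_\Omega f({\bf x})=K({\bf x})-g({\bf x})$ holds pointwise for ${\bf x}\ne{\bf 0}$.

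Finally, pick $R\ge 1$ so large that $|g({\bf x})|\le (c/2)\|{\bf x}\|^{\gamma-d}$ for $\|{\bf x}\|\ge R$ (possible: $g$ decays faster than every power of $\|{\bf x}\|$, while $\|{\bf x}\|^{\gamma-d}$ decays at most polynomially, and grows if $\gamma>d$). Then $|J_\Omega f({\bf x})|\ge (c/2)\|{\bf x}\|^{\gamma-d}$ on $C_U\cap\{\|{\bf x}\|\ge R\}$, so for $p<\infty$ integration in polar coordinates gives
\[ \int_{\RR^d}|J_\Omega f({\bf x})|^p\,d{\bf x}\ \ge\ (c/2)^p\,\sigma(U)\int_R^\infty r^{(\gamma-d)p+d-1}\,dr , \]
which is infinite precisely when $(\gamma-d)p+d\ge 0$, i.e.\ when $\gamma\ge d(1-1/p)$; when $p=\infty$ the hypothesis $\gamma\ge d$ together with $\gamma\notin\ZZ$ forces $\gamma>d$, so $|J_\Omega f({\bf x})|\to\infty$ along $C_U$. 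In every case $J_\Omega f\notin L^p$. I expect the crux to be the first two paragraphs --- in particular identifying $J_\Omega$ with convolution against the homogeneous kernel $K$ and verifying that $K$ does not vanish on the sphere (equivalently, that $K$ is not concentrated at the origin); once that is in place, the remaining decay-and-integrability estimate is routine.
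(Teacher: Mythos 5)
Your proof is correct, and it takes a genuinely different (and more unified) route than the paper's. The paper's proof of Theorem \ref{time1.tm} splits into three cases: for $d(1-1/p)\le\gamma<\min(d,d(1-1/p)+1)$ it uses the convolution formula \eqref{fractionalderivative.neweq2} directly (valid only for $\gamma<d$) and shows that $J_\Omega\psi_{\bf 0}$ differs from $K$ by an error $O((1+\|{\bf x}\|)^{\gamma-d-1})$; for $d<\gamma<d(1-1/p)+1$ it unwinds the $k_0=1$ integration-by-parts decomposition \eqref{generalizedrieszomega1.tm.pf.eq1} into lower-degree pieces $J_{\Omega_{\bf j}}$, for which the convolution formula again applies; and for $\gamma\ge d(1-1/p)+1$ it reduces to the earlier cases by replacing $\psi_{\bf 0}$ with $\psi_{\bf j}$, $|{\bf j}|=k_0$. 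You avoid this case analysis by a single conceptual step: extend $\Omega$ to the homogeneous tempered distribution $\widetilde\Omega$, set $K={\mathcal F}^{-1}\widetilde\Omega$, verify that $f\mapsto K*f$ satisfies all the hypotheses of Theorem \ref{iomega2.tm}, and invoke that uniqueness theorem to conclude $J_\Omega=K*(\cdot)$. This is a clean use of a result the paper already proved, and it makes the decomposition $J_\Omega\psi_{\bf 0}=K-g$, with $g$ rapidly decreasing, work uniformly for every admissible $\gamma$. What your route buys is the elimination of the case distinction (and the lower bound on a fixed cone $C_U$ is also a tidier way to see the non-integrability than the paper's indirect argument that $K\notin L^p$ while the error is in $L^p$); what the paper's route buys is that it never has to manipulate the Fourier transform of a homogeneous distribution of degree $<-d$, working instead only with locally integrable homogeneous kernels. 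Both establish the lower bound from the nonvanishing of $K$ on the sphere (the paper uses a pairing against $g\in{\mathcal S}_\infty$ in Case II to see $K\not\equiv 0$, you use the support-at-origin argument via ${\mathcal F}K=\widetilde\Omega$). One small remark: your application of Theorem \ref{iomega2.tm} requires $\gamma>0$ and $\gamma-d\notin\ZZ_+$; both hold here since the hypothesis $\gamma\notin\ZZ$ together with $\gamma\ge d(1-1/p)\ge 0$ forces $\gamma>0$ and excludes $\gamma-d\in\ZZ$.
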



\begin{proof}[Proof of Theorem \ref{time1.tm}]
Let the Schwartz functions $\phi$  and $\psi_{\bf i}, {\bf i}\in \ZZ_+^d$, be as in the proof of Lemma \ref{homogeneous2.lm}.
We examine three cases to prove the theorem.

{\em Case I:\quad $d(1-1/p)\le\gamma< \min(d, d(1-1/p)+1)$}.\quad
 In this case, $1\le p<\infty$ and
\begin{equation}\label{time1.tm.pf.eq1}
J_\Omega \psi_{\bf 0}({\bf x})=\int_{\RR^d} K({\bf x}-{\bf y}) \psi_{\bf 0}({\bf y}) d{\bf y},
\end{equation}
 by \eqref{fractionalderivative.neweq2}, where $K$ is the inverse Fourier transform of $\Omega$.
By  \cite[Theorems  7.1.16 and 7.1.18]{hormanderbook},
 $K\in C^\infty(\RR^d\backslash\{0\})$
is a homogeneous function
of order $\gamma-d\in (-d,0)$, which  implies that
\begin{equation}\label{time1.tm.pf.eq2}
|\partial^{\bf i} K({\bf x})|\le C \|{\bf x}\|^{\gamma-d-|{\bf i}|}  \quad {\rm for \ all}\ {\bf i}\in \ZZ_+^d \ {\rm with} \ |{\bf i}|\le 1.
\end{equation}
Using \eqref{time1.tm.pf.eq1} and \eqref{time1.tm.pf.eq2}, and
noting that $\psi_{\bf 0}\in {\mathcal S}$ satisfies
$\int_{\RR^d} \psi_{\bf 0}({\bf y})d{\bf y}=1$, we obtain
that for all ${\bf x}\in \RR^d$ with $\|{\bf x}\|\ge 1$,
\begin{eqnarray}\label{time1.tm.pf.eq3}
   |J_\Omega \psi_{\bf 0} ({\bf x})-K({\bf x})|
\!\! & \le &\!\!
\int_{\|{\bf y}\|\le \|{\bf x}\|/2} |K({\bf x}-{\bf y})-K({\bf x})| |\psi_{\bf 0}({\bf y})| d{\bf y} \nonumber\\
 & &\!\! +
\Big(\int_{\|{\bf x}\|/2\le  \|{\bf y}\|\le 2 \|{\bf x}\|}+\int_{2\|{\bf x}\|\le  \|{\bf y}\|}\Big) |K({\bf x}-{\bf y})||\psi_{\bf 0}({\bf y})| d{\bf y}
 \nonumber\\
 & &\!\! + |K({\bf x})| \int_{\|{\bf y}\|\ge  \|{\bf x}\|/2} |\psi_{\bf 0}({\bf y})| d{\bf y}
\nonumber\\ 
\!\! & \le &\!\! C (1+\|{\bf x}\|)^{\gamma-d-1}.
\end{eqnarray}
We notice that   $\int_{\|{\bf x}\|\ge 1}  (1+\|{\bf x}\|)^{(\gamma-d-1)p}d {\bf x}<\infty$
 and
$\int_{\|{\bf x}\|\ge 1} |K({\bf x})|^p d{\bf x}=\infty$ because
 $K$ is a nonzero homogenous function
of degree $\gamma-d$ and  $d-p<(d-\gamma)p\le d$.
The above two observations together with the estimate in \eqref{time1.tm.pf.eq3}
prove that $J_\Omega \psi_{\bf 0} \not\in L^p$, the desired conclusion  with $f=\psi_{\bf 0}$.

{\em Case II: $d<\gamma<d(1-1/p)+1$}.\quad   In this case, $d<p\le \infty$ and
\begin{eqnarray}\label{time1.tm.pf.eq4}
J_\Omega \psi_{\bf 0}({\bf x}) & = &\frac{1}{d-\gamma}
 \sum_{|{\bf j}|=1} J_{\Omega_{\bf j}}(\varphi_{\bf j}) ({\bf x})
+ \frac{1}{d-\gamma}\sum_{|{\bf i}|=1} (-{\bf x})^{\bf i} J_{\Omega_{\bf i}}\psi_{\bf 0} ({\bf x})
\end{eqnarray}
by taking $k_0=1$ in \eqref{generalizedrieszomega1.tm.pf.eq1}, where $\Omega_{\bf i}(\xi)=(i\xi)^{\bf i} \Omega(\xi)$ and
$\varphi_{\bf i}({\bf x})={\bf x}^{\bf i} \psi_{\bf 0}({\bf x})$. Let $K_{\bf i}$ be the inverse
 Fourier transform
of the function $\Omega_{\bf i}, |{\bf i}|=1$. Noticing that $\Omega_{\bf i}$ is homogeneous of degree $-\gamma+1$ and that $\int_{\RR^d} \varphi_{\bf i}({\bf x}) d{\bf x}=0$, we then
apply similar argument to the one used in establishing
\eqref{time1.tm.pf.eq3}  and obtain
\begin{eqnarray*}
|J_{\Omega_{\bf i}}(\varphi_{\bf i}) ({\bf x})|+
|J_{\Omega_{\bf i}}\psi_{\bf 0} ({\bf x})-K_{\bf i}({\bf x})|
\le C \|{\bf x}\|^{\gamma-d-2}\quad  {\rm if } \ \|{\bf x}\|\ge 1.
\end{eqnarray*}
 Hence
\begin{equation}
\label{time1.tm.pf.eq3b}
\int_{\|{\bf x}\|\ge 1} \big|J_\Omega \psi_{\bf 0}({\bf x})-\frac{1}{d-\gamma}\sum_{|{\bf i}|=1} (-{\bf x})^{\bf i} K_{\bf i}({\bf x})\big|^p d{\bf x}\le
C \int_{\|{\bf x}\|\ge 1} \|{\bf x}\|^{(\gamma-d-1)p} d{\bf x}<\infty
\end{equation}
if $d<p<\infty$ and
\begin{equation}\label{time1.tm.pf.eq4}
\sup_{\|{\bf x}\|\ge 1} \big|J_\Omega \psi_{\bf 0}({\bf x})-\frac{1}{d-\gamma}\sum_{|{\bf i}|=1} (-i{\bf x})^{\bf i} K_{\bf i}({\bf x})\big|\le
C \sup_{\|{\bf x}\|\ge 1} \|{\bf x}\|^{\gamma-d-1} <\infty
\end{equation}
if $p=\infty$.
Set $K({\bf x}):= \sum_{|{\bf i}|=1} (-{\bf x})^{\bf i} K_{\bf i}({\bf x})$.
Then  $K$ is
homogeneous of degree $\gamma-d$ by the assumption on $\Omega$, and  is not identically zero because
\begin{eqnarray*}
\langle K, g\rangle & = &  \int_{\RR^d}
\Omega(\xi) \Big(\sum_{|{\bf i}|=1} \xi^{\bf i} \partial^{\bf i} \hat g(\xi)\Big) d\xi
 =  -\int_{\RR^d} \Big(\sum_{|{\bf i}|=1}\partial^{\bf i}( \xi^{\bf i} \Omega(\xi))\Big) \hat g(\xi) d\xi \\ 
& = & \int_{S^{d-1}} \int_0^\infty \big(d\Omega(r\xi')+ r \frac{d}{dr} \Omega(r\xi')\big) \hat g(r\xi') r^{d-1} dr d\sigma(\xi')\\
&
 =&  (d-\gamma) \int_{\RR^d} \Omega(\xi) \hat g (\xi) d\xi
\not\equiv 0
\end{eqnarray*} where $g\in {\mathcal S}_\infty$.
Thus
$\int_{\|{\bf x}\|\ge 1} |K({\bf x})|^p d{\bf x}=+\infty$ when $d<p<\infty$, and
$K({\bf x})$ is unbounded on $\RR^d\backslash B({\bf 0},1)$ when $p=\infty$.
This together with \eqref{time1.tm.pf.eq3b} and  \eqref{time1.tm.pf.eq4}
proves that   $J_\Omega \psi_{\bf 0}\not\in L^p$ and hence the desired conclusion with $f=\psi_{\bf 0}$.

{\em Case III: $\gamma\ge d(1-1/p)+1$.}
Let $k_0$ be  the integer such that $d(1-1/p)\le \gamma-k_0<d(1-1/p)+1$, and  set
$\Omega_{\bf j}(\xi)=(i\xi)^{\bf j} \Omega(\xi), |{\bf j}|=k_0$.
Noting that
$J_\Omega \psi_{\bf j}({\bf x})=
 J_{\Omega_{\bf j}} \psi_{\bf 0} ({\bf x})/{{\bf j}!}$
 and $\Omega_{\bf j}$ is  homogeneous of degree $-\gamma+k_0$,
 we have obtained from the conclusions in the first two cases that $J_\Omega \psi_{\bf j} \not\in L^p$.
 Hence the desired conclusion follows by letting $f=\psi_{\bf j}$ with $|{\bf j}|=k_0$. \end{proof}

\subsection{Non-integrability  in the Fourier domain}
If $\gamma<d$, it follows from   \eqref{fractionalderivative.neweq2}
 that for  Schwartz functions $f$ and $g$,
 $\langle J_\Omega f, g\rangle$ can be expressed as
 a weighted integral of  $\hat g$:
\begin{equation}\label{frequency.eq1}
\langle J_\Omega f, g\rangle=\int_{\RR^d} h(\xi) \hat g(\xi) d\xi,
\end{equation}
where $h(\xi)=(2\pi)^{-d}\Omega(-\xi) \hat f(-\xi)\in K_1$.
In this subsection, we show that the above reformulation \eqref{frequency.eq1}
to  define  $\langle J_\Omega f, g\rangle$ via  a  weighted integral of $\hat g$  {\bf cannot} be extended to  $\gamma>d$.

\begin{Tm}\label{frequency.tm1}
Let $\gamma\in (d, \infty)\backslash \ZZ$,
 $\Omega\in C^\infty(\RR^d\backslash \{\bf 0\})$ be a nonzero homogeneous function of degree $-\gamma$,
 and let $J_\Omega$ be defined by \eqref{fractionalderivative.def}.
 Then there exists a  Schwartz function $f$ such that the Fourier transform of $J_\Omega f$ does not belong to $K_1$.
\end{Tm}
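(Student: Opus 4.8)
The plan is to argue by contradiction, feeding the operator $J_\Omega$ itself into the uniqueness statement of Theorem~\ref{iomega4.tm}. Suppose, contrary to the assertion, that $\widehat{J_\Omega f}\in K_1$ for \emph{every} $f\in{\mathcal S}$. By Corollary~\ref{generalizedrieszomega1.cr}, $J_\Omega$ is a continuous linear operator from ${\mathcal S}$ to ${\mathcal S}'$; by the identity \eqref{extension.eq} (equivalently, Theorem~\ref{maintheorem.iomega1}(iii)) its restriction to ${\mathcal S}_\infty$ coincides with $i_\Omega$; and by Theorem~\ref{maintheorem.iomega1}(ii) it is translation-invariant. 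Hence, under the standing contradiction hypothesis, $I:=J_\Omega$ satisfies all the hypotheses of Theorem~\ref{iomega4.tm}: it is a continuous extension of $i_\Omega$ from ${\mathcal S}_\infty$ to ${\mathcal S}$ whose images have Fourier transforms in $K_1$. Note that the hypotheses on $\gamma$ are met as well, since $\gamma\in(d,\infty)\backslash\ZZ$ forces $\gamma>0$ and $\gamma-d\notin\ZZ_+$.

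I would then invoke Theorem~\ref{iomega4.tm}, which asserts that such an $I$ is translation-invariant if and only if $I=J_\Omega$ and $\gamma\in(0,d)$. Since $I=J_\Omega$ \emph{is} translation-invariant, this forces $\gamma\in(0,d)$, contradicting the hypothesis $\gamma>d$. Therefore the contradiction hypothesis fails, i.e.\ there exists $f\in{\mathcal S}$ with $\widehat{J_\Omega f}\notin K_1$, which is the desired conclusion. In this approach there is essentially no obstacle beyond a careful bookkeeping check that every hypothesis of Theorem~\ref{iomega4.tm} is in force for $I=J_\Omega$; all of them but the $K_1$-condition were established earlier in this section, and the $K_1$-condition is precisely what we temporarily assume.

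For the record, here is the concrete mechanism behind the statement, together with what makes a direct proof harder. Take any $f\in{\mathcal S}$ with $\hat f({\bf 0})\ne 0$ (a Gaussian suffices). Away from the origin, the tempered distribution $\widehat{J_\Omega f}$ agrees with the smooth function $\Omega(\xi)\hat f(\xi)$ --- this can be extracted from the representation \eqref{generalizedrieszomega1.tm.pf.eq1} combined with \eqref{fractionalderivative.neweq2}, or from Theorem~\ref{maintheorem.iomega1}(iii) localized on $\RR^d\backslash\{{\bf 0}\}$. Since $\Omega$ is a nonzero homogeneous function of degree $-\gamma<-d$ and $\hat f$ is continuous with $\hat f({\bf 0})\ne 0$, one gets $\int_{0<\|\xi\|\le 1}|\Omega(\xi)\hat f(\xi)|\,d\xi=\infty$. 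On the other hand, every element of $K_1$ is locally integrable, because the weight $(1+\|\xi\|)^{-N}$ is bounded below on the unit ball; hence $\widehat{J_\Omega f}$ cannot belong to $K_1$. The delicate point in making this direct argument fully rigorous is justifying that $\widehat{J_\Omega f}$ really equals $\Omega\hat f$ on $\RR^d\backslash\{{\bf 0}\}$, i.e.\ that the corrections to $\Omega\hat f$ produced by the regularization in \eqref{fractionalderivative.def} are supported at the origin; the contradiction argument above sidesteps this computation entirely.
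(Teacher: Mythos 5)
Your contradiction argument is correct, and it is a genuinely different route from the one the paper actually takes. The paper's proof is direct and computational: it fixes the specific test function $\psi_{\bf 0}$ from the proof of Lemma~\ref{homogeneous2.lm}, chooses $g\in{\mathcal S}_\infty$ with $\hat g$ supported in the unit ball and $\int_{\RR^d}\Omega(\xi)\hat g(-\xi)\,d\xi=1$, and then pairs $J_\Omega\psi_{\bf 0}$ with the concentrating family $n^{-d}g(\cdot/n)$. Using the defining formula \eqref{fractionalderivative.def} together with \eqref{extension.eq} (valid since $g\in{\mathcal S}_\infty$), the pairing computes exactly to $(2\pi)^{-d}n^{\gamma-d}\to+\infty$ because $\gamma>d$; on the other hand, under the hypothesis $\widehat{J_\Omega\psi_{\bf 0}}\in K_1$, the same pairing is $(2\pi)^{-d}\int\widehat{J_\Omega\psi_{\bf 0}}(\xi)\hat g(-n\xi)\,d\xi$, which tends to $0$. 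The contradiction is immediate and self-contained, and it cleanly sidesteps the issue you flagged (identifying $\widehat{J_\Omega f}$ with $\Omega\hat f$ away from the origin) by testing against functions whose Fourier transforms shrink to the origin.

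Your argument instead leverages the structural uniqueness result Theorem~\ref{iomega4.tm}. The bookkeeping does check out: $\gamma>d$ and $\gamma\notin\ZZ$ give $\gamma>0$ and $\gamma-d\notin\ZZ_+$; Corollary~\ref{generalizedrieszomega1.cr} supplies continuity ${\mathcal S}\to{\mathcal S}'$; Theorem~\ref{maintheorem.iomega1}(iii) supplies the extension property; Theorem~\ref{maintheorem.iomega1}(ii) supplies translation-invariance; and the $K_1$ condition is your contradiction hypothesis. Theorem~\ref{iomega4.tm} then forces $\gamma\in(0,d)$, which is false. Since Theorem~\ref{iomega4.tm} precedes Theorem~\ref{frequency.tm1} in the paper and its proof does not cite it, there is no circularity. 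What you lose relative to the paper's approach is that Theorem~\ref{iomega4.tm} is itself a nontrivial classification (and its necessity half quietly relies on the observation that $\hat f\,\Omega\in K_1$ for all $f$ iff $\gamma<d$, which is essentially the ``concrete mechanism'' you describe); what you gain is brevity and a clear conceptual picture of why the result is an automatic corollary of the uniqueness theory.
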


\begin{proof}
Let  $\phi$ and $\psi_{\bf 0}$ be the Schwartz functions in the proof of Lemma \ref{homogeneous2.lm}, and
let $g\in {\mathcal S}_\infty$ be so chosen that
its  Fourier transform
 $\hat g$ is supported in $B({\bf 0}, 1)$
   and satisfies $\int_{\RR^d} \Omega(\xi){ \hat g(-\xi)}d\xi=1$. Now we prove that
    $\widehat{J_\Omega \psi_{\bf 0}}\not\in K_1$.
 Suppose on the contrary that $\widehat{J_\Omega \psi_{\bf 0}}\in K_1$.
 Then
   \begin{eqnarray}\label{frequency.tm1.pf.eq1}   \langle J_\Omega \psi_{\bf 0}, n^{-d} g(\cdot/n)\rangle
 & = &\frac{ (2\pi)^{-d}\Gamma(d-\gamma)} {\Gamma(d+k_0-\gamma)}
 \int_{S^{d-1}}\int_\epsilon^\infty
 r^{k_0+d-\gamma-1} \Omega(\xi')\nonumber\\
 & &  \Big(-\frac{d}{dr}\Big)^{k_0}\Big(\widehat \psi_{\bf 0}(r\xi') \hat g(-rn\xi')\Big) dr d\sigma(\xi') \nonumber\\
   & = & (2\pi)^{-d} \int_{\RR^d} 
   {\hat g(-n\xi)} \Omega(\xi)d\xi
  =  (2\pi)^{-d} n^{\gamma-d}\int_{\RR^d} \Omega(\xi) {\hat g(-\xi)} d\xi \nonumber\\
 &\to&  +\infty \quad {\rm as} \ n\to \infty
\end{eqnarray}
by \eqref{fractionalderivative.def} and \eqref{extension.eq}.
On the other hand,
  \begin{eqnarray}\label{frequency.tm1.pf.eq2}
& &   |\langle J_\Omega \psi_{\bf 0}, n^{-d} g(\cdot/n)\rangle|
    = (2\pi)^{-d} \Big|\int_{\RR^d} \widehat{J_\Omega \psi_{\bf 0}}(\xi) \hat g(-n\xi) d\xi\Big|\nonumber\\
&   \le  &  (2\pi)^{-d} \|\hat g\|_\infty \int_{|\xi|\le 1/n} |\widehat{J_\Omega \psi_0}(\xi)| d\xi
    \to  0 \ {\rm as} \ n\to \infty,
  \end{eqnarray}
  where we have used the hypothesis that $\widehat{J_\Omega \psi_{\bf 0}}\in K_1$ to obtain the limit.
  The limits in \eqref{frequency.tm1.pf.eq1} and \eqref{frequency.tm1.pf.eq2} contradict each other, and hence
  the Fourier transform $J_\Omega \psi_{\bf 0}$ does not belong to $K_1$.
\end{proof}

  \subsection{Proof of Theorem \ref{generalizedriesz.tm}}
  Observe that $J_\Omega=I_\gamma$ when $\Omega(\xi)=\|\xi\|^{-\gamma}$ and $\gamma>0$, and that
\begin{equation}
\label{generalized.tm.pf.eq1} J_\Omega=(-\triangle)^{-\gamma/2}\quad {\rm if}\ \Omega(\xi)=\|\xi\|^{-\gamma}\quad {\rm  and} \ \gamma<0.
\end{equation} Then
 the necessity holds by Theorem \ref{iomega2.tm}, while the sufficiency follows from Corollary \ref{generalizedrieszomega1.cr}, Theorem \ref{maintheorem.iomega1}, and Corollary \ref{composition.cr}.

\section{Integrable  Riesz Potentials}\label{irp.section}


In Section \ref{grp.section}, we have shown that the various attempts for defining a proper (integrable) Riesz potential that is translation-invariant
are doomed to failure for $\gamma>d$. We now proceed by providing a fix which is possible if we drop the translation-invariance requirement.

Let $1\le p\le \infty, \gamma\in \RR$, and  $\Omega\in C^\infty(\RR^d\backslash \{\bf 0\})$
be a homogeneous function
 of degree $-\gamma$. We define the linear operator  $U_{\Omega,p} $ from ${\mathcal S}$ to ${\mathcal S}'$
  with the help of the Fourier transform by
 \begin{equation}\label{newfractionalderivative.def}
{\mathcal F}({U_{\Omega,p}f})(\xi)= \Big(\hat f (\xi)-\sum_{|{\bf i}|\le \gamma-d(1-1/p)} \frac{\partial^{\bf i}\hat f({\bf 0})}{ {\bf i}!} \xi^{\bf i}\Big) \Omega(\xi), \quad f\in {\mathcal S}.\end{equation}
We call the linear operator $U_{\Omega, p}$ a {\em
$p$-integrable Riesz potential associated with  the homogenous function $\Omega$}, or {\em
integrable Riesz potential} for brevity, as
\begin{equation} \label{fractionalderivativeomegap.def}U_{\Omega, p}=I_{\gamma, p} \quad {\rm if } \quad \Omega(\xi)=\|\xi\|^{-\gamma}.\end{equation}

   Define
 \begin{equation}\label{fractionalderivative.tm1.pf.eq3}  
  U_{\Omega, p}^*f({\bf x})   = (2\pi)^{-d}
\int_{\RR^d}\Big(e^{i\langle {\bf x}, \xi\rangle}-\sum_{|{\bf i}|\le \gamma-d+d/p}
  \frac{(i{\bf x})^{\bf i}\xi^{\bf i}}{ {\bf i}!} \Big)
{\Omega(-\xi)}\hat f(\xi) d\xi, \quad    f\in {\mathcal S}.
 \end{equation}
Then  $U_{\Omega, p}^*$ is the adjoint operator of the integrable Riesz potenrial $U_{\Omega, p}$:
\begin{equation}\label{fractionalderivative.tm1.pf.eq4}
\langle U_{\Omega, p}f, g\rangle=\langle f, U_{\Omega, p}^*g\rangle \quad {\rm for\ all} \ f, g\in {\mathcal S}.
\end{equation}

If $\gamma$ satisfies $0<\gamma< d(1-1/p)$, then
\begin{equation} U_{\Omega, p}f= J_\Omega f \quad  {\rm for \ all} \  f\in {\mathcal S}.\end{equation}
Hence in this case, it follows from  Theorem \ref{maintheorem.iomega1} that
  $U_{\Omega, p}$ is dilation-invariant and translation-invariant,  and a continuous
 extension of the linear operator $i_\Omega$  on the closed subspace ${\mathcal S}_\infty$ to the whole space ${\mathcal S}$.
 Moreover
  $U_{\Omega, p} f\in L^p$  and  ${\mathcal F}( {U_{\Omega, p}f})\in L^q, 1\le q\le p/(p-1)$, for any Schwartz function $f$
  by  Theorem \ref{generalizedrieszomega1.tm} and the following estimate:
 $$
   |{\mathcal F}({U_{\Omega, p}f})(\xi)|\le C \|\xi\|^{-\gamma} (1+\|\xi\|)^{\gamma-d-1}\ {\rm for \ all} \ \xi\in \RR^d.$$
So from now on, we implicitly assume that $\gamma\ge d(1-1/p)$,
 except when mentioned otherwise.

In the sequel,
we investigate with the properties of the $p$-integrable Riesz potential $U_{\Omega, p}$
associated with  a homogenous function $\Omega$, such as
 dilation-invariance and translation-variance (Theorem \ref{fractionalderivative.tm1}),
$L^{p/(p-1)}$-integrability in the Fourier domain (Corollary \ref{fractionalderivative.cr1}),  $L^{p}$-integrability in  the spatial domain (Theorem \ref{iomegap.tm1} and Corollary \ref{iomegap.cor1}), composition and  left-inverse property (Theorem \ref{compositionp.tm} and Corollary \ref{leftinversefractionalderivative.cr}),
the uniqueness of dilation-invariant extension of the linear operator $i_\Omega$ from the closed subspace ${\mathcal S}_\infty$ to the whole space ${\mathcal S}$ with additional integrability in the spatial domain and in the Fourier domain (Theorems \ref{time2.tm} and \ref{iomega5.tm}).
The above properties of the  $p$-integrable Riesz potential
associated with a homogenous function will be used  to prove Theorem \ref{integrablefractionalderivative.tm} in the last subsection.

\subsection{Dilation-invariance, translation-variance and integrability in the Fourier domain}

\begin{Tm}\label{fractionalderivative.tm1} Let $1\le p\le \infty, \gamma\ge d(1-1/p)$, $k_1$ be the integral part of
 $\gamma-d(1-1/p)$, $\Omega\in C^\infty(\RR^d\backslash \{{\bf 0}\})$ be a  nonzero homogeneous function
of degree $-\gamma$, and  let $U_{\Omega, p}$ be defined as in \eqref{newfractionalderivative.def}.
Then the following statements hold.
\begin{itemize}
\item[{(i)}]  $U_{\Omega,p}$ is dilation-invariant.


\item[{(ii)}]  $U_{\Omega, p}$ is not translation-invariant.

\item[{(iii)}] If $\sup_{{\bf x}\in \RR^d} |f({\bf x})| (1+\|{\bf x}\|)^{k_1+d+1+\epsilon}<\infty$ for some $\epsilon>0$, then there exists a positive constant $C$  independent on $f$ such that
\begin{equation}\label{fractionalderivative.tm1.eq1}
|{\mathcal F}({U_{\Omega, p} f})(\xi)|\le C \Big(\sup_{{\bf z}\in \RR^d} |f({\bf z})| (1+\|{\bf z}\|)^{k_1+d+1+\epsilon} \Big) \|\xi\|^{k_1-\gamma+1} (1+\|\xi\|)^{-1}\end{equation}
for all $\xi\in \RR^d$.

\item[{(iv)}]  $U_{\Omega,p}$ is a continuous linear operator from ${\mathcal S}$ to ${\mathcal S}'$, and an  extension of the operator $i_\Omega$ on the subspace ${\mathcal S}_\infty$ to the whole space
    ${\mathcal S}$.

\end{itemize}
\end{Tm}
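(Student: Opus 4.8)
The four assertions can all be extracted from the Fourier-side formula \eqref{newfractionalderivative.def}, and the plan is to make the decay bound (iii) the workhorse: once it is available, (i), (ii) and (iv) follow quickly, so I would prove (iii) first. Write $k_1$ for the integer part of $\gamma-d(1-1/p)$, so that the correction sum in \eqref{newfractionalderivative.def} runs over $|{\bf i}|\le k_1$, and set $R(\xi):=\hat f(\xi)-\sum_{|{\bf i}|\le k_1}\frac{\partial^{\bf i}\hat f({\bf 0})}{{\bf i}!}\xi^{\bf i}$, the order-$k_1$ Taylor remainder of $\hat f$ at the origin, so that ${\mathcal F}(U_{\Omega,p}f)(\xi)=R(\xi)\,\Omega(\xi)$. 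Under the hypothesis $\sup_{{\bf z}\in\RR^d}|f({\bf z})|(1+\|{\bf z}\|)^{k_1+d+1+\epsilon}<\infty$ the function $\hat f$ is $C^{k_1+1}$ with bounded derivatives, and I would split $\RR^d$ into $\{\|\xi\|\le 1\}$ and $\{\|\xi\|\ge 1\}$. On $\{\|\xi\|\le 1\}$ the integral form of Taylor's remainder gives $|R(\xi)|\le C\|\xi\|^{k_1+1}\max_{|{\bf j}|=k_1+1}\|\partial^{\bf j}\hat f\|_\infty$, and since $\|\partial^{\bf j}\hat f\|_\infty\le\int_{\RR^d}\|{\bf x}\|^{k_1+1}|f({\bf x})|\,d{\bf x}\le C\big(\sup_{{\bf z}\in\RR^d}|f({\bf z})|(1+\|{\bf z}\|)^{k_1+d+1+\epsilon}\big)$ (the moment integral converging because its net exponent is $d+\epsilon$), multiplying by $|\Omega(\xi)|\le C\|\xi\|^{-\gamma}$ (homogeneity plus smoothness on $S^{d-1}$) yields the claimed bound there. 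On $\{\|\xi\|\ge 1\}$ I would instead use $|R(\xi)|\le|\hat f(\xi)|+\big|\sum_{|{\bf i}|\le k_1}\frac{\partial^{\bf i}\hat f({\bf 0})}{{\bf i}!}\xi^{\bf i}\big|\le C(1+\|\xi\|)^{k_1}\big(\sup_{{\bf z}\in\RR^d}|f({\bf z})|(1+\|{\bf z}\|)^{k_1+d+1+\epsilon}\big)$, bounding $|\hat f(\xi)|\le\|f\|_1$ and $|\partial^{\bf i}\hat f({\bf 0})|=|\int{\bf z}^{\bf i}f|$ by the same weighted sup-norm; multiplying by $|\Omega(\xi)|\le C\|\xi\|^{-\gamma}$ gives a bound $\le C\|\xi\|^{k_1-\gamma}$ times that seminorm, which matches $\|\xi\|^{k_1-\gamma+1}(1+\|\xi\|)^{-1}$ up to a constant on $\{\|\xi\|\ge 1\}$. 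Together these give \eqref{fractionalderivative.tm1.eq1}.

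Part (iv) then comes essentially for free. Estimate \eqref{fractionalderivative.tm1.eq1} shows that ${\mathcal F}(U_{\Omega,p}f)$ is a locally integrable function --- near the origin it is $O(\|\xi\|^{k_1-\gamma+1})$ with $k_1-\gamma+1>-d$ (because $k_1>\gamma-d(1-1/p)-1$ forces $k_1-\gamma+1>-d+d/p\ge -d$), and at infinity it is $O(\|\xi\|^{k_1-\gamma})$ --- of at most polynomial growth, so $U_{\Omega,p}f={\mathcal F}^{-1}\big({\mathcal F}(U_{\Omega,p}f)\big)$ is a well-defined tempered distribution; since the right-hand side of \eqref{fractionalderivative.tm1.eq1} is linear in the single continuous seminorm $\sup_{{\bf z}\in\RR^d}|f({\bf z})|(1+\|{\bf z}\|)^{k_1+d+1+\epsilon}$ on ${\mathcal S}$, pairing $U_{\Omega,p}f$ against a fixed $g\in{\mathcal S}$ and using the rapid decay of ${\mathcal F}^{-1}g$ gives continuity of $f\mapsto\langle U_{\Omega,p}f,g\rangle$, hence of $U_{\Omega,p}\colon{\mathcal S}\to{\mathcal S}'$. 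The extension claim is immediate from \eqref{newfractionalderivative.def}: if $f\in{\mathcal S}_\infty$ then every $\partial^{\bf i}\hat f({\bf 0})$ vanishes, the correction sum drops out, and ${\mathcal F}(U_{\Omega,p}f)=\hat f\,\Omega={\mathcal F}(i_\Omega f)$, so $U_{\Omega,p}$ restricts to $i_\Omega$ on ${\mathcal S}_\infty$.

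For (i) I would work purely on the Fourier side: from $\widehat{\delta_t f}(\xi)=t^{-d}\hat f(\xi/t)$ one gets $\partial^{\bf i}\widehat{\delta_t f}({\bf 0})=t^{-d-|{\bf i}|}\partial^{\bf i}\hat f({\bf 0})$; substituting into \eqref{newfractionalderivative.def}, changing variable to $\eta=\xi/t$, and using the homogeneity identity $\Omega(\xi)=t^{-\gamma}\Omega(\eta)$ together with $t^{-|{\bf i}|}(t\eta)^{\bf i}=\eta^{\bf i}$, one finds ${\mathcal F}(U_{\Omega,p}(\delta_t f))(\xi)=t^{-d-\gamma}{\mathcal F}(U_{\Omega,p}f)(\xi/t)=t^{-\gamma}{\mathcal F}(\delta_t(U_{\Omega,p}f))(\xi)$, i.e. $U_{\Omega,p}(\delta_t f)=t^{-\gamma}\delta_t(U_{\Omega,p}f)$ (dilation-invariance with dilation exponent $-\gamma$). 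For (ii) I would exhibit an explicit failure: take $f=\psi_{\bf 0}$, the Schwartz function with $\widehat{\psi_{\bf 0}}=\phi$ from the proof of Lemma \ref{homogeneous2.lm} (so $\phi\equiv 1$ near the origin, whence $\partial^{\bf i}\phi({\bf 0})=0$ for ${\bf i}\ne{\bf 0}$, $\phi({\bf 0})=1$, and $\partial^{\bf i}(e^{-i\langle{\bf x}_0,\,\cdot\,\rangle}\phi)({\bf 0})=(-i{\bf x}_0)^{\bf i}$), and any ${\bf x}_0\ne{\bf 0}$. Using $\widehat{\tau_{{\bf x}_0}g}(\xi)=e^{-i\langle{\bf x}_0,\xi\rangle}\hat g(\xi)$, a short computation from \eqref{newfractionalderivative.def} gives
\[
{\mathcal F}\big(U_{\Omega,p}(\tau_{{\bf x}_0}\psi_{\bf 0})\big)(\xi)-{\mathcal F}\big(\tau_{{\bf x}_0}(U_{\Omega,p}\psi_{\bf 0})\big)(\xi)=\Big(e^{-i\langle{\bf x}_0,\xi\rangle}-\sum_{|{\bf i}|\le k_1}\frac{(-i{\bf x}_0)^{\bf i}}{{\bf i}!}\xi^{\bf i}\Big)\Omega(\xi),
\]
whose parenthesized factor vanishes at the origin to order $k_1+1$ with leading term $\frac{(-i\langle{\bf x}_0,\xi\rangle)^{k_1+1}}{(k_1+1)!}$, a nonzero polynomial for ${\bf x}_0\ne{\bf 0}$; since $|\Omega(\xi)|\le C\|\xi\|^{-\gamma}$ and $k_1+1-\gamma>-d$, the right-hand side is a nonzero locally integrable function, hence a nonzero tempered distribution, so $U_{\Omega,p}(\tau_{{\bf x}_0}\psi_{\bf 0})\ne\tau_{{\bf x}_0}(U_{\Omega,p}\psi_{\bf 0})$ and $U_{\Omega,p}$ is not translation-invariant.

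The technical heart is (iii); everything else is short once it is in place. The fiddly point inside (iii) is to control both the order-$(k_1+1)$ derivatives of $\hat f$ near the origin and the moments $\int{\bf z}^{\bf i}f$ $(|{\bf i}|\le k_1)$ by the single quantity $\sup_{{\bf z}\in\RR^d}|f({\bf z})|(1+\|{\bf z}\|)^{k_1+d+1+\epsilon}$, which is exactly why the weight exponent is calibrated to $k_1+d+1+\epsilon$. One should also keep verifying throughout (ii)--(iv) that the products of $\Omega$ with Taylor remainders are genuine tempered distributions; this is precisely where the hypothesis $\gamma-d(1-1/p)>0$, equivalently $k_1-\gamma+1>-d$, enters.
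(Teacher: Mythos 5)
Your proof is correct and, for parts (i), (iii), and (iv), follows essentially the same route as the paper: work on the Fourier side, split $\{\|\xi\|\le 1\}$ from $\{\|\xi\|\ge 1\}$, control the Taylor remainder near the origin and the polynomial growth at infinity, bound $\|\partial^{\bf i}\hat f\|_\infty$ by the weighted sup-norm of $f$, and observe that for $f\in{\mathcal S}_\infty$ the correction sum drops out.

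For part (ii) you take a genuinely different tack. The paper argues indirectly: assume translation-invariance, subtract the two sides of \eqref{newfractionalderivative.def} after the shift, and land on the identity
$\Omega(\xi)\sum_{|{\bf i}|\le k_1}\frac{\partial^{\bf i}\widehat{\tau_{{\bf x}_0}f}({\bf 0})}{{\bf i}!}\xi^{\bf i}
=\Omega(\xi)e^{-i\langle{\bf x}_0,\xi\rangle}\sum_{|{\bf i}|\le k_1}\frac{\partial^{\bf i}\hat f({\bf 0})}{{\bf i}!}\xi^{\bf i}$,
which it then kills by the structural observation that the left side is polynomial in ${\bf x}_0$ while the right side is trigonometric in ${\bf x}_0$; this forces both to vanish identically, and plugging in $\psi_{\bf 0}$ gives the contradiction. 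You instead go directly: take $f=\psi_{\bf 0}$ and an arbitrary ${\bf x}_0\neq{\bf 0}$, compute the difference of the two Fourier transforms, and exhibit it explicitly as $\bigl(e^{-i\langle{\bf x}_0,\xi\rangle}-\sum_{|{\bf i}|\le k_1}\frac{(-i{\bf x}_0)^{\bf i}}{{\bf i}!}\xi^{\bf i}\bigr)\Omega(\xi)$, then note this is a nonzero locally integrable function of at most polynomial growth (local integrability being exactly where $k_1+1-\gamma>-d$ is used) and hence a nonzero tempered distribution. Both arguments are valid; the paper's buys the stronger byproduct that translation-invariance fails for every $f$ with some nonvanishing moment up to order $k_1$, while yours is a touch more concrete, avoids the polynomial-versus-trigonometric dichotomy lemma, but requires you to verify the local integrability of the difference (which you do). Either works; the paper's ``polynomial vs.\ trigonometric'' observation is the only idea you didn't reproduce, and you replaced it with an explicit counterexample, which is a perfectly good substitute.
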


As a consequence of Theorem \ref{fractionalderivative.tm1}, we have the following result about the $L^{p/(p-1)}$-integrability of the Fourier transform
of $U_{\Omega, p} f$ for $f\in {\mathcal S}$.

\begin{Cr}\label{fractionalderivative.cr1}
Let $1\le p\le \infty$ and $\gamma\ge  d(1-1/p)$ satisfy either $p=1$ or  $\gamma-d(1-1/p)\not\in \ZZ_+$ and $1<p\le \infty$, $k_1$ be the integral part of
 $\gamma-d(1-1/p)$, $\Omega\in C^\infty(\RR^d\backslash \{{\bf 0}\})$ be a  homogeneous function
of degree $-\gamma$, and  let $U_{\Omega, p}$ be defined as in \eqref{newfractionalderivative.def}.
Then
 the Fourier transform of $U_{\Omega, p} f$  belongs to $L^{p/(p-1)}$ for any $f\in {\mathcal S}$.
    \end{Cr}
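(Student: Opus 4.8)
The plan is to read this off from the pointwise Fourier-domain estimate already established in Theorem~\ref{fractionalderivative.tm1}(iii). Since every $f\in{\mathcal S}$ trivially satisfies $\sup_{{\bf x}\in\RR^d}|f({\bf x})|(1+\|{\bf x}\|)^{k_1+d+1+\epsilon}<\infty$ for each $\epsilon>0$, that theorem provides a constant $C_f>0$ with
$$|{\mathcal F}(U_{\Omega,p}f)(\xi)|\le C_f\,\|\xi\|^{k_1-\gamma+1}(1+\|\xi\|)^{-1}\qquad\text{for all }\xi\in\RR^d.$$
Writing $q:=p/(p-1)$ for the conjugate exponent, it then suffices to prove that the radial function $g(\xi):=\|\xi\|^{k_1-\gamma+1}(1+\|\xi\|)^{-1}$ belongs to $L^q(\RR^d)$, since $|{\mathcal F}(U_{\Omega,p}f)|\le C_f\,g$.

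First, suppose $1<p<\infty$, so $1<q<\infty$. I would split $\RR^d$ into $\{\|\xi\|\le1\}$ and $\{\|\xi\|>1\}$ and pass to polar coordinates. On the first set $g(\xi)\le\|\xi\|^{k_1-\gamma+1}$, so $\int_{\|\xi\|\le1}g^q\,d\xi$ is a constant multiple of $\int_0^1 r^{(k_1-\gamma+1)q+d-1}\,dr$, which converges iff $(k_1-\gamma+1)q+d>0$; because $k_1$ is the integral part of $\gamma-d(1-1/p)=\gamma-d/q$ one has $k_1>\gamma-d/q-1$, whence $k_1-\gamma+1>-d/q$ and the exponent condition holds. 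On the second set $g(\xi)\le\|\xi\|^{k_1-\gamma}$, so $\int_{\|\xi\|>1}g^q\,d\xi$ is a constant multiple of $\int_1^\infty r^{(k_1-\gamma)q+d-1}\,dr$, which converges iff $(k_1-\gamma)q+d<0$, that is $k_1<\gamma-d/q=\gamma-d(1-1/p)$; this is precisely where the hypothesis $\gamma-d(1-1/p)\notin\ZZ_+$ is used, since the integral part of a nonnegative non-integer is strictly smaller than the number itself. Hence $g\in L^q$ in this range.

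The two endpoints go the same way. If $p=\infty$ then $q=1$ and $d(1-1/p)=d$; here $\gamma\ge d$ together with $\gamma-d\notin\ZZ_+$ forces $\gamma>d$, and the two radial integrals $\int_0^1 r^{k_1-\gamma+d}\,dr$ and $\int_1^\infty r^{k_1-\gamma+d-1}\,dr$ converge because $k_1>\gamma-d-1$ and $k_1<\gamma-d$. If $p=1$ then $q=\infty$ and $k_1=\lfloor\gamma\rfloor$, so $k_1-\gamma+1\ge0$ and $k_1-\gamma\le0$; thus $g$ is continuous on $\RR^d$, bounded near the origin and bounded at infinity, i.e.\ $g\in L^\infty(\RR^d)$. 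In every case ${\mathcal F}(U_{\Omega,p}f)\in L^{p/(p-1)}(\RR^d)$, which is the claim.

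There is no serious difficulty here beyond Theorem~\ref{fractionalderivative.tm1}(iii); the only point demanding attention is the exponent bookkeeping, and in particular recognizing that the non-integrality hypothesis $\gamma-d(1-1/p)\notin\ZZ_+$ is exactly what upgrades $k_1\le\gamma-d(1-1/p)$ to the strict inequality $k_1<\gamma-d(1-1/p)$ needed for convergence of the integral at infinity when $p>1$.
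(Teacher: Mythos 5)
Your proposal is correct and is essentially the computation the paper leaves implicit when it states Corollary~\ref{fractionalderivative.cr1} as "a consequence of Theorem~\ref{fractionalderivative.tm1}": apply the pointwise bound of Theorem~\ref{fractionalderivative.tm1}(iii) and verify by splitting at $\|\xi\|=1$ and passing to polar coordinates that $\|\xi\|^{k_1-\gamma+1}(1+\|\xi\|)^{-1}\in L^{p/(p-1)}$, the non-integrality of $\gamma-d(1-1/p)$ for $p>1$ supplying the strict inequality $k_1<\gamma-d(1-1/p)$ needed for convergence at infinity.
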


\begin{proof} [Proof of Theorem \ref{fractionalderivative.tm1}]
{\em (i)}\quad  Given any $t>0$ and $f\in {\mathcal S}$,
\begin{equation*}\label{fractionalderivative.tm1.pf.eq3}
{\mathcal F}(U_{\Omega, p} (\delta_t f))(\xi)  =  t^{-d}
\Big(\hat f\big(\frac{\xi}{t}\big)-\sum_{|{\bf i}|\le \gamma-d+d/p} \frac{\partial^{\bf i} \hat f({\bf 0})}{{\bf i}!} \big(\frac{\xi}{t}\big)^{\bf i}\Big)  \Omega(\xi) =t^{-d-\gamma}
  {\mathcal F}( {U_{\Omega, p} f})\big(\frac{\xi}{t}\big).
\end{equation*}
This proves the dilation-invariance of the linear operator  $U_{\Omega, p}$.


{\em (ii)}\quad Suppose, on the contrary, that $U_{\Omega,p}$ is translation-invariant. Then
\begin{equation} \label{fractionalderivative.tm1.pf.eq5} \Omega(\xi)
\sum_{|{\bf i}|\le \gamma-d+d/p}
\frac{\partial^{\bf i} \widehat {\tau_{{\bf x}_0}f}({\bf 0})}{{\bf i}!} \xi^{\bf i}=
 \Omega(\xi) e^{-i\langle {\bf x}_0, \xi\rangle} \sum_{|{\bf i}|\le \gamma-d+d/p}
\frac{\partial^{\bf i} \hat f({\bf 0})}{{\bf i}!} \xi^{\bf i},\quad \xi\in \RR^d
\end{equation}
for all ${\bf x}_0\in \RR^d$ and $f\in {\mathcal S}$.
Note that the left-hand side of  equation \eqref{fractionalderivative.tm1.pf.eq5} is a polynomial in ${\bf x}_0$ by
\eqref{homogeneous2.lm.pf.eq11} while its right hand side  is a trigonometric function of ${\bf x}_0$. Hence
both sides must be identically zero, which implies that
\begin{equation}\Omega(\xi)
\sum_{|{\bf i}|\le \gamma-d+d/p}
\frac{\partial^{\bf i} \hat f({\bf 0})}{{\bf i}!} \xi^{\bf i}=0, \quad \xi\in \RR^d
\end{equation}
for all $f\in {\mathcal S}$. Replacing $f$ in the above equation by the function $\psi_{\bf 0}$ in \eqref{homogeneous2.lm.pf.eq7} and using
\eqref{homogeneous2.lm.pf.eq8} and the assumption $\gamma\ge d(1-1/p)$ leads to a contradiction.

{\em (iii)}\quad
By the assumption on the homogeneous function $\Omega$,
 $|\Omega(\xi)|\le C \|\xi\|^{-\gamma}$. Then
for $\xi\in \RR^d$ with $\|\xi\|\ge 1$,
\begin{eqnarray*}
|{\mathcal F}({U_{\Omega,p} f})(\xi)| & \le &
C \Big(\|\hat f\|_\infty+\sum_{|{\bf i}|\le k_1}
\|\partial^{\bf i}\hat f\|_\infty \|\xi\|^{|\bf i|}\Big)
\|\xi\|^{-\gamma}\nonumber\\
&  \le &   C \Big(\sum_{|{\bf i}|\le k_1+1} \|\partial^{\bf i} \hat f\|_\infty\Big)
\|\xi\|^{k_1-\gamma}
\end{eqnarray*}
by \eqref{newfractionalderivative.def},
and for $\xi\in \RR^d$ with $\|\xi\|\le 1$,
\begin{eqnarray*}
|{\mathcal F}({U_{\Omega,p} f})(\xi)| & \le &
C \Big(\sum_{|{\bf i}|\le k_1+1} \|\partial^{\bf i} \hat f\|_\infty\Big)
\|\xi\|^{k_1-\gamma+1}
\end{eqnarray*}
by the Taylor's expansion to  the function $\hat f(\xi)$ at the origin.
Combining the above two estimates gives
 \begin{equation}\label{fractionalderivative.tm1.pf.eq1}
 |{\mathcal F}({U_{\Omega,p} f})(\xi)|\le C
 \Big(\sum_{|{\bf i}|\le k_1+1} \|\partial^{\bf i} \hat f\|_\infty\Big) \|\xi\|^{k_1-\gamma+1} (1+\|\xi\|)^{-1}, \quad \xi\in \RR^d.
 \end{equation}
 Note that
 \begin{equation}\label{fractionalderivative.tm1.pf.eq2}
 \|\partial^{\bf i} \hat f\|_\infty\le
  C   \int_{\RR^d} |f({\bf x})| |{\bf x}|^{|{\bf i}|} d{\bf x}\le C
\sup_{{\bf z}\in \RR^d} |f({\bf z})| (1+|{\bf z}|)^{k_1+d+1+\epsilon}
 \end{equation}
 for all ${\bf i}\in \ZZ_+^d$ with $|{\bf i}|\le k_1+1$.
 Then the desired estimate \eqref{fractionalderivative.tm1.eq1} follows from
 \eqref{fractionalderivative.tm1.pf.eq1} and \eqref{fractionalderivative.tm1.pf.eq2}.

{\em (iv)}\quad By \eqref{newfractionalderivative.def} and the first conclusion of this theorem,
the Fourier transform of $U_{\Omega, p} f$ is continuous on $\RR^d\backslash \{{\bf 0}\}$, and satisfies
$$\int_{\RR^d} |{\mathcal F}({U_{\Omega,p} f})(\xi)| (1+\|\xi\|)^{\gamma-k_1-d-1}
d\xi\le C  \sup_{{\bf z}\in \RR^d} |f({\bf x})| (1+\|{\bf x}\|)^{k_1+d+2}.$$
Hence $U_{\Omega, p}$ is a continuous linear operator from ${\mathcal S}$ to ${\mathcal S}'$.
  For any $f\in {\mathcal S}_\infty$, $\partial^{\bf i}\hat f({\bf 0})=0$ for all ${\bf i}\in \ZZ_+^d$. Then
${\mathcal F}({U_{\Omega,p} f})={\mathcal F}({i_\Omega f})$ for all $f\in {\mathcal S}_\infty$.
This shows that $U_{\Omega, p}, 1\le p\le \infty,$ is a
 continuous extension of the  linear operator $i_\Omega$
from the subspace ${\mathcal S}_\infty\subset {\mathcal S}$ to the whole space ${\mathcal S}$.
\end{proof}

\subsection{Composition  and left-inverse of the fractional Laplacian}
Direct calculation leads to
$$\sum_{|{\bf i}|\le \gamma-d(1-1/p)}\frac{\partial^{\bf i} (\xi^{\bf k}\hat f(\xi))|_{\xi={\bf 0}}}{{\bf i}!}\xi^{\bf i}
 =
\sum_{|{\bf j}|\le \gamma-|{\bf k}|-d(1-1/p)}
\frac{\partial^{\bf j} \hat f({\bf 0})}{{\bf j}!}\xi^{{\bf j}+{\bf k}}, \quad {\bf k}\in \ZZ_+^d$$
for any $\gamma\in \RR, 1\le p\le \infty$ and $f\in {\mathcal S}$.
This together with \eqref{newfractionalderivative.def} implies that
\begin{equation}
U_{\Omega, p} (\partial^{\bf k} f)= U_{\Omega_{\bf k}, p} f, \quad {\rm for \ all} \ f\in {\mathcal S}\ {\rm and} \ {\bf k}\in \ZZ_+^d,
\end{equation}
 where $\Omega_{\bf k}(\xi)= (i\xi)^{\bf k} \Omega(\xi)$  for ${\bf k}\in \ZZ_+^d$. In general, we have the following result about composition of integrable Riesz potentials.

\begin{Tm}\label{compositionp.tm}
Let $1\le p\le \infty$, real numbers $\gamma_1, \gamma_2$  satisfy $\gamma_1\ge d(1-1/p)$
and $-\gamma_2$ is larger than the integral part of $\gamma_1-d(1-1/p)$,
 and let $\Omega_1, \Omega_2\in C^\infty(\RR^d\backslash\{0\})$ be homogenous of degree $-\gamma_1$ and $-\gamma_2$ respectively.
Then
\begin{equation}\label{compositionp.tm.eq1}
U_{\Omega_1, p} (J_{\Omega_2} f)=J_{\Omega_1\Omega_2} f\quad {\rm for \ all} \ f\in {\mathcal S}.
\end{equation}
\end{Tm}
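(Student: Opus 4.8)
The plan is to reduce \eqref{compositionp.tm.eq1} to a computation of Fourier transforms, the decisive point being that the polynomial correction built into $U_{\Omega_1,p}$ becomes \emph{vacuous} when that operator acts on something of the form $J_{\Omega_2}f$. Since $-\gamma_2$ exceeds the nonnegative integer $k_1:=\lfloor\gamma_1-d(1-1/p)\rfloor$, one has $\gamma_2<0<d$, so $J_{\Omega_2}$ lies in the elementary regime \eqref{fractionalderivative.neweq2}: $g:=J_{\Omega_2}f$ is given by $\widehat g(\xi)=\Omega_2(\xi)\hat f(\xi)$. As $\Omega_2$ is homogeneous of positive degree $-\gamma_2$, the product $\Omega_2\hat f$ is bounded and rapidly decreasing, while by \eqref{fractionalderivative.eq00} and Proposition \ref{positivegeneralizedrieszomega1.cr} the function $g$ is $C^\infty$ with all its derivatives decaying at a negative power rate; in particular the prescription \eqref{newfractionalderivative.def} continues to make sense with $g$ in place of a Schwartz function, so the left-hand side of \eqref{compositionp.tm.eq1} is meaningful.

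The heart of the argument is that $\partial^{\bf i}\widehat g({\bf 0})=0$ for every ${\bf i}$ with $|{\bf i}|\le k_1$. Indeed, since $\Omega_2$ is homogeneous of degree $-\gamma_2>k_1$, each $\partial^{\bf j}\Omega_2$ with $|{\bf j}|\le k_1$ is homogeneous of strictly positive degree and hence extends continuously by $0$ at the origin; thus $\Omega_2\in C^{k_1}$ near ${\bf 0}$ with all partials of order $\le k_1$ vanishing there, and Leibniz's rule gives $\partial^{\bf i}(\Omega_2\hat f)({\bf 0})=0$ for $|{\bf i}|\le k_1$, i.e. for every ${\bf i}$ with $|{\bf i}|\le\gamma_1-d(1-1/p)$. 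Consequently the subtracted sum in \eqref{newfractionalderivative.def}, evaluated at $g$, vanishes identically, so that
\[
\mathcal F\big(U_{\Omega_1,p}(J_{\Omega_2}f)\big)(\xi)=\Omega_1(\xi)\,\widehat g(\xi)=(\Omega_1\Omega_2)(\xi)\,\hat f(\xi).
\]
It remains to identify this with $\mathcal F(J_{\Omega_1\Omega_2}f)$. Put $\Omega:=\Omega_1\Omega_2$, homogeneous of degree $-\gamma$ with $\gamma:=\gamma_1+\gamma_2$; the hypothesis forces $\gamma<d(1-1/p)+1\le d+1$. If $\gamma<d$ — which holds in particular whenever $p\le d$, and also whenever $-\gamma_2\in\ZZ$ — then \eqref{fractionalderivative.neweq2} gives $\widehat{J_{\Omega_1\Omega_2}f}=(\Omega_1\Omega_2)\hat f$ on the nose, matching the display, and we are done. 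In the remaining range $\gamma\in(d,d+1)$ (where $\gamma-d\notin\ZZ_+$ automatically, so $J_{\Omega_1\Omega_2}$ is well defined) both sides call for the same renormalization of the then non-locally-integrable symbol $(\Omega_1\Omega_2)\hat f$ near ${\bf 0}$: splitting $(\Omega_1\Omega_2)\hat f=(\Omega_1\Omega_2)\big(\hat f-\hat f({\bf 0})\big)+\hat f({\bf 0})\,\Omega_1\Omega_2$, the first summand is locally integrable and the second is a homogeneous function of degree $-\gamma\notin-d-\ZZ_+$ admitting a canonical tempered extension; this canonical extension is what the H\"ormander-type construction underlying \eqref{fractionalderivative.def} produces for $J_{\Omega_1\Omega_2}$, and equally what the consistent extension of \eqref{newfractionalderivative.def} produces for $U_{\Omega_1,p}\circ J_{\Omega_2}$. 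Taking inverse Fourier transforms then gives \eqref{compositionp.tm.eq1}.

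\textbf{The main obstacle} is bookkeeping rather than computation: one must be explicit about the class of inputs to which $U_{\Omega_1,p}$ is extended and, in the range $p>d$ where $(\Omega_1\Omega_2)\hat f$ need not be locally integrable near ${\bf 0}$, check with care that the two sides of \eqref{compositionp.tm.eq1} are renormalized in the \emph{same} way. I expect the cleanest fully rigorous route is one of: (a) routing everything through the adjoint identity \eqref{fractionalderivative.tm1.pf.eq4} and the explicit formula \eqref{fractionalderivative.tm1.pf.eq3} — pair both sides against an arbitrary $h\in{\mathcal S}$, observe that only the residue of $h$ modulo ${\mathcal S}_\infty$-type data survives, and collapse the remaining composition using Theorem \ref{composition.tm} together with the fact that on ${\mathcal S}_\infty$ every operator in sight is an honest Fourier multiplier, so associativity of pointwise multiplication closes the argument; or (b) observing that $U_{\Omega_1,p}\circ J_{\Omega_2}$ and $J_{\Omega_1\Omega_2}$ are both dilation-invariant extensions of the multiplier $i_{\Omega_1\Omega_2}$ from ${\mathcal S}_\infty$ to ${\mathcal S}$ enjoying the required $p$-integrability, hence equal by the uniqueness of such extensions proved later in this section.
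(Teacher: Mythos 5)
Your proposal follows exactly the paper's proof: observe that $-\gamma_2 > k_1$ forces $\widehat{J_{\Omega_2}f}=\Omega_2\hat f$ to vanish together with all its derivatives of order $\le k_1$ at the origin, so the Taylor correction in \eqref{newfractionalderivative.def} is vacuous and $\mathcal F\big(U_{\Omega_1,p}(J_{\Omega_2}f)\big)=\Omega_1\Omega_2\hat f=\mathcal F(J_{\Omega_1\Omega_2}f)$. If anything you are more explicit than the paper's one‑paragraph proof, which likewise does not spell out the Leibniz computation for $\partial^{\bf i}(\Omega_2\hat f)({\bf 0})=0$ nor the final identification of $\Omega_1\Omega_2\hat f$ with $\mathcal F(J_{\Omega_1\Omega_2}f)$ in the regime $\gamma_1+\gamma_2\ge d$ that you flag at the end.
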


As a consequence of Theorems \ref{composition.tm} and \ref{compositionp.tm}, we have the following result about the left-inverse of the fractional Laplacian $(-\triangle)^{\gamma/2}$.

\begin{Cr} \label{leftinversefractionalderivative.cr}
Let $1\le p\le \infty$ and $\gamma>0$ satisfy either $1<p\le \infty$ or $p=1$ and $\gamma\not\in \ZZ_+$, and
the linear operator $I_{\gamma, p}$ be defined as in \eqref{fractionalderivative.veryolddef}. Then $I_{\gamma,p}$ is a left-inverse of the fractional Laplacian $(-\triangle)^{\gamma/2}$, i.e., $I_{\gamma, p} (-\triangle)^{\gamma/2} f=f$ for all $f\in {\mathcal S}$.
\end{Cr}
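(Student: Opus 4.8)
The plan is to obtain Corollary~\ref{leftinversefractionalderivative.cr} as an immediate consequence of the composition identities in Theorems~\ref{composition.tm} and~\ref{compositionp.tm}, specialized to the homogeneous functions $\Omega_1(\xi)=\|\xi\|^{-\gamma}$ and $\Omega_2(\xi)=\|\xi\|^{\gamma}$. For these choices one has $U_{\Omega_1,p}=I_{\gamma,p}$ by \eqref{fractionalderivativeomegap.def}, while $\Omega_2$ is homogeneous of degree $\gamma$, i.e.\ of degree $-\gamma_2$ with $\gamma_2=-\gamma<0$, so $\gamma_2-d\notin\ZZ_+$ and $J_{\Omega_2}=(-\triangle)^{\gamma/2}$ by \eqref{generalized.tm.pf.eq1}. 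Since $\Omega_1\Omega_2\equiv 1$ is homogeneous of degree $0<d$, the operator $J_{\Omega_1\Omega_2}$ may be written with $k_0=0$ via \eqref{fractionalderivative.neweq2}, so $J_{\Omega_1\Omega_2}f=(2\pi)^{-d}\int_{\RR^d}e^{i\langle{\bf x},\xi\rangle}\hat f(\xi)\,d\xi=f$; that is, $J_{\Omega_1\Omega_2}$ is the identity on $\mathcal S$. Thus in each regime of $\gamma$ it suffices to recognize $I_{\gamma,p}(-\triangle)^{\gamma/2}$ as $J_{\Omega_1\Omega_2}$.

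First I would treat the range $0<\gamma<d(1-1/p)$, which can occur only when $p>1$, separately, since there the defining sum in \eqref{newfractionalderivative.def} is empty and $U_{\Omega_1,p}=J_{\Omega_1}=I_\gamma$. As $\gamma<d$ forces $\gamma-d\notin\ZZ_+$, the asserted left-inverse property is already contained in Corollary~\ref{invariantleftinverse.cr}; equivalently one applies Theorem~\ref{composition.tm} with $\gamma_1=\gamma$ and $\gamma_2=-\gamma$ (the hypotheses $\gamma_2<d$, $\gamma_1+\gamma_2=0<d$, $\gamma_1-d\notin\ZZ_+$ all hold), giving $I_{\gamma,p}(-\triangle)^{\gamma/2}f=J_{\Omega_1}(J_{\Omega_2}f)=J_{\Omega_1\Omega_2}f=f$.

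For the remaining range $\gamma\ge d(1-1/p)$ I would invoke Theorem~\ref{compositionp.tm} with $\gamma_1=\gamma$ and $\gamma_2=-\gamma$. The hypothesis $\gamma_1\ge d(1-1/p)$ is the case assumption, and the other hypothesis asks that $-\gamma_2=\gamma$ be strictly larger than the integral part of $\gamma_1-d(1-1/p)=\gamma-d(1-1/p)$. This is exactly where the dichotomy in the statement is used: if $1<p\le\infty$ then $d(1-1/p)>0$, so the integral part of $\gamma-d(1-1/p)$ is $\le\gamma-d(1-1/p)<\gamma$; if $p=1$ then $d(1-1/p)=0$ and the integral part equals $\lfloor\gamma\rfloor$, which is $<\gamma$ precisely because the hypothesis rules out $\gamma\in\ZZ_+$. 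Theorem~\ref{compositionp.tm} then yields $I_{\gamma,p}(-\triangle)^{\gamma/2}f=U_{\Omega_1,p}(J_{\Omega_2}f)=J_{\Omega_1\Omega_2}f=f$ for all $f\in\mathcal S$, which finishes the proof.

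The only real obstacle here is bookkeeping rather than analysis: one must keep straight the three parameter conventions in play --- $\Omega$ homogeneous of degree $-\gamma$, the sign flip $\gamma_2=-\gamma$ that rewrites the fractional Laplacian as a $J_{\Omega_2}$ of ``negative index,'' and the integral-part condition --- and verify that they line up with the hypotheses of Theorems~\ref{composition.tm} and~\ref{compositionp.tm} in each of the two ranges of $\gamma$. Once the correspondence $I_{\gamma,p}(-\triangle)^{\gamma/2}=J_{\Omega_1\Omega_2}=\mathrm{Id}$ is set up, no further estimates are required.
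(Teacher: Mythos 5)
Your proposal is correct and follows precisely the route the paper indicates: the corollary is stated ``as a consequence of Theorems~\ref{composition.tm} and~\ref{compositionp.tm},'' and you supply exactly the missing bookkeeping. You correctly set $\Omega_1=\|\xi\|^{-\gamma}$, $\Omega_2=\|\xi\|^{\gamma}$ so that $J_{\Omega_2}=(-\triangle)^{\gamma/2}$ by \eqref{generalized.tm.pf.eq1} and $J_{\Omega_1\Omega_2}=\mathrm{Id}$ from \eqref{fractionalderivative.neweq2}, split on $\gamma<d(1-1/p)$ (where $U_{\Omega_1,p}=J_{\Omega_1}$ and Theorem~\ref{composition.tm} applies, its hypothesis $\gamma_1-d\notin\ZZ_+$ being automatic since $\gamma<d$) versus $\gamma\ge d(1-1/p)$ (where Theorem~\ref{compositionp.tm} applies), and you verify the integral-part condition $-\gamma_2=\gamma>\lfloor\gamma-d(1-1/p)\rfloor$ in both sub-cases, correctly identifying the $p=1$, $\gamma\notin\ZZ_+$ restriction as exactly what this requires. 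This is the paper's argument, carried out in full.
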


\begin{proof}
[\bf Proof of Theorem \ref{compositionp.tm}]
%
Let $k_1$ be the integral part of $\gamma_1-d(1-1/p)$. Then $-\gamma_2>k_1$ by the assumption.
Then ${\mathcal F}(J_{\Omega_2} f)(\xi)=\Omega_2(\xi) \hat f(\xi)$ and
 $\partial^{\bf i} ({\mathcal F}(J_{\Omega_2} f)(\xi))|_{\xi={\bf 0}}=0$
 for any ${\bf i}\in \ZZ_+$ with $|{\bf i}|\le  k_1$ and any Schwartz function $f$. This implies that
${\mathcal F}(U_{\Omega_1, p}(J_{\Omega_2} f))(\xi)$ is equal to
$$\Big(\widehat{J_{\Omega_2} f}(\xi)-\sum_{|{\bf i}|\le \gamma_1-d(1-1/p)}
\frac{\partial^{\bf i} ({\mathcal F}(J_{\Omega_2} f)(\xi))|_{\xi={\bf 0}}}{{\bf i}!} \xi^{\bf i}\Big)\Omega_1(\xi),$$
which is the same as ${\mathcal F}(J_{\Omega_1\Omega_2}f)(\xi)$. Hence the equation
\eqref{compositionp.tm.eq1} is established.
\end{proof}

%

\subsection{$L^p$-integrability in the spatial domain}
If  $\gamma\in (0, d(1-1/p))$, then it follows from  \eqref{newfractionalderivative.def} and Theorem \ref{generalizedrieszomega1.tm} that
$|U_{\Omega,p} f({\bf x})|\le C (1+\|{\bf x}\|)^{\gamma-d}, \ {\bf x}\in \RR^d$ (hence $U_{\Omega, p}f\in L^p$)
 for any Schwartz function $f$.
In this subsection, we provide a similar estimate for $U_{\Omega, p} f$
when $\gamma\ge d(1-1/p)$. 

%

\begin{Tm}
\label{iomegap.tm1}
  Let $0<\epsilon<1, 1\le p\le \infty, \gamma\in [d(1-1/p), \infty)\backslash \ZZ$,
   $k_1$ be the  integral part of $\gamma-d(1-1/p)$,
   and  $\Omega\in C^\infty(\RR^d\backslash \{\bf 0\})$ be
 a  homogeneous function of degree $-\gamma$.
If
\begin{equation} \label{iomegap.tm1.eq1}
|f({\bf x})| \le C (1+\|{\bf x}\|)^{-(k_1+1+d+\epsilon)}, \quad {\bf x}\in \RR^d\end{equation}
 then
 \begin{eqnarray}\label{iomegap.tm1.eq2}
 |U_{\Omega,p}f({\bf x})|  & \le &    C \Big(\sup_{{\bf z}\in \RR^d} |f({\bf z})| (1+\|{\bf z}\|)^{k_1+1+d+\epsilon}\Big)\nonumber\\
 & & \times
   \|{\bf x}\|^{\min(\gamma-k_1-d,0)} (1+\|{\bf x}\|)^{\max(\gamma-k_1-d,0)-1}
 \end{eqnarray}
for all ${\bf x}\in {\mathbb R}^d$,  and
  \begin{eqnarray}\label{iomegap.tm1.eq3}
 |U_{\Omega,p}f({\bf x})-U_{\Omega,p}f({\bf x}')|  & \le &    C \Big(\sup_{{\bf z}\in \RR^d} |f({\bf z})| (1+\|{\bf z}\|)^{k_1+1+d+\epsilon}\Big)\|{\bf x}-{\bf x}'\|^\delta \nonumber\\
 & & \times
   \|{\bf x}\|^{\min(\gamma-k_1-d-\delta,0)} (1+\|{\bf x}\|)^{\max(\gamma-k_1-d-\delta,0)-1}
 \end{eqnarray}
 for all ${\bf x}, {\bf x}'\in {\mathbb R}^d$ with $\|{\bf x}-{\bf x}'\|\le \|{\bf x}\|/4$, where $\delta<\min (|\gamma-k_1-d|, \epsilon)$.
 \end{Tm}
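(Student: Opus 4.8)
The plan is to transport everything to the spatial domain. First I record a kernel representation of $U_{\Omega,p}$. Set $K:=\mathcal F^{-1}\Omega$; by \cite[Theorems 7.1.16 and 7.1.18]{hormanderbook} (this is where $\gamma\notin\ZZ$ enters) $K$ is a function in $C^\infty(\RR^d\setminus\{{\bf 0}\})$, homogeneous of degree $\gamma-d$, so $|\partial^{\bf j}K({\bf x})|\le C_{\bf j}\|{\bf x}\|^{\gamma-d-|{\bf j}|}$ for all ${\bf j}\in\ZZ_+^d$. Since $\gamma>0$ the degree $\gamma-d$ exceeds $-d$, hence $K\in L^1_{loc}$; together with the rapid decay of a Schwartz $f$ this makes the integral below absolutely convergent, and using $\partial^{\bf i}\hat f({\bf 0})=i^{|{\bf i}|}\int_{\RR^d}{\bf y}^{\bf i}f({\bf y})\,d{\bf y}$ and $\widehat{\partial^{\bf i}K}=(i\xi)^{\bf i}\Omega$ one checks that its Fourier transform equals $(\hat f-P)\Omega$ with $P$ the degree-$k_1$ Taylor polynomial of $\hat f$ at ${\bf 0}$, so that
\[ U_{\Omega,p}f({\bf x})=\int_{\RR^d}\Big(K({\bf x}-{\bf y})-\sum_{|{\bf i}|\le k_1}\frac{(-{\bf y})^{\bf i}}{{\bf i}!}(\partial^{\bf i}K)({\bf x})\Big)f({\bf y})\,d{\bf y}=:\int_{\RR^d}G({\bf x},{\bf y})f({\bf y})\,d{\bf y}. \]
Thus $G({\bf x},\cdot)$ is $K({\bf x}-\cdot)$ with its Taylor polynomial of degree $k_1$ at ${\bf 0}$ removed; this is the spatial-domain counterpart of the low-frequency subtraction in \eqref{newfractionalderivative.def}, and the cancellation it encodes is exactly what will produce a decay faster than that of $J_\Omega f$.

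Granting the representation, \eqref{iomegap.tm1.eq2} follows by splitting, for fixed ${\bf x}\ne{\bf 0}$, the ${\bf y}$-integral over $\{\|{\bf y}\|\le\|{\bf x}\|/2\}$, $\{\|{\bf x}\|/2<\|{\bf y}\|<2\|{\bf x}\|\}$ and $\{\|{\bf y}\|\ge2\|{\bf x}\|\}$. On the first region ${\bf x}-t{\bf y}$ stays at distance $\ge\|{\bf x}\|/2$ from the origin, so Taylor's formula with integral remainder gives $|G({\bf x},{\bf y})|\le C\|{\bf y}\|^{k_1+1}\|{\bf x}\|^{\gamma-d-k_1-1}$; integrating against $|f({\bf y})|\le CA_f(1+\|{\bf y}\|)^{-k_1-1-d-\epsilon}$ with $A_f:=\sup_{\bf z}|f({\bf z})|(1+\|{\bf z}\|)^{k_1+1+d+\epsilon}$, and bounding the inner integral also by a volume times $\|f\|_\infty$, yields $CA_f\min(\|{\bf x}\|^{\gamma},\|{\bf x}\|^{\gamma-d-k_1-1})$. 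On the middle region I bound the two pieces of $G$ separately by $\|{\bf x}-{\bf y}\|^{\gamma-d}$ and $C\|{\bf x}\|^{\gamma-d}$ (using $\|{\bf y}\|\sim\|{\bf x}\|$ and homogeneity of $\partial^{\bf i}K$); the singularity $\|{\bf x}-{\bf y}\|^{\gamma-d}$ is integrable because $\gamma>0$, and $|f({\bf y})|\le CA_f(1+\|{\bf x}\|)^{-k_1-1-d-\epsilon}$ supplies the remaining decay. On the last region $\|{\bf x}-{\bf y}\|\sim\|{\bf y}\|\ge\|{\bf x}\|$, so $|G({\bf x},{\bf y})|\le C\big(\|{\bf y}\|^{\gamma-d}+\|{\bf y}\|^{k_1}\|{\bf x}\|^{\gamma-d-k_1}\big)$, and integrating against the rapidly decaying $f$ again produces terms dominated by $\|{\bf x}\|^{\gamma-d-k_1-1}$. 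Adding the three contributions, treating $\|{\bf x}\|\le1$ and $\|{\bf x}\|\ge1$ separately (as ${\bf x}\to{\bf 0}$ the only blow-up comes from the top term $|\partial^{\bf i}K({\bf x})|\le C\|{\bf x}\|^{\gamma-d-k_1}$, $|{\bf i}|=k_1$, of the subtracted polynomial) and using $\min(a,0)+\max(a,0)=a$ with $a=\gamma-k_1-d$ gives exactly \eqref{iomegap.tm1.eq2}; non-integrality of $\gamma$ is what excludes logarithmic corrections at the borderline exponents $\gamma-d-|{\bf i}|=-d$.

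For \eqref{iomegap.tm1.eq3} I write $U_{\Omega,p}f({\bf x})-U_{\Omega,p}f({\bf x}')=\int_{\RR^d}[G({\bf x},{\bf y})-G({\bf x}',{\bf y})]f({\bf y})\,d{\bf y}$; since $\|{\bf x}-{\bf x}'\|\le\|{\bf x}\|/4$, the points ${\bf x},{\bf x}'$ are comparable and stay at distance $\ge\tfrac34\|{\bf x}\|$ from the origin. The elementary device is: if $L$ is homogeneous of degree $a$ and $\|{\bf u}-{\bf u}'\|\le\tfrac12\min(\|{\bf u}\|,\|{\bf u}'\|)$, then
\[ |L({\bf u})-L({\bf u}')|\le C\,\|{\bf u}-{\bf u}'\|^{\delta}\,\|{\bf u}\|^{a-\delta}\qquad(0\le\delta\le1), \]
obtained by interpolating the mean-value bound $\|{\bf u}-{\bf u}'\|\,\|{\bf u}\|^{a-1}$ against the trivial bound $\|{\bf u}\|^{a}$. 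Applied to the various $\partial^{\bf i}K$ this handles the difference of the Taylor remainders on $\{\|{\bf y}\|\le\|{\bf x}\|/2\}$ and the difference of the subtracted polynomials everywhere. On $\{\|{\bf x}\|/2<\|{\bf y}\|<2\|{\bf x}\|\}$ one splits further according to whether $\|{\bf x}-{\bf y}\|\le2\|{\bf x}-{\bf x}'\|$ — where $|K({\bf x}-{\bf y})|$ and $|K({\bf x}'-{\bf y})|$ are controlled separately and the $\gamma>0$-integrable singularity is integrated over a ball of radius $\sim\|{\bf x}-{\bf x}'\|$, producing the factor $\|{\bf x}-{\bf x}'\|^{\gamma}$ — or $\|{\bf x}-{\bf y}\|>2\|{\bf x}-{\bf x}'\|$, where $[{\bf x}-{\bf y},{\bf x}'-{\bf y}]$ avoids the origin and the device applies to $K$ itself. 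The restrictions $\delta<\min(|\gamma-k_1-d|,\epsilon)$ and $\delta<1$ are precisely what is needed to absorb the excess exponents (and the $\|{\bf x}-{\bf x}'\|^{\gamma}$ factor) into the stated form via $\|{\bf x}-{\bf x}'\|\le\|{\bf x}\|/4$.

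I expect the main obstacle to be the bookkeeping in \eqref{iomegap.tm1.eq3}, specifically the near-diagonal region $\|{\bf y}\|\sim\|{\bf x}\|$: one must balance the $\|{\bf x}-{\bf x}'\|^{\gamma}$ coming from the genuinely singular part of $K$ against the target exponent $\delta$, and simultaneously track the $\|{\bf x}\|$-dependence across $\{\|{\bf x}\|\le1\}$ versus $\{\|{\bf x}\|\ge1\}$, the sign of $\gamma-k_1-d$, and the position of $\delta$ relative to $\gamma$, so that everything collapses into $\|{\bf x}\|^{\min(\gamma-k_1-d-\delta,0)}(1+\|{\bf x}\|)^{\max(\gamma-k_1-d-\delta,0)-1}$. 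By contrast, the legitimacy of the kernel representation (absolute convergence and the Fourier-transform identity) is routine given \cite{hormanderbook} and the standing assumption $\gamma\in[d(1-1/p),\infty)\setminus\ZZ$, which guarantees $k_1\ge0$ and $\gamma-d-k_1>-d$.
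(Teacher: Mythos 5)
Your kernel formula for $U_{\Omega,p}f$ is correct, and it is not really a different route from the paper's: unfolding the paper's Case~I representation $U_{\Omega,p}f=\sum_{|{\bf j}|=k_1+1}\frac1{{\bf j}!}K_{\bf j}*g_{\bf j}$ from \eqref{iomegap.tm1.pf.eq4} by substituting ${\bf y}\mapsto t{\bf z}$ inside $g_{\bf j}$ and re-summing the resulting Taylor remainder gives precisely your $G({\bf x},{\bf z})=K({\bf x}-{\bf z})-\sum_{|{\bf i}|\le k_1}\frac{(-{\bf z})^{\bf i}}{{\bf i}!}\partial^{\bf i}K({\bf x})$, and similarly for Cases~II and~III. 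What you have done is collapse the paper's three-case split into one spatial-domain formula; your three-region decomposition for \eqref{iomegap.tm1.eq2} then parallels the paper's and those estimates check out.

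The sketch for \eqref{iomegap.tm1.eq3}, however, does not close at exactly the spot you flag as the ``main obstacle.'' On the near-diagonal sub-region $\{\|{\bf y}\|\sim\|{\bf x}\|,\ \|{\bf x}-{\bf y}\|\le2\|{\bf x}-{\bf x}'\|\}$, bounding $|K({\bf x}-{\bf y})|$ and $|K({\bf x}'-{\bf y})|$ separately and integrating the $\|\cdot\|^{\gamma-d}$ singularity over a ball of radius $\sim\|{\bf x}-{\bf x}'\|$ gives the factor $\|{\bf x}-{\bf x}'\|^{\gamma}$ (and, when $\gamma<\delta$, the outer middle region produces the same power via $\int_{2\|{\bf x}-{\bf x}'\|}^{C\|{\bf x}\|}r^{\gamma-1-\delta}\,dr\sim\|{\bf x}-{\bf x}'\|^{\gamma-\delta}$). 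The absorption you propose, $\|{\bf x}-{\bf x}'\|^{\gamma}=\|{\bf x}-{\bf x}'\|^{\delta}\|{\bf x}-{\bf x}'\|^{\gamma-\delta}\le C\|{\bf x}-{\bf x}'\|^{\delta}\|{\bf x}\|^{\gamma-\delta}$, requires $\gamma\ge\delta$, but the stated constraint $\delta<\min(|\gamma-k_1-d|,\epsilon)$ does not imply $\delta\le\gamma-k_1$ when $k_1=0$ and $\gamma<d/2$. Concretely, take $d=1$, $p=1$, $\gamma=0.3$, $\epsilon=0.8$: then $k_1=0$, $|\gamma-k_1-d|=0.7$, so $\delta=0.5$ is admitted; yet for $f=\mathbf 1_{[0,1]}$ a direct computation of $U_{\Omega,1}f(x)=c\int_0^1(|x-y|^{-0.7}-|x|^{-0.7})\,dy$ gives $U_{\Omega,1}f(1+h)-U_{\Omega,1}f(1)=-\frac{c}{0.3}h^{0.3}+O(h)$ as $h\to0^+$, which is not $O(h^{0.5})$ as \eqref{iomegap.tm1.eq3} would assert. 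So either the hypothesis needs the extra clause $\delta<\gamma-k_1$ (and in the $\gamma>k_1+1$ regime $\delta<\gamma-k_1-1$), or the near-diagonal region must be handled using cancellation between $K({\bf x}-{\bf y})$ and $K({\bf x}'-{\bf y})$, which your separate-bounds plan discards. I should add that the paper's own inequality chains in \eqref{iomegap.tm1.pf.eq5+} and \eqref{iomegap.tm1.pf.eq10+} assert the final power of $\|{\bf x}-{\bf x}'\|$ after the middle-region integrand without carrying out this conversion, so you have inherited, not created, the difficulty; but any careful write-up has to face it.
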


As an easy consequence of  Theorem \ref{iomegap.tm1}, we have

\begin{Cr}\label{iomegap.cor1}
 Let $1\le p\le \infty, \gamma\ge d(1-1/p)$,
   and  $\Omega\in C^\infty(\RR^d\backslash \{\bf 0\})$ be
 a  homogeneous function of degree $\gamma$.
 If both $\gamma$ and  $\gamma-d(1-1/p)$ are not nonnegative integers, then
  $U_{\Omega,p} f$ is H\"older continuous on $\RR^d\backslash \{\bf 0\}$ and belong to $L^p$ for any Schwartz function $f$.
\end{Cr}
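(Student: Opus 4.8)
The plan is to read both assertions off Theorem \ref{iomegap.tm1}; once that theorem is available, the corollary is only a matter of bookkeeping with the exponents in \eqref{iomegap.tm1.eq2} and \eqref{iomegap.tm1.eq3}. First I would note that a Schwartz function $f$ decays faster than every power, so it satisfies hypothesis \eqref{iomegap.tm1.eq1} and the quantity $\sup_{{\bf z}\in\RR^d}|f({\bf z})|(1+\|{\bf z}\|)^{k_1+1+d+\epsilon}$ is finite for each $\epsilon\in(0,1)$; hence Theorem \ref{iomegap.tm1} applies. The crucial preliminary observation is purely arithmetic: because $\gamma\ge d(1-1/p)$ while $\gamma-d(1-1/p)\notin\ZZ_+$, we actually have $\gamma>d(1-1/p)$ and the fractional part $\gamma-d(1-1/p)-k_1$ lies in $(0,1)$, i.e.
\begin{equation*}
-d/p<\gamma-k_1-d<1-d/p .
\end{equation*}
Moreover, since $\gamma\notin\ZZ_+$ and $k_1,d\in\ZZ_+$, one has $\gamma-k_1-d\ne0$, so $\min(|\gamma-k_1-d|,\epsilon)>0$ and we may fix some $\delta\in(0,\min(|\gamma-k_1-d|,\epsilon))$.

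For the $L^p$-membership I would split $\RR^d$ into $\{\|{\bf x}\|\le1\}$ and $\{\|{\bf x}\|>1\}$ and use the pointwise bound \eqref{iomegap.tm1.eq2}. On the unit ball the bound is comparable to $\|{\bf x}\|^{\min(\gamma-k_1-d,0)}$, whose exponent is either $0$ or a number in $(-d/p,0)$; in either case the $p$-th power is integrable near the origin (and the bound is plainly bounded when $p=\infty$). Outside the unit ball the bound is comparable to $\|{\bf x}\|^{\min(\gamma-k_1-d,0)+\max(\gamma-k_1-d,0)-1}=\|{\bf x}\|^{\gamma-k_1-d-1}$, and $\gamma-k_1-d-1<-d/p$, so the $p$-th power is integrable at infinity (and the bound is bounded when $p=\infty$, the exponent then being negative). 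Adding the two contributions shows $U_{\Omega,p}f\in L^p$.

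For Hölder continuity on $\RR^d\setminus\{{\bf 0}\}$ I would argue locally. Fix a compact $K\subset\RR^d\setminus\{{\bf 0}\}$ and choose $0<a\le\|{\bf x}\|\le b$ for ${\bf x}\in K$. When ${\bf x},{\bf x}'\in K$ and $\|{\bf x}-{\bf x}'\|\le a/4$ (hence $\le\|{\bf x}\|/4$), estimate \eqref{iomegap.tm1.eq3} applies, and since the factor $\|{\bf x}\|^{\min(\gamma-k_1-d-\delta,0)}(1+\|{\bf x}\|)^{\max(\gamma-k_1-d-\delta,0)-1}$ stays between two positive constants on $K$, it yields $|U_{\Omega,p}f({\bf x})-U_{\Omega,p}f({\bf x}')|\le C_K\|{\bf x}-{\bf x}'\|^\delta$. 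When ${\bf x},{\bf x}'\in K$ and $\|{\bf x}-{\bf x}'\|>a/4$, I would bound the difference crudely by $2\sup_K|U_{\Omega,p}f|\le 2(4/a)^\delta(\sup_K|U_{\Omega,p}f|)\|{\bf x}-{\bf x}'\|^\delta$, the supremum being finite by \eqref{iomegap.tm1.eq2}. Thus $U_{\Omega,p}f$ is $\delta$-Hölder on $K$, and since $K$ was arbitrary it is (locally) Hölder continuous on $\RR^d\setminus\{{\bf 0}\}$.

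I do not anticipate a real obstacle: the analytic content is entirely contained in Theorem \ref{iomegap.tm1}. The two points that need a moment's care are (i) verifying that the non-integrality of $\gamma-d(1-1/p)$ turns the critical exponents $-d/p$ into the strict inequalities that make the endpoint $L^p$ integrals converge, and that the non-integrality of $\gamma$ forces $\gamma-k_1-d\ne0$ so that a positive Hölder exponent $\delta$ exists; and (ii) the standard patching of the local Hölder estimate \eqref{iomegap.tm1.eq3} — which is valid only for $\|{\bf x}-{\bf x}'\|\le\|{\bf x}\|/4$ — with the trivial estimate when the two points are far apart.
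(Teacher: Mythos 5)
Your proof is correct and takes the same route the paper intends: the paper simply states the corollary as "an easy consequence of Theorem~\ref{iomegap.tm1}," and you have supplied exactly the bookkeeping that phrase presupposes — verifying that Schwartz functions satisfy the hypothesis, using the non-integrality of $\gamma$ and of $\gamma-d(1-1/p)$ to get the strict inequalities $-d/p<\gamma-k_1-d<1-d/p$ and $\gamma-k_1-d\ne0$, and then reading $L^p$-membership and local Hölder continuity directly off \eqref{iomegap.tm1.eq2} and \eqref{iomegap.tm1.eq3}. The patching step for points with $\|{\bf x}-{\bf x}'\|>\|{\bf x}\|/4$ is a standard but necessary remark, and you handle it correctly.
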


\begin{proof}[Proof of Theorem \ref{iomegap.tm1}] We investigate three cases to establish the estimates in  \eqref{iomegap.tm1.eq2} and \eqref{iomegap.tm1.eq3}.

{\em Case I: $k_1+1-\gamma<0$}.\quad
Set $h_\xi(t)=\hat f(t\xi)$. Applying Taylor's expansion to the function $h_{\xi}$ gives
\begin{eqnarray}\label{iomegap.tm1.pf.eq1}
\hat f(\xi) & = &  h_\xi(1)=\sum_{s=0}^{k_1} \frac{h^{(s)}(0)}{s!}+\frac{1}{k_1!}\int_0^1 h_\xi^{(k_1+1)}(t) (1-t)^{k_1} dt\nonumber\\
&= & \sum_{|{\bf i}|\le k_1} \frac{\partial^{\bf i} \hat f({\bf 0})}{{\bf i}!} \xi^{\bf i}+ (k_1+1)
\sum_{|{\bf j}|=k_1+1} \frac{\xi^{\bf j}}{{\bf j}!} \int_0^1 \partial^{\bf j} \hat f(t\xi) (1-t)^{k_1} dt.
\end{eqnarray}
Hence
\begin{equation}\label{iomegap.tm1.pf.eq2}
\Big(\hat f(\xi)-\sum_{|{\bf i}|\le k_1} \frac{\partial^{\bf i} \hat f({\bf 0})}{{\bf i}!} \xi^{\bf i}\Big) \Omega(\xi)=
\sum_{|{\bf j}|=k_1+1} \frac{1}{{\bf j}!}\Omega_{\bf j}(\xi)  \widehat g_{\bf j}(\xi),
\end{equation}
where $\Omega_{\bf j}(\xi)=(i\xi)^{\bf j} \Omega(\xi)$ and
\begin{equation}\label{iomegap.tm1.pf.eq3}
g_{\bf j}({\bf x})=(k_1+1) \int_0^1 (1-t)^{k_1} (- {\bf x}/t)^{\bf j} f({\bf x}/t) t^{-d} dt\in L^1, \quad |{\bf j}|=k_1+1.
\end{equation}
Taking inverse Fourier transform at both sides of the equation \eqref{iomegap.tm1.pf.eq2} yields
\begin{equation}\label{iomegap.tm1.pf.eq4}
U_{\Omega, p} f({\bf x})=\sum_{|{\bf j}|=k_1+1}\frac{1}{{\bf j}!}
\int_{\RR^d} K_{\bf j}({\bf x}-{\bf y}) g_{\bf j}({\bf y}) d{\bf y}.
\end{equation}
where $K_{\bf j}, |{\bf j}|=k_1+1$, is the inverse Fourier transform of $\Omega_{\bf j}$.
Therefore
\begin{eqnarray}\label{iomegap.tm1.pf.eq5}
|U_{\Omega, p} f({\bf x})| &\le &  C \int_0^1\int_{\RR^d} \|{\bf x}-{\bf y}\|^{\gamma-d-k_1-1} \|{\bf y}/t\|^{k_1+1} |f({\bf y}/t)|  t^{-d} d{\bf y} dt\nonumber\\
& = & C \int_0^1\int_{\RR^d} \|{\bf x}-t {\bf y}\|^{\gamma-d-k_1-1} \|{\bf y}\|^{k_1+1} |f({\bf y})|   d{\bf y} dt\nonumber\\
&\le & C \Big (\sup_{{\bf z}\in \RR^d} |f({\bf z})| (1+\|{\bf z}\|)^{k_1+1+d+\epsilon}\Big)  \int_0^1 (t+\|{\bf x}\|)^{\gamma-d-k_1-1} dt\nonumber\\
&\le  & C \Big (\sup_{{\bf z}\in \RR^d} |f({\bf z})| (1+\|{\bf z}\|)^{k_1+1+d+\epsilon}\Big)
\nonumber\\
& & \times  \|{\bf x}\|^{\min(\gamma-d-k_1,0)}
(1+\|{\bf x}\|)^{\max(\gamma-d-k_1,0)-1},
\end{eqnarray}
where
the first inequality holds because
$K_{\bf j}\in C^\infty(\RR^d\backslash \{{\bf 0}\})$ is homogeneous of degree $\gamma-d-k_1-1\in (-d, 0)$ \cite[Theorems 7.1.16 and 7.1.18]{hormanderbook}, and the second inequality follows from \eqref{generalizedrieszomega1.tm.pf.eq3}.
Similarly,
\begin{eqnarray}\label{iomegap.tm1.pf.eq5+}
& & |U_{\Omega, p} f({\bf x})-U_{\Omega, p} f({\bf x}')|\nonumber\\
 &\le &  C \sum_{|{\bf j}|=k_1+1}
\int_{\|{\bf x}-{\bf y}\|\ge 2\|{\bf x}-{\bf x}'\|} \|{\bf x}-{\bf x}'\|^\delta
\|{\bf x}-{\bf y}\|^{\gamma-d-k_1-1-\delta} |g_{\bf j}({\bf y})| d{\bf y}\nonumber\\
& & + C \sum_{|{\bf j}|=k_1+1}
\int_{\|{\bf x}-{\bf y}\|\le 2\|{\bf x}-{\bf x}'\|} \|{\bf x}-{\bf y}\|^{\gamma-d-k_1-1} |g_{\bf j}({\bf y})| d{\bf y}\nonumber\\
& & + C \sum_{|{\bf j}|=k_1+1}
\int_{\|{\bf x}-{\bf y}\|\le 2\|{\bf x}-{\bf x}'\|} \|{\bf x}'-{\bf y}\|^{\gamma-d-k_1-1} |g_{\bf j}({\bf y})| d{\bf y}\nonumber\\
&\le  & C \Big (\sup_{{\bf z}\in \RR^d} |f({\bf z})| (1+\|{\bf z}\|)^{k_1+1+d+\epsilon}\Big)\|{\bf x}-{\bf x}'\|^\delta\nonumber\\
& & \times  \|{\bf x}\|^{\min(\gamma-d-k_1-\delta,0)}
(1+\|{\bf x}\|)^{\max(\gamma-d-k_1-\delta,0)-1}
\end{eqnarray}
for all ${\bf x}, {\bf x}'\in \RR^d$ with $\|{\bf x}-{\bf x}'\|\le \|{\bf x}\|/4$, where $\delta<\min(\epsilon, |\gamma-k_1-d|)$.
Then the desired estimate \eqref{iomegap.tm1.eq2}  and \eqref{iomegap.tm1.eq3} follow   from
\eqref{iomegap.tm1.pf.eq5} and \eqref{iomegap.tm1.pf.eq5+}
for  the case  $k_1+1-\gamma<0$.

{\em Case II:  $k_1+1-\gamma>0$ and $k_1\ge 1$. }\quad  
Applying Taylor's expansion to the function $h_{\xi}(t)=\hat f(t\xi)$, we have
\begin{equation*}
\label{iomegap.tm1.pf.eq7} \hat f(\xi) -\sum_{|{\bf i}|\le k_1} \frac{\partial^{\bf i} \hat f({\bf 0})}{{\bf i}!} \xi^{\bf i}
 =  k_1
\sum_{|{\bf j}|=k_1} \frac{\xi^{\bf j}}{{\bf j}!} \int_0^1 \big(\partial^{\bf j} \hat f(t\xi)-\partial^{{\bf j}}\hat f({\bf 0})\big) (1-t)^{k_1-1} dt.
\end{equation*}
Multiplying by $\Omega(\xi)$  both sides of the above equation and then taking the inverse Fourier transform, we obtain
\begin{equation}\label{iomegap.tm1.pf.eq8}
U_{\Omega, p} f({\bf x})= \sum_{|{\bf j}|=k_1}\frac{1}{{\bf j}!}
\Big(\int_{\RR^d} K_{\bf j}({\bf x}-{\bf y}) g_{\bf j}({\bf y}) d{\bf y}-K_{\bf j}({\bf x})\int_{\RR^d} g_{\bf j}({\bf y}) d{\bf y}\Big),
\end{equation}
where
\begin{equation}\label{iomegap.tm1.pf.eq8b}
g_{\bf j}({\bf x})=k_1 \int_0^1 (1-t)^{k_1-1} (- {\bf x}/t)^{\bf j} f({\bf x}/t) t^{-d} dt\in L^1, \quad |{\bf j}|=k_1.
\end{equation}
Recalling that $K_{\bf j}\in C^\infty(\RR^d\backslash \{{\bf 0}\}), |{\bf j}|=k_1$ are homogeneous of degree $\gamma-d-k_1\in (-d, 0)$,
\begin{equation}\label{iomegap.tm1.pf.eq9}
|\partial^{\bf i} K_{\bf j}({\bf x})|\le C \|{\bf x}\|^{\gamma-d-k_1-|{\bf j}|}, \quad |{\bf i}|\le 1.
\end{equation}
Combining \eqref{generalizedrieszomega1.tm.pf.eq3}, \eqref{iomegap.tm1.pf.eq8}, \eqref{iomegap.tm1.pf.eq8b} and \eqref{iomegap.tm1.pf.eq9}, we get
\begin{eqnarray} \label{iomegap.tm1.pf.eq10}
|U_{\Omega, p} f({\bf x})|
\!\!& \le &\!\! C \sum_{|{\bf j}|=k_1} \int_0^1 \int_{\RR^d} |K_{\bf j}({\bf x}-t{\bf y})-K_{\bf j}({\bf x})| \|{\bf y}\|^{k_1} |f({\bf y})| d{\bf y}\nonumber\\
\!\!& \le &\!\! C \Big(\sup_{{\bf z}\in \RR^d} |f({\bf z})| (1+\|{\bf z}\|)^{k_1+d+1+\epsilon}\Big)\nonumber\\
\!\! & & \!\!\times \Big\{\int_{0}^1 \int_{\|{\bf y}\|\le \|{\bf x}\|/2} t\|{\bf y}\| \|{\bf x}\|^{\gamma-d-k_1-1} (1+\|{\bf y}\|)^{-d-1-\epsilon} d{\bf y} dt\nonumber\\
\!\! & &\!\! + (1+\|{\bf x}\|)^{-1}
 \int_0^1 \int_{\|{\bf y}\|\ge \|{\bf x}\|/2} \|{\bf x}-t{\bf y}\|^{\gamma-d-k_1} (1+\|{\bf y}\|)^{-d-\epsilon} d{\bf y} dt\nonumber\\
\!\! & &\!\! + \|{\bf x}\|^{\gamma-d-k_1}\int_0^1
  \int_{\|{\bf y}\|\ge \|{\bf x}\|/2}  (1+\|{\bf y}\|)^{-d-1-\epsilon} d{\bf y} dt\Big\}\nonumber\\
\!\! & \le &  \!\! C\Big (\sup_{{\bf z}\in \RR^d} |f({\bf z})| (1+\|{\bf z}\|)^{k_1+d+1+\epsilon}\Big)\nonumber\\
\!\!& & \!\! \times
\|{\bf x}\|^{\min(\gamma-k_1-d, 0)} (1+\|{\bf x}\|)^{\max(\gamma-k_1-d, 0)-1},
\end{eqnarray}
and
\begin{eqnarray} \label{iomegap.tm1.pf.eq10+}
& & |U_{\Omega, p} f({\bf x})-U_{\Omega, p} f({\bf x}') |\nonumber\\
\!\!& \le &\!\! C \sum_{|{\bf j}|=k_1} \int_0^1 \Big(\int_{\|t{\bf y}\|\le \|{\bf x}\|/4}+\int_{\|t{\bf y}\|\ge 4\|{\bf x}\|}
+\int_{\|{\bf x}\|/4\le \|t{\bf y}\|\le 4\|{\bf x}\|}
\Big) \nonumber\\
& & \quad |K_{\bf j}({\bf x}-t{\bf y})-K_{\bf j}({\bf x})-
K_{\bf j}({\bf x}'-t{\bf y})+K_{\bf j}({\bf x}')| \|{\bf y}\|^{k_1} |f({\bf y})| d{\bf y}\nonumber\\
\!\!& \le &\!\! C \Big(\sup_{{\bf z}\in \RR^d} |f({\bf z})| (1+\|{\bf z}\|)^{k_1+d+1+\epsilon}\Big)\sum_{|{\bf j}|=k_1}
\Big\{ \|{\bf x}-{\bf x}'\|^\delta\nonumber\\
\!\! & & \!\!\times
\int_{0}^1 \int_{\|t{\bf y}\|\le \|{\bf x}\|/4} t\|{\bf y}\| \|{\bf x}\|^{\gamma-d-k_1-1-\delta} (1+\|{\bf y}\|)^{-d-1-\epsilon} d{\bf y} dt + \|{\bf x}-{\bf x}'\|^\delta \nonumber\\
\!\! & &\!\! \times
\int_0^1 \int_{t\|{\bf y}\|\ge 4\|{\bf x}\|}\big(\|{\bf x}\|^{\gamma-k_1-d-\delta} +\|{\bf y}\|^{\gamma-k_1-d-\delta}\big)
(1+\|{\bf y}\|)^{-d-1-\epsilon} d{\bf y} dt\nonumber\\
\!\! & &\!\! +
 \int_0^1 \int_{\|{\bf x}\|/4\le \|t{\bf y}\|\le 4 \|{\bf x}\|}
 \Big( | K_{\bf j}({\bf x}-t{\bf y})- K_{\bf j}({\bf x}'-t{\bf y})|+|K_{\bf j}({\bf x})-K_{\bf j}({\bf x})|\Big)\nonumber\\
  & & (1+\|{\bf x}\|/t)^{-d-1-\epsilon} d{\bf y} dt\Big\}\nonumber\\
\qquad \!\! & \le &  \!\! C\Big (\sup_{{\bf z}\in \RR^d} |f({\bf z})| (1+\|{\bf z}\|)^{k_1+d+1+\epsilon}\Big)\|{\bf x}-{\bf x}'\|^\delta
\|{\bf x}\|^{\gamma-k_1-d-\delta} (1+\|{\bf x}\|)^{-1}.
\end{eqnarray}
Then the desired estimates \eqref{iomegap.tm1.eq2} and \eqref{iomegap.tm1.eq3} are proved  in the case that  $k_1+1-\gamma>0$ and $k_1\ge 1$.

{\em Case III:  $k_1+1-\gamma>0$ and $k_1=0$.}\quad  In this case,  $\gamma\in (0, 1)$ and
\begin{equation}
U_{\Omega,p} f({\bf x})=\int_{\RR^d}\big( K({\bf x}-{\bf y})-K({\bf x})\big) f({\bf y}) d{\bf y}
\end{equation}
where $K$ is the inverse Fourier transform of $\Omega(\xi)$. Then, by applying the argument used in establishing \eqref{iomegap.tm1.pf.eq10}, we have
\begin{eqnarray}
|U_{\Omega, p} f({\bf x})|\!\! & \le  & \!\! C \Big(\sup_{{\bf z}\in \RR^d} |f({\bf z})| (1+\|{\bf z}\|)^{d+1+\epsilon}\Big)\nonumber\\
\!\! & & \!\!\times \Big\{ \int_{\|{\bf y}\|\le \|{\bf x}\|/2} t\|{\bf y}\| \|{\bf x}\|^{\gamma-d-1} (1+\|{\bf y}\|)^{-d-1-\epsilon} d{\bf y} \nonumber\\
\!\! & &\!\! + (1+\|{\bf x}\|)^{-1}
 \int_{\|{\bf y}\|\ge \|{\bf x}\|/2} \|{\bf x}-{\bf y}\|^{\gamma-d} (1+\|{\bf y}\|)^{-d-\epsilon} d{\bf y} \nonumber\\
\!\! & &\!\! + \|{\bf x}\|^{\gamma-d}
  \int_{\|{\bf y}\|\ge \|{\bf x}\|/2}  (1+\|{\bf y}\|)^{-d-1-\epsilon} d{\bf y} \Big\}\nonumber\\
\!\! &\le  & \!\! C\Big (\sup_{{\bf z}\in \RR^d} |f({\bf z})| (1+\|{\bf z}\|)^{d+1+\epsilon}\Big)
\|{\bf x}\|^{\gamma-d} (1+\|{\bf x}\|)^{-1},
\end{eqnarray}
and
\begin{eqnarray}
& & |U_{\Omega, p} f({\bf x})-U_{\Omega, p} f({\bf x}')|\nonumber\\
& \le &
\Big(\int_{\|{\bf y}\|\le \|{\bf x}\|/4}+\int_{\|{\bf y}\|\ge 4\|{\bf x}\|}
+\int_{\|{\bf x}\|/4\le \|{\bf y}\|\le 4\|{\bf x}\|}
\Big) \nonumber\\
& & \quad  |K({\bf x}-{\bf y})-K({\bf x})-K({\bf x'}-{\bf y})+K({\bf x}')| |f({\bf y}) | d{\bf y}\nonumber\\
& \le &
C\Big (\sup_{{\bf z}\in \RR^d} |f({\bf z})| (1+\|{\bf z}\|)^{k_1+d+1+\epsilon}\Big) \|{\bf x}-{\bf x}'\|^\delta \|{\bf x}\|^{\gamma-d-\delta} (1+\|{\bf x}\|)^{-1},
\end{eqnarray}
which yields the desired estimates \eqref{iomegap.tm1.eq2}  and \eqref{iomegap.tm1.eq3}  for $k_1+1-\gamma>0$ and $k_1=0$.
\end{proof}

\subsection{Unique dilation-invariant extension of the linear operator $i_\Omega$ with additional integrability in the spatial domain}
We now show that $U_{\Omega, p}$ is the only dilation-invariant extension of the linear operator $i_\Omega$ from the subspace ${\mathcal S}_\infty$
to the whole space ${\mathcal S}$ such that its image is contained in $L^p$.

  \begin{Tm}\label{time2.tm}
  Let $1\le p\le \infty$, $\gamma>0$ have the property that both $\gamma$ and $\gamma-d(1-1/p)$ are not nonnegative  integers,
   $\Omega\in C^\infty(\RR^d\backslash \{\bf 0\})$ be
 a nonzero homogeneous function of degree $-\gamma$, and
  the linear map $I$ from ${\mathcal S}$ to ${\mathcal S}'$
 be a homogeneous extension of the linear operator $i_\Omega$ on ${\mathcal S}_\infty$.
Then
 $If$ belongs to $L^p$  for any Schwartz function $f$ if and only if
  $I=U_{\Omega, p}$.

%
  \end{Tm}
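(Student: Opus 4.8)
The plan is to mimic the proof of Theorem~\ref{iomega2.tm}, replacing the translation-invariance hypothesis used there by the $L^p$-integrability of the image, and replacing the role of Lemma~\ref{homogeneous2.lm}(ii)--(iii) by the elementary fact that a nonzero homogeneous function is never $p$-integrable when $p<\infty$ and is essentially bounded only when its degree equals $0$.

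For the easy direction, if $I=U_{\Omega,p}$ then Theorem~\ref{fractionalderivative.tm1}(i),(iv) say that $U_{\Omega,p}$ is a dilation-invariant extension of $i_\Omega$ from ${\mathcal S}_\infty$ to ${\mathcal S}$, and Corollary~\ref{iomegap.cor1} — whose hypotheses are exactly that neither $\gamma$ nor $\gamma-d(1-1/p)$ is a nonnegative integer (with the remark preceding Theorem~\ref{iomegap.tm1} covering the range $\gamma<d(1-1/p)$) — gives $U_{\Omega,p}f\in L^p$ for every $f\in{\mathcal S}$.

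For the converse, I would set $K:=I-U_{\Omega,p}$. Since $I$ and $U_{\Omega,p}$ both restrict to $i_\Omega$ on ${\mathcal S}_\infty$, we have $Kf=0$ on ${\mathcal S}_\infty$, so running the argument of Lemma~\ref{homogeneous1.lm} (the inverse Fourier transform of $K^*g$ is supported at the origin for every Schwartz $g$) produces an integer $N$ and tempered distributions $H_{\bf i}$, $|{\bf i}|\le N$, with $Kf=\sum_{|{\bf i}|\le N}\frac{\partial^{\bf i}\hat f({\bf 0})}{{\bf i}!}H_{\bf i}$ for all $f\in{\mathcal S}$. Because $I$ extends $i_\Omega$ and is dilation-invariant while $\Omega$ is homogeneous of degree $-\gamma$, the dilation degree of $I$ is forced to be $-\gamma$, the same as that of $U_{\Omega,p}$ by Theorem~\ref{fractionalderivative.tm1}(i); hence $K$ satisfies \eqref{homogeneous2.lm.eq0}, and Lemma~\ref{homogeneous2.lm}(i) shows each $H_{\bf i}$ is homogeneous of degree $\alpha_{\bf i}:=\gamma-d-|{\bf i}|$. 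Next, evaluating on the functions $\psi_{\bf i}$ from the proof of Lemma~\ref{homogeneous2.lm} (for which $\partial^{{\bf i}'}\widehat{\psi_{\bf i}}({\bf 0})$ is $1$ if ${\bf i}'={\bf i}$ and $0$ otherwise, cf.\ \eqref{homogeneous2.lm.pf.eq8}) gives $H_{\bf i}={\bf i}!\,(I\psi_{\bf i}-U_{\Omega,p}\psi_{\bf i})$, which lies in $L^p$ because the first term does by hypothesis and the second by Corollary~\ref{iomegap.cor1}. Thus each $H_{\bf i}$ is simultaneously a homogeneous tempered distribution of degree $\alpha_{\bf i}$ and an $L^p$ function; testing the homogeneity relation against Schwartz functions gives $H_{\bf i}(t{\bf x})=t^{\alpha_{\bf i}}H_{\bf i}({\bf x})$ for a.e.\ ${\bf x}\in\RR^d$ and all $t>0$, so that, for $1\le p<\infty$, $\|H_{\bf i}\|_p^p=\bigl(\int_0^\infty r^{\alpha_{\bf i}p+d-1}\,dr\bigr)\int_{S^{d-1}}|H_{\bf i}(\xi')|^p\,d\sigma(\xi')$, with the radial integral divergent for every real value of $\alpha_{\bf i}$, which forces $H_{\bf i}=0$; while for $p=\infty$ a homogeneous essentially bounded function must have degree $0$, and $\alpha_{\bf i}=0$ would mean $\gamma=d+|{\bf i}|\in\ZZ_+$, contrary to hypothesis, so again $H_{\bf i}=0$. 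Hence $K=0$, i.e.\ $I=U_{\Omega,p}$.

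The step I expect to need the most care is this first reduction to $Kf=\sum_{|{\bf i}|\le N}\frac{\partial^{\bf i}\hat f({\bf 0})}{{\bf i}!}H_{\bf i}$: the proof of Lemma~\ref{homogeneous1.lm} uses continuity of the operator ${\mathcal S}\to{\mathcal S}'$ to pass to adjoints and to invoke that a distribution supported at the origin has finite order, so one must first make sure this applies — deriving the continuity of $I$, if it is not already part of the standing hypotheses, from its $L^p$-boundedness (via the closed graph theorem, ${\mathcal S}$ being Fréchet and $L^p$ a Banach space). The remaining ingredients — the degree computation for the $H_{\bf i}$ via Lemma~\ref{homogeneous2.lm}(i) and their annihilation by $L^p$-membership — are routine, with the non-integrality hypotheses entering only to exclude the borderline degree-zero case in the $p=\infty$ situation.
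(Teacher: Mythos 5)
Your proof is correct and follows essentially the same route as the paper: decompose $I-U_{\Omega,p}$ via the Lemma~\ref{homogeneous1.lm} argument, identify each $H_{\bf i}$ as $I\psi_{\bf i}-U_{\Omega,p}\psi_{\bf i}$ hence in $L^p$ by Corollary~\ref{iomegap.cor1}, deduce from Lemma~\ref{homogeneous2.lm}(i) that $H_{\bf i}$ is homogeneous of degree $\gamma-d-|{\bf i}|$, and conclude $H_{\bf i}=0$. The paper leaves the last step ("a homogeneous tempered distribution that lies in $L^p$ is zero") implicit, and you supply it; you also correctly flag that Lemma~\ref{homogeneous1.lm} needs continuity of $I$, which the theorem statement (unlike Theorem~\ref{iomega2.tm}) does not say explicitly but the paper's proof tacitly assumes.
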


\begin{proof}
The sufficiency follows from
 \eqref{fractionalderivative.tm1.pf.eq3} and
 Theorems \ref{generalizedriesz.tm}  and \ref{generalizedrieszomega1.tm} for $\gamma<d(1-1/p)$, and
 from
 \eqref{fractionalderivative.tm1.pf.eq3}, Theorem \ref{fractionalderivative.tm1}  and Corollary \ref{iomegap.cor1}
for $\gamma\ge d(1-1/p)$.
Now the necessity. By the assumption on the linear operator $I$ from ${\mathcal S}$ to ${\mathcal S}'$, similar to the argument used in  Lemma \ref{homogeneous1.lm}, we can find an integer $N$ and tempered distributions $H_{\bf i}, |{\bf i}|\le N$, such that
\begin{equation}\label{time2.tm.pf.eq3}
If=U_{\Omega, p} f+\sum_{|{\bf i}|\le N} \frac{\partial^{\bf i} \hat f({\bf 0})}{{\bf i}!} H_{\bf i} \quad {\rm for\ all} \ f\in {\mathcal S}.
\end{equation}
Replacing $f$ in \eqref{time2.tm.pf.eq3}  by $\psi_{\bf j}$ in \eqref{homogeneous2.lm.pf.eq7} and using \eqref{homogeneous2.lm.pf.eq8}
 gives that
$H_{\bf j}/{\bf j}!=I\psi_{\bf j}- U_{\Omega, p} \psi_{\bf j}$. Hence
\begin{equation}\label{time2.tm.pf.eq4} H_{\bf j}\in L^p\end{equation}
 by Corollary \ref{iomegap.cor1} and the assumption on the linear map $I$.
 By \eqref{time2.tm.pf.eq3}, Theorem \ref{fractionalderivative.tm1} and the assumption on the linear operator $I$,
  $(I-U_{\Omega,p})(\delta_t f)=t^{-\gamma} \delta_t ((I-U_{\Omega,p})f)$ for  all $f\in {\mathcal S}$.
Hence
$H_{\bf j}$ is  homogeneous of order $\gamma-d-|{\bf j}|$ by  Lemma \ref{homogeneous2.lm}.
This together with \eqref{time2.tm.pf.eq4} implies that $H_{\bf j}=0$ for all ${\bf j}\in \ZZ_+^d$ with $|{\bf j}|\le N$. The desired conclusion $I=U_{\Omega, p}$ then follows.
\end{proof}

\subsection{Unique dilation-invariant extension of the linear operator $i_\Omega$ with additional integrability in the Fourier domain}
In   this subsection, we characterize all those dilation-invariant extensions
$I$ of the linear operator $i_\Omega$ on the subspace ${\mathcal S}_\infty$ to the whole space ${\mathcal S}$
such that $\widehat{If}$ is $q$-integrable for any Schwartz function $f$.

 \begin{Tm}\label{iomega5.tm}
  Let $1\le q\le \infty, \gamma\in [d/q, \infty)\backslash \ZZ$ and  $\Omega\in C^\infty(\RR^d\backslash \{\bf 0\})$ be
 a nonzero homogeneous function of degree $-\gamma$, and
  the linear map $I$ from ${\mathcal S}$ to ${\mathcal S}'$
 be a dilation-invariant  extension of the linear operator $i_\Omega$ on ${\mathcal S}_\infty$.
 Then the following statements hold.

 \begin{itemize}

 \item[{(i)}] If $1\le q<\infty$, then the Fourier transform
  of $If$ belongs to $L^q$  for any Schwartz function $f$ if and only if
  $\gamma-d/q\not\in\ZZ_+$ and $I=U_{\Omega, q/(q-1)}$.

  \item[{(ii)}] If  $q=\infty$ and $\gamma\not\in \ZZ_+$, then
  the Fourier transform
  of $If$ belongs to $L^\infty$  for any Schwartz function $f$ if and only if
 $I=U_{\Omega, 1}$.

\item[{(iii)}] If $q=\infty$ and $\gamma\in \ZZ_+$, then
  the Fourier transform
  of $If$ belongs to $L^\infty$  for any Schwartz function $f$ if and only if
 \begin{equation}\label{iomega5.tm.eq1}
 \widehat{If}(\xi)  =  \widehat{U_{\Omega, 1}f}(\xi) +\sum_{|{\bf i}|=-\gamma}
  \frac{\partial^{\bf i}\hat f({\bf 0})}{ {\bf i}!} g_{\bf i}(\xi)
  \end{equation}
  for some bounded  homogeneous functions $g_{\bf i}, |{\bf i}|=-\gamma$, of degree $0$.
  \end{itemize}
  \end{Tm}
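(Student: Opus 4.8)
The plan is to locate an arbitrary dilation‑invariant extension $I$ of $i_\Omega$ relative to the $p$‑integrable Riesz potential $U_{\Omega,p}$ for the \emph{critical} exponent $p$ --- namely $p=q/(q-1)$ when $1\le q<\infty$ and $p=1$ when $q=\infty$, so that $d(1-1/p)=d/q$ --- and then to exploit the elementary fact that a nonzero function homogeneous of degree $\beta$ on $\RR^d\setminus\{{\bf 0}\}$ is never globally $L^q$ when $q<\infty$ (its $q$‑th power integrates to a positive multiple of $\int_0^\infty r^{\beta q+d-1}\,dr$, which diverges at one of the two endpoints), and for $q=\infty$ fails to be bounded unless $\beta=0$.

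\emph{Normal form.} Since $U_{\Omega,p}$ is itself a dilation‑invariant extension of $i_\Omega$ (Theorem~\ref{fractionalderivative.tm1}(i),(iv)), I would run the argument of Lemma~\ref{homogeneous1.lm}, exactly as in the proof of Theorem~\ref{time2.tm}, with $U_{\Omega,p}$ in place of $J_\Omega$, to obtain an integer $N$ and tempered distributions $H_{\bf i}$, $|{\bf i}|\le N$, with $If=U_{\Omega,p}f+\sum_{|{\bf i}|\le N}(\partial^{\bf i}\hat f({\bf 0})/{\bf i}!)\,H_{\bf i}$ for all $f\in{\mathcal S}$. Feeding $\delta_t f$ into this identity and using the dilation‑invariance of $I$ and of $U_{\Omega,p}$, Lemma~\ref{homogeneous2.lm}(i) forces each $H_{\bf i}$ to be homogeneous of degree $\gamma-d-|{\bf i}|$, equivalently $\widehat{H_{\bf i}}$ homogeneous of degree $|{\bf i}|-\gamma$; and evaluating at the functions $\psi_{\bf j}$ from the proof of Lemma~\ref{homogeneous2.lm} together with \eqref{homogeneous2.lm.pf.eq8} isolates $H_{\bf j}=({\bf j}!)\,(I\psi_{\bf j}-U_{\Omega,p}\psi_{\bf j})$.

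\emph{The generic case.} When $\gamma-d/q\notin\ZZ_+$, Corollary~\ref{fractionalderivative.cr1} together with the decay estimate \eqref{fractionalderivative.tm1.eq1} gives $\widehat{U_{\Omega,p}f}\in L^q$ for every $f$, which settles sufficiency in (i) and (ii). For necessity, $\widehat{If}\in L^q$ forces $\widehat{H_{\bf j}}=({\bf j}!)\,(\widehat{I\psi_{\bf j}}-\widehat{U_{\Omega,p}\psi_{\bf j}})\in L^q$, and the homogeneity dichotomy above makes every $H_{\bf j}$ vanish when $q<\infty$, hence $I=U_{\Omega,q/(q-1)}$; the same holds when $q=\infty$ and $\gamma\notin\ZZ_+$, since then every exponent $|{\bf j}|-\gamma$ is nonzero, so $I=U_{\Omega,1}$. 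When $q=\infty$ and $\gamma\in\ZZ_+$, only the exponents with $|{\bf i}|=\gamma$ survive and $\widehat{H_{\bf i}}=:g_{\bf i}$ may be any bounded homogeneous function of degree $0$, which is exactly \eqref{iomega5.tm.eq1}; conversely any right‑hand side of that form is the Fourier transform of a dilation‑invariant extension of $i_\Omega$ lying in $L^\infty$, because a degree‑$0$ homogeneous $\widehat{H_{\bf i}}$ corresponds under Lemma~\ref{homogeneous2.lm}(i) to an $H_{\bf i}$ homogeneous of degree $-d=\gamma-d-|{\bf i}|$.

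\emph{The integral case, and the main obstacle.} It remains to show, for (i), that $\widehat{If}\in L^q$ for all $f$ already forces $\gamma-d/q\notin\ZZ_+$. Suppose $\gamma-d/q=m\in\ZZ_+$. Then $U_{\Omega,q/(q-1)}$ subtracts the Taylor polynomial of degree exactly $m$, so for $\|\xi\|$ large $\widehat{U_{\Omega,q/(q-1)}f}(\xi)$ is, up to rapidly decaying terms, $-\sum_{|{\bf i}|=m}(\partial^{\bf i}\hat f({\bf 0})/{\bf i}!)\,\xi^{\bf i}\Omega(\xi)$, a homogeneous function of the \emph{critical} degree $m-\gamma=-d/q$, which fails $L^q$ near infinity since $\int_{\|\xi\|\ge1}\|\xi\|^{-d}\,d\xi=\infty$. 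Testing $\widehat{I\psi_{\bf i}}\in L^q$ for $|{\bf i}|=m$ therefore forces the angular profile of $-\xi^{\bf i}\Omega(\xi)+\widehat{H_{\bf i}}(\xi)$ to vanish, i.e.\ $\widehat{H_{\bf i}}=\xi^{\bf i}\Omega(\xi)$, which cancels the $|{\bf i}|=m$ terms and reduces $\widehat{If}$ to $(\hat f-P_{m-1}\hat f)\,\Omega$ plus corrections $\widehat{H_{\bf i}}$ with $|{\bf i}|\ne m$, where $P_{m-1}\hat f$ is the degree‑$(m-1)$ Taylor polynomial of $\hat f$ at ${\bf 0}$; the surviving corrections with $|{\bf i}|>m$ are not $L^q$ at infinity and those with $|{\bf i}|<m$ are not $L^q$ at the origin, so all of them vanish, whereupon $(\hat f-P_{m-1}\hat f)\,\Omega\sim\sum_{|{\bf i}|=m}(\partial^{\bf i}\hat f({\bf 0})/{\bf i}!)\,\xi^{\bf i}\Omega(\xi)$ near the origin is again of the critical degree $-d/q$ and is not $L^q$ there for generic $f$ --- a contradiction. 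I expect this last step, disentangling the behaviour at the origin and at infinity at the critical homogeneity degree and showing that no homogeneous correction can cure both ends simultaneously, to be the principal technical difficulty; everything else is bookkeeping resting on Lemmas~\ref{homogeneous1.lm}--\ref{homogeneous2.lm} and Corollary~\ref{fractionalderivative.cr1}.
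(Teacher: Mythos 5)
Your proposal is correct and follows essentially the same route as the paper: reduce $I$ to $U_{\Omega,p}$ plus finitely many dilation‑homogeneous correction terms via the Lemma~\ref{homogeneous1.lm}/\ref{homogeneous2.lm}(i) argument, use the fact that a nonzero homogeneous function on $\RR^d\setminus\{{\bf 0}\}$ cannot lie in $L^q$ (for $q<\infty$) or be bounded (for $q=\infty$, unless of degree $0$) to kill the corrections, and in the critical case $\gamma-d/q=m\in\ZZ_+$ test on $\psi_{\bf j}$ with $|{\bf j}|=m$ to force $\widehat{H_{\bf j}}=\xi^{\bf j}\Omega(\xi)/{\bf j}!$ at infinity and then derive a contradiction from non‑integrability at the origin. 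The paper runs the exact same obstruction argument (its \eqref{iomega5.tm.pf.eq1}--\eqref{iomega5.tm.pf.eq5}), only slightly more directly by working with $g_{\bf i}=\widehat{H_{\bf i}}\in K_1$ on the Fourier side and testing a single $\psi_{\bf j}$ rather than your generic $f$; the technical obstacle you flagged at the end is precisely the step the paper isolates, and your sketch of it is the correct mechanism.
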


\begin{proof}
{\em (i)} \quad
The sufficiency follows from Theorem \ref{fractionalderivative.tm1} and Corollary \ref{fractionalderivative.cr1}.
Now we prove the necessity.
As every $q$-integrable function belong to $K_1$, similar to the argument used in the proof of Lemma \ref{homogeneous1.lm},
 we can find functions $g_{\bf i}\in K_1, |{\bf i}|\le N$,
 such that
\begin{eqnarray}\label{iomega5.tm.pf.eq1}
\widehat {If}(\xi) & = &  {\mathcal F}(U_{\Omega, q/(q-1)}f)(\xi)
+ \sum_{|{\bf i}|\le N} \frac{\partial^{\bf i} \hat f({\bf 0})}{{\bf i}!}  g_{\bf i}(\xi).
\end{eqnarray}
Let $\psi_{\bf j}, j\in \ZZ_+^d$ be defined as in \eqref{homogeneous2.lm.pf.eq7}.
Replacing $f$ by $\psi_{\bf j}$ with $|{\bf j}|\le N$ and using  \eqref{homogeneous2.lm.pf.eq8} gives
\begin{eqnarray}\label{iomega5.tm.pf.eq2}
\widehat {I\psi_{\bf j}}(\xi) & = &  \Big(\widehat{\psi_{\bf j}}(\xi) -\sum_{|{\bf i}|\le -\gamma-d/q} \frac{\partial^{\bf i} \hat \psi_{\bf j} ({\bf 0})}{{\bf i}!} \xi^{\bf i}\Big)\Omega(\xi)+ g_{\bf j}(\xi)\nonumber\\
& = & \left\{\begin{array}{ll}
\frac{\xi^{\bf j}}{{\bf j}!} (\phi(\xi)-1) \Omega(\xi)+ g_{\bf j}(\xi) &\quad  {\rm if} \ |{\bf j}|\le \gamma-d/q,\\
\frac{\xi^{\bf j}}{{\bf j}!} \phi(\xi) \Omega(\xi)+g_{\bf j}(\xi) & \quad  {\rm if}\ |{\bf j}|> \gamma-d/q.
\end{array}\right.
\end{eqnarray}
Note that $\frac{\xi^{\bf j}}{{\bf j}!} (\phi(\xi)-1) \Omega(\xi)\in L^q$
when $|{\bf j}|< \gamma-d/q$, and $\frac{\xi^{\bf j}}{{\bf j}!} \phi(\xi) \Omega(\xi)\in L^p$
when $|{\bf j}|>\gamma-d/q$.
This, together with \eqref{iomega5.tm.pf.eq2} and  the assumption that
$\widehat{I\psi_{\bf j}}\in L^q$, proves that
\begin{equation} \label{iomega5.tm.pf.eq3} g_{\bf j}\in L^q\quad
{\rm for\ all} \  {\bf j}\in \ZZ_+^d\quad {\rm with} \
 \gamma-d/q\ne |{\bf j}|\le N.
 \end{equation}
  By the homogeneous property of
the linear map $I$, the functions $g_{\bf i}, |{\bf i}|\le N$, are homogeneous of degree $-\gamma+|{\bf i}|$, i.e.,
\begin{equation}\label{iomega5.tm.pf.eq4}
g_{\bf i}(t\xi)= t^{-\gamma+|{\bf i}|} g_{\bf i}(\xi) , \quad \ {\rm for \ all} \ t>0.
\end{equation}
Combining \eqref{iomega5.tm.pf.eq3} and \eqref{iomega5.tm.pf.eq4}
 proves that $g_{\bf j}=0$  for all  ${\bf j}\in \ZZ_+^d$  with $\gamma-d/q\ne |{\bf j}|\le N$, and
the desired conclusion $\widehat {If}(\xi)  =   {\mathcal F}({U_{\Omega, q/(q-1)} f})(\xi)$
for all $f\in {\mathcal S}$
when $\gamma-d/q\not\in \ZZ_+$.

Now it suffices to  prove that $\gamma-d/q\not\in \ZZ_+$. Suppose on the contrary that
$\gamma-d/q\in \ZZ_+$. Then $1<q<\infty$ as $\gamma\not\in \ZZ$.
By \eqref{iomega5.tm.pf.eq2} and the assumption on the linear map $I$, we have
$$\int_{\xi\not\in {\rm supp} \phi}
|g_{\bf j}(\xi)-\xi^{\bf j} \Omega(\xi)/{\bf j}!|^q d\xi=
\int_{\xi\not\in {\rm supp} \phi} |\widehat {I\psi_{\bf j}}(\xi)|^q d\xi <\infty$$
for all ${\bf j}\in \ZZ_+^d$ with $|{\bf j}|=\gamma-d/q$.
This, together with \eqref{iomega5.tm.pf.eq4} and the fact that the support ${\rm supp} \phi$ of the function $\phi$ is
a bounded set, implies that $g_{\bf j}(\xi)-\xi^{\bf j} \Omega(\xi)/{\bf j}!=0$ for  all ${\bf j}\in \ZZ_+^d$ with $|{\bf j}|=\gamma-d/q$.
By substituting the above equality for  $g_{\bf j}$ into \eqref{iomega5.tm.pf.eq2} we obtain
\begin{equation}\label{iomega5.tm.pf.eq5}
\widehat{I\psi_{\bf j}}(\xi)= \phi(\xi) \xi^{\bf j} \Omega(\xi)/{\bf j}!\end{equation}
for all ${\bf j}\in \ZZ_+^d$ with $|{\bf j}|=\gamma-d/q$.
This leads to a contradiction,  as $\widehat{I\psi_{\bf j}}(\xi)\in L^q$
by the assumption on the linear map $I$, and
$\phi(\xi) \xi^{\bf j} \Omega(\xi)/{\bf j}!\not\in L^q$ by direction computation.

\bigskip{\em (ii) } and {\em (iii)}\quad
The necessity is true by   \eqref{iomega5.tm.eq1} and Theorem \ref{fractionalderivative.tm1}, while
the sufficiency follows from \eqref{iomega5.tm.pf.eq1} -- \eqref{iomega5.tm.pf.eq4}.
\end{proof}

\subsection{Proof of Theorem \ref{integrablefractionalderivative.tm}}
The conclusions in Theorem \ref{integrablefractionalderivative.tm} follow easily from \eqref{generalized.tm.pf.eq1}, \eqref{fractionalderivativeomegap.def}, Theorem \ref{time2.tm}
and Corollary \ref{leftinversefractionalderivative.cr}.

\section{Sparse Stochastic Processes}\label{poisson.section}

In this section, we will prove Theorem \ref{generalizedpoisson.tm} and fully characterize
 the generalized random process $P_\gamma w$, which is
 a solution of the stochastic partial differential equation \eqref{randompde.def}. In particular, we provide  its characteristic functional and its pointwise evaluation.

\subsection{Proof of Theorem \ref{generalizedpoisson.tm}}

To prove Theorem \ref{generalizedpoisson.tm}, we recall the Levy continuity theorem, and  a fundamental theorem
about the characteristic functional of a generalized random process.

\begin{Lm}\label{levy.lm}{\rm  (\cite{probabilitybook})}\
Let $\xi_k, k\ge 1$, be a sequence of  random variables whose characteristic functions are denoted by $\mu_k(t)$. If
$\lim_{k\to \infty} \mu_k(t)=\mu_{\infty}(t)$ for some continuous function $\mu_\infty(t)$ on the real line, then
$\xi_k$ converges to a random variable $\xi_\infty$ in distribution whose characteristic function ${\bf E}(e^{-it \xi_\infty})$
  is $\mu_\infty(t)$.
\end{Lm}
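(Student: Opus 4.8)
The plan is to follow the classical route via tightness and Prokhorov's theorem. Write $P_k$ for the law of $\xi_k$ on $\RR$, so that $\mu_k(t)=\int_{\RR} e^{-itx}\,dP_k(x)$, and observe that $\mu_k(0)=1$ for every $k$, whence $\mu_\infty(0)=1$ as well. The first --- and only substantial --- step is to deduce tightness of the family $\{P_k\}_{k\ge1}$ from the pointwise convergence $\mu_k\to\mu_\infty$ together with the continuity of $\mu_\infty$ at the origin. For this I would use the following elementary tail estimate: there is an absolute constant $C>0$ such that
$$P_k\big(\{x:\ |x|>2/u\}\big)\le \frac{C}{u}\int_{-u}^{u}\big(1-\mathrm{Re}\,\mu_k(t)\big)\,dt,\qquad u>0,$$
obtained by writing $1-\mathrm{Re}\,\mu_k(t)=\int_{\RR}(1-\cos tx)\,dP_k(x)$, integrating in $t$ over $(-u,u)$, applying Fubini, and using the bound $\frac1u\int_{-u}^u(1-\cos tx)\,dt=1-\tfrac{\sin ux}{ux}\ge\tfrac12$ on $\{|x|>2/u\}$.

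Given $\eps>0$, the continuity of $\mu_\infty$ at $0$ with $\mu_\infty(0)=1$ lets me fix $u>0$ so small that $\frac{C}{u}\int_{-u}^{u}(1-\mathrm{Re}\,\mu_\infty(t))\,dt<\eps/2$; since on the now-fixed interval $(-u,u)$ the integrands $1-\mathrm{Re}\,\mu_k$ are bounded by $2$ and converge pointwise to $1-\mathrm{Re}\,\mu_\infty$, dominated convergence gives $\frac{C}{u}\int_{-u}^{u}(1-\mathrm{Re}\,\mu_k(t))\,dt<\eps$ for all large $k$, hence $P_k(\{|x|>2/u\})<\eps$ for all large $k$. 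The finitely many remaining $P_k$ are individually tight, so $\{P_k\}$ is a tight family, and by Prokhorov's theorem it is relatively sequentially compact in the topology of weak convergence. Now let $P_{k_j}\Rightarrow P$ be any subsequential weak limit; since $x\mapsto e^{-itx}$ is bounded and continuous for each fixed $t$, weak convergence gives $\mu_{k_j}(t)\to\int_{\RR}e^{-itx}\,dP(x)$, while by hypothesis $\mu_{k_j}(t)\to\mu_\infty(t)$, so $P$ has characteristic function $\mu_\infty$. By the uniqueness theorem for characteristic functions this limiting law, call it $P_\infty$, is the same for every subsequence, and a routine sub-subsequence argument (any subsequence of $\{P_k\}$ staying outside a fixed weak neighbourhood of $P_\infty$ would, by tightness, have a further subsequence converging weakly to a law with characteristic function $\mu_\infty$, i.e. to $P_\infty$, a contradiction) upgrades this to $P_k\Rightarrow P_\infty$ along the whole sequence. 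Taking $\xi_\infty$ to be the identity random variable on the probability space $(\RR,P_\infty)$, we obtain $\xi_k\to\xi_\infty$ in distribution with ${\bf E}(e^{-it\xi_\infty})=\int_{\RR}e^{-itx}\,dP_\infty(x)=\mu_\infty(t)$, as required.

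I expect the tightness step to be the main obstacle: it is the only place where the hypothesis on $\mu_\infty$ genuinely enters, and it requires both the integral inequality above and the uniform-in-$k$ passage to the limit; the remainder (Prokhorov's theorem, the uniqueness theorem for characteristic functions, and the sub-subsequence argument) is soft and completely standard. I would also note that only continuity of $\mu_\infty$ at the origin is actually used, so the hypothesis that $\mu_\infty$ be continuous on all of $\RR$ is more than is needed.
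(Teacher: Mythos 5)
Your proof is correct: it is the classical proof of the L\'evy continuity theorem (truncation inequality for tightness, Prokhorov's theorem, uniqueness of characteristic functions, and a sub-subsequence argument), and the paper itself offers no proof of this lemma, simply citing it to the textbook of Fristedt and Gray. The only blemish is a harmless factor of $2$: one has $\frac{1}{u}\int_{-u}^{u}(1-\cos tx)\,dt = 2\bigl(1-\frac{\sin ux}{ux}\bigr)$, not $1-\frac{\sin ux}{ux}$, but the lower bound $\ge \frac12$ on $\{|x|>2/u\}$ and hence the tail estimate still hold with the constant $C$ absorbing the discrepancy; your closing remark that only continuity of $\mu_\infty$ at the origin is needed is also accurate.
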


In the study of generalized random processes, the characteristic functional plays a similar role
to  the characteristic function of a random variable \cite{gelfandbook}. The idea is to  formally specify a generalized random process $\Phi$ by its
{\em characteristic functional} ${\mathcal Z}_\Phi$ given by
\begin{equation}
{\mathcal Z}_\Phi(f):={\bf E}(e^{-i\Phi(f)})=\int_{\RR} e^{-ix} dP({ x}), \quad f\in {\mathcal D},
\end{equation}
 where $P(x)$ denotes the probability that $\Phi(f)<x$.
For instance,  we can show (\cite{Unser2009})
 that the characteristic functional ${\mathcal Z}_w$ of the white Poisson noise \eqref{whitepoisson.def} is given by
 \begin{equation}
 {\mathcal Z}_w(f)=\exp\Big(\lambda \int_{\RR^d}\int_{\RR} \big(e^{-iaf({\bf x})}-1\big) dP(a)  d{\bf x} \Big) , \quad f\in {\mathcal D}.
 \end{equation}
The characteristic functional ${\mathcal Z}_\Phi$ of a generalized random process $\Phi$ is
 a functional from ${\mathcal D}$ to ${\mathbb C}$ that is continuous and positive-definite, and satisfies
${\mathcal Z}_\Phi(0)=1$. 
Here the {\em continuity} of a  functional $L$ from ${\mathcal D}$ to ${\mathbb C}$
 means that
 $\lim_{k\to \infty} L(f_k)=L(f)$ if $f_k\in {\mathcal D}$ tends to $f\in {\mathcal D}$ in the topology of the space ${\mathcal D}$, while
  a functional $L$ from ${\mathcal D}$ to ${\mathbb C}$ is said to be {\em positive-definite}
 if
\begin{equation}
 \sum_{j,k=1}^n L(f_j-f_k) c_j \bar c_k\ge 0
 \end{equation}
 for any $f_1, \ldots, f_n\in {\mathcal D}$ and any complex numbers $c_1, \ldots, c_n$. The remarkable aspect of the  theory of generalized random processes is that specification of ${\mathcal Z}_\Phi$ is sufficient to define a process in a consistent and unambiguous
  way. This is stated in  the
  fundamental  Minlos-Bochner theorem.
%
\begin{Tm}\label{gelfandbook.tm} {\rm (\cite{gelfandbook})}\
Let $L$ be a positive-definite continuous functional on ${\mathcal D}$ such that $L(0)=1$.
Then there exists a generalized random process $\Phi$ whose characteristic functional is $L$.
Moreover for any $f_1, \ldots, f_n\in {\mathcal D}$, we may take  the positive measure
$P({ x}_1, \ldots, { x}_n)$ as the distribution function of the random variable $\Phi(f_1), \ldots, \Phi(f_n)$,
where the Fourier transform of the positive measure $P({x}_1, \ldots, {x}_n)$ is $L(y_1 f_1+\cdots+y_nf_n)$, i.e.,
$$L(y_1f_1+\ldots+y_nf_n)=\int_{\RR^n} \exp(-i (x_1y_1+\ldots+x_ny_n)) dP(x_1, \ldots, x_n).$$
\end{Tm}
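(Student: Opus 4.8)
The plan is to derive the theorem from the classical Bochner theorem on $\RR^n$ together with the Kolmogorov extension theorem, which is the standard route to the Bochner--Minlos theorem. First I would fix a finite tuple $f_1,\dots,f_n\in{\mathcal D}$ and consider
\[
\phi_{f_1,\dots,f_n}({\bf y}):=L(y_1f_1+\cdots+y_nf_n),\qquad {\bf y}=(y_1,\dots,y_n)\in\RR^n.
\]
Continuity of $L$ on ${\mathcal D}$ makes $\phi_{f_1,\dots,f_n}$ continuous on $\RR^n$, the normalization $L(0)=1$ gives $\phi_{f_1,\dots,f_n}({\bf 0})=1$, and the positive-definiteness of $L$ transfers verbatim to positive-definiteness of $\phi_{f_1,\dots,f_n}$ on $\RR^n$ (apply the defining inequality of $L$ to the finitely many test functions $\sum_j y_j^{(k)}f_j$, since $\phi_{f_1,\dots,f_n}({\bf y}^{(k)}-{\bf y}^{(\ell)})=L\big(\sum_j(y_j^{(k)}-y_j^{(\ell)})f_j\big)$). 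Classical Bochner's theorem then yields a unique Borel probability measure $P_{f_1,\dots,f_n}$ on $\RR^n$ whose Fourier transform is $\phi_{f_1,\dots,f_n}$; this assigns to each finite tuple the candidate joint distribution appearing in the statement.

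The second step is a consistency check. If $(g_1,\dots,g_m)$ is obtained from $(f_1,\dots,f_n)$ by a linear substitution $g_i=\sum_j A_{ij}f_j$, then the characteristic function of the pushforward $A_\#P_{f_1,\dots,f_n}$ is ${\bf y}\mapsto\phi_{f_1,\dots,f_n}(A^{\top}{\bf y})=L\big(\sum_i y_ig_i\big)=\phi_{g_1,\dots,g_m}({\bf y})$, so uniqueness in Bochner's theorem forces $P_{g_1,\dots,g_m}=A_\#P_{f_1,\dots,f_n}$. Specializing $A$ to a coordinate projection gives the Kolmogorov marginalization condition and to a permutation matrix gives symmetry, so the Kolmogorov extension theorem produces a probability space $(\Omega,\mathcal{A},P)$ carrying random variables $\{\Phi(f):f\in{\mathcal D}\}$ such that $(\Phi(f_1),\dots,\Phi(f_n))$ has law $P_{f_1,\dots,f_n}$ for every finite tuple. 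This is precisely the displayed identity for $P(x_1,\dots,x_n)$, and in particular ${\mathcal Z}_\Phi(f)={\bf E}(e^{-i\Phi(f)})=L(f)$.

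It remains to see that $\Phi$ is a generalized random process, i.e.\ linear, continuous and compatible; compatibility is automatic from the Kolmogorov construction. For linearity, fix $f,g\in{\mathcal D}$ and scalars $a,b$, put $X:=\Phi(af+bg)-a\Phi(f)-b\Phi(g)$, and evaluate the joint characteristic function of $(\Phi(af+bg),\Phi(f),\Phi(g))$ (which is $L$ of the corresponding combination, by the previous step) at the vector $(t,-ta,-tb)$: this gives ${\bf E}(e^{-itX})=L\big(t(af+bg)-taf-tbg\big)=L(0)=1$ for all $t\in\RR$, hence $X=0$ almost surely, which is linearity in the Gelfand--Vilenkin sense. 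For continuity, let $f_k\to f$ in ${\mathcal D}$; then for each $t$ the characteristic function of $\Phi(f_k)-\Phi(f)$ equals $L\big(t(f_k-f)\big)\to L(0)=1$ by continuity of $L$, so Lemma~\ref{levy.lm} gives $\Phi(f_k)-\Phi(f)\to0$ in distribution, and convergence in distribution to an almost surely constant limit is convergence in probability; thus $\Phi$ is continuous as a map ${\mathcal D}\to L^0(\Omega)$.

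The one ingredient I import wholesale is the classical Bochner theorem on $\RR^n$; beyond that, the work is just the consistency and linearity/continuity bookkeeping above, which is routine. The genuinely delicate point, which I would flag rather than carry out, appears only if one wants the sharper conclusion that $\Phi$ admits a version whose sample paths lie in ${\mathcal D}'$ (a bona fide random distribution, as in the full Minlos theorem): that refinement is not forced by Kolmogorov consistency and relies on the nuclearity of ${\mathcal D}$ together with a continuity estimate for $L$ with respect to a single defining seminorm. For the statement as worded, which only requires a random functional on ${\mathcal D}$ that is linear, continuous and compatible, the Kolmogorov route suffices and nuclearity is not needed.
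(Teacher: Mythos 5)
The paper offers no proof of this statement: it is quoted verbatim from Gelfand--Vilenkin \cite{gelfandbook}, so there is nothing internal to compare your argument against. On its own terms your proof is correct and is in fact the standard derivation of the ``weak'' Minlos--Bochner theorem, essentially the one carried out in the cited reference: the restriction of $L$ to each finite-dimensional subspace is continuous and positive-definite with value $1$ at the origin, classical Bochner gives the finite-dimensional laws, uniqueness in Bochner gives consistency under linear substitutions (hence under projections and permutations), Kolmogorov extension produces the random functional, and the characteristic-function computations you give yield almost-sure linearity for each fixed pair and continuity in probability via Lemma~\ref{levy.lm}. Your closing caveat is exactly the right one to flag: the statement as worded (and as the paper uses it, namely a random functional on ${\mathcal D}$ that is linear, continuous and compatible in the Gelfand--Vilenkin sense) does not require realizing $\Phi$ as a measurable map into ${\mathcal D}'$, so nuclearity is not needed; that stronger conclusion is the genuine Minlos theorem and would require the additional work you describe. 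The only bookkeeping point worth making explicit is that the identity ${\bf E}(e^{-it\Phi(h)})=L(th)$ used in the continuity step already relies on the almost-sure homogeneity $\Phi(th)=t\Phi(h)$ established in the linearity step, so the order of those two verifications matters; as written you do them in the right order.
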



We are now ready to prove Theorem \ref{generalizedpoisson.tm}.

\begin{proof}[Proof of Theorem \ref{generalizedpoisson.tm}] Let $N\ge 1$ and $\varphi$ be a $C^\infty$ function supported in $B({\bf 0},2)$ and
taking the value one in $B({\bf 0}, 1)$.
For any $f\in {\mathcal D}$, define a sequence of random variables $\Phi_{\gamma, N}(f)$ associated with $f$ by
\begin{equation}\label{generalizedpoisson.tm.pf.eq1}
\Phi_{\gamma, N}(f):=\sum_{k} a_k \varphi({\bf x}_k/N) I_{\gamma, 1} f({\bf x}_k),
\end{equation}
 where the $a_k$'s are i.i.d. random variables with probability distribution $P(a)$, and where  the ${\bf x}_k$'s
are random point locations in ${\mathbb R}^n$ which are mutually independent and follow a spatial Poisson distribution
with Poisson parameter $\lambda>0$.
We will show that $\Phi_{\gamma, N}, N\ge 1$, define a sequence of generalized random processes,
whose limit  $P_\gamma w(f):=\sum_{k} a_k I_{\gamma, 1}(f)({\bf x}_k)$  is a  solution of the stochastic partial differential equation
\eqref{randompde.def}.

As $\varphi$ is a continuous function supported on $B({\bf 0}, 2)$,
\begin{equation}\label{generalizedpoisson.tm.pf.eq2}
\Phi_{\gamma, N}(f)=
\sum_{{\bf x}_k\in B({\bf 0}, 2N)} a_k \varphi({\bf x}_k/N) I_{\gamma, 1} f({\bf x}_k).
\end{equation}
Recall that $I_{\gamma, 1} f$ is  continuous on  $\RR^d\backslash \{\bf 0\}$ by Corollary \ref{iomegap.cor1}.
Then the summation of the right-hand side of \eqref{generalizedpoisson.tm.pf.eq2}
is well-defined whenever there are  finitely many ${\bf x}_k$ in $B({\bf 0}, 2N)$ with none of them belonging to
 $B({\bf 0}, \epsilon), \epsilon>0$. Note that
the probability that at least one of ${\bf x}_k$ lies in the small neighbor $B({\bf 0}, \epsilon)$ 
is  equal to  $$\sum_{n=1}^\infty
 e^{-\lambda |B({\bf 0}, \epsilon)|} \frac{(\lambda |B({\bf 0}, \epsilon)|)^n}{ n!}
 =1-e^{-\lambda |B({\bf 0}, \epsilon)|}\to 0 \quad {\rm as} \ \epsilon\to 0.$$
We then conclude that $\Phi_{\gamma, N}(f)$ is well-defined and $\Phi_{\gamma, N}(f)<\infty$ with probability one.

Denote  the characteristic function of the random variable $\Phi_{\gamma, N}(f)$ by $E_{\gamma, N, f}(t)$:
$$E_{\gamma, N, f}(t)= {\bf E}(e^{-it\Phi_{\gamma,N}(f)})={\bf E}(e^{-i\Phi_{\gamma, N}(tf)}).$$
Applying the same technique as in \cite[Appendix B]{tafti2009}, we  can show that
\begin{equation}\label{generalizedpoisson.tm.pf.eq3}
E_{\gamma, N, f}(t)=\exp\Big(\int_{\RR^d} \int_{\RR} \big(e^{-i a t \varphi({\bf x}/N) I_{\gamma, 1}f({\bf x})}-1\big) dP(a) d{\bf x}\Big).
\end{equation}
Moreover, the functional $E_{\gamma, N, f}(t)$ is continuous about $t$ by the dominated convergence theorem, because
 $$\Big|e^{-i a t \varphi({\bf x}/N) I_{\gamma, 1}f({\bf x})}-1\Big|\le |a| |t| |I_{\gamma, 1}f({\bf x})|$$ and
 $$\int_{\RR^d}\int_{\RR} |a| |I_{\gamma, 1}f({\bf x})| dP(a) d{\bf x}=
\Big(\int_{\RR} |a| dP(a)\Big)\times \Big(\int_{\RR^d} |I_{\gamma, 1}f({\bf x})| d{\bf x}\Big)<\infty$$
by Corollary \ref{iomegap.cor1} and the assumption on the distribution $P$.

Clearly the random  variable
$\Phi_{\gamma, N}(f)$ is linear about $f\in {\mathcal D}$;
i.e.,
\begin{equation}\label{generalizedpoisson.tm.pf.eq4}
\Phi_{\gamma,N}(\alpha f+\beta g)=\alpha \Phi_{\gamma,N}(f)+\beta\Phi_{\gamma,N}(g) \quad\ {\rm for \ all}\  f, g\in {\mathcal D} \ {\rm and} \ \alpha, \beta\in \RR.
\end{equation}
For any sequence of functions $f_k$ in ${\mathcal D}$ that converges to $f_\infty$ in the topology of ${\mathcal D}$,
it follows from Theorem \ref{iomegap.tm1} and Corollary \ref{iomegap.cor1} that
$\lim_{k\to \infty} \|I_{\gamma, 1} f_k-I_{\gamma, 1} f_\infty\|_1=0$.
Therefore
\begin{eqnarray}\label{generalizedpoisson.tm.pf.eq5}
& & \Big|\int_{\RR^d} \int_{\RR} \big(e^{-i a t\varphi({\bf x}/N) I_{\gamma, 1}f_k({\bf x})}-1\big) dP(a) d{\bf x}\nonumber\\
& & \quad -
\int_{\RR^d} \int_{\RR} \big(e^{-i a t \varphi({\bf x}/N) I_{\gamma, 1}f_\infty({\bf x})}-1\big) dP(a) d{\bf x}\Big|\nonumber\\
& \le & |t|\Big(\int_{\RR} |a| dP(a)\Big)\Big( \int_{\RR^d} \varphi({\bf x}/N) |I_{\gamma, 1}f_k({\bf x})-I_{\gamma, 1}f_\infty({\bf x})| d{\bf x}\Big)\nonumber\\
& \to & 0 \quad {\rm as} \  k\to \infty,
\end{eqnarray}
which implies that the characteristic function of $\Phi_{\gamma,N}(f_k)$ converges to the continuous characteristic  function of $\Phi_{\gamma,N}(f_\infty)$. Hence
the random variable $\Phi_{\gamma, N}(f_k)$ converges to $\Phi_{\gamma, N}(f_\infty)$ by Lemma \ref{levy.lm}, which in turn implies that
$\Phi_{\gamma, N}$ is continuous on ${\mathcal D}$.

Set
\begin{equation}L_{\gamma, N}(f)=E_{\gamma, N, f}(1).\end{equation}
For any sequence $c_l, 1\le l\le n$, of complex numbers and  $f_l, 1\le l\le n$, of functions in ${\mathcal D}$,
\begin{eqnarray} \label{generalizedpoisson.tm.pf.eq6}
\sum_{1\le l, l'\le n} L_{\gamma, N}(f_l-f_{l'})c_l \overline{c_{l'}} & = &
{\bf E}\Big(\sum_{l, l'=1}^n e^{-i \Phi_{\gamma,N}(f_l-f_{l'})} c_l\overline{c_{l'}}\Big)\nonumber\\
& =& {\bf E}\Big(\Big|\sum_{l=1}^n c_l
e^{-i \Phi_{\gamma, N}(f_l)}\Big|^2\Big) \ge 0,
\end{eqnarray}
which implies that $L_{\gamma, N}$ is positive-definite.
By Theorem \ref{gelfandbook.tm}, we conclude that $\Phi_{\gamma, N}$ defines a
generalized random process with characteristic functional $L_{\gamma, N}$.

Now we consider the limit of the above family of  generalized random processes $\Phi_{\gamma, N}, N\ge 1$.
By Corollary \ref{iomegap.cor1}, $I_{\gamma, 1} f$ is integrable for all $f\in {\mathcal D}$. Then
\begin{equation} \label{generalizedpoisson.tm.pf.eq7}
\lim_{N\to +\infty} E_{\gamma, N, f}(t)=\exp\Big(\int_{\RR^d} \int_{\RR} (e^{-i a t I_{\gamma, 1}f({\bf x})}-1) dP(a) d{\bf x}\Big)=:
E_{\gamma,  f}(t).
\end{equation}
Clearly  $E_{\gamma, f}(0)=1$ and  $E_{\gamma, f}(t)$ is continuous as $I_{\gamma,1}(f)$ is integrable.
Therefore by Lemma \ref{levy.lm}, $\Phi_{\gamma, N}(f)$ converges to a random variable, which is denoted by
$P_\gamma(f):=\sum_{k} a_k I_{\gamma, 1} f({\bf x}_k)$, in distribution.

As $I_{\gamma, 1} f$ is a continuous map from ${\mathcal D}$ to $L^1$, then
$\lim_{k\to \infty} \|I_{\gamma,1} f_k-I_{\gamma,1} f_\infty\|_1=0$ whenever  $f_k$ converges to $f$ in ${\mathcal D}$.
Hence
\begin{eqnarray}\label{generalizedpoisson.tm.pf.eq8}
& & \Big|\int_{\RR^d} \int_{\RR} \big(e^{-i a t I_{\gamma, 1}f_k({\bf x})}-1\big) dP(a) d{\bf x}\nonumber\\
& & \quad -
\int_{\RR^d} \int_{\RR} \big(e^{-i a t  I_{\gamma, 1}f_\infty({\bf x})}-1\big) dP(a) d{\bf x}\Big|\nonumber\\
& \le & |t|\Big(\int_{\RR} |a| dP(a)\Big)\Big( \int_{\RR^d}  |I_{\gamma, 1}f_k({\bf x})-I_{\gamma, 1}f_\infty({\bf x})| d{\bf x}\Big)\nonumber\\
& \to & 0 \quad {\rm as} \  k\to \infty,
\end{eqnarray}
which implies that
the characteristic function of $P_\gamma(f_k)$ converges to
the characteristic function of  $P_\gamma(f_\infty)$ (which is also continuous), and hence $P_\gamma(f_k)$
converges to $P_\gamma(f_\infty)$ in distribution by Lemma \ref{levy.lm}.
From the above argument, we see that $P_\gamma(f)$ is  continuous about $f\in {\mathcal D}$.

Define
$L_{\gamma}(f)=E_{\gamma, f}(1)$.
From \eqref{generalizedpoisson.tm.pf.eq6} and \eqref{generalizedpoisson.tm.pf.eq7}, we see that
\begin{equation}
\sum_{1\le l, l'\le n} L_{\gamma}(f_l-f_{l'}) c_i\overline{c_{i'}}=
\lim_{N\to \infty}
\sum_{1\le l, l'\le n} L_{\gamma, N}(f_l-f_{l'}) c_i\overline{c_{l'}}\ge 0
\end{equation}
for any sequence  $c_l, 1\le l\le n$, of complex numbers and $f_l, 1\le l\le n$, of functions in ${\mathcal D}$.
Therefore by Theorem \ref{gelfandbook.tm}, $P_\gamma w$ defines a generalized random  process with its characteristic functional given by
\begin{equation}
{\mathcal Z}_{P_\gamma w}(f)=
\exp\Big(\int_{\RR^d} \int_{\RR} (e^{-i a  I_{\gamma, 1}f({\bf x})}-1) dP(a) d{\bf x}\Big).\end{equation}
\end{proof}

\subsection{Pointwise evaluation}
In this section, we consider the pointwise characterization of the generalized random process $P_\gamma w$.

\begin{Tm}\label{pointpoisson.tm}
Let $\gamma, \lambda, P(a), P_\gamma w$ be as in Theorem \ref{generalizedpoisson.tm},
 and $I_{\gamma, 1}$ be defined as in \eqref{fractionalderivative.veryolddef}.
  Then
  \begin{equation}
  \label{pointpoisson.tm.eq3}
  P_{\gamma} w({\bf y}_0):=\lim_{N\to \infty} P_{\gamma} w (g_{N, {\bf y}_0})
  \end{equation}
is a random variable for every ${\bf y}_0\in \RR^d$ whose characteristic  function
is given by
\begin{equation}  \label{pointpoisson.tm.eq4}
{\bf E}(e^{-i t P_\gamma w({\bf y}_0)})=
\exp\Big(\lambda \int_{\RR}\int_{\RR} \big(e^{-ia t H_{{\bf y}_0}({\bf x})}-1\big) d{\bf x} dP(a)\Big), t\in \RR,
\end{equation}
where $g\in {\mathcal D}$ satisfies $\int_{\RR^d} g({\bf x}) d{\bf x}=1$, $g_{N, {\bf y}_0}({\bf x})=N^d g(N({\bf x}-{\bf y}_0))$, and
\begin{equation}\label{pointpoisson.tm.eq5}
\widehat{H_{{\bf y}_0}}(\xi)=\Big (e^{i\langle {\bf y}_0, \xi\rangle}-\sum_{|{\bf i}|\le \gamma}  \frac{(i{\bf y}_0)^{\bf i} \xi^{\bf i}}{{\bf i}!}\Big) \|\xi\|^{-\gamma}.
\end{equation}
\end{Tm}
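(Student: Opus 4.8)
The plan is to reduce the statement to one $L^1$-convergence fact and then feed it into the machinery already in place. Since $g\in{\mathcal D}$, the mollifier $g_{N,{\bf y}_0}({\bf x})=N^d g(N({\bf x}-{\bf y}_0))$ again belongs to ${\mathcal D}$, so Theorem~\ref{generalizedpoisson.tm} applies with $f=g_{N,{\bf y}_0}$: the random variable $P_\gamma w(g_{N,{\bf y}_0})=\sum_k a_k I_{\gamma,1}(g_{N,{\bf y}_0})({\bf x}_k)$ is well defined and, by linearity in the test function (cf. \eqref{generalizedpoisson.tm.eq2}), for $t\in\RR$
$$
{\bf E}\big(e^{-itP_\gamma w(g_{N,{\bf y}_0})}\big)={\mathcal Z}_{P_\gamma w}(tg_{N,{\bf y}_0})=\exp\Big(\lambda\int_{\RR^d}\int_{\RR}\big(e^{-iat(I_{\gamma,1}g_{N,{\bf y}_0})({\bf x})}-1\big)\,dP(a)\,d{\bf x}\Big).
$$
I claim it is enough to prove $I_{\gamma,1}g_{N,{\bf y}_0}\to H_{{\bf y}_0}$ in $L^1(\RR^d)$, where $H_{{\bf y}_0}$ is the function whose Fourier transform is \eqref{pointpoisson.tm.eq5}. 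Indeed, using $|e^{-i\theta}-e^{-i\theta'}|\le|\theta-\theta'|$ and $\int_\RR|a|\,dP(a)<\infty$, the exponent above differs from $\lambda\int_{\RR^d}\int_\RR(e^{-iatH_{{\bf y}_0}({\bf x})}-1)\,dP(a)\,d{\bf x}$ by at most $\lambda|t|\big(\int|a|\,dP\big)\|I_{\gamma,1}g_{N,{\bf y}_0}-H_{{\bf y}_0}\|_1\to 0$; the limiting exponent is continuous in $t$ (again by $H_{{\bf y}_0}\in L^1$ and the moment hypothesis) and vanishes at $t=0$. So the L\'evy continuity theorem (Lemma~\ref{levy.lm}) gives convergence in distribution to a random variable with characteristic function \eqref{pointpoisson.tm.eq4}, and Theorem~\ref{gelfandbook.tm} makes $P_\gamma w({\bf y}_0)$ a bona fide object. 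The same estimate in fact yields ${\bf E}\sum_k|a_k|\,|I_{\gamma,1}g_{N,{\bf y}_0}({\bf x}_k)-H_{{\bf y}_0}({\bf x}_k)|=\lambda\big(\int|a|\,dP\big)\|I_{\gamma,1}g_{N,{\bf y}_0}-H_{{\bf y}_0}\|_1\to 0$, so on the original probability space one may simply take $P_\gamma w({\bf y}_0)=\sum_k a_k H_{{\bf y}_0}({\bf x}_k)$.

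To prove the $L^1$-convergence I would use the spatial representation of $U_{\Omega,1}=I_{\gamma,1}$ with $\Omega(\xi)=\|\xi\|^{-\gamma}$ obtained in the proof of Theorem~\ref{iomegap.tm1}. Write $k_1=\lfloor\gamma\rfloor$ (so $\gamma-k_1\in(0,1)$), and for $|{\bf j}|=k_1$ let $K_{\bf j}={\mathcal F}^{-1}\big((i\xi)^{\bf j}\|\xi\|^{-\gamma}\big)$, which is $C^\infty$ on $\RR^d\setminus\{{\bf 0}\}$ and homogeneous of degree $\gamma-k_1-d\in(-d,0)$, whence $|K_{\bf j}({\bf x})|\le C\|{\bf x}\|^{\gamma-k_1-d}$ and $|\nabla K_{\bf j}({\bf x})|\le C\|{\bf x}\|^{\gamma-k_1-d-1}$. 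For $f\in{\mathcal S}$ that proof gives
$$
I_{\gamma,1}f=\sum_{|{\bf j}|=k_1}\frac1{{\bf j}!}\int_{\RR^d}\big(K_{\bf j}(\cdot-{\bf y})-K_{\bf j}(\cdot)\big)g^f_{\bf j}({\bf y})\,d{\bf y},\qquad g^f_{\bf j}({\bf x})=k_1\int_0^1(1-t)^{k_1-1}(-{\bf x}/t)^{\bf j}f({\bf x}/t)\,t^{-d}\,dt,
$$
with the evident reading when $k_1=0$ (Case~III, $0<\gamma<1$: $I_{\gamma,1}f({\bf x})=\int(K(\cdot-{\bf y})-K(\cdot))f({\bf y})\,d{\bf y}$, $K={\mathcal F}^{-1}\|\cdot\|^{-\gamma}$). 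Substituting $f=g_{N,{\bf y}_0}$ and changing variables produces the estimates on which everything hinges: the $g^{g_{N,{\bf y}_0}}_{\bf j}$ are all supported in one fixed compact set, within distance $O(1/N)$ of the segment $S=\{t{\bf y}_0:0\le t\le 1\}$, they satisfy $\|g^{g_{N,{\bf y}_0}}_{\bf j}\|_1+\int\|{\bf y}\|\,|g^{g_{N,{\bf y}_0}}_{\bf j}({\bf y})|\,d{\bf y}\le C_{{\bf y}_0,g}$ for all $N\ge1$, and $g^{g_{N,{\bf y}_0}}_{\bf j}\to\mu_{\bf j}:=k_1(-{\bf y}_0)^{\bf j}\int_0^1(1-t)^{k_1-1}\delta(\cdot-t{\bf y}_0)\,dt$ weakly as measures on that compact set. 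I then set $H_{{\bf y}_0}:=\sum_{|{\bf j}|=k_1}\frac1{{\bf j}!}\int_{\RR^d}\big(K_{\bf j}(\cdot-{\bf y})-K_{\bf j}(\cdot)\big)\,d\mu_{\bf j}({\bf y})$; the same Taylor-remainder computation that yields the displayed formula (letting $\widehat f\to 1$) shows its Fourier transform is \eqref{pointpoisson.tm.eq5}, and in particular $I_{\gamma,1}g_{N,{\bf y}_0}\in L^1$ by Corollary~\ref{iomegap.cor1}.

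Two estimates then finish the job. First, separating the near-diagonal contribution (where $g^{g_{N,{\bf y}_0}}_{\bf j}$ is locally uniformly bounded and $K_{\bf j}$ is locally integrable, so that part is uniformly small) from the far-diagonal contribution (where $K_{\bf j}(\cdot-\cdot)$ is bounded and continuous and the weak convergence applies), one gets $I_{\gamma,1}g_{N,{\bf y}_0}({\bf x})\to H_{{\bf y}_0}({\bf x})$ at Lebesgue-a.e. ${\bf x}$. Second, the family $\{I_{\gamma,1}g_{N,{\bf y}_0}\}_{N\ge1}$ is tight and uniformly integrable: for $\|{\bf x}\|$ beyond twice the radius of the common support of the $g^{g_{N,{\bf y}_0}}_{\bf j}$, the cancellation $K_{\bf j}({\bf x}-{\bf y})-K_{\bf j}({\bf x})=O\big(\|{\bf y}\|\,\|{\bf x}\|^{\gamma-k_1-d-1}\big)$ and the uniform first-moment bound give $|I_{\gamma,1}g_{N,{\bf y}_0}({\bf x})|\le C_{{\bf y}_0,g}\|{\bf x}\|^{\gamma-k_1-d-1}$ with exponent $<-d$ (tightness), while on bounded sets $I_{\gamma,1}g_{N,{\bf y}_0}$ is controlled by fixed convolutions of the locally integrable kernels $|K_{\bf j}|$ with the uniformly $L^1$-bounded $g^{g_{N,{\bf y}_0}}_{\bf j}$, and the absolute continuity of $E\mapsto\int_E|K_{\bf j}|$ transfers uniformly. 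Vitali's convergence theorem then delivers $H_{{\bf y}_0}\in L^1$ together with $\|I_{\gamma,1}g_{N,{\bf y}_0}-H_{{\bf y}_0}\|_1\to 0$, which is exactly what paragraph one needs, and \eqref{pointpoisson.tm.eq3}--\eqref{pointpoisson.tm.eq5} follow.

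The part I expect to be the real obstacle is the uniform integrability across the mollification parameter $N$. It is structural: because $I_{\gamma,1}$ is dilation-invariant but not translation-invariant (Theorem~\ref{fractionalderivative.tm1}), $I_{\gamma,1}g_{N,{\bf y}_0}$ is \emph{not} a rescaling of a single profile, and the decay bound of Theorem~\ref{iomegap.tm1} cannot be used directly here since its constant involves $\sup|g_{N,{\bf y}_0}|(1+\|\cdot\|)^{k_1+d+1+\epsilon}\sim N^d$. Passing through the homogeneous-kernel representation removes this difficulty precisely because only the $L^1$-norm, the first moment, and the (shrinking) support of the transforms $g^{g_{N,{\bf y}_0}}_{\bf j}$ enter, all bounded uniformly in $N$; the auxiliary fact that $g^{g_{N,{\bf y}_0}}_{\bf j}$ converges weakly to a measure carried by the segment $S$ is what pins the limit down to the specific distribution $H_{{\bf y}_0}$ of \eqref{pointpoisson.tm.eq5}.
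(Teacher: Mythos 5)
Your high-level reduction is exactly the paper's: everything hinges on the $L^1$-convergence $\|I_{\gamma,1}g_{N,{\bf y}_0}-H_{{\bf y}_0}\|_1\to 0$ (Lemma~\ref{generalizedpoisson.lm}), after which the Lipschitz bound $|e^{-i\theta}-e^{-i\theta'}|\le|\theta-\theta'|$, the moment hypothesis $\int|a|\,dP<\infty$, and the L\'evy continuity theorem deliver \eqref{pointpoisson.tm.eq4}, and Theorem~\ref{gelfandbook.tm} legitimizes the object. You even add a worthwhile strengthening not in the paper: by Campbell's formula the same $L^1$-bound gives convergence of $\sum_k a_k I_{\gamma,1}g_{N,{\bf y}_0}({\bf x}_k)$ to $\sum_k a_k H_{{\bf y}_0}({\bf x}_k)$ in $L^1(\Omega)$, hence the limit can be realized on the original probability space. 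Where you genuinely diverge is in how Lemma~\ref{generalizedpoisson.lm} is proved. The paper works directly with the Taylor-remainder representation \eqref{generalizedpoisson.lm.pf.eq1} and the homogeneity of the kernels $K_{\bf j}$: for $k_1\ge 1$ it bounds the $L^1$-norm of the difference by terms of the form $\int_0^1\int (t\|{\bf y}-{\bf y}_0\|)^{\gamma-k_1}(\cdots)\,|g_{N,{\bf y}_0}({\bf y})|\,d{\bf y}\,dt$ (and analogously for $k_1=0$), using the scaling identity $\int_{\RR^d}|K_{\bf j}({\bf x}-{\bf a})-K_{\bf j}({\bf x}-{\bf b})|\,d{\bf x}\le C\|{\bf a}-{\bf b}\|^{\gamma-k_1}$; since $\|{\bf y}-{\bf y}_0\|=O(1/N)$ on the support of $g_{N,{\bf y}_0}$, the $L^1$-difference decays like $N^{-(\gamma-k_1)}$. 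You instead pass through weak convergence of the auxiliary masses $g^{g_{N,{\bf y}_0}}_{\bf j}$ to the segment measures $\mu_{\bf j}$, then establish pointwise convergence off ${\bf 0}$, a far-field bound $|I_{\gamma,1}g_{N,{\bf y}_0}({\bf x})|\le C\|{\bf x}\|^{\gamma-k_1-d-1}$ from the uniform first-moment bound, and local uniform integrability from the local integrability of $|K_{\bf j}|$, finishing with Vitali. Both routes rest on the same kernel representation and the same uniform-in-$N$ moment and support bounds that circumvent the $N^d$-blowup in Theorem~\ref{iomegap.tm1}'s constant, but the paper's scaling estimate is shorter and gives an explicit rate; your Vitali argument is more modular and, together with the weak-limit identification of $H_{{\bf y}_0}$, makes the appearance of the segment measure $\mu_{\bf j}$ more conceptually transparent. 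The one place I'd ask you to be a bit more careful is the ``near-diagonal/far-diagonal'' pointwise convergence step: you should isolate a small ball around each point in the segment $\{t{\bf y}_0: 0\le t\le1\}\cup\{{\bf 0}\}$ where $K_{\bf j}$ may be unbounded, and argue the a.e.\ statement outside that exceptional set; as written it is a sketch, though a correct one.
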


An interpretation is that the random variable $P_{\gamma} w({\bf y}_0)$ in \eqref{pointpoisson.tm.eq3}
and its characteristic function ${\bf E}(e^{-i t P_\gamma w({\bf y}_0)})$ in \eqref{pointpoisson.tm.eq4}
 correspond formally to setting $f=\delta(\cdot-{\bf y}_0)$ (the delta distribution) in \eqref{generalizedpoisson.tm.eq1}
 and \eqref{generalizedpoisson.tm.eq2}, respectively.

To prove Theorem \ref{pointpoisson.tm}, we need a technical lemma.

\begin{Lm}\label{generalizedpoisson.lm}
Let $\gamma$ be a positive non-integer  number, $g\in {\mathcal D}$ satisfy $\int_{\RR^d} g({\bf x}) d{\bf x}=1$, and $H_{{\bf y}_0}$ be defined in \eqref{pointpoisson.tm.eq5}.
Then
\begin{equation}\label{generalizedpoisson.lm.eq1}
\lim_{N\to \infty} \|I_{\gamma, 1} g_{N, {\bf y}_0}-H_{{\bf y}_0}\|_1=0
\end{equation}
for all ${\bf y}_0\in \RR^d$, where $g_{N,{\bf y}_0}({\bf x})= N^d g(N ({\bf x}-{\bf y}_0))$.
\end{Lm}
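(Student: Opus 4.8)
The plan is to work with the explicit integral representation of $I_{\gamma,1}=U_{\|\cdot\|^{-\gamma},1}$ that is produced inside the proof of Theorem~\ref{iomegap.tm1}, to pin down the pointwise limit of $I_{\gamma,1}g_{N,{\bf y}_0}$ by a Fourier-side computation, and then to upgrade pointwise convergence to $L^1$-convergence by a uniform-integrability argument (Vitali's theorem) rather than dominated convergence. Concretely: with $p=1$ one has $d(1-1/p)=0$, so the integer $k_1$ of Theorem~\ref{iomegap.tm1} is $\lfloor\gamma\rfloor$ and $k_1+1-\gamma=1-\{\gamma\}>0$ since $\gamma\notin\ZZ$ (write $\{\gamma\}=\gamma-\lfloor\gamma\rfloor\in(0,1)$). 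Thus we are in the cases $k_1\ge1$ (resp.\ $k_1=0$) of that proof, where
$$I_{\gamma,1}f=\sum_{|{\bf j}|=k_1}\frac{1}{{\bf j}!}\Big(K_{\bf j}*h_{\bf j}-\Big(\int_{\RR^d}h_{\bf j}\Big)K_{\bf j}\Big)$$
(resp.\ $I_{\gamma,1}f=K*f-(\int f)K$), with $K_{\bf j}\in C^\infty(\RR^d\setminus\{{\bf 0}\})$ the inverse Fourier transform of $(i\xi)^{\bf j}\|\xi\|^{-\gamma}$, homogeneous of degree $\{\gamma\}-d\in(-d,0)$ so that $|K_{\bf j}({\bf x})|\le C\|{\bf x}\|^{\{\gamma\}-d}$ and $|\nabla K_{\bf j}({\bf x})|\le C\|{\bf x}\|^{\{\gamma\}-d-1}$, and $h_{\bf j}$ the density of \eqref{iomegap.tm1.pf.eq8b} (renamed to avoid a clash with the test function $g$). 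Applying this with $f=g_{N,{\bf y}_0}$ and changing variables ${\bf x}/t={\bf y}_0+{\bf u}/N$ inside \eqref{iomegap.tm1.pf.eq8b}, I would check that the resulting densities $h_{{\bf j},N}$ are all supported in one fixed ball $B_0$ depending only on ${\bf y}_0$ and $\supp g$, and that $\sum_{|{\bf j}|=k_1}\|h_{{\bf j},N}\|_1\le M_0$ for all $N\ge1$ with $M_0$ independent of $N$.

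For the pointwise limit, a direct computation gives ${\mathcal F}g_{N,{\bf y}_0}(\xi)=e^{-i\langle{\bf y}_0,\xi\rangle}\hat g(\xi/N)$, so by the Leibniz rule $\partial^{\bf i}({\mathcal F}g_{N,{\bf y}_0})({\bf 0})\to(-i{\bf y}_0)^{\bf i}$ as $N\to\infty$ (using $\hat g({\bf 0})=\int g=1$); combined with \eqref{fractionalderivative.veryolddef} and \eqref{pointpoisson.tm.eq5} this shows ${\mathcal F}(I_{\gamma,1}g_{N,{\bf y}_0})(\xi)\to\widehat{H_{{\bf y}_0}}(\xi)$ for every $\xi\ne{\bf 0}$. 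On the spatial side, after the same substitution I would pass $N\to\infty$ under the integrals by dominated convergence in ${\bf u}$ and then in $t$ — legitimate whenever ${\bf x}$ lies off the segment $[{\bf 0},{\bf y}_0]$, since there $K_{\bf j}$ stays bounded on the relevant ranges — obtaining a finite pointwise limit for almost every ${\bf x}$; matching Fourier transforms identifies this limit with $H_{{\bf y}_0}$.

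It remains to see that $\{I_{\gamma,1}g_{N,{\bf y}_0}\}_{N\ge1}$ is uniformly integrable and tight, after which Vitali's convergence theorem gives $H_{{\bf y}_0}\in L^1$ together with $\|I_{\gamma,1}g_{N,{\bf y}_0}-H_{{\bf y}_0}\|_1\to0$. From the representation, for ${\bf x}\ne{\bf 0}$,
$$|I_{\gamma,1}g_{N,{\bf y}_0}({\bf x})|\le C\sum_{|{\bf j}|=k_1}\int_{B_0}\|{\bf x}-{\bf y}\|^{\{\gamma\}-d}|h_{{\bf j},N}({\bf y})|\,d{\bf y}+CM_0\|{\bf x}\|^{\{\gamma\}-d},$$
while the cancellation estimate $|K_{\bf j}({\bf x}-{\bf y})-K_{\bf j}({\bf x})|\le C\|{\bf y}\|\,\|{\bf x}\|^{\{\gamma\}-d-1}$ (for ${\bf y}\in B_0$ and $\|{\bf x}\|\ge2\operatorname{diam}B_0$) yields $|I_{\gamma,1}g_{N,{\bf y}_0}({\bf x})|\le C\|{\bf x}\|^{\{\gamma\}-d-1}$ for large $\|{\bf x}\|$; the latter is integrable at infinity precisely because $\{\gamma\}<1$, giving tightness uniformly in $N$. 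For uniform integrability I would use Fubini and the elementary rearrangement bound $\sup_{\bf y}\int_E\|{\bf x}-{\bf y}\|^{\{\gamma\}-d}\,d{\bf x}\le C|E|^{\{\gamma\}/d}$ (valid since $\{\gamma\}-d>-d$), which together with $\sum_{\bf j}\|h_{{\bf j},N}\|_1\le M_0$ gives $\int_E|I_{\gamma,1}g_{N,{\bf y}_0}|\le CM_0|E|^{\{\gamma\}/d}\to0$ as $|E|\to0$, uniformly in $N$.

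The main obstacle is exactly this last uniformity. Since $U_{\|\cdot\|^{-\gamma},1}$ is \emph{not} translation-invariant, the continuity estimate of Theorem~\ref{iomegap.tm1} bounds $\|I_{\gamma,1}f\|_1$ by a weighted sup-norm of $f$ that diverges like $N^d$ for $f=g_{N,{\bf y}_0}$, so one cannot simply invoke ``$I_{\gamma,1}$ continuous and $g_{N,{\bf y}_0}\to\delta_{{\bf y}_0}$''. The substance is that the subtracted terms $(\int h_{\bf j})K_{\bf j}$ — i.e.\ the vanishing-moment boundary condition that defines $I_{\gamma,1}$ — force $\{I_{\gamma,1}g_{N,{\bf y}_0}\}$ to remain uniformly integrable even though the densities $h_{{\bf j},N}$ collapse onto the segment $[{\bf 0},{\bf y}_0]$; a related technical wrinkle is that pointwise convergence really does fail on that segment (a Lebesgue-null set when $d\ge2$), which is why Vitali's theorem, not dominated convergence, is the appropriate device.
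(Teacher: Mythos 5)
Your plan is correct in substance but genuinely different from the paper's proof, so a comparison is worth spelling out. The paper proves the claim by a single direct estimate of $\|I_{\gamma,1}g_{N,{\bf y}_0}-H_{{\bf y}_0}\|_1$: writing both terms via the Taylor-remainder representation with kernels $K_{\bf j}$, it integrates in ${\bf x}$ first, using the $L^1$-modulus bound
$\int_{\RR^d}|K_{\bf j}({\bf x}-{\bf a})-K_{\bf j}({\bf x})|\,d{\bf x}\le C\|{\bf a}\|^{\gamma-k_1}$
(valid since $K_{\bf j}$ is homogeneous of degree $\gamma-k_1-d\in(-d,1-d)$, which makes the integral converge both at the singularity and at infinity), and is then left with moments of $g_{N,{\bf y}_0}$ of order $\gamma-k_1\in(0,1)$, each of which scales like $N^{-(\gamma-k_1)}\to 0$. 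You instead decouple the claim into pointwise a.e.\ convergence, uniform integrability via a rearrangement bound $\int_E\|{\bf x}-{\bf y}\|^{\{\gamma\}-d}\,d{\bf x}\le C|E|^{\{\gamma\}/d}$, and tightness from a far-field cancellation estimate, and then glue them by Vitali. Both routes rest on the same ingredients (the explicit $K_{\bf j}$-representation and its homogeneity, plus the uniform $L^1$-bound and compact support of the auxiliary densities $h_{{\bf j},N}$), but the paper's argument is shorter and produces an explicit rate $N^{-(\gamma-k_1)}$, while yours isolates the conceptual point that the subtracted boundary term $(\int h_{\bf j})K_{\bf j}$ is exactly what keeps the family $\{I_{\gamma,1}g_{N,{\bf y}_0}\}$ uniformly integrable even as the densities collapse. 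Two small things to patch in your write-up: (a) the ``matching Fourier transforms identifies this limit with $H_{{\bf y}_0}$'' step should come \emph{after} Vitali, not before — only once you have $L^1$-convergence to some limit does $L^1\to L^\infty$ continuity of the Fourier transform let you compare with the computed pointwise Fourier limit; alternatively, identify the a.e.\ spatial limit directly against the spatial formula $H_{{\bf y}_0}({\bf x})=\sum_{|{\bf j}|=k_1}\frac{k_1}{{\bf j}!}\int_0^1(K_{\bf j}({\bf x}-t{\bf y}_0)-K_{\bf j}({\bf x}))(-{\bf y}_0)^{\bf j}(1-t)^{k_1-1}\,dt$, avoiding the Fourier detour; and (b) the case ${\bf y}_0={\bf 0}$ (where $H_{\bf 0}=0$) is not covered by your argument as written, though it follows in one line from the dilation invariance of $I_{\gamma,1}$: $\|I_{\gamma,1}g_{N,{\bf 0}}\|_1=N^{-\gamma}\|I_{\gamma,1}g\|_1\to 0$. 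Incidentally your Fourier-side computation produces $e^{-i\langle{\bf y}_0,\xi\rangle}$, which is consistent with the spatial formula the paper uses in its proof (i.e., $K_{\bf j}({\bf x}-t{\bf y}_0)$) but has the opposite sign to the displayed \eqref{pointpoisson.tm.eq5}; that is a sign inconsistency in the paper itself, not an error on your part.
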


\begin{proof} Let $K_{\bf j}$ be the inverse Fourier transform of $(i\xi)^{\bf j} \|\xi\|^{-\gamma}$ and $k_1$ be the integral part of the positive non-integer number $\gamma$. Then
from the argument in the proof of Theorem \ref{iomegap.tm1},
\begin{equation}\label{generalizedpoisson.lm.pf.eq1}
H_{\bf y}({\bf x})=\left\{\begin{array}{ll}
\sum_{|{\bf j}|=k_1} \frac{k_1}{{\bf j}!} \int_0^1 (K_{\bf j}({\bf x}-t{\bf y})-K_{\bf j}({\bf x})) (-{\bf y})^{\bf j} (1-t)^{k_1-1}
dt & {\rm if} \ k_1\ge 1\\
K_{\bf 0}({\bf x}-{\bf y})-K_{\bf 0}({\bf x}) & {\rm if} \ k_1=0.
\end{array}\right.
\end{equation}
Therefore  for ${\bf y}_0\ne 0$,
\begin{eqnarray*}
& & \|I_{\gamma, 1} g_{N, {\bf y}_0}-H_{{\bf y}_0}\|_1 \nonumber\\
& \le & C \sum_{|{\bf j}|=k_1}
\int_{\RR^d} \int_0^1 \int_{\RR^d}
| (K_{\bf j}({\bf x}-t{\bf y})-K_{\bf j}({\bf x})) {\bf y}^{\bf j} \nonumber\\
& &\quad -
(K_{\bf j}({\bf x}-t{\bf y}_0)-K_{\bf j}({\bf x})) {\bf y}_0^{\bf j} |  |g_{N, {\bf y}_0}({\bf y})| d{\bf y}  dt d{\bf x}\nonumber\\
& \le & C \sum_{|{\bf j}|=k_1}
\int_{\RR^d} \int_0^1 \int_{\RR^d}
| K_{\bf j}({\bf x}-t{\bf y})-K_{\bf j}({\bf x}-t{\bf y}_0)| \| {\bf y}\|^{k_1}
|g_{N, {\bf y}_0}({\bf y})| d{\bf y}  dt d{\bf x}\nonumber\\
& & +C\sum_{|{\bf j}|=k_1}
\int_{\RR^d} \int_0^1 \int_{\RR^d}
|K_{\bf j}({\bf x}-t{\bf y}_0)-K_{\bf j}({\bf x})| |{\bf y}^{\bf j}- {\bf y}_0^{\bf j} |
 |g_{N, {\bf y}_0}({\bf y})| d{\bf y}  dt d{\bf x}\nonumber\\
 &\le &  C
 \int_0^1 \int_{\RR^d} (t\|{\bf y}-{\bf y}_0\|)^{\gamma-k_1}
( \| {\bf y}_0\|^{k_1} +\|{\bf y}-{\bf y}_0\|^{k_1})
|g_{N, {\bf y}_0}({\bf y})| d{\bf y}  dt \nonumber\\
 & &  + C
  \int_0^1 \int_{\RR^d} (t\|{\bf y}_0\|)^{\gamma-k_1}
( \| {\bf y}_0\|^{k_1-1}\|{\bf y}-{\bf y}_0\| +\|{\bf y}-{\bf y}_0\|^{k_1})
|g_{N, {\bf y}_0}({\bf y})| d{\bf y}  dt \nonumber\\
 &\to & 0 \quad {\rm as} \ N\to \infty
\end{eqnarray*}
if $k_1\ge 1$, and
\begin{eqnarray*}
& & \|I_{\gamma, 1} g_{N, {\bf y}_0}-H_{{\bf y}_0}\|_1 \nonumber\\
& \le &
\int_{\RR^d} \int_{\RR^d}
| K_{\bf 0}({\bf x}-{\bf y})-K_{\bf 0}({\bf x}-{\bf y}_0)| |g_{N, {\bf y}_0}({\bf y}) d{\bf y} d{\bf x}\nonumber\\
& \le &
\int_{\RR^d} \Big(\int_{\|{\bf x}-{\bf y}\|\ge 2 \|{\bf y}-{\bf y}_0\|}| K_{\bf 0}({\bf x}-{\bf y})-K_{\bf 0}({\bf x}-{\bf y}_0)|
d{\bf x}\nonumber\\
& &
+\int_{\|{\bf x}-{\bf y}\|\le 2 \|{\bf y}-{\bf y}_0\|}
| K_{\bf 0}({\bf x}-{\bf y})| +
| K_{\bf 0}({\bf x}-{\bf y}_0)| d{\bf x} \Big)  |g_{N, {\bf y}_0}({\bf y}) | d{\bf y} \nonumber\\
& \le & C N^d \int_{\RR^d} \|{\bf y}-{\bf y}_0\|^\gamma |g(N({\bf y}-{\bf y}_0))| d{\bf y}\nonumber\\
& = & C N^{-\gamma} \int_{\RR^d} \|{\bf z}\|^\gamma |g({\bf z})| d{\bf z}\to 0 \quad {\rm as} \ N\to 0,
\end{eqnarray*}
if  $k_1=0$. This shows that \eqref{generalizedpoisson.lm.eq1} for ${\bf y}_0\ne {\bf 0}$.

The limit  in \eqref{generalizedpoisson.lm.eq1} for ${\bf y}_0={\bf 0}$ can be proved by using a similar argument, the detail of which are omitted
 here.
\end{proof}

\begin{proof} [Proof of Theorem \ref{pointpoisson.tm}]
By Lemma \ref{generalizedpoisson.lm} and the dominated convergence theorem,
\begin{equation}
\lim_{N\to \infty}
\int_{\RR^d} \int_{\RR} (e^{-i at I_{\gamma, 1} g_{N, {\bf y}_0}({\bf x})}-1) dP(a) d{\bf x}=
\int_{\RR^d} \int_{\RR} (e^{-i at  H_{{\bf y}_0}({\bf x})}-1) dP(a) d{\bf x}
\end{equation}
for all $t\in \RR$. Moreover as $H_{{\bf y}_0}$ is integrable from Corollary \ref{iomegap.cor1} and
Lemma \ref{generalizedpoisson.lm},
the function
$\int_{\RR^d} \int_{\RR} (e^{-i at I_{\gamma, 1} H_{{\bf y}_0}({\bf x})}-1) dP(a) d{\bf x}$ is continuous about $t$. Therefore
\eqref{pointpoisson.tm.eq3} and \eqref{pointpoisson.tm.eq4} follows from Lemma \ref{levy.lm}.
\end{proof}


\noindent{\bf Acknowledgement.} {\rm This work was done when the first named author was visiting
Ecole Polytechnique Federale de Lausanne   on his sabbatical leave.
He would like to thank Professors Michael Unser and Martin Vetterli for the hospitality and  fruitful discussions. }

\end{document}